\title{Towards Completely Characterizing the Complexity of Boolean Nets Synthesis}
\author{Ronny Tredup}
\author{Christian Rosenke}
\authorrunning{R. Tredup and C. Rosenke}
\subjclass{Software system structures $\rightarrow$ Petri nets, Theory of computation $\rightarrow$ Problems, reductions and completeness}
\keywords{boolean Petri nets, labeled transition systems, net synthesis, types of nets, NP-completeness}
\newcommand{\edgeScale}{0.75}
\newcommand{\nodeScale}{0.9}
\newcommand{\myEdge}[2]{ \tikz[baseline=-1pt]{
\draw[#2,line width=0.3pt] (0,0) -- ++(0.6,0) node[anchor=base, yshift=2pt, pos=0.5] {\scalebox{\edgeScale}{$#1$}};
}}
\newcommand{\edge}[1]{\myEdge{#1}{->}}
\newcommand{\bedge}[1]{\myEdge{#1}{<-}}
\newcommand{\fbedge}[1]{\myEdge{#1}{<->}}
\newcommand{\nop}{\ensuremath{\textsf{nop}}}
\newcommand{\inp}{\ensuremath{\textsf{inp}}}
\newcommand{\out}{\ensuremath{\textsf{out}}}
\newcommand{\set}{\ensuremath{\textsf{set}}}
\newcommand{\res}{\ensuremath{\textsf{res}}}
\newcommand{\swap}{\ensuremath{\textsf{swap}}}
\newcommand{\free}{\ensuremath{\textsf{free}}}
\newcommand{\used}{\ensuremath{\textsf{used}}}
\newcommand{\test}{\ensuremath{\textsf{test}}}
\newcommand{\exit}{\ensuremath{\textsf{exit}}}
\newcommand{\enter}{\ensuremath{\textsf{enter}}}
\newcommand{\keepo}{\ensuremath{\textsf{keep}^+}}
\newcommand{\keepze}{\ensuremath{\textsf{keep}^-}}
\newcommand{\R}{\ensuremath{\mathcal{R}}}
\begin{document}

\maketitle


\begin{abstract}
Boolean Petri nets, which tolerate at most one token per place, are widely regarded as a fundamental model for concurrent systems.
They are differentiated into types of nets by the variety of applications that define their individual interaction set between places and transitions.
Taking a step back, one observes that only the eight interactions \emph{no operation} (\nop), \emph{input} (\inp), \emph{output} (\out), \set, \emph{reset} (\res), \swap, \emph{test of occupation} (\used), and \emph{test of disposability} (\free) have been used so far.
This paper argues that there are no other interactions for (deterministic) boolean nets and, thus, restrains their family to 256 types by considering every possible interaction subset.
Yet, research has explicitly defined seven of them:
\emph{Elementary net systems} (\nop, \inp, \out), for instance, have a connection to \emph{prime event structures} \cite{NRT1990,RE1996} and applications in workflow management systems like \textsc{milano} \cite{AM2000}.
\emph{Contextual nets} (\nop, \inp, \out, \used, \free), as a second example, implement \emph{reading without consuming} common to database systems, concurrent constraint programming, and shared memory systems \cite{MR1995}.
Other known classes are \emph{event/condition nets} (\nop, \inp, \out, \used) \cite{T1986}, \emph{inhibitor nets} (\nop, \inp, \out, \free) \cite{PK1997}, \emph{set nets} (\nop, \inp, \set, \used) \cite{KKP2013}, \emph{trace nets} (\nop, \inp, \out, \set, \res, \used, \free) \cite{BD1995}, and \emph{flip flop nets} (\nop, \inp, \out, \swap) \cite{S1996}.

This paper is devoted to a computational complexity analysis of the boolean net synthesis problem subject to a target class $\tau$.
The challenge is to translate given finite automata $A$, called \emph{transition systems} (TSs, for short), into boolean $\tau$-nets having a state transition behavior as specified by $A$.
This problem's complexity has yet been shown only for elementary net systems, where it is NP-complete to decide if general TSs \cite{BBD1997} or even considerably restricted TSs \cite{TRW2018,TR2018} can be synthesized, and for flip flop nets, which can be synthesized in polynomial time \cite{S1996}.

Our main result is a generic reduction scheme for NP-hardness proofs of boolean net synthesis that works for 77 different net classes allowing \nop.
We significantly generalize preliminary methods used in \cite{TRW2018,TR2018} for the hardness proof of synthesizing elementary net systems from heavily restricted TSs.
Unlike these premature approaches, the present solution covers all supersets of (\nop, \inp, \out) that exclude \swap, all supersets of either (\nop, \inp, \set) or (\nop, \out, \res), and extensions of (\nop, \swap) by at least one interaction from both, (\set, \res) and (\inp, \out, \used, \free).
Meanwhile it keeps the strong properties like low degree bounds.

We also identify seven classes with hard synthesis where the reduction does not fit into the general scheme.
This comprises (\nop, \set, \res) extended with at least one of \used\ and \free\ as well as (\nop, \inp, \free, [\used]) and (\nop, \out, \used, [\free]), where [$\cdot$] marks optional interactions.

Aside from this, we find 36 tractable cases for boolean net synthesis.
Firstly, like flip flop nets, the 16 extensions of (\nop, \swap) with a subset of (\inp, \out, \used, \free) can be synthesized in polynomial time by a version of Schmitt's approach \cite{S1996}.
Secondly, we presents a new polynomial time synthesis algorithm that works for the 16 classes combining (\nop, \set) with a subset of (\out, \used, \free) and combining (\nop, \res) with a subset of (\inp, \used, \free).
Four rather simple cases of polynomial synthesis are found in extending (\nop) with subsets of (\used, \free).

Regarding the set of all classes allowing \nop, we leave synthesis complexity open for the eight remaining cases of (\nop, \inp, [\used]), (\nop, \out, [\free]), (\nop, \set, \res) and (\nop, \swap) extended with at least one of \set\ and \res.
\end{abstract}


\section{Introduction}\label{sec:introduction}

This paper contributes to the analysis of the computational complexity of boolean Petri net synthesis as a function of the specific net class.
While the efficient algorithms developed in this paper attack the synthesis problem itself, the proofs for intractable synthesis cases turn to  \emph{feasibility}, the corresponding decision version.
Rather than really computing a net $N$ with state graph isomorphic to a given TS, it is sufficient for feasibility to just decide if the target class contains $N$.
If this is NP-complete, than synthesis is an NP-hard problem with no obvious efficient solutions.

In NP-completeness proofs, we entirely detach ourselves from the notion of Petri nets.
In particular, we use the well known equality between feasibility and the conjunction of the state separation property (SSP) and the event state separation property (ESSP) \cite{BBD2015}, which are solely defined on input TSs.
ESSP is also known for its connection to language viability \cite{BBD2015}, meaning that a TS $A$ has the ESSP if and only if there is a $\tau$-net having the same transitional behavior but not necessarily the same states as $A$.
The presented polynomial time reductions translate the NP-complete cubic monotone one-in-three $3$-SAT problem \cite{MR2001} into the ESSP for each of the considered 84 boolean net classes.
Hence, deciding language viability is NP-complete in all these cases.
As we also make sure that given boolean expressions $\varphi$ are transformed to TSs $A(\varphi)$ where the ESSP relative to the considered class implies the SSP, we always show the NP-completeness of the ESSP and feasibility at the same time.
Instead of 84 individual proofs, we present a scheme that covers 77 cases by just six reductions following a common pattern.
The remaining seven classes are covered by one additional reduction that follows a different pattern.

While this paper ignores the 128 practically less relevant \nop-free classes, it does turn towards the complexity analysis of 36 of the remaining 44 types of nets allowing \nop.
For the 16 extensions of (\nop, \swap) with a subset of (\inp, \out, \used, \free) we sketch how a generalization of Schmitt's approach \cite{S1996} leads to a polynomial time synthesis algorithm.
For the other 20 classes we provide our own polynomial time synthesis algorithm.

Although we have to leave synthesis complexity open for eight \nop-afflicted classes, we nevertheless discuss some of their properties and the consequent difficulties in the conclusions.

For the sake of readability, we have moved all technical proofs to separate sections at the end of this paper.


\section{Preliminary Notions}\label{sec:preliminaries}

This section provides short formal definitions of all preliminary notions used in the paper.
For a detailed introduction into the field of Petri net synthesis, we propose the excellent monograph of Badouel, Bernardinello and Darondeau \cite{BBD2015}.
Here, a boolean Petri net $N = (P, T, M_0, f)$ is given by finite and disjoint sets $P$ of places and $T$ of transitions, an initial marking $M_0 \subseteq P$, and a flow function $f: P \times T \rightarrow I$ assigning an interaction $f(p,t)$ of 
$I = \{\nop, \inp, \out, \set, \res, \swap, \used, \free\}$
to every pair of place $p$ and transition $t$.
The interactions $i \in I$ are binary partial functions $i:\{0,1\} \rightarrow \{0,1\}$ as defined in the listing of Figure~\ref{fig:interactions}.
For readability, we group interactions by $\enter = \{\out,\set,\swap\}$, $\exit = \{\inp,\res,\swap\}$, $\keepo = \{\nop,\set,\used\}$, and $\keepze = \{\nop,\res,\free\}$.
\begin{figure}[b]\centering
\begin{tabular}{c|c|c|c|c|c|c|c|c}
$x$ & $\nop(x)$ & $\inp(x)$ & $\out(x)$ & $\set(x)$ & $\res(x)$ & $\swap(x)$ & $\used(x)$ & $\free(x)$\\ \hline
$0$ & $0$ & & $1$ & $1$ & $0$ & $1$ & & $0$\\
$1$ & $1$ & $0$ & & $1$ & $0$ & $0$ & $1$ & \\
\end{tabular}
\caption{
All interactions in $I$.
An empty cell means that the column's function is undefined on the respective $x$.
The entirely undefined function is missing in $I$.
}\label{fig:interactions}
\end{figure}

The meaning of a boolean net is to realize a certain behavior by cascades of firing transitions. 
In particular, a transition $t \in T$ can fire at a marking $M \subseteq P$ if interaction $f(p,t)$ is defined on $1$ for all $p \in M$ and $f(q,t)$ is defined on $0$ for all $q \in P \setminus M$.
By firing, $t$ produces the next marking $M' \subseteq P$ that exactly consists of all $p \in M$ with $f(p,t)(1) = 1$ and all $q \in P \setminus M$ with $f(q,t)(0) = 1$.
This is denoted by $M \edge{t} M'$.

Given a boolean net $N=(P, T, M_0, f)$, its behavior is captured by a finite automaton $A(N)$, called the transition system (TS, for short) of $N$.
The state set of $A(N)$ consists of all markings that, starting from initial state $M_0$, can be reached by a cascade of firing transitions.
For every reachable marking $M$ and transition $t \in T$ with $M \edge{t} M'$ the state transition function $\delta$ of $A$ is defined as $\delta(M,t) = M'$.

Subsets $\tau \subseteq I$ define \emph{types of nets}, subclasses of boolean nets limited to the respective interactions.
Hence, in a $\tau$-net, $f(p,t) \in \tau$ for all contained places $p$ and transitions $t$.
It is clear for $\tau \subseteq \tau' \subseteq I$ that the class of $\tau$-nets is a subset of the $\tau'$-nets.
Notice that $I$ contains all possible binary partial functions $\{0,1\} \rightarrow \{0,1\}$ except for the entirely undefined function. 
This to include would be futile as it makes incident transitions unable to ever fire.
Hence, $I$ is complete for deterministic nets and so is the family of its 256 subclasses.

Boolean net synthesis for a class $\tau$ is going backwards from input TS $A=(S, E, \delta, s_0)$ to the computation of a $\tau$-net $N$ with $A(N)$ isomorphic to $A$, if such a net exists.
In contrast to $A(N)$, the abstract states $S$ of $A$ miss any information about markings they stand for.
Accordingly, the events $E$ are an abstraction of $N$'s transitions $T$ as they relate to state changes only globally without giving the information about the local changes to places.
After all, the transition function $\delta: S\times E\rightarrow S$ still tells us how states are affected by events.

To prove net synthesis of $\tau$-nets NP-hard, we show the NP-completeness of the corresponding decision version: $\tau$-feasibility is the problem to decide the existence of a $\tau$-net $N$ with $A(N)$ isomorphic to the given TS $A$.
On that account, an input TS $A$ is considered as a directed labeled graph on nodes $S$ and transition arcs $s\edge{e}s'$ for every $\delta(s,e)=s'$. 
An event $e$ \emph{occurs} at a state $s$, denoted by $s\edge{e}$, if $\delta(s,e)$ is defined.

TSs in this paper are \emph{deterministic} by design as their state transition behavior is given by a function.
TSs are also required to make every state \emph{reachable} from $s_0$ by a directed path.
Aside from that, a TS $A$ can be \emph{simple}, which prohibits $s \edge{e} s'$ and $s \edge{e'} s'$ for $e\not=e' \in E$, \emph{loop-free}, banning $s \edge{e} s$ for all $s \in S$, and \emph{reduced}, eliminating unused events.
We say that $A$ is \emph{modest} if, beside the properties of determinism and reachability, $A$ is simple, loop-free, and reduced.
Although a TS does not have to be modest to be $\tau$-feasible, we make sure that our reductions produce modest TSs only, thus, showing the hardness of $\tau$-feasibility even for this input restriction.

To describe feasibility without referencing the sought net $N$, we subsequently introduce the \emph{state separation property} (\emph{SSP}, for short) and the \emph{event state separation property} (\emph{ESSP}, for short) for TSs, which in conjunction are equivalent to feasibility.
These notions require to follow the interpretation of \cite{BBD2015}, which sees a type of nets $\tau$ as a template TS $(S_\tau, E_\tau,\delta_\tau)$ for all synthesizable TSs of that class.
Leaving out an initial state, they define $S_\tau = \{0,1\}$, $E_\tau = \tau$ and $\delta_\tau(s, i) = i(s)$ for all $s \in S_\tau$ and all $i \in E_\tau$.
Based on this, a $\tau$-region of given $A=(S, E, \delta, s_0)$ is a pair $(sup, sig)$ of the \emph{support} $sup: S \rightarrow S_\tau = \{0,1\}$ and the \emph{signature} $sig: E\rightarrow E_\tau = \tau$ where every transition $s \edge{e} s'$ of $A$ leads to a transition $sup(s) \edge{sig(e)} sup(s')$ of $\tau$.
While a region divides $S$ into the two sets $sup^{-1}(b) = \{s \in S \mid sup(s) = b\}$ for $b \in \{0,1\}$, the events are cumulated by $sig^{-1}(i) = \{e \in E \mid sig(e) = i\}$ for all available interactions $i \in \tau$.
We also use $sig^{-1}(\tau') = \{e \in E \mid sig(e) \in \tau'\}$ for $\tau' \subseteq \tau$.

For a TS $A=(S, E, \delta, s_0)$ and a type of nets $\tau$, a pair of states $s \not= s' \in S$ is \emph{separable} for $\tau$ if there is a $\tau$-region $(sup, sig)$ such that $sup(s) \not= \sup(s')$.
Accordingly, $A$ has the SSP for $\tau$ if all pairs of distinct states from $A$ are separable. 
Secondly, an event $e \in E$ is called \emph{inhibitable} at a state $s \in S$ if there is a $\tau$-region $(sup, sig)$ where $sup(s) \edge{sig(e)}$ does not hold, that is, if the interaction $sig(e) \in \tau$ is not defined on input $sup(s) \in \{0,1\}$.
Then $A$ has the ESSP for $\tau$ if for all states $s \in S$ it is true that all events $e \in E$ that do not occur at $s$, meaning $\neg s \edge{e}$, are inhibitable at $s$.
It is well known from \cite{BBD2015} that a TS $A$ is $\tau$-feasible, that is, there exists a $\tau$-net $N$ with $A(N)$ isomorphic to $A$, if and only if $A$ has both, the SSP and the ESSP for $\tau$.
Moreover, \cite{BBD2015} also states that a TS $A$ has the ESSP for $\tau$ if and only if $A$ is \emph{$\tau$-language viable}.
This means, there a $\tau$-net $N$ with $A(N)$ language equivalent to $A$ where every event sequence in $E^*$ traverses a $s_0$-rooted directed path in $A(N)$ if and only if it does in $A$.

SSP and ESSP can also be seen as decision problems.
Moreover, an SSP atom for $\tau$ is to decide for given $(A, s, s')$ whether the states $s, s'$ of $A$ are $\tau$-separable.
Similarly, an ESSP atom for $\tau$ is to answer for $(A, e, s)$ if event $e$ is $\tau$-inhibitable at state $s$ of $A$ .

While being introduced to assist in proofs for hard synthesis, regions, SSP and ESSP are also construction tools for boolean nets.
In fact, having a region set $\R$ for $A$ that solves all its SSP and ESSP atoms with respect to some type of nets $\tau$, one can construct a $\tau$-net $N(A, \R) = (\R, E(A), f, M_0)$ on place set $\R$, transition set $E(A)$, flow function $f(R, e) = sig(e)$ for all $R = (sup, sig) \in \R$ and all $e \in E(A)$, and initial marking $M_0 = \{R = (sup, sig) \in \R \mid s_0 \in sup\}$.
By \cite{BBD2015}, $A$ is isomorphic to the state graph of $N(A, \R)$.
Hence, if we can efficiently compute $\R$ then $A$ is synthesizable in polynomial time.

Types of nets $\tau$ and $\tilde{\tau}$ have an isomorphism $\phi$ if $s \edge{i} s'$ is a transition in the template TS $\tau$ if and only if $\phi(s) \edge{\phi(i)} \phi(s')$ is one in the template TS $\tilde{\tau}$.
We benefit from isomorphisms mapping \nop\ to \nop, \swap\ to \swap, \inp\ to \out, \set\ to \res, \used\ to \free, and vice versa:
\begin{lemma}[Without proof]
\label{lem:isomorphic_types}
If $\tau$ and $\tilde{\tau}$ are isomorphic types of nets then a TS $A$ has the (E)SSP for $\tau$ if and only if $A$ has the (E)SSP for $\tilde{\tau}$.
\end{lemma}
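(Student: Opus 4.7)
The plan is to transport regions between the two types via the isomorphism $\phi$ and show that separation witnesses are preserved in both directions. Specifically, given a $\tau$-region $(sup, sig)$ of $A$, I would define a candidate $\tilde{\tau}$-region $\widetilde{R} = (\phi \circ sup, \phi \circ sig)$ and check it qualifies. Here I use that an isomorphism of types acts on both the state symbols $\{0,1\}$ and the event symbols (interactions). The verification is a one-liner: for every transition $s \edge{e} s'$ of $A$, the region property gives $sup(s) \edge{sig(e)} sup(s')$ in the template TS of $\tau$, and applying $\phi$ and the defining property of an isomorphism yields $\phi(sup(s)) \edge{\phi(sig(e))} \phi(sup(s'))$ in the template of $\tilde{\tau}$, which is exactly what $\widetilde{R}$ needs to be a $\tilde{\tau}$-region.

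Next I would transfer SSP and ESSP witnesses. Since any isomorphism $\phi$ between the two-state templates restricts to a bijection on $\{0,1\}$, one has $sup(s) \neq sup(s')$ if and only if $\phi(sup(s)) \neq \phi(sup(s'))$, so $\widetilde{R}$ separates the same state pairs as $(sup, sig)$. For ESSP, the event $e$ is inhibitable at $s$ by $(sup, sig)$ iff $sig(e)$ is undefined on $sup(s)$ in the template of $\tau$; by the biconditional in the definition of isomorphism, this holds iff no transition $\phi(sup(s)) \edge{\phi(sig(e))} x$ exists in the template of $\tilde{\tau}$, i.e., $\widetilde{R}$ inhibits $e$ at $s$.

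Finally, the reverse direction from $\tilde{\tau}$ to $\tau$ comes for free by symmetry, applying the same argument to $\phi^{-1}$, which is also a type isomorphism. Combining the two directions yields both ``$\Rightarrow$'' and ``$\Leftarrow$'' of the biconditional, simultaneously for SSP and ESSP.

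The only step that is not completely mechanical is unpacking that the definition of type isomorphism forces $\phi$ to be a bijection and, in particular, to map the two template states bijectively onto the two template states. But this is immediate from the fact that both templates have state set $\{0,1\}$ and the biconditional on transitions, so I would expect no real obstacle; the whole argument is essentially a diagram chase through the template TSs and costs only a few lines once the region transport $(sup, sig) \mapsto (\phi \circ sup, \phi \circ sig)$ is set up.
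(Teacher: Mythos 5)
Your argument is correct and is exactly the intended one: the paper states this lemma without proof, and the evident justification is precisely your region transport $(sup,sig)\mapsto(\phi\circ sup,\phi\circ sig)$, with the biconditional in the definition of type isomorphism guaranteeing that the transported pair is a $\tilde{\tau}$-region, that separation of states is preserved (since $\phi$ is a bijection on $\{0,1\}$), and that definedness of $sig(e)$ on $sup(s)$ is preserved, with the converse direction handled by $\phi^{-1}$. There is nothing in the paper to compare against, and no gap in your reasoning.
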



\section{A Reduction Scheme yields the NP-completeness of Feasibility for 77 Boolean Petri Net Classes}\label{sec:main_result}

This section presents our main result:
\begin{theorem}\label{the:main_result} 
Let $\tau_1 = \{\nop, \inp, \out\}$, $\tau_2 = \{\nop, \inp, \res, \swap\}$, $\tilde{\tau}_2 = \{\nop, \out, \set, \swap\}$, $\tau_3 = \{\nop, \inp, \set\}$, $\tilde{\tau}_3 = \{\nop, \out, \res\}$, $\tau_4 = \{\nop, \set, \swap\}$ and $\tilde{\tau}_4 = \{\nop, \res, \swap\}$.
Deciding $\tau$-feasibility as well as $\tau$-language viability for modest transition systems is NP-complete if
\begin{enumerate}
\item $\tau = \tau' \cup \omega$ for $\tau' \in \{\tau_1, \tau_2, \tilde{\tau}_2\}$ and $\omega \subseteq \{\used, \free\}$,
\item $\tau \supseteq \tau_3$ or $\tau \supseteq \tilde{\tau}_3$, or
\item $\tau = \tau' \cup \omega$ for $\tau' \in \{\tau_4, \tilde{\tau}_4, \tau_4\cup  \tilde{\tau}_4\}$ and non empty $\omega \subseteq \{\used, \free\}$. 
\end{enumerate}
\end{theorem}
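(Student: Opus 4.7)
\medskip

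\noindent\textbf{Proof plan.}
The plan is a polynomial-time many-one reduction from cubic monotone one-in-three $3$-SAT. Membership in NP is standard: guess a net $N$ of size polynomial in $|A|$ (at most $|S|$ places suffice since each place corresponds to a region and $|S|$ regions are enough to solve all SSP/ESSP atoms), and verify in polynomial time that $A(N)\cong A$. The main work is NP-hardness. Given a cubic monotone formula $\varphi=\{C_0,\dots,C_{m-1}\}$ on variables $X=\{X_0,\dots,X_{n-1}\}$, I would build a single modest TS $A(\varphi)$ such that $A(\varphi)$ has the ESSP for $\tau$ if and only if $\varphi$ admits a $1$-in-$3$ assignment, and such that the SSP for $\tau$ is automatically solvable. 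Equivalence of ESSP with language viability is given in the preliminaries, and feasibility follows once SSP and ESSP hold.

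For the construction, I would follow the six-reduction scheme announced in the introduction: essentially one family of reductions per entry of $\{\tau_1,\tau_2,\tau_3,\tau_4,\tau_4\cup\tilde\tau_4\}$, with the $\tilde\tau_i$ cases obtained for free via Lemma~\ref{lem:isomorphic_types} using the swap $\inp\leftrightarrow\out$, $\set\leftrightarrow\res$, $\used\leftrightarrow\free$, $\nop\leftrightarrow\nop$, $\swap\leftrightarrow\swap$. Within each family, $A(\varphi)$ decomposes into a \emph{variable gadget} $T_i$ per $X_i$, a \emph{clause gadget} $H_j$ per $C_j$, and a shared initial path that makes all gadgets reachable and glues them into one deterministic, simple, loop-free, reduced automaton. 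The variable gadget contains two privileged events whose values under any signature $sig$ encode a boolean value for $X_i$; the clause gadget exposes a critical ESSP atom $(H_j,e_j,s_j)$ whose solution forces $sig$ on the three variable events of $C_j$ to witness that \emph{exactly one} of them is \emph{true}. Conversely, from any $1$-in-$3$ assignment one explicitly exhibits, for each required atom, a $\tau$-region that solves it, and a small global family of regions that solves all remaining ESSP and SSP atoms (so ESSP implies feasibility for these TSs).

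The core of the argument is the \emph{simulation lemma} for each of the six patterns: the $\tau$-regions of $A(\varphi)$ that can possibly solve the critical atom $(H_j,e_j,s_j)$ are forced, by the local shape of $H_j$ and by the presence/absence of the relevant interactions in $\tau$, to assign exactly one of the three variable events a ``producing'' interaction from $\enter$ (or dually from $\exit$), no matter which subset $\omega\subseteq\{\used,\free\}$ is added. This is where the extra test interactions $\used,\free$ must be carefully neutralized: they give additional regions but, because the gadget is loop-free and the critical events both occur and do not occur at the relevant state types, they cannot circumvent the one-in-three choice. I expect this robustness step to be the main obstacle, since a single gadget must keep the $1$-in-$3$ semantics across all $77$ target classes simultaneously while remaining modest and keeping bounded in- and out-degrees. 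Once the simulation lemma is established for each of the six patterns, the two implications of the reduction are immediate, and the theorem follows by applying Lemma~\ref{lem:isomorphic_types} to lift each reduction to its isomorphic counterpart, covering parts (1)--(3).
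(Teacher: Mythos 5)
There is a genuine gap: your proposal reproduces the top-level architecture of the argument (reduction from cubic monotone one-in-three $3$-SAT, six reduction families, isomorphism lemma for the dual classes, clause gadgets forcing a one-in-three choice, ESSP implying SSP on the constructed instances) but supplies none of the content that makes the theorem true. The entire difficulty lives in the ``simulation lemma'' that you explicitly defer: you must exhibit concrete gadgets and prove that \emph{every} $\tau$-region solving the critical atom forces exactly one variable event per clause away from $\nop$, uniformly over all $\omega$-extensions handled by that reduction family. Asserting that $\used$ and $\free$ ``cannot circumvent the one-in-three choice'' because the gadget is loop-free is not an argument, and it also misses the actual failure mode: the dangerous escape is not the test interactions but $\swap$, which permits \emph{all three} variable events of a clause to change state along the clause path (an odd number of crossings), destroying the one-in-three semantics. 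Blocking this requires dedicated auxiliary gadgets (freezers) and a mechanism that regulates the signatures of the interface events shared between the clause gadgets and the part of the TS containing the critical atom; none of this appears in your plan. Likewise, the claim that ``a small global family of regions solves all remaining ESSP and SSP atoms'' is exactly the part of the proof that occupies the most space once the gadgets are fixed, and it cannot be taken for granted: it constrains the gadget design as much as the forcing direction does.

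Two smaller points. First, your per-variable gadgets plus per-clause critical atoms $(H_j,e_j,s_j)$ need an explicit cross-clause consistency mechanism: a signature is a single global function on events, so consistency comes for free only if the variables are shared \emph{events} (not gadgets) appearing in all three of their clause gadgets; if you introduce separate variable gadgets you must say what they contribute. The paper-style alternative is to use one global key atom whose inhibiting region simultaneously constrains all clause gadgets through a shared interface. Second, the NP-membership remark that $|S|$ regions suffice is off — there are $\Theta(|S|^2+|S||E|)$ atoms and you may need one region per atom — but any polynomial bound suffices, so this does not affect membership in NP.
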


The remainder of this section is devoted to the proof of the main theorem.
That input is restricted to modest TSs shows that the problem is intrinsically difficult and that the hardness is not hidden in special structures as loops and multi-edges.
In fact, the TSs generated by our reductions have other interesting properties leading beyond the scope of this paper when analyzed in detail.
For instance, they are planar graphs and have a low maximum degree, that is, every state is incident to at most four other states.

In total, Theorem~\ref{the:main_result} covers 77 classes.
The first condition hits four classes for every set $\tau_1$, $\tau_2$, and $\tilde{\tau}_2$.
The two cases of the second one describe 32 classes each, but they intersect in the eight supersets of $\{\nop, \inp, \out, \set, \res\}$.
Condition three brings nine classes.
All three conditions cover different classes.
Although this demands for 77 NP-completeness proofs, executing them individually does not teach us a lot about the problem structure.
On the one hand, it is straight forward that $\tau$-feasibility is a member of NP for all considered type of nets $\tau$ and we do not have to explicitly prove this here.
In a non-deterministic computation, one can simply guess and check in polynomial time for all pairs $s, s'$ of states, respectively for all required pairs $s,e$ of state and event, the region that separates $s$ and $s'$, respectively inhibits $e$ at $s$, or refuse the input if such a region does not exist.
A similar argumentation is used in \cite{BBD1997} to show the hardness of feasibility in NP for elementary net systems.

On the other hand, it is probably impossible to show hardness in NP for all considered classes $\tau$ at the same time.
Here, we manage to boil it down to six reductions that are all based on one scheme using the NP-complete cubic monotone one-in-three-$3$-SAT problem \cite{MR2001}.
Starting from the common construction principle, we can choose one of our six reductions by a turn-switch $\sigma$.
In every switch position $\sigma_1, \dots, \sigma_6$, the chosen reduction works for multiple interaction sets based on mutually shared interactions and isomorphisms.

Before we can set out the details of our concept, the following subsection introduces our way of easily generating and combining gadget TSs for our NP-completeness proofs.

\subsection{Unions of Transition Systems}\label{sec:unions}

If $A_0=(S_0,E_0,\delta_0,s_0^0), \dots ,A_n=(S_n,E_n,\delta_n,s_0^n)$ are TSs with pairwise disjoint states (but not necessarily disjoint events) we say that $U(A_0, \dots, A_n)$ is their \emph{union}.
By $S(U)$, we denote the entirety of all states in $A_0, \dots, A_n$ and $E(U)$ is the aggregation of all events.
For a flexible formalism, we allow to build unions recursively:
Firstly, we allow empty unions and identify every TS $A$ with the union containing only $A$, that is, $A = U(A)$.   
Next, if $U_1= U(A^1_0,\dots,A^1_{n_1}), \dots, U_m=(A^m_0,\dots,A^n_{n_m})$ are unions (possibly with $U_i =U()$ or $U_i=A_i$) then $U(U_1, \dots, U_m)$ is the flattened union $U(A^1_0, \dots, A^1_{n_1},\dots, A^m_0, \dots, A^n_{n_m})$.

We lift the concepts of regions, SSP, and ESSP to unions $U = U(A_0, \dots, A_n)$ as follows:
A $\tau$-region $(sup, sig)$ of $U$ consists of $sup: S(U) \rightarrow S_\tau$ and $sig: E(U) \rightarrow E_\tau$ such that, for all $i \in \{0, \dots, n\}$, the projections $sup_i(s) = sup(s), s \in S_i$ and $sig_i(e) = sig(e), e \in E_i$ provide a region $(sup_i, sig_i)$ of $A_i$.
Then, $U$ has the SSP for $\tau$ if for all different states $s, s' \in S(U)$ of the same TS $A_i$ there is a $\tau$-region $(sup,sig)$ of $U$ with $sup(s) \not= sup(s')$.
Moreover, $U$ has the ESSP for $\tau$ if for all events $e \in E(U)$ and all states $s \in S(U)$ where $s\edge{e}$ does not hold there is a $\tau$-region $(sup,sig)$ of $U$ where $sup(s) \edge{sig(e)}$ does not hold.
Naturally, $U$ is feasible for $\tau$ if it has both, the SSP and the ESSP for $\tau$.
In the same way, atoms of SSP and ESSP are translated to the state and event sets $S(U)$ and $E(U)$.

To merge a union $U = U(A_0, \dots, A_n)$ into a single TS, we define the joining $A_\tau(U)$, which depends on the type of nets $\tau$.
For our NP-completeness scheme, we require one basic construction $A(U)$ and an enhanced construction $A^+(U)$.
If $s^0_0, \dots, s^n_0$ are the initial states of $U$'s TSs then $A(U) = (S(U) \cup \bot, E(U) \cup \odot \cup \ominus, \delta, \bot_0)$ and $A^+(U) = (S(U) \cup \bot, E(U) \cup \odot \cup \ominus, \delta^+, \bot_0)$ are TSs with additional connector states $\bot=\{\bot_0, \dots, \bot_n\}$ and fresh events $\odot=\{\odot_0, \dots, \odot_n\}$, $\ominus=\{\ominus_1, \dots, \ominus_n\}$ joining the individual TSs of $U$ by
\[\delta(s,e) = 
\begin{cases}
s^i_0, & \text{if } s = \bot_i \text{ and } e=\odot_i,\\
\bot_{i+1}, & \text{if } s = \bot_i \text{ and } e=\ominus_{i+1},\\
\delta_i(s,e), & \text{if } s \in S_i \text{ and } e \in E_i,

\end{cases}
\text{\hfill}
\delta^+(s,e) = 
\begin{cases}
\bot_i, & \text{if } s = s^i_0 \text{ and } e=\odot_i,\\
\bot_i, & \text{if } s = \bot_{i+1} \text{ and } e=\ominus_{i+1}\\
\delta(s,e), & \text{otherwise.}
\end{cases}
\]
Hence, $A(U)$ puts the connector states into a chain of the events from $\ominus$ and links the initial states of TSs from $U$ to this chain using events from $\odot$.
The enhancement $A^+(U)$ is obtained from $A(U)$ by extending $\delta$ with additional reverse transitions.
Notice that $A(U)$ and $A^+(U)$ are modest if every TS of $U$ is modest.
The following lemma certifies the validity of the joining operation for the unions and the types of nets that occur in our reduction scheme.
\begin{lemma}\label{lem:union_validity}
Let $\tau$ be a type of nets and $U = U(A_0, \dots, A_n)$ be a union of TSs $A_0, \dots, A_n$ where, for every event $e$ in $E(U)$, there is at least one state $s$ in $S(U)$ with $\neg (s \edge{e})$.
Moreover, define the joining
\begin{enumerate}
\item $A_\tau(U) = A(U)$ if $\inp \in \tau$ and $\{\out, \set, \swap\}  \cap \tau \not= \emptyset$ and, otherwise,
\item $A_\tau(U) = A^+(U)$ if $\swap \in \tau$, $\{\used, \free\}  \cap \tau \not= \emptyset$ and for all $i \in \{0, \dots, n\}$ there is exactly one outgoing and one incoming arc at the initial state $s_0^i$ of the TS $A_i$, both labeled with the same event $u_i$ occurring at no further arcs in $U$.
\end{enumerate}
If $A_\tau(U)$ is defined then $U$ has the $\tau$-(E)SSP if and only if $A_\tau(U)$ has the $\tau$-(E)SSP.
\end{lemma}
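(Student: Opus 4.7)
The plan is to treat the two implications separately and to cover both the SSP and the ESSP at once. For the easy direction, observe that by construction of $A_\tau(U)$ the restriction $(sup|_{S(U)}, sig|_{E(U)})$ of any $\tau$-region of $A_\tau(U)$ is, component-wise, a $\tau$-region of every $A_i$ and hence a $\tau$-region of $U$. So any $\tau$-region of $A_\tau(U)$ that witnesses an (E)SSP atom whose states and events lie in $U$ restricts to a $\tau$-region of $U$ that still witnesses the same atom. Hence if $A_\tau(U)$ has the $\tau$-(E)SSP then $U$ does.

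For the converse I would need to solve three kinds of atoms in $A_\tau(U)$ from the $\tau$-(E)SSP of $U$: (i) atoms whose states and events lie entirely in $U$; (ii) SSP atoms involving a connector state $\bot_i$; and (iii) ESSP atoms either concerning a fresh event $\odot_i$ or $\ominus_i$, or requiring inhibition of an event $e \in E(U)$ at a $\bot$-state or at a state of some $A_j$ disjoint from $e$'s natural TS. For (i) I would start from a $\tau$-region $(sup, sig)$ of $U$ witnessing the atom and extend it to $A_\tau(U)$ by freely picking the support values on the $\bot_i$ and the signatures on the fresh events; the extension leaves $(sup, sig)$ unchanged on $S(U)$ and $E(U)$, so the original witnessing property persists. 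In case 1, $\inp$ together with at least one of $\out, \set, \swap$ yields transitions in both directions of the template TS of $\tau$, which is enough to match the linear $\ominus$-chain and the attachments $\bot_i \edge{\odot_i} s_0^i$. In case 2, the hypothesis of a single event $u_i$ appearing only as one incoming and one outgoing arc at each $s_0^i$, combined with $\swap \in \tau$, lets the reverse arcs of $A^+(U)$ at $s_0^i$ be absorbed consistently by aligning the signature of $\odot_i$ with the swap structure forced on $u_i$.

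For (ii) and (iii) I would fabricate fresh $\tau$-regions that trivialise each individual $A_i$: every $s \in S(A_i)$ is mapped to a constant $b_i \in \{0,1\}$ and every $e \in E(U)$ to an interaction in $\tau$ acting as identity on the relevant $b_i$ (always available under the two case assumptions via $\nop$, $\set$, $\used$, $\res$, or $\free$). Separations between two $\bot$-states, or between a $\bot$-state and a state of some $A_i$, are then encoded by flipping support values along the $\ominus$-chain at an appropriate position. For an ESSP atom of an event $e \in E(U)$ at a $\bot$-state or in a disjoint $A_j$, I would invoke the lemma's hypothesis to pick some $s^* \in S(U)$ with $\neg(s^* \edge{e})$, take a $U$-region inhibiting $e$ at $s^*$ provided by $U$-ESSP, and extend it so that the target state carries the same support value as $s^*$, which inhibits $e$ automatically. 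The main obstacle is the consistent bookkeeping in case 2: the reverse arcs of $A^+(U)$ force every $\odot_i$ and $\ominus_i$ to be realisable in both directions, confining any support-flipping edge to $\swap$ and any support-preserving edge to an idle interaction, and it is precisely the single-event $u_i$-loop assumption that keeps these constraints mutually compatible. A secondary difficulty is that events shared between different $A_i$ must receive a single signature across the union, which couples the constants $b_i$ chosen in different components when fabricating the fresh regions and forces careful use of interactions like $\set$ or $\res$ whose identity domain is a single value.
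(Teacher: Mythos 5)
Your proposal follows essentially the same route as the paper: restriction of regions for the easy direction, and for the converse an extension of each $U$-region that gives every connector state the support of the atom's target state (so that, via the hypothesis that every event fails to occur somewhere, inhibition carries over to the $\bot$-states), plus dedicated auxiliary regions — built from $\inp$/\textsf{enter} in case 1 and from $\swap$ together with a partial test interaction from $\{\used,\free\}$ in case 2, where the single-$u_i$ assumption keeps the bidirectional arcs at $s_0^i$ consistent — to settle the connector-state SSP atoms and the ESSP atoms of the fresh events. The only difference is one of polish: the paper writes out these auxiliary regions explicitly (the regions $R_i$ and $R_i^+$), whereas you describe the constraints they must satisfy without fully instantiating them, but every ingredient you name is the one the paper uses.
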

\begin{proof}
\emph{If}:
Projecting a $\tau$-region separating $s$ and $s'$, respectively inhibiting $e$ at $s$, in $A_\tau(U)$ to the component TSs yields a $\tau$-region separating $s$ and $s'$, respectively inhibiting $e$ at $s$ in $U$.
Hence, the $\tau$-(E)SSP of $A_\tau(U)$ trivially implies the $\tau$-(E)SSP of $U$.

\emph{Only if}:
In the following, if $\tau$ has \inp\ and at least one of $\{\out,\set,\swap\}$ let $\textsf{exit}=\inp$ and \textsf{enter} be any of the available interactions from $\tau \cap \{\out,\set,\swap\}$.
Otherwise, if the second condition holds, define $\textsf{enter}= \textsf{exit}=\swap$ and let \textsf{test} be any interaction of $\tau \cap \{\used,\free\}$.
A $\tau$-region $R$ of $U$ separating $s$ and $s'$, respectively inhibiting $e$ at $s$, can be completed to become an equivalent $\tau$-region $R'$ of $A(U)$ by setting
\begin{align*}
sup_{R'}(s'') &= \begin{cases}
sup_R(s''), & \text{if } s'' \in S(U),\\
sup_R(s), & \text{otherwise, that is, } s'' \in \bot \text{ and}
\end{cases}\\
sig_{R'}(e') &= \begin{cases}
sig_R(e'), & \text{if } e' \in E(U),\\
\nop, & \text{if } e' = \ominus_j, j \in \{1, \dots, n\},\\
\nop, & \text{if } e' = \odot_i \text{ and } sup_R(s^i_0) = sup_R(s), i \in \{0, \dots, n\}\\
\textsf{enter} , & \text{if } e' = \odot_i \text{ and } sup_R(s^i_0) - sup_R(s) = 1, i \in \{0, \dots, n\},\\
\textsf{exit} , & \text{if } e' = \odot_i \text{ and } sup_R(s^i_0) - sup_R(s) = -1, i \in \{0, \dots, n\}.
\end{cases}
\end{align*}
Notice that a $\tau$-region $R'$ like this, which inherits the property of inhibiting $e$ at $s$ from $R$, do also inhibit $e$ at all connector states, since $sup_{R'}(s) = sup_{R'}(\bot_i)$.
This has the following consequence: As every event $e\in E(U)$ has at least one state $s\in S(U)$ with $\neg s\edge{e}$, the ESSP of $U$ implies that $U$ has at least one inhibiting region $R$ for every event $e$.
Hence, for every event $e$ we can use the respective region to create $R'$ as defined above, inhibiting $e$ at every connector state of the TS $A_\tau(U)$.

For the (E)SSP of $A_\tau(U)$ it is subsequently sufficient to analyze (event) state separation concerning just the connector states and events.
To separate the state $\bot_i$ from all the other states of $(S(U) \cup \bot)\setminus \{\bot_i\}$ we simply define the $\tau$-region $R_i$ where only $sup_{R_i}(\bot_i) = 1$ and where the signature of all events is \nop\ except for $\odot_i, \ominus_i, \ominus_{i+1}$. 
For these events (if they exist), we let $sig_{R_i}(\odot_i) = sig_{R_i}(\ominus_{i+1}) =\textsf{exit}$ and $sig_{R_i}(\ominus_i) = \textsf{enter}$.

Hence, taking $R_i$ over all $i \in \{0, \dots, m\}$ solves the remaining SSP atoms.

Moreover, for $A_\tau(U)=A(U)$, the regions $R_i, i \in \{0, \dots, n\}$ also inhibit $\odot_i$ and $\ominus_{i+1}$ at all states, which solves the rest of the ESSP atoms, too.

Hence, it remains to inhibit the $\odot$- and $\ominus$-events in $A^+(U)$.
For $i\in\{0,\dots, n\}$ the inhibition of $\odot_i$ at all relevant states of $A^+(U)$ can be done as follows: 
If $\textsf{test}=\used$ ($\textsf{test}=\free$), we define the region $R^+_i$ that includes (excludes) exactly the states $\bot_i,s^i_0$ and defines a \textsf{test} signature for $\odot_i$, a \swap\ signature for $u_i, \ominus_i, \ominus_{i+1}$ (if they exists) and a \nop\ signature for the remaining events of $A^+(U)$.
Similarly, for $i\in \{1,\dots, n\}$ we inhibit $\ominus_i$ in $A^+(U)$, by the region $R^+_i$ that includes (excludes) only the states $\bot_{i-1},\bot_i$ and defines a \textsf{test} signature for $\ominus_i$, a \swap\ signature for $\odot_i,\ominus_{i-1},\ominus_{i+1}$ (if they exist) and a \nop\ signature for the remaining events of $A^+(U)$.
\end{proof}


\subsection{The General Reduction Scheme}\label{sec:general_scheme}

Our general scheme can be set up to a specific reduction by the turn switch $\sigma$.
In each of its six positions, $\sigma$ covers a whole collection of net classes.
Therefore, we simply understand the positions $\sigma_1, \dots, \sigma_6$ as the type sets managed by the respective reductions: 
\[\begin{array}{rcl@{\quad}rcl}
\sigma_1 &=& \{\tau_1 \cup \omega \mid \omega \subseteq \{\used, \free\} \} &
\sigma_2 &=& \{\tau_3 \cup \omega \mid \omega \subseteq \{\out, \res, \used, \free\}\\
\sigma_3 &=& \{\tau_2 \cup \omega \mid \omega \subseteq \{\used, \free\} \} &
\sigma_4 &=& \{\tau_3 \cup \{\swap\} \cup \omega \mid \omega \subseteq \{\out, \res, \used, \free\} \}\\
\sigma_5 &=& \{\tau_4 \cup \{\free\} \} &
\sigma_6 &=& \{\tau_4 \cup \{\used\} \cup \omega \mid \omega \subseteq \{\res, \free\} \}
\end{array}\]
The input to our scheme is the switch position $\sigma \in \{\sigma_1, \dots, \sigma_6\}$ and a cubic monotone boolean $3$-CNF $\varphi = \{\zeta_0, \dots, \zeta_{m-1}\}$, a set of negation-free $3$-clauses over the variables $V(\varphi)$ such that every variable is a member of exactly three clauses.
According to \cite{MR2001}, it is NP-complete to decide if $\varphi$ has a one-in-three model, that is, a subset $M \subseteq V(\varphi)$ of variables that hit every clause exactly once, which means $\vert M \cap \zeta_i\vert  = 1$ for all $i \in \{0, \dots, m-1\}$.
The result is a union $U^\sigma_\varphi$ of modest gadget TSs with the following properties:

\begin{enumerate}
\item
The variables $V(\varphi)$ are a subset of $E(U^\sigma_\varphi)$, the union events.
\item
There is a key state $s_{key} \in S(U^\sigma_\varphi)$ and a key event $k \in E(U^\sigma_\varphi)$ with $\neg s_{key} \edge{k}$.
\item\label{condition_three}
For every $\tau \in \sigma$, there is a $\tau$-region inhibiting $k$ at $s_{key}$ if and only if $\varphi$ has a one-in-three model $M$.
\item\label{condition_four}
For every $\tau \in \sigma$, the $\tau$-inhibitability of $k$ at $s_{key}$ implies that all ESSP atoms and all SSP atoms of $U^\sigma_\varphi$ are solvable.
\end{enumerate}

A polynomial time reduction scheme with these properties proves Theorem~\ref{the:main_result} as follows:
Condition \ref{condition_four} makes $\tau$-ESSP and $\tau$-feasibility the same problem for $U^\sigma_\varphi$.
Thus, feasibility is reduced to language viability and we subsequently concentrate on the NP-completeness proof for this problem.
In fact, by a one-in-three model for $\varphi$ Condition \ref{condition_three} makes $k$ inhibitable at the key state and Condition \ref{condition_four} leads to the ESSP, the SSP, and thus, feasibility, of $U^\sigma_\varphi$.
Reversely, a feasible $U^\sigma_\varphi$ has the ESSP by definition, which inhibits $k$ at the key state and, thus, leads to a one-in-three model of $\varphi$ by Condition \ref{condition_three}.
Lemma~\ref{lem:union_validity} transfers the whole argumentation to the joined TS $A(U^\sigma_\varphi)$ proving the NP-completeness of $\tau$-feasibility for all $\tau$ in the positions $\sigma_1, \dots, \sigma_6$.
Every remaining class $\tilde{\tau}$ of Theorem~\ref{the:main_result} is isomorphic to one of the already covered cases $\tau$ which makes $\tilde{\tau}$-feasibility NP-complete by Lemma \ref{lem:isomorphic_types}.

To present an example of our reduction, Figure \ref{fig:example} shows $A(U^{\sigma_4}_\varphi)$ for the $3$-CNF $\varphi=\{\zeta_0,\dots, \zeta_{5}\}$ built of the six clauses $\zeta_0=\{X_0,X_1,X_2\}$, $\zeta_1= \{X_2,X_0,X_3\}$, $\zeta_2= \{X_1,X_3,X_0\}$, $\zeta_3= \{X_2,X_4,X_5\}$, $\zeta_4=\{X_1,X_5,X_4\}$, and $\zeta_5= \{X_4,X_3,X_5\}$.
Ignoring the connector states and transitions, the figure also shows the complete union $U^{\sigma_4}_{\varphi}$ together with a $\tau$-region that inhibits the key event $k$ at the key state $h_{0,6}$ for all $\tau \in \sigma_4$.
The support $sup$ includes exactly the red emphasized states and the signature $sig$ is assumed to be defined in accordance to Lemmas~\ref{lem:indicator_regions} and~\ref{lem:key_union} from Sections~\ref{sec:translators} and~\ref{sec:key_unions}, respectively.
The example can be used to comprehend all the steps of the reduction scheme laid out in this section.
\newcommand{\nscale}[1]{\ensuremath{\escale{#1}}}
\newcommand{\escale}[1]{\ensuremath{\textbf{\scalebox{0.6}{#1}}}}
\newcommand{\transExample}[9]{
\ifstrequal{#8}{0}{
\begin{scope}[nodes={set=import nodes}, yshift=-#9 cm, ]
		\coordinate(0) at (0,0);
		\coordinate(1) at (5.5,0);
		\coordinate(2) at (11,0);
		\foreach \i in {0} {\fill[red!40, rounded corners] (\i) +(-0.5,-0.25) rectangle +(0.6,0.35);}
		\foreach \i in {1} {\fill[red!40, rounded corners] (\i) +(-0.5,-0.25) rectangle +(3.6,0.35);}
		\foreach \i in {2} {\fill[red!40, rounded corners] (\i) +(-0.5,-0.25) rectangle +(2.1,0.35);}
		\coordinate(3) at (1,1.5);
		\coordinate(4) at (6.5,1.5);
		\coordinate(5) at (12,1.5);
		\foreach \i in {3,4,5} {\fill[red!40, rounded corners] (\i) +(-0.5,-0.25) rectangle +(0.6,0.35);}
		\node (00) at (1,1.5) {\nscale{$t_{#1,0,0}$}};
		\node (10) at (2.25,1) {\nscale{$t_{#1,0,1}$}};	
		\node (20) at (0,0) {\nscale{$t_{#1,0,2}$}};
		\node (30) at (1.5,0) {\nscale{$t_{#1,0,3}$}};
		\node (40) at (3,0) {\nscale{$t_{#1,0,4}$}};
		\node (50) at (4.5,0) {\nscale{$t_{#1,0,5}$}};
	
		\node (01) at (6.5,1.5) {\nscale{$t_{#1,1,0}$}};
		\node (11) at (7.75,1) {\nscale{$t_{#1,1,1}$}};	
		\node (21) at (5.5,0) {\nscale{$t_{#1,1,2}$}};
		\node (31) at (7,0) {\nscale{$t_{#1,1,3}$}};
		\node (41) at (8.5,0) {\nscale{$t_{#1,1,4}$}};
		\node (51) at (10,0) {\nscale{$t_{#1,1,5}$}};
		
		\node (02) at (12,1.5) {\nscale{$t_{#1,2,0}$}};
		\node (12) at (13.25,1) {\nscale{$t_{#1,2,1}$}};
		\node (22) at (11,0) {\nscale{$t_{#1,2,2}$}};
		\node (32) at (12.5,0) {\nscale{$t_{#1,2,3}$}};
		\node (42) at (14,0) {\nscale{$t_{#1,2,4}$}};
		\node (52) at (15.5,0) {\nscale{$t_{#1,2,5}$}};
		
\end{scope}
\graph {
	(import nodes);
			00 ->["$\escale{$k$}$"]10;
			10 ->[swap, bend right =20, "\pgfmathparse{int(3*#1)} \escale{$v_{\pgfmathresult}$}"]20;
			10 ->[bend left =20, "\pgfmathparse{int(3*#1)}  \escale{$w_{\pgfmathresult}$}"]50;
			20->[ bend left =20, "\escale{#2}"]30;
			30 ->[bend left =20, "\escale{#3}"]40;
			40 ->[bend left =20, "\escale{#4}"]50;
			30 ->[bend left =20, "\escale{#5}"]20;
			40 ->[bend left =20, "\escale{#6}"]30;
			50->[bend left =20,  "\escale{#7}"]40;
			
			01 ->["$\escale{$k$}$"]11;
			11 ->[swap, bend right =20, "\pgfmathparse{int(3*#1+1)}  \escale{$v_{\pgfmathresult}$}"]21;
			11 ->[bend left =20,  "\pgfmathparse{int(3*#1+1)}  \escale{$w_{\pgfmathresult}$}"]51;
			21->[ bend left =20, "\escale{#3}"]31;
			31 ->[bend left =20, "\escale{#4}"]41;
			41 ->[bend left =20, "\escale{#2}"]51;
			31 ->[bend left =20, "\escale{#6}"]21;
			41 ->[bend left =20, "\escale{#7}"]31;
			51->[bend left =20,  "\escale{#5}"]41;
		
			02 ->["$\escale{$k$}$"]12;
			12 ->[swap, bend right =20,  "\pgfmathparse{int(3*#1+2)}  \escale{$v_{\pgfmathresult}$}"]22;
			12 ->[bend left=20,  "\pgfmathparse{int(3*#1+2)}  \escale{$w_{\pgfmathresult}$}"]52;
			22->[bend left =20, "\escale{#4}"]32;
			32 ->[bend left =20, "\escale{#2}"]42;
			42 ->[bend left =20, "\escale{#3}"]52;
			32 ->[bend left =20,  "\escale{#7}"]22;
			42->[bend left =20,  "\escale{#5}"]32;
			52 ->[bend left =20,  "\escale{#6}"]42;

	};

}{
\ifstrequal{#8}{1}{
\begin{scope}[nodes={set=import nodes}, yshift=-#9 cm, ]
		\coordinate(0) at (0,0);
		\coordinate(1) at (5.5,0);
		\coordinate(2) at (11,0);
		\foreach \i in {0} {\fill[red!40, rounded corners] (\i) +(-0.5,-0.25) rectangle +(2.1,0.35);}
		\foreach \i in {1} {\fill[red!40, rounded corners] (\i) +(-0.5,-0.25) rectangle +(0.6,0.35);}
		\foreach \i in {2} {\fill[red!40, rounded corners] (\i) +(-0.5,-0.25) rectangle +(3.6,0.35);}
		\coordinate(3) at (1,1.5);
		\coordinate(4) at (6.5,1.5);
		\coordinate(5) at (12,1.5);
		\foreach \i in {3,4,5} {\fill[red!40, rounded corners] (\i) +(-0.5,-0.25) rectangle +(0.6,0.35);}
		\node (00) at (1,1.5) {\nscale{$t_{#1,0,0}$}};
		\node (10) at (2.25,1) {\nscale{$t_{#1,0,1}$}};	
		\node (20) at (0,0) {\nscale{$t_{#1,0,2}$}};
		\node (30) at (1.5,0) {\nscale{$t_{#1,0,3}$}};
		\node (40) at (3,0) {\nscale{$t_{#1,0,4}$}};
		\node (50) at (4.5,0) {\nscale{$t_{#1,0,5}$}};
	
		\node (01) at (6.5,1.5) {\nscale{$t_{#1,1,0}$}};
		\node (11) at (7.75,1) {\nscale{$t_{#1,1,1}$}};	
		\node (21) at (5.5,0) {\nscale{$t_{#1,1,2}$}};
		\node (31) at (7,0) {\nscale{$t_{#1,1,3}$}};
		\node (41) at (8.5,0) {\nscale{$t_{#1,1,4}$}};
		\node (51) at (10,0) {\nscale{$t_{#1,1,5}$}};
		
		\node (02) at (12,1.5) {\nscale{$t_{#1,2,0}$}};
		\node (12) at (13.25,1) {\nscale{$t_{#1,2,1}$}};
		\node (22) at (11,0) {\nscale{$t_{#1,2,2}$}};
		\node (32) at (12.5,0) {\nscale{$t_{#1,2,3}$}};
		\node (42) at (14,0) {\nscale{$t_{#1,2,4}$}};
		\node (52) at (15.5,0) {\nscale{$t_{#1,2,5}$}};
		
\end{scope}
\graph {
	(import nodes);
			00 ->["$\escale{$k$}$"]10;
			10 ->[swap, bend right =20, "\pgfmathparse{int(3*#1)} \escale{$v_{\pgfmathresult}$}"]20;
			10 ->[bend left =20, "\pgfmathparse{int(3*#1)}  \escale{$w_{\pgfmathresult}$}"]50;
			20->[ bend left =20, "\escale{#2}"]30;
			30 ->[bend left =20, "\escale{#3}"]40;
			40 ->[bend left =20, "\escale{#4}"]50;
			30 ->[bend left =20, "\escale{#5}"]20;
			40 ->[bend left =20, "\escale{#6}"]30;
			50->[bend left =20,  "\escale{#7}"]40;
			
			01 ->["$\escale{$k$}$"]11;
			11 ->[swap, bend right =20, "\pgfmathparse{int(3*#1+1)}  \escale{$v_{\pgfmathresult}$}"]21;
			11 ->[bend left =20,  "\pgfmathparse{int(3*#1+1)}  \escale{$w_{\pgfmathresult}$}"]51;
			21->[ bend left =20, "\escale{#3}"]31;
			31 ->[bend left =20, "\escale{#4}"]41;
			41 ->[bend left =20, "\escale{#2}"]51;
			31 ->[bend left =20, "\escale{#6}"]21;
			41 ->[bend left =20, "\escale{#7}"]31;
			51->[bend left =20,  "\escale{#5}"]41;
		
			02 ->["$\escale{$k$}$"]12;
			12 ->[swap, bend right =20,  "\pgfmathparse{int(3*#1+2)}  \escale{$v_{\pgfmathresult}$}"]22;
			12 ->[bend left=20,  "\pgfmathparse{int(3*#1+2)}  \escale{$w_{\pgfmathresult}$}"]52;
			22->[bend left =20, "\escale{#4}"]32;
			32 ->[bend left =20, "\escale{#2}"]42;
			42 ->[bend left =20, "\escale{#3}"]52;
			32 ->[bend left =20,  "\escale{#7}"]22;
			42->[bend left =20,  "\escale{#5}"]32;
			52 ->[bend left =20,  "\escale{#6}"]42;

	};
}{
\begin{scope}[nodes={set=import nodes}, yshift=-#9 cm, ]

		\coordinate(0) at (0,0);
		\coordinate(1) at (5.5,0);
		\coordinate(2) at (11,0);
		\foreach \i in {0} {\fill[red!40, rounded corners] (\i) +(-0.5,-0.25) rectangle +(3.6,0.35);}
		\foreach \i in {1} {\fill[red!40, rounded corners] (\i) +(-0.5,-0.25) rectangle +(2.1,0.35);}
		\foreach \i in {2} {\fill[red!40, rounded corners] (\i) +(-0.5,-0.25) rectangle +(0.6,0.35);}
		\coordinate(3) at (1,1.5);
		\coordinate(4) at (6.5,1.5);
		\coordinate(5) at (12,1.5);
		\foreach \i in {3,4,5} {\fill[red!40, rounded corners] (\i) +(-0.5,-0.25) rectangle +(0.6,0.35);}
		\node (00) at (1,1.5) {\nscale{$t_{#1,0,0}$}};
		\node (10) at (2.25,1) {\nscale{$t_{#1,0,1}$}};	
		\node (20) at (0,0) {\nscale{$t_{#1,0,2}$}};
		\node (30) at (1.5,0) {\nscale{$t_{#1,0,3}$}};
		\node (40) at (3,0) {\nscale{$t_{#1,0,4}$}};
		\node (50) at (4.5,0) {\nscale{$t_{#1,0,5}$}};
	
		\node (01) at (6.5,1.5) {\nscale{$t_{#1,1,0}$}};
		\node (11) at (7.75,1) {\nscale{$t_{#1,1,1}$}};	
		\node (21) at (5.5,0) {\nscale{$t_{#1,1,2}$}};
		\node (31) at (7,0) {\nscale{$t_{#1,1,3}$}};
		\node (41) at (8.5,0) {\nscale{$t_{#1,1,4}$}};
		\node (51) at (10,0) {\nscale{$t_{#1,1,5}$}};
		
		\node (02) at (12,1.5) {\nscale{$t_{#1,2,0}$}};
		\node (12) at (13.25,1) {\nscale{$t_{#1,2,1}$}};
		\node (22) at (11,0) {\nscale{$t_{#1,2,2}$}};
		\node (32) at (12.5,0) {\nscale{$t_{#1,2,3}$}};
		\node (42) at (14,0) {\nscale{$t_{#1,2,4}$}};
		\node (52) at (15.5,0) {\nscale{$t_{#1,2,5}$}};
		
\end{scope}
\graph {
	(import nodes);
			00 ->["$\escale{$k$}$"]10;
			10 ->[swap, bend right =20, "\pgfmathparse{int(3*#1)} \escale{$v_{\pgfmathresult}$}"]20;
			10 ->[bend left =20, "\pgfmathparse{int(3*#1)}  \escale{$w_{\pgfmathresult}$}"]50;
			20->[ bend left =20, "\escale{#2}"]30;
			30 ->[bend left =20, "\escale{#3}"]40;
			40 ->[bend left =20, "\escale{#4}"]50;
			30 ->[bend left =20, "\escale{#5}"]20;
			40 ->[bend left =20, "\escale{#6}"]30;
			50->[bend left =20,  "\escale{#7}"]40;
			
			01 ->["$\escale{$k$}$"]11;
			11 ->[swap, bend right =20, "\pgfmathparse{int(3*#1+1)}  \escale{$v_{\pgfmathresult}$}"]21;
			11 ->[bend left =20,  "\pgfmathparse{int(3*#1+1)}  \escale{$w_{\pgfmathresult}$}"]51;
			21->[ bend left =20, "\escale{#3}"]31;
			31 ->[bend left =20, "\escale{#4}"]41;
			41 ->[bend left =20, "\escale{#2}"]51;
			31 ->[bend left =20, "\escale{#6}"]21;
			41 ->[bend left =20, "\escale{#7}"]31;
			51->[bend left =20,  "\escale{#5}"]41;
		
			02 ->["$\escale{$k$}$"]12;
			12 ->[swap, bend right =20,  "\pgfmathparse{int(3*#1+2)}  \escale{$v_{\pgfmathresult}$}"]22;
			12 ->[bend left=20,  "\pgfmathparse{int(3*#1+2)}  \escale{$w_{\pgfmathresult}$}"]52;
			22->[bend left =20, "\escale{#4}"]32;
			32 ->[bend left =20, "\escale{#2}"]42;
			42 ->[bend left =20, "\escale{#3}"]52;
			32 ->[bend left =20,  "\escale{#7}"]22;
			42->[bend left =20,  "\escale{#5}"]32;
			52 ->[bend left =20,  "\escale{#6}"]42;		
	};
}}
}
\newcommand{\generator}[6]{
\begin{scope}[nodes={set=import nodes}, yshift=#5 cm, xshift=#6 cm]
		
		\coordinate (1) at (0,0) ;
		\foreach \i in {1} {\fill[red!40, rounded corners] (\i) +(-0.4,-0.3) rectangle +(1.8,0.4);}
		\node (00) at (0,0) {\nscale{$g^{#1,#2}_{#3,0}$}};
		\node (10) at (1.5,0) {\nscale{$g^{#1,#2}_{#3,1}$}};	
		\node (20) at (0,-1) {\nscale{$g^{#1,#2}_{#3,2}$}};
		\node (30) at (1.5,-1) {\nscale{$g^{#1,#2}_{#3,3}$}};
		\node (dots) at (3,-0.6) {$\dots$};
		\coordinate (2) at (4.5cm,0) ;
		\foreach \i in {2} {\fill[red!40, rounded corners] (\i) +(-0.4,-0.3) rectangle +(1.9,0.4);}
		\node (01) at (4.5,0) {\nscale{$g^{#1,#2}_{#4,0}$}};
		\node (11) at (6,0) {\nscale{$g^{#1,#2}_{#4,1}$}};	
		\node (21) at (4.5,-1) {\nscale{$g^{#1,#2}_{#4,2}$}};
		\node (31) at (6,-1) {\nscale{$g^{#1,#2}_{#4,3}$}};
\end{scope}
\graph {
	(import nodes);
			00 ->["$\escale{$#1_{#3}$}$"]10;
			00 ->[swap, "$\escale{$k$}$"]20;
			20 ->[swap,"$\escale{$#2_{#3}$}$"]30;
			10 ->["$\escale{$k$}$"]30;
			01 ->["$\escale{$#1_{#4}$}$"]11;
			01 ->[swap, "$\escale{$k$}$"]21;
			21 ->[swap,"$\escale{$#2_{#4}$}$"]31;
			11 ->["$\escale{$k$}$"]31;
			};

}
\def\myFigureScale{0.75}
\begin{figure}[b!]
\centering
\begin{tikzpicture}[new set = import nodes, scale=\myFigureScale, every node/.style={scale=\myFigureScale}]

\transExample{0}{$X_0$}{$X_1$}{$X_2$}{$x_0$}{$x_1$}{$x_2$}{0}{0}
\draw [decorate, decoration={brace, amplitude=3pt}]
 (-0.75,-0.5)-- (-0.75,2) node [midway, left,   xshift=-0.25cm] {$T^{\sigma_4}_0$};

\node (b_0) at (1,2.5) {\nscale{$\bot_0$}};
\draw[->] (b_0)--(00)node [pos=0.4, left ] {\escale{$\odot_0$}};
\node (b_1) at (6.5,2.5) {\nscale{$\bot_1$}};
\draw[->] (b_1)--(01)node [pos=0.4, left ] {\escale{$\odot_1$}};
\node (b_2) at (12,2.5) {\nscale{$\bot_2$}};
\draw[->] (b_2)--(02)node [pos=0.4, left ] {\escale{$\odot_2$}};
\draw[->] (b_0)--(b_1)node [pos=0.4, above] {\escale{$\ominus_1$}};
\draw[->] (b_1)--(b_2)node [pos=0.4, above ] {\escale{$\ominus_2$}};
\coordinate (x_0) at (16,2.5) {};
\coordinate (y_0) at (16,-1) {};
\draw (b_2)--(x_0)--(y_0)node [pos=0.4, right ] {\escale{$\ominus_3$}};

\transExample{1}{$X_2$}{$X_0$}{$X_3$}{$x_2$}{$x_0$}{$x_3$}{1}{3.5}
\draw [decorate, decoration={brace, amplitude=3pt}]
 (16,-1.5) --(16,-4) node [midway, right,   xshift=0.25cm] {$T^{\sigma_4}_1$};
\node (b_3) at (12,-1) {\nscale{$\bot_3$}};
\draw[->] (b_3)--(02)node [pos=0.4, left ] {\escale{$\odot_3$}};
\node (b_4) at (6.5,-1) {\nscale{$\bot_4$}};
\draw[->] (b_4)--(01)node [pos=0.4, left ] {\escale{$\odot_4$}};
\node (b_5) at (1,-1) {\nscale{$\bot_5$}};
\draw[->] (b_5)--(00)node [pos=0.4, left ] {\escale{$\odot_5$}};
\draw[->] (y_0)--(b_3);
\draw[->] (b_3)--(b_4)node [pos=0.4, below ] {\escale{$\ominus_4$}};
\draw[->] (b_4)--(b_5)node [pos=0.4, below  ] {\escale{$\ominus_5$}};
\coordinate (x_1) at (-1,-1) ;
\coordinate (y_1) at (-1,-4.5);
\draw (b_5)--(x_1)--(y_1)node [pos=0.4, left ] {\escale{$\ominus_6$}};

\transExample{2}{$X_1$}{$X_3$}{$X_0$}{$x_1$}{$x_3$}{$x_0$}{2}{7}
\draw [decorate, decoration={brace, amplitude=3pt}]
 (-0.75,-7.5)--(-0.75,-5) node [midway, left,   xshift=-0.25cm] {$T^{\sigma_4}_2$};
\node (b_8) at (12,-4.5) {\nscale{$\bot_8$}};
\draw[->] (b_8)--(02)node [pos=0.4, left ] {\escale{$\odot_8$}};
\node (b_7) at (6.5,-4.5) {\nscale{$\bot_7$}};
\draw[->] (b_7)--(01)node [pos=0.4, left ] {\escale{$\odot_7$}};
\node (b_6) at (1,-4.5) {\nscale{$\bot_6$}};
\draw[->] (b_6)--(00)node [pos=0.4, left ] {\escale{$\odot_6$}};
\draw[->] (y_1)--(b_6);
\draw[->] (b_6)--(b_7)node [pos=0.4, below  ] {\escale{$\ominus_7$}};
\draw[->] (b_7)--(b_8)node [pos=0.4, below  ] {\escale{$\ominus_8$}};
\coordinate (x_2) at (16,-4.5) {};
\coordinate (y_2) at (16,-8) {};
\draw (b_8)--(x_2)--(y_2)node [pos=0.4, right ] {\escale{$\ominus_9$}};
\transExample{3}{$X_2$}{$X_4$}{$X_5$}{$x_2$}{$x_4$}{$x_5$}{1}{10.5}
\draw [decorate, decoration={brace, amplitude=3pt}]
 (16,-8.5) --(16,-11) node [midway, right,   xshift=0.25cm] {$T^{\sigma_4}_3$};
\node (b_9) at (12,-8) {\nscale{$\bot_9$}};
\draw[->] (b_9)--(02)node [pos=0.4, left ] {\escale{$\odot_9$}};
\node (b_10) at (6.5,-8) {\nscale{$\bot_{10}$}};
\draw[->] (b_10)--(01)node [pos=0.4, left ] {\escale{$\odot_{10}$}};
\node (b_11) at (1,-8) {\nscale{$\bot_{11}$}};
\draw[->] (b_11)--(00)node [pos=0.4, left ] {\escale{$\odot_{11}$}};
\draw[->] (y_2)--(b_9);
\draw[->] (b_9)--(b_10)node [pos=0.4, below  ] {\escale{$\ominus_{10}$}};
\draw[->] (b_10)--(b_11)node [pos=0.4, below  ] {\escale{$\ominus_{11}$}};
\coordinate (x_3) at (-1,-8) ;
\coordinate (y_3) at (-1,-11.5);
\draw (b_11)--(x_3)--(y_3)node [pos=0.4, left ] {\escale{$\ominus_{12}$}};

\transExample{4}{$X_1$}{$X_5$}{$X_4$}{$x_1$}{$x_5$}{$x_4$}{2}{14}
\draw [decorate, decoration={brace, amplitude=3pt}]
(-0.75,-14.5)-- (-0.75,-12) node [midway, left,   xshift=-0.25cm] {$T^{\sigma_4}_4$};
\node (b_14) at (12,-11.5) {\nscale{$\bot_{14}$}};					
\draw[->] (b_14)--(02)node [pos=0.4, left ] {\escale{$\odot_{14}$}};		
\node (b_13) at (6.5,-11.5) {\nscale{$\bot_{13}$}};					
\draw[->] (b_13)--(01)node [pos=0.4, left ] {\escale{$\odot_{13}$}};		
\node (b_12) at (1,-11.5) {\nscale{$\bot_{12}$}};						
\draw[->] (b_12)--(00)node [pos=0.4, left ] {\escale{$\odot_{12}$}};		
\draw[->] (y_3)--(b_12);										
\draw[->] (b_12)--(b_13)node [pos=0.4, below  ] {\escale{$\ominus_{13}$}};
\draw[->] (b_13)--(b_14)node [pos=0.4, below ] {\escale{$\ominus_{14}$}};
\coordinate (x_4) at (16,-11.5) {};
\coordinate (y_4) at (16,-15) {};
\draw (b_14)--(x_4)--(y_4)node [pos=0.4, right ] {\escale{$\ominus_{15}$}};

\transExample{5}{$X_4$}{$X_3$}{$X_5$}{$x_4$}{$x_3$}{$x_5$}{0}{17.5}
\draw [decorate, decoration={brace, amplitude=3pt}]
 (16,-15.5) --(16,-18) node [midway, right,   xshift=0.25cm] {$T^{\sigma_4}_5$};
\node (b_15) at (12,-15) {\nscale{$\bot_{15}$}};
\draw[->] (b_15)--(02)node [pos=0.4, left ] {\escale{$\odot_{15}$}};
\node (b_16) at (6.5,-15) {\nscale{$\bot_{16}$}};
\draw[->] (b_16)--(01)node [pos=0.4, left ] {\escale{$\odot_{16}$}};
\node (b_17) at (1,-15) {\nscale{$\bot_{17}$}};
\draw[->] (b_17)--(00)node [pos=0.4, left ] {\escale{$\odot_{17}$}};
\draw[->] (y_4)--(b_15);
\draw[->] (b_15)--(b_16)node [pos=0.4, below ] {\escale{$\ominus_{16}$}};
\draw[->] (b_16)--(b_17)node [pos=0.4, below ] {\escale{$\ominus_{17}$}};
\coordinate (x_5) at (-1,-15) ;
\coordinate (y_5) at (-1,-18);
\node(dots) at (-1,-18.5){$\vdots$};
\draw (b_17)--(x_5)--(y_5)node [pos=0.4, left ] {\escale{$\ominus_{18}$}};
\end{tikzpicture}
\caption{
The result TS $A(U^{\sigma_4}_\varphi)$ of our reduction under switch position $\sigma_4$ for input $\varphi=\{\zeta_0,\dots, \zeta_{5}\}$.
The red marked states define the support $sup$ of a key region $(sup,sig)$.
}
\end{figure}
\begin{figure}[t!]\ContinuedFloat
\centering
\begin{tikzpicture}[new set = import nodes, scale=\myFigureScale, every node/.style={scale=\myFigureScale}]

\generator{u}{q}{0}{17}{0}{0}
\draw [decorate, decoration={brace, amplitude=3pt}]
(-0.75,-1)-- (-0.75,1) node [midway, above,  xshift=-0.25cm, rotate=90] {\scalebox{1.2}{$\substack{G^{\_,q}_0,\dots, G^{\_,q}_{17} \\ G^{\_,y}_0,\dots, G^{\_,y}_{17}}$}};
\node (18) at (0,3) {\nscale{$\vdots$}};
\node (b_18) at (0,1.5) {\nscale{$\bot_{18}$}};
\draw[->] (18)--(b_18)node [pos=0.4, left ] {\escale{$\ominus_{18}$}};
\draw[->] (b_18)--(00)node [pos=0.4, right] {\escale{$\odot_{18}$}};
\node (b_35) at (4.5,1.5) {\nscale{$\bot_{35}$}};
\draw[->] (b_35)--(01)node [pos=0.4, right] {\escale{$\odot_{35}$}};
\coordinate (x_0) at (1.75,1.5) ;
\node (dots) at (2.5,1.5){\dots};
\coordinate (y_0) at (3,1.5);
\draw[->] (b_18)--(x_0)node [pos=0.4, above ] {\escale{$\ominus_{19}$}};
\draw[->] (y_0)--(b_35)node [pos=0.4, above ] {\escale{$\ominus_{35}$}};

\generator{u'}{y}{0}{17}{0}{7.5}
\node (b_36) at (7.5,1.5) {\nscale{$\bot_{36}$}};
\draw[->] (b_36)--(00)node [pos=0.4, right ] {\escale{$\odot_{36}$}};
\node (b_53) at (12,1.5) {\nscale{$\bot_{53}$}};
\draw[->] (b_53)--(01)node [pos=0.4, right ] {\escale{$\odot_{53}$}};
\coordinate (x_1) at (9,1.5) ;
\node (dots_2) at (10,1.5){\dots};
\coordinate (y_1) at (10.5,1.5);
\draw[->] (b_35)--(b_36)node [pos=0.4, above ] {\escale{$\ominus_{36}$}};
\draw[->] (b_36)--(x_1)node [pos=0.4, above ] {\escale{$\ominus_{37}$}};
\draw[->] (y_1)--(b_53)node [pos=0.4, above ] {\escale{$\ominus_{53}$}};

\coordinate (x_2) at (14.25,1.5) ;
\coordinate (y_2) at (14.25,-2) ;
\draw(b_53)--(x_2) node [pos=0.4, above ] {};
\draw (x_2)--(y_2) node[pos=0.5, right] {\escale{$\ominus_{54}$}};

\generator{c}{c}{0}{34}{-3.5}{7.5}

\draw [decorate, decoration={brace, amplitude=3pt}]
(14.25,-2.5)-- (14.25,-4.5) node [midway, below,  xshift=0.25cm, rotate=90] {\scalebox{1.2}{$\substack{G^{x,\_}_0,\dots, G^{x,\_}_{5} \\ G^{c,c}_0,\dots, G^{c,c}_{34}}$}};
\node (b_88) at (7.5,-2) {\nscale{$\bot_{88}$}};
\draw[->] (b_88)--(00)node [pos=0.4, right] {\escale{$\odot_{88}$}};
\node (b_54) at (12,-2) {\nscale{$\bot_{54}$}};
\draw[->] (b_54)--(01)node [pos=0.4, right] {\escale{$\odot_{54}$}};
\draw[->] (y_2)--(b_54) node[pos=0.4, above] {};
\coordinate (x_3) at (9,-2) ;
\node (dots_2) at (9.8,-2){\dots};
\coordinate (y_3) at (10.5,-2);
\draw[->](b_54)--(y_3) node [pos=0.4, above ] {\escale{$\ominus_{55}$}};
\draw[->] (x_3)--(b_88) node[pos=0.4, above] {\escale{$\ominus_{88}$}};

\generator{x}{u''}{0}{5}{-3.5}{0}

\node (b_94) at (0,-2) {\nscale{$\bot_{94}$}};
\draw[->] (b_94)--(00)node [pos=0.4, right] {\escale{$\odot_{94}$}};
\node (b_89) at (4.5,-2) {\nscale{$\bot_{89}$}};
\draw[->] (b_89)--(01)node [pos=0.4, right ] {\escale{$\odot_{89}$}};
\coordinate (x_4) at (1.75,-2) ;
\node (dots) at (2.5,-2){\dots};
\coordinate (y_4) at (3,-2);
\draw[->] (b_88)--(b_89) node[pos=0.4, above]{\escale{$\ominus_{89}$}};
\draw[->] (x_4)--(b_94)node [pos=0.4, above ] {\escale{$\ominus_{94}$}};
\draw[->] (b_89)--(y_4)node [pos=0.4, above ] {\escale{$\ominus_{90}$}};

\coordinate (x_5) at (-0.75,-2) ;
\coordinate(y_5) at (-0.75,-5.25);
\draw (b_94)--(x_5)--(y_5)node [pos=0.9, left] {\escale{$\ominus_{95}$}};
%
%
\begin{scope}[nodes={set=import nodes},yshift=-6.5cm, ]
	\foreach \i in {0,...,4} {\coordinate (\i) at (\i*1.5cm,0);}
	\foreach \i in {0} {\fill[red!40, rounded corners] (\i) +(-0.5,-0.25) rectangle +(0.5,0.35);}
	\foreach \i in {3} {\fill[red!40, rounded corners] (\i) +(-2,-0.25) rectangle +(0.5,0.35);}
	\foreach \i in {0,...,4} {\node (n\i) at (\i) {\nscale{$f_{0,\i}$}};}
		
\end{scope}
\graph {
	(import nodes);
			n0 ->["$\escale{$k$}$"]n1;
			n1 ->["$\escale{$n_0$}$"]n2;
			n2 ->["$\escale{$z_0$}$"]n3;
			n3 ->["$\escale{$k$}$"]n4;
			};

\begin{scope}[nodes={set=import nodes}, yshift=-6.5cm,xshift=7cm]
		
		\coordinate (1) at (0,0) ;
		\foreach \i in {1} {\fill[red!40, rounded corners] (\i) +(-0.35,-0.35) rectangle +(1.85,0.4);}
		\node (00) at (0,0) {\nscale{$g^{n,u'''}_{0,0}$}};
		\node (10) at (1.5,0) {\nscale{$g^{n,u'''}_{0,1}$}};	
		\node (20) at (0,-1) {\nscale{$g^{n,u'''}_{0,2}$}};
		\node (30) at (1.5,-1) {\nscale{$g^{n,u'''}_{0,3}$}};
		
\end{scope}
\graph {
	(import nodes);
			00 ->["$\escale{$n_{0}$}$"]10;
			00 ->[swap, "$\escale{$k$}$"]20;
			20 ->[swap,"$\escale{$u'''_{0}$}$"]30;
			10 ->["$\escale{$k$}$"]30;
			
			};
\draw [decorate, decoration={brace, amplitude=3pt}]
(-0.75,-7.5)-- (-0.75,-6) node [midway, above,  xshift=-0.25cm, rotate=90] {$F_0,G^{n,\_}_0$};
\node(b_95) at (0,-5.25){\nscale{$\bot_{95}$}};
\draw[->] (y_5)--(b_95);
\draw[->] (b_95)--(n0)node[pos=0.4, left]{\escale{$\odot_{95}$}};
\node(b_96) at (7,-5.25){\nscale{$\bot_{96}$}};
\draw[->] (b_96)--(00)node[pos=0.4, left]{\escale{$\odot_{96}$}};
\draw[->] (b_95)--(b_96)node[pos=0.4, above]{\escale{$\ominus_{96}$}};
\coordinate (x_6) at (9.5,-5.25) ;
\coordinate(y_6) at (9.5,-8.25);


\begin{scope}[nodes={set=import nodes}, yshift=-9.5cm]
\foreach \i in {0,...,6} {\coordinate (h0\i) at (\i*2cm,0);}
\foreach \i in {0,...,6} {\coordinate (h3m_1\i) at (\i*2cm,-3);}
\foreach \i in {0,...,6} {\coordinate (h3m\i) at (\i*2cm,-4.5);}
\foreach \i in {0,...,6} {\coordinate (h6m_1\i) at (\i*2cm,-7.5);}
\foreach \i in {h00,h03,h3m_10,h3m_13,h3m0,h3m3,h6m_10,h6m_13} {\fill[red!40, rounded corners] (\i) +(-0.5,-0.25) rectangle +(0.6,0.25);}
\foreach \i in {0,...,6} {\node (h0\i) at (h0\i) {\nscale{$h_{0,\i}$}};}
\foreach \i in {0,...,6} {\node (h3m_1\i) at (h3m_1\i) {\nscale{$h_{17,\i}$}};}
\foreach \i in {0,...,6} {\node (h3m\i) at (h3m\i) {\nscale{$h_{18,\i}$}};}
\foreach \i in {0,...,6} {\node (h6m_1\i) at (h6m_1\i) {\nscale{$h_{35,\i}$}};}

\coordinate (d0) at(0,-1.5);
\coordinate (d1) at(12cm,-1.5cm);
\coordinate (r0_0) at(11,-0.75);
\coordinate (r0_1) at(6cm,-0.75cm);
\coordinate (r0_2) at(0.5cm,-0.75cm);
\node (d0) at (d0) {\nscale{$\vdots$}};
\node (d1) at (d1) {\nscale{$\vdots$}};
\draw[->, dashed, rounded corners] (h06) -- (r0_0)--(r0_1)--node[above] { \escale{$r_{0}$}}(r0_2)-- (d0);
\coordinate (r3m_2_0) at(11cm,-2.25cm);
\coordinate (r3m_2_1) at(6cm,-2.25cm);
\coordinate (r3m_2_2) at(0.5cm,-2.25cm);
\draw[->, dashed, rounded corners] (d1) -- (r3m_2_0)--(r3m_2_1)--node[above] { \escale{$r_{16}$}}(r3m_2_2)-- (h3m_10);
\coordinate (r3m_1_0) at(11cm,-3.75cm);
\coordinate (r3m_1_1) at(6cm,-3.75cm);
\coordinate (r3m_1_2) at(0.5cm,-3.75cm);
\draw[->, dashed, rounded corners] (h3m_16) -- (r3m_1_0)--(r3m_1_1)--node[above] { \escale{$r_{17}$}}(r3m_1_2)-- (h3m0);
\coordinate (d2) at(0,-6);
\coordinate (d3) at(12cm,-6cm);
\coordinate (r3m_0) at(11,-5.25);
\coordinate (r3m_1) at(6cm,-5.25cm);
\coordinate (r3m_2) at(0.5cm,-5.25cm);
\node (d2) at (d2) {\nscale{$\vdots$}};
\node (d3) at (d3) {\nscale{$\vdots$}};
\draw[->, dashed, rounded corners] (h3m6) -- (r3m_0)--(r3m_1)--node[above] { \escale{$r_{18}$}}(r3m_2)-- (d2);
\coordinate (r6m_2_0) at(11cm,-6.75cm);
\coordinate (r6m_2_1) at(6cm,-6.75cm);
\coordinate (r6m_2_2) at(0.5cm,-6.75cm);
\draw[->, dashed, rounded corners] (d3) -- (r6m_2_0)--(r6m_2_1)--node[above] { \escale{$r_{34}$}}(r6m_2_2)-- (h6m_10);

\graph { 
%
(h00) ->["\escale{$k$}"] (h01) ->["\escale{$z_0$}"] (h02) ->["\escale{$v_0$}"] (h03) ->["\escale{$k$}"] (h04) ->["\escale{$q_0$}"] (h05)->["\escale{$z_0$}"](h06);
%
(h3m_10) ->["\escale{$k$}"] (h3m_11) ->["\escale{$z_{17}$}"] (h3m_12) ->["\escale{$v_{17}$}"] (h3m_13) ->["\escale{$k$}"] (h3m_14) ->["\escale{$q_{17}$}"] (h3m_15)->["\escale{$z_{17}$}"](h3m_16);
%
(h3m0) ->["\escale{$k$}"] (h3m1) ->["\escale{$w_0$}"] (h3m2) ->["\escale{$p_{0}$}"] (h3m3) ->["\escale{$k$}"] (h3m4) ->["\escale{$y_{0}$}"] (h3m5)->["\escale{$w_0$}"](h3m6);
%
(h6m_10) ->["\escale{$k$}"] (h6m_11) ->["\escale{$w_{17}$}"] (h6m_12) ->["\escale{$p_{17}$}"] (h6m_13) ->["\escale{$k$}"] (h6m_14) ->["\escale{$y_{17}$}"] (h6m_15)->["\escale{$w_{17}$}"](h6m_16);

(h06)->["\escale{$c_0$}"] (d1); 
(d1)->["\escale{$c_{16}$}"] (h3m_16); 
(h3m_16)->["\escale{$c_{17}$}"] (h3m6); 
(h3m6)->["\escale{$c_{18}$}"] (d3); 
(d3)->["\escale{$c_{34}$}"] (h6m_16); 
};
\end{scope}
\draw [decorate, decoration={brace, amplitude=3pt}]
(14.25,-9.5)-- (14.25,-17) node [midway, right,  xshift=0.25cm] {$H$};
\node (z_6) at (0,-8.25){\nscale{$\bot_{97}$}};
\draw (b_96)--(x_6)--(y_6)node[pos=0.5, right]{\escale{$\ominus_{97}$}}--(z_6);
\draw[->](z_6)--(h00)node[pos=0.4, left]{\escale{$\odot_{97}$}};
\end{tikzpicture}
\caption{
The result TS $A(U^{\sigma_4}_\varphi)$ of our reduction (continued).
}
\label{fig:example}
\end{figure}

In the following, let $\sigma$ be turned to a position in $\{\sigma_1,\dots, \sigma_6\}$ and $\tau$ be a type of nets from $\sigma$.
To refer to events and states, the generic description of the $U^\sigma_\varphi$ uses lowercase English letters for states and regular events, uppercase $X$ for events that represent the variables of $\varphi$ and lowercase Greek letters for event placeholders, where the actual event depends on the switch position.
Nevertheless, working with many different objects, we cannot refrain from using Greek letters for other purposes, too.

For structure, $U^\sigma_\varphi$ is subdivided into a \emph{key union} $K^\sigma_m$, which depends on $\sigma$ and less on $\varphi$ (in fact, only on the number $m$ of clauses), and a \emph{translator union} $T^\sigma_\varphi$, which depends on $\varphi$ and less on $\sigma$. 
While the key union provides $k$ and $s_{key}$ together with other helpful events, the translator union represents $\varphi$.
The sense in splitting $U^\sigma_\varphi = U(K^\sigma_m, T^\sigma_\varphi)$ is the following:
If we have a $\tau$-region of $U^\sigma_\varphi$ that inhibits $k$ at $s_{key}$, it is by definition decomposed into a $\tau$-region $(sup_K, sig_K)$ of $K^\sigma_m$ and a compatible $\tau$-region $(sup, sig)$ of $T^\sigma_\varphi$.
We call $(sup_K,sig_K)$ a key region and $(sup, sig)$ an indicator region.
The relation between these two regions is only given by the few events shared among $K^\sigma_m$ and $T^\sigma_\varphi$, subsequently called the \emph{interface}.
Then, for a key region, the sole purpose of $K^\sigma_m$ is to somehow regulate the signature of events in the interface.
Using the interface conditions installed by $K^\sigma_m$, a compatible indicator region just makes sure that the variable events describe a one-in-three model of $\varphi$.
Reversely, if such a model exists, we can construct a $\tau$-region for $U^\sigma_\varphi$ inhibiting $k$ at $s_{key}$.

In the next step, we give an abstract description of $T^\sigma_\varphi$.
There, we use the variables $V(\varphi)=\{X_0,\dots, X_{m-1}\}$ as events.
For each $X_j$, we also add a corresponding helper event $x_j$.
Then, every clause $\zeta_i = \{X_{i,0}, X_{i,1}, X_{i,2}\} \subseteq V(\varphi)$ is implemented as a \emph{translator} $T^\sigma_i=U(T^\sigma_{i,0},T^\sigma_{i,1},T^\sigma_{i,2})$, a subunion of $T^\sigma_\varphi$. 
The TS $T^\sigma_{i,\alpha}$ that builds $T^\sigma_i$ with its three copies for $\alpha \in \{0,1,2\}$ is shown in Figure \ref{fig:translators}.1 for $\sigma \in \{\sigma_1, \dots, \sigma_4\}$ and in Figure \ref{fig:translators}.5 for $\sigma \in \{\sigma_5, \sigma_6\}$.
We let any choice of $\alpha \in \{0, 1, 2\}$ select one TS $T^\sigma_{i,\alpha}$ and also define $\beta = \alpha+1 \mod 3$ and $\gamma = \alpha+2 \mod 3$ to address the other two TSs in a specific consecutive manner.

The foundation of translation is to make sure for the three variable events $X_{i,0}, X_{i,1}, X_{i,2}$ of $\zeta_i$ that in an indicator region exactly one of them can get a signature different from \nop.
Taken across all translators, this implements the requirements of a one-in-three model $M =\{X \in V(\varphi) \mid sig(X) \not = \nop\}$ within the union $T^\sigma_\varphi$.
To effect this behavior, the TS $T^\sigma_{i,\alpha}$ provides two paths
\begin{align*}
P_{i,\alpha} &= s^0_{i,\alpha} \dots s^1_{i,\alpha} \edge{X_{i,\alpha}} s^2_{i,\alpha} \dots s^3_{i,\alpha} \edge{X_{i,\beta}} s^4_{i,\alpha,4} \dots s^5_{i,\alpha,5} \edge{X_{i,\gamma}} s^6_{i,\alpha} \dots s^7_{i,\alpha},\\
\dot{P}_{i,\alpha} &= \dot{s}^0_{i,\alpha} \dots \dot{s}^1_{i,\alpha} \bedge{x_{i,\alpha}} \dot{s}^2_{i,\alpha} \dots \dot{s}^3_{i,\alpha} \bedge{x_{i,\beta}} \dot{s}^4_{i,\alpha,4} \dots \dot{s}^5_{i,\alpha} \bedge{x_{i,\gamma}} \dot{s}^6_{i,\alpha} \dots \dot{s}^7_{i,\alpha}
\end{align*}
on states $s^0_{i,\alpha}, \dots, s^7_{i,\alpha}$, respectively $\dot{s}^0_{i,\alpha}, \dots, \dot{s}^7_{i,\alpha}$, containing transitions labeled with $X_{i,\alpha}$, $X_{i,\beta}$, $X_{i,\gamma}$, respectively $x_{i,\alpha}$, $x_{i,\beta}$, $x_{i,\gamma}$, at the given positions.
Notice, while cross checking with Figure \ref{fig:translators}, that states can be the same if they are linked by dots, like $s^0_{i, \alpha}$ and $s^1_{i, \alpha}$ both represent $t_{i, \alpha, 2}$, or have the same superscript, like $s^0_{i, \alpha}$ and $\dot{s}^0_{i, \alpha}$ both represent $t_{i, \alpha, 2}$, too.

For an indicator region $(sup, sig)$ of $T^\sigma_\varphi$, the basis of our construction is a synchronization of certain states on these paths.
Firstly, this concerns for all $j \in \{0, \dots, 7\}$ the opposing states $s^j_{i,\alpha}$ and $\dot{s}^j_{i,\alpha}$ on the two paths, that is, $sup(s^j_{i,\alpha}) = sup(\dot{s}^j_{i,\alpha})$.
Secondly, across all translators, we synchronize all initial and all terminal states
\[sup(s^0_{0,\alpha}) = \dots = sup(s^0_{m-1,\alpha}) \not= sup(s^7_{0,\alpha}) = \dots = sup(s^7_{m-1,\alpha})\]
of the primal paths and make sure that the support of initial differs from terminal states.
As part of the interface, the placeholders $\xi^\sigma_{3i+\alpha}$ and $\theta^\sigma_{3i+\alpha}$, called \emph{materializers}, have a significant role in this second synchronization process.
Playing together, all materializers $\xi^\sigma_{3i}, \xi^\sigma_{3i+1}, \xi^\sigma_{3i+2}, \theta^\sigma_{3i}, \theta^\sigma_{3i+1}, \theta^\sigma_{3i+2}$ of $T^\sigma_i$ also synchronize the initial and terminal states
\[sup(s^0_{i,0}) = sup(s^0_{i,1}) = sup(s^0_{i,2}) \text{ and } sup(s^7_{i,0}) = sup(s^7_{i,1}) = sup(s^7_{i,2})\]
across the three TSs of $T^\sigma_i$.

Recall that the indicator region $(sup, sig)$ maps the TSs of $T^\sigma_i$ to the type of nets TS $\tau$.
This includes $P_{i,\alpha}$ and $\dot{P}_{i,\alpha}$ which become paths in TS $\tau$ traversing along the states $\{0, 1\}$.
By the previous synchronization of states, the events $E(P_{i,\alpha}) \setminus \zeta_i$ on the primal path, respectively $E(\dot{P}_{i,\alpha}) \setminus \{x_{i,0},x_{i,1},x_{i,2}\}$ on the secondary path, are prevented from taking a signature in $\{\inp, \out, \set, \res, \swap\}$.
Traversing the mapped paths in $\tau$, these interactions do not step from $0$ to $1$ or vice versa.
Since the mapped paths start and terminate at different states $\tau$, that is, $sup(s^0_{i,\alpha}) = sup(\dot{s}^0_{i,\alpha}) \not = sup(s^7_{i,\alpha}) = sup(\dot{s}^7_{i,\alpha})$, the remaining interactions $sig(X_{i,0}), sig(X_{i,1}), sig(X_{i,2})$, respectively, $sig(x_{i,0}), sig(x_{i,1}), sig(x_{i,2})$, have to perform an odd number of state changes.
By the synchronization of the paths, there are only eight possibilities for this behavior, where four start the mapped primal path at $1$ and the other four are simply their complements.
\begin{figure}
\newcommand{\pathlength}{1.1}
\newcommand{\pathspacing}{0.25}
\newcommand{\pathsketch}[3]{
\path (0,0) ++(0.5*\pathspacing,0) coordinate(a0)
-- node[pos=0.5,above] {\scalebox{\edgeScale}{$X_{i,#1}$}} ++(\pathlength,0) coordinate (b0) ++(\pathspacing,0) coordinate(a1)
-- node[pos=0.5,above] {\scalebox{\edgeScale}{$X_{i,#2}$}} ++(\pathlength,0) coordinate (b1) ++(\pathspacing,0) coordinate (a2)
-- node[pos=0.5,above] {\scalebox{\edgeScale}{$X_{i,#3}$}} ++(\pathlength,0) coordinate (b2);
\path (a0|-0,-\pathspacing) -- (b0|-0,-\pathspacing) node[pos=0.5,below] {\scalebox{\edgeScale}{$x_{i,#1}$}}
-- (a1|-0,-\pathspacing) -- (b1|-0,-\pathspacing) node[pos=0.5,below] {\scalebox{\edgeScale}{$x_{i,#2}$}}
-- (a2|-0,-\pathspacing) -- (b2|-0,-\pathspacing) node[pos=0.5,below] {\scalebox{\edgeScale}{$x_{i,#3}$}};
\foreach \i in {0,...,2}{
\draw[->] (a\i)--(b\i);
\draw[<-] (a\i|-0,-0.25) -- (b\i|-0,-0.25);
}}
\newcommand{\pathline}[3]{
\begin{scope}[xshift=0cm]
\fill[red!40, rounded corners] (0,0.5) rectangle (#1*\pathlength+#1*\pathspacing-0.5*\pathlength-0.5*\pathspacing, -0.5-\pathspacing);
\pathsketch{0}{1}{2}
\end{scope}
\begin{scope}[xshift=4.5cm]
\fill[red!40, rounded corners] (0,0.5) rectangle (#2*\pathlength+#2*\pathspacing-0.5*\pathlength-0.5*\pathspacing, -0.5-\pathspacing);
\pathsketch{1}{2}{0}
\end{scope}
\begin{scope}[xshift=9cm]
\fill[red!40, rounded corners] (0,0.5) rectangle (#3*\pathlength+#3*\pathspacing-0.5*\pathlength-0.5*\pathspacing, -0.5-\pathspacing);
\pathsketch{2}{0}{1}
\end{scope}
}
\newcommand{\swapregion}{
\fill[red!40, rounded corners] (0,0.5) rectangle (\pathspacing, -0.5-\pathspacing);
\fill[red!40, rounded corners] (2*\pathlength+\pathspacing,0.5) rectangle (2*\pathlength+3*\pathspacing, -0.5-\pathspacing);
}
\centering
\begin{tikzpicture}
\begin{scope}[yshift=-0.5cm]
\pathline{1}{3}{2}
\end{scope}
\begin{scope}[yshift=-2cm]
\pathline{2}{1}{3}
\end{scope}
\begin{scope}[yshift=-3.5cm]
\pathline{3}{2}{1}
\end{scope}
\begin{scope}[yshift=-5cm]
\begin{scope}[xshift=0cm]
\swapregion
\pathsketch{0}{1}{2}
\path (a1)--(b1) coordinate[pos=0.5] (a);
\end{scope}
\begin{scope}[xshift=4.5cm]
\swapregion
\pathsketch{1}{2}{0}
\path (a1)--(b1) coordinate[pos=0.5] (b);
\end{scope}
\begin{scope}[xshift=9cm]
\swapregion
\pathsketch{2}{0}{1}
\path (a1)--(b1) coordinate[pos=0.5] (c);
\end{scope}
\end{scope}
\node at (a|-0,0.5) {$T^\sigma_{i,0}$};
\node at (b|-0,0.5) {$T^\sigma_{i,1}$};
\node at (c|-0,0.5) {$T^\sigma_{i,2}$};
\end{tikzpicture}
\caption{
Every row shows abstractions of the paths $P_{i,\alpha}, \dot{P}_{i,\alpha}$ in the three translator TSs together with a red-marked support that contains the start states of the paths and excludes their terminals.
The first three rows apply one state change per path and the last row uses three.
Except for complement supports, the four rows demonstrate the only ways to realize the required state changes using just the variable events $X_{i,0}, X_{i,1}, X_{i,2}$, respectively, $x_{i,0}, x_{i,1}, x_{i,2}$, while other path events remain \nop.
In the first three rows, exactly two of the variable events, respectively, exactly two of their counterparts, are forced to \nop, as there are both, transitions entirely within the support and others entirely outside.
In the fourth row, the only possible signature for $X_{i,0}, X_{i,1}, X_{i,2}, x_{i,0}, x_{i,1}, x_{i,2}$ is \swap\ as every event has both, incoming and outgoing transitions relative to the given support.
}
\label{fig:all_border_crossing}
\end{figure}

Figure~\ref{fig:all_border_crossing} sketches the first four cases and teaches us that three state changes on the mapped paths come with $sig(x_{i,0})=sig(x_{i,1})=sig(x_{i,2})=\swap$.
Hence, if \swap\ is not available, like in $\sigma_1$ and $\sigma_2$, an indicator region implements the one-in-three behavior.
In the other four switch positions $\sigma_3, \dots, \sigma_6$, we simply have to prevent the indicator region from assigning \swap\ to any event of $x_{i,0}, x_{i,1}, x_{i,2}$.
In particular, the corresponding translator unions $T^\sigma_\varphi$ install an additional \emph{freezer} $F^\sigma_T$ to hinder these troubling \swap\ assignments.
See also Figure~\ref{fig:example} to get the idea.

The general reduction idea in mind, Subsection~\ref{sec:translators} introduces the details of the translator union $T^\sigma_\varphi$ for every $\sigma\in \{\sigma_1,\dots, \sigma_6\}$.
Subsection~\ref{sec:key_unions} does the same for $K^\sigma_m$.

\medskip
Before we start with our construction, we need some minor tools:
Firstly, we use so called \emph{generators} $G^{\eta,\varrho}_j$ in $K^\sigma_m$ and in $T^\sigma_\varphi$.
A template $G^{\eta,\varrho}_j$ serves as a blueprint for freezer gadget TSs as follows: 
For $j \in \mathbb{N}$ and symbols $a, b$ the template generates TS $G^{a,b}_{j}$ from Figure~\ref{fig:translators}.2 with states $g^{a,b}_{j,0}, \dots, g^{a,b}_{j,3}$ and placeholder $\eta_j$ substituted by event $a_j$ and placeholder $\varrho_j$ by event $b_j$.
Generated TSs are used according to the following lemma:
\begin{lemma}[Without proof]\label{lem:generator}
For $j \in \mathbb{N}$ and symbols $a,b$ let $G^{a,b}_{j}$ be the generated TS.
For a $\tau$-region $(sig,sup)$ of $G^{a,b}_{j}$ the following conditions hold:
\begin{enumerate}
\item 
If $sig(k)=\inp$ then $a_j \in sig^{-1}(\keepo)$ and $b_j \in sig^{-1}(\keepze)$.
\item 
If $sig(k)=\out$ then $a_j \in sig^{-1}(\keepze)$ and $b_j \in sig^{-1}(\keepo)$.
\item 
If $sig(k)=\used$ then $a_j, b_j \in sig^{-1}(\keepo)$.	
\item 
If $sig(k)=\free$ then $a_j,b_j \in sig^{-1}(\keepze)$.	
\end{enumerate}
\end{lemma}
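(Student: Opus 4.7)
The plan is a direct case analysis on the value of $sig(k)$. From Figure~\ref{fig:translators}.2, the gadget $G^{a,b}_j$ consists of the four states $g^{a,b}_{j,0}, g^{a,b}_{j,1}, g^{a,b}_{j,2}, g^{a,b}_{j,3}$ connected by the transitions
\[
g^{a,b}_{j,0} \edge{a_j} g^{a,b}_{j,1},\quad
g^{a,b}_{j,0} \edge{k} g^{a,b}_{j,2},\quad
g^{a,b}_{j,2} \edge{b_j} g^{a,b}_{j,3},\quad
g^{a,b}_{j,1} \edge{k} g^{a,b}_{j,3}.
\]
Because $(sup,sig)$ is a $\tau$-region, each of these arcs lifts to a valid transition in the template TS of $\tau$. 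First I would read off from Figure~\ref{fig:interactions} that each of $\inp, \out, \used, \free$ is a partial function whose domain is a singleton $\{0\}$ or $\{1\}$ and whose value at that argument is fixed. So once $sig(k)$ is fixed to one of these four interactions, the two $k$-labeled transitions completely determine $sup(g^{a,b}_{j,0}), sup(g^{a,b}_{j,1}), sup(g^{a,b}_{j,2}), sup(g^{a,b}_{j,3})$.

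Concretely, for $sig(k)=\inp$ both $k$-arcs force $sup(g^{a,b}_{j,0})=sup(g^{a,b}_{j,1})=1$ and $sup(g^{a,b}_{j,2})=sup(g^{a,b}_{j,3})=0$; for $sig(k)=\out$ the opposite supports are forced; for $sig(k)=\used$ all four states have support $1$; and for $sig(k)=\free$ all four have support $0$. The remaining two arcs are $g^{a,b}_{j,0}\edge{a_j}g^{a,b}_{j,1}$ and $g^{a,b}_{j,2}\edge{b_j}g^{a,b}_{j,3}$, and in each of the four cases their endpoints have equal support. Thus $sig(a_j)$ must be an interaction defined on that common support and fixing it, and likewise for $sig(b_j)$. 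Reading Figure~\ref{fig:interactions}, the interactions that act as identity on $1$ are exactly $\keepo=\{\nop,\set,\used\}$, and those that act as identity on $0$ are exactly $\keepze=\{\nop,\res,\free\}$.

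Combining these two observations yields each of the four implications. For instance, if $sig(k)=\inp$, then the $a_j$-arc runs between states of support $1$, forcing $sig(a_j)\in\keepo$, and the $b_j$-arc runs between states of support $0$, forcing $sig(b_j)\in\keepze$; this is item (1). Item (2) follows symmetrically for $sig(k)=\out$. For $sig(k)=\used$ all four states have support $1$, so both $sig(a_j)$ and $sig(b_j)$ must lie in $\keepo$, giving item (3). Finally, for $sig(k)=\free$ all four states have support $0$, forcing both signatures into $\keepze$, which is item (4).

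Since the argument is a pure case distinction over four cases with a constant-size diamond gadget, there is no real obstacle. The only thing one has to be careful about is not confusing the roles of the domain and codomain of the partial interaction functions when identifying supports from the $k$-arcs; doing this incorrectly would swap \inp\ with \out\ or \used\ with \free\ in the conclusions. Once the supports are pinned down correctly, the classification of $sig(a_j)$ and $sig(b_j)$ is immediate from the definition of $\keepo$ and $\keepze$.
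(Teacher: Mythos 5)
Your proof is correct and is precisely the argument the paper intends but omits (the lemma is stated ``Without proof''): since each of $\inp$, $\out$, $\used$, $\free$ is a partial function with a singleton domain and a fixed value, the two $k$-labeled arcs of the diamond determine the support of all four states, and the $a_j$- and $b_j$-arcs then run between states of equal support, which forces their signatures into $\keepo$ or $\keepze$ exactly as you read off from Figure~\ref{fig:interactions}. No gaps.
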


As second minor notion, we introduce forward-backward transitions $s\fbedge{e}s'$ which simply express the presence of both, $s\edge{e}s'$ and $s'\edge{e}s$.
Third notion are \emph{blanc} events.
Some events occur only once in the whole construction with the sole purpose of making states reachable, assuring the solvability of secondary (E)SSP atoms, or satisfying the requirements of Lemma~\ref{lem:union_validity}.
They generally do not help understanding and, instead of introducing confusing names, we simply indicate them by an underscore $\_$.
Moreover, we use blanc events in $s\fbedge{\_}s'$ to say that there is an anonymous event $u$ that occurs exactly twice, namely at the transitions $s\edge{u}s'$ and $s'\edge{u}s$.
As blancs can always be fitted into any given support of the construction by assigning an appropriate signature, we never need to define this explicitly.


\subsection{Details of the Translator Union}\label{sec:translators}

This section defines the translator union $T^\sigma_\varphi$ for all cubic monotone boolean $3$-CNF $\varphi$ with $m$ clauses and every $\sigma\in \{\sigma_1,\dots, \sigma_6\}$.
The union $T^\sigma_\varphi=U(T^\sigma_0, \dots, T^\sigma_{m-1}, F^\sigma_T)$ consists of translator subunion $T^\sigma_i$ for $i \in \{0 ,\dots, m-1\}$ to cover all clauses in $\varphi$ and a freezer $F^\sigma_T$ to prevent unwanted \swap.
The translators $T^\sigma_i=U(T^\sigma_{i,0},T^\sigma_{i,1},T^\sigma_{i,2})$ are built from TSs $T^\sigma_{i,\alpha}$ for $\alpha\in \{0,1,2\}$.
First result of this section is Lemma~\ref{lem:translators} fixing the signature of the interface between $K^\sigma_m$ and $T^\sigma_\varphi$ as the basis for the anticipated translator functionality.
Secondly, Lemma~\ref{lem:indicator_regions} establishes the indicator regions of $T^\sigma_\varphi$ binding the existence of a one-in-three model for $\varphi$ to the inhibitability of $k$ at $s_{key}$ in the union $U^\sigma_\varphi$.

Figure~\ref{fig:translators}.1 defines the gadget TS $T^\sigma_{i,\alpha}$  with initial state $t_{i,\alpha,0}$ for $\sigma_1, \dots, \sigma_4$ and Figure~\ref{fig:translators}.5 for $\sigma_5, \sigma_6$ where the initial state is $t'_{i,\alpha,s}$.
The latter contains six events $a_{18i+6\alpha}, \dots, a_{36i+6\alpha+5}$, thus, 18 events for $T^\sigma_i$. 
Figure~\ref{fig:translators}.5 just uses $a_j$ for $a_{18i+6\alpha+j}, j \in \{0, \dots, 5\}$ to preserve clarity.
Moreover, for $\sigma \in \{\sigma_1, \sigma_2, \sigma_4, \sigma_5\}$, every placeholder $\xi^{\sigma}_{3i+\alpha}$ becomes $v_{3i+\alpha}$ and every placeholder $\theta^{\sigma}_{3i+\alpha}$ becomes $w_{3i+\alpha}$.
For $\sigma_3, \sigma_6$ we proceed just the other way around and let every $\xi^{\sigma}_{3i+\alpha}$ be taken by $w_{3i+\alpha}$ and every $\theta^{\sigma}_{3i+\alpha}$ by $v_{3i+\alpha}$.

Notice that Figure~\ref{fig:translators}.1 and~\ref{fig:translators}.5 have several colored areas.
They demonstrate the $T^\sigma_{i,\alpha}$-related fractions of the three possible indicator regions, later defined in detail by Lemma~\ref{lem:indicator_regions}.
The red region fraction stands for $X_{i,\alpha} \in M$, the green for $X_{i,\beta} \in M$ and blue for $X_{i,\gamma} \in M$.
In all three settings, the states taking part in the indicator support are exactly those within the colored area plus $t_{i, \alpha, 0}$ for $\sigma_1, \dots, \sigma_4$ or plus $t'_{i, \alpha, 0}, t'_{i, \alpha, 1}$ for $\sigma_6$.

Figure~\ref{fig:translators}.2 defines the generator template $G^{\eta,\varrho}_j$ with initial state $g^{\eta,\rho}_{j,0}$ and shows the support of a region where $sig(k)=\inp$.
While the freezers $F^{\sigma_1}_T = F^{\sigma_2}_T = U()$ are empty, this template creates the freezers $F^{\sigma_3}_T = U(G^{\_,x}_0, \dots, G^{\_,x}_{m-1})$ and $F^{\sigma_4}_T = U(G^{x,\_}_0, \dots, G^{x,\_}_{m-1})$.
The TS $\mathcal{B}_j$ with initial state $b_{j,4}$ in Figure~\ref{fig:translators}.3 builds the freezer $F^{\sigma_6}_T = U(\mathcal{B}_{0}, \dots, \mathcal{B}_{m-1}) $ and TS $\mathcal{B}'_j$ with initial state $b'_{j,6}$ in Figure~\ref{fig:translators}.4 is for $F^{\sigma_5}_T = U(\mathcal{B}'_{0}, \dots, \mathcal{B}'_{m-1})$.
In $\mathcal{B}_j$ and $\mathcal{B}'_j$ the red areas mark the support of an indicator region as defined for Lemma~\ref{lem:indicator_regions}.
Figure~\ref{fig:translators}.6 defines the TS $F_2$ with initial state $f_{2,0}$ for the freezer $F^{\sigma_3}_K$ of $K^{\sigma_3}_m$ to be introduced in Section~\ref{sec:key_unions}.
The red area shows the respective fraction of a key region support.
\begin{figure}[h]
\centering
\begin{tikzpicture}[new set = import nodes,scale=0.9]
\begin{scope}[nodes={set=import nodes}]
		\node (P) at (-0.75, 2) {$1)$};
		\coordinate (0) at (1,2.5);
		\coordinate(1) at (3,1.5);
		\foreach \i in {2,...,5} { \pgfmathparse{\i-2} \coordinate (\i) at (\pgfmathresult*2cm,0) ;}
		\foreach \i in {2} {\fill[blue!30, rounded corners] (\i) +(-0.7,-0.7) rectangle +(4.6,0.8);}
		\foreach \i in {2} {\fill[green!50, rounded corners] (\i) +(-0.6,-0.6) rectangle +(2.6,0.7);}
		\foreach \i in {2} {\fill[red!40, rounded corners] (\i) +(-0.5,-0.25) rectangle +(0.6,0.35);}
		\foreach \i in {0,...,5} { \node (\i) at (\i) {\scalebox{\nodeScale}{$t_{i,\alpha,\i}$}};}
\end{scope}
\graph {
	(import nodes);
			0 ->["\scalebox{\edgeScale}{$k$}"]1;
			1 ->[swap, bend right =30, "\scalebox{\edgeScale}{$\xi^\sigma_{3i+\alpha}$}"]2;
			1 ->[bend left=30,"\scalebox{\edgeScale}{$\theta^\sigma_{3i+\alpha}$}"]5;
			2->[bend left=15 , "\scalebox{\edgeScale}{$X_{i,\alpha}$}"]3;
			3 ->[bend left=15 , "\scalebox{\edgeScale}{$X_{i,\beta}$}"]4;
			4 ->[ bend left=15 , "\scalebox{\edgeScale}{$X_{i,\gamma}$}"]5;

			3 ->[bend left=15 , "\scalebox{\edgeScale}{$x_{i,\alpha}$}"]2;
			4 ->[bend left=15 , "\scalebox{\edgeScale}{$x_{i,\beta}$}"]3;
			5 ->[bend left=15 ,  "\scalebox{\edgeScale}{$x_{i,\gamma}$}"]4;
	
	};

\begin{scope}[nodes={set=import nodes},xshift=7.5cm, yshift=2cm ]
		\node (G) at (-1,0) {$2)$};
		\foreach \i in {0,1} {  \coordinate (\i) at (\i*1.5cm,0) ;}
		\foreach \i in {1} {\fill[red!40, rounded corners] (\i) +(-1.9,-0.3) rectangle +(0.4,0.4);}
		\node (0) at (0,0) {\scalebox{\nodeScale}{$g^{\eta,\varrho}_{j,0}$}};
		\node (1) at (1.5,0) {\scalebox{\nodeScale}{$g^{\eta,\varrho}_{j,1}$}};	
		\node (2) at (0,-1.2) {\scalebox{\nodeScale}{$g^{\eta,\varrho}_{j,2}$}};
		\node (3) at (1.5,-1.2) {\scalebox{\nodeScale}{$g^{\eta,\varrho}_{j,3}$}};
\end{scope}
\graph {
	(import nodes);
			0 ->["\scalebox{\edgeScale}{$\eta_j$}"]1;
			0 ->[swap, "\scalebox{\edgeScale}{$k$}"]2;
			2 ->[swap,"\scalebox{\edgeScale}{$\varrho_j$}"]3;
			1 ->["\scalebox{\edgeScale}{$k$}"]3;
			};

\begin{scope}[nodes={set=import nodes},yshift=2cm, xshift=11cm]
		\node (G) at (-1,0) {$6)$};
		\coordinate (0) at (0,0); \coordinate (3) at (1.5,-1);
		\foreach \i in {0,3} {\fill[red!40, rounded corners] (\i) +(-0.4,-0.3) rectangle +(0.4,0.3);}
		\node (0) at (0,0) {\scalebox{\nodeScale}{$f_{2,0}$}};
		\node (1) at (1.5,0) {\scalebox{\nodeScale}{$f_{2,1}$}};	
		\node (2) at (0,-1) {\scalebox{\nodeScale}{$f_{2,2}$}};
		\node (3) at (1.5,-1) {\scalebox{\nodeScale}{$f_{2,3}$}};
\end{scope}
\graph {
	(import nodes);
			0 ->["\scalebox{\edgeScale}{$n_0$}"]1;
			0 ->[swap, "\scalebox{\edgeScale}{$k$}"]2;
			2 ->[swap,"\scalebox{\edgeScale}{$\_$}"]3;
			3 ->[swap, "\scalebox{\edgeScale}{$k$}"]1;
			
			};
\begin{scope}[nodes={set=import nodes}, yshift=-1.5cm]
		\node (G) at (-0.75,0) {$3)$};
		\foreach \i in {0,...,4} { \coordinate (\i) at (\i*2cm,0);}
		\foreach \i in {4} {\fill[red!40, rounded corners] (\i) +(-8.4,-0.3) rectangle +(0.4,0.4);}
		\foreach \i in {0,...,4} { \node (\i) at (\i*2cm,0) {\scalebox{\nodeScale}{$b_{j,\i}$}};}
\end{scope}
\graph {
	(import nodes);
			0 <->["\scalebox{\edgeScale}{$k$}"]1<->[ "\scalebox{\edgeScale}{$x_j$}"]2<->["\scalebox{\edgeScale}{$k$}"]3<->["\scalebox{\edgeScale}{$\_$}"]4;
			};
\begin{scope}[nodes={set=import nodes}, yshift=-2.5cm]
		\node (G) at (-0.75,0) {$4)$};
		\foreach \i in {2,3} { \coordinate (\i) at (\i*2cm,0);}
		\foreach \i in {3} {\fill[red!40, rounded corners] (\i) +(-2.6,-0.3) rectangle +(0.6,0.4);}
		\foreach \i in {0,...,6} { \node (\i) at (\i*2cm,0) {\scalebox{\nodeScale}{$b'_{j,\i}$}};}
\end{scope}
\graph {
	(import nodes);
			0 <->["\scalebox{\edgeScale}{$k$}"]1<->[ "\scalebox{\edgeScale}{$q_2$}"]2<->["\scalebox{\edgeScale}{$x_j$}"]3<->["\scalebox{\edgeScale}{$q_3$}"]4 <->["\scalebox{\edgeScale}{$k$}"]5<->["\scalebox{\edgeScale}{$\_$}"]6;
			};
\node at (-0.75,-4.2) {5)};
\begin{scope}[nodes={set=import nodes},yshift=-7cm, scale =1.2]
\coordinate (21) at (3.4,2.8);
\node (t21) at (21) {\scalebox{\nodeScale}{$t'_{i,\alpha,s}$}};
\coordinate (0) at (5.4,2.8);
\coordinate (1) at (5.4,1.8);
\coordinate (help_1) at (0,1.8);
\coordinate (help_2) at (10.8,1.8);
\foreach \i in {2,5,8,11} {\pgfmathparse{int(\i-2} , \coordinate (\i) at (\pgfmathresult*1.2cm,0);}
\foreach \i in {3,4,6,7,9,10} {\pgfmathparse{int(\i-2} , \coordinate (\i) at (\pgfmathresult*1.2cm,1);}
\coordinate (12) at (1.2cm,-1);
\coordinate (13) at (1.8cm,-0);
\coordinate (14) at (2.4cm,-1);
\coordinate (15) at (4.8cm,-1);
\coordinate (16) at (5.4cm,-0);
\coordinate (17) at (6cm,-1);
\coordinate (18) at (8.4cm,-1);
\coordinate (19) at (9cm,-0);
\coordinate (20) at (9.6cm,-1);
\fill[blue!30, rounded corners] (-0.6,0) --(-0.6,-1.4)--(9,-1.4)--(9,-0.5) -- (9.6,-0.5) -- (9.6,0.25)-- (9,0.25)-- (9,1.6)-- (-0.6,1.6)--(-0.6,0);
\fill[green!50, rounded corners] (-0.5,0) --(-0.5,-1.3)--(5.4,-1.3)--(5.4,-0.5) -- (6,-0.5) -- (6,0.25)-- (5.4,0.25)-- (5.4,1.5)-- (-0.5,1.5)--(-0.5,0);
\fill[red!40, rounded corners] (-0.4,0) --(-0.4,-1.2)--(1.8,-1.2)--(1.8,-0.5) -- (2.4,-0.5) -- (2.4,0.25)-- (1.8,0.25)-- (1.8,1.4)-- (-0.4,1.4)--(-0.4,0);
\foreach \i in {0,...,20} {\node (t\i) at (\i) {\scalebox{\nodeScale}{$t'_{i,\alpha,\i}$}};}
\graph{ 
(t21)<->["\_"](t0);
(t1)<-["\scalebox{\edgeScale}{$\theta^\sigma_{3i+\alpha}$}"](help_2)->(t11);
(t0) <->["\scalebox{\edgeScale}{$k$}"] (t1)<-[swap, "\scalebox{\edgeScale}{$\xi^\sigma_{3i+\alpha}$}"] (help_1)->(t2) <->[ "\scalebox{\edgeScale}{$a_{0}$}"] (t3) <->["\scalebox{\edgeScale}{$X_{i,\alpha}$}"] (t4)<->["\scalebox{\edgeScale}{$a_{0}$}"] (t5)<->["\scalebox{\edgeScale}{$a_{1}$}"] (t6)<->["\scalebox{\edgeScale}{$X_{i,\beta}$}"](t7)<->["\scalebox{\edgeScale}{$a_{1}$}"] (t8)<->["\scalebox{\edgeScale}{$a_{2}$}"] (t9)<->["\scalebox{\edgeScale}{$X_{i,\gamma}$}"](t10)<->["\scalebox{\edgeScale}{$a_{2}$}"] (t11);
(t2)<->[swap, "\scalebox{\edgeScale}{$a_{3}$}"] (t12)<->["\scalebox{\edgeScale}{$x_{i,\alpha}$}"] (t13)<-["\scalebox{\edgeScale}{$x_{i,\alpha}$}"] (t14)<->[swap,"\scalebox{\edgeScale}{$a_{3}$}"] (t5)<->[swap,"\scalebox{\edgeScale}{$a_{4}$}"](t15)<->["\scalebox{\edgeScale}{$x_{i,\beta}$}"] (t16)<-["\scalebox{\edgeScale}{$x_{i,\beta}$}"] (t17)<->[swap,"\scalebox{\edgeScale}{$a_{4}$}"](t8)<->[swap,"\scalebox{\edgeScale}{$a_{5}$}"] (t18)<->["\scalebox{\edgeScale}{$x_{i,\gamma}$}"] (t19)<-["\scalebox{\edgeScale}{$x_{i,\gamma}$}"] (t20)<->[swap, "\scalebox{\edgeScale}{$a_{5}$}"](t11);
};
\end{scope}%
\end{tikzpicture}
\caption{The ingredients of translator union $T^\sigma_\varphi$ (1-5) where the three colored areas mark the supports of the three possible indicator regions. (1) $T^\sigma_{i,\alpha}$ used for $\sigma_1, \dots, \sigma_4$, (2) template $G^{\eta,\varrho}_j$, (3,4) $\mathcal{B}_j$ and $\mathcal{B}'_j$, (5) $T^\sigma_{i,\alpha}$ used for $\sigma_5, \sigma_6$, (6) TS $F_2$ used in the key union $K^{\sigma_3}_m$.}
\label{fig:translators}
\end{figure}

The following lemma provides the condition of the interface between $T^\sigma_\varphi$ and $K^{\sigma}_m$ that is required in an indicator region of the translator union.
Aside from $k$, the interface consist of $V=\{v_0, \dots, v_{3m-1}\}$, $W=\{w_0,\dots, w_{3m-1}\}$ and, for $\sigma_5$ and $\sigma_6$, $Acc=\{a_0, \dots,  a_{18m-1}\}$.
\begin{lemma}\label{lem:translators}
If $\varphi$ is a cubic monotone boolean $3$-CNF with $m$ clauses, $\sigma \in \{\sigma_1, \dots, \sigma_6\}$ a turn switch position for our reduction scheme, $\tau \in \sigma$ a type of nets managed by $\sigma$ and $(sup, sig)$ a $\tau$-region of $T^\sigma_\varphi$ where one of the conditions
\begin{enumerate}
\item\label{lem:translators_1}
$\sigma \in \{\sigma_1,\sigma_2, \sigma_3,\sigma_4\}$, $sig(k) = \inp$, $V \subseteq sig^{-1}(\enter)$ and $W \subseteq sig^{-1}(\keepze)$,
\item\label{lem:translators_2} 
$\sigma \in \{\sigma_1,\sigma_2, \sigma_3,\sigma_4\}$, $sig(k)=\out$, $V \subseteq sig^{-1}(\keepo)$ and $W \subseteq sig^{-1}(\exit)$, or
\item\label{lem:translators_3}  
$\sigma \in \{\sigma_5,\sigma_6\}$, $sig_K(k)\in \{\used,\free\}$, $W \cap sig^{-1}(\swap) = Acc \cap sig^{-1}(\swap) = \emptyset$, $V \subseteq sig^{-1}(\swap)$ and $sig(q_2)=\swap \Leftrightarrow sig(q_3)=\swap$
\end{enumerate}
holds, then $(sup,sig)$ is an indicator region, meaning $M=\{X \in V(\varphi) \mid sig(X)\not=\nop\}$ is a one-in-three model of $\varphi$.
\end{lemma}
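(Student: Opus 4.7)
The plan is to fix an arbitrary clause index $i\in\{0,\dots,m-1\}$ and prove that exactly one of $sig(X_{i,0}), sig(X_{i,1}), sig(X_{i,2})$ is non-\nop; summed over all clauses this yields the one-in-three property of $M$. I would first translate the interface hypothesis into concrete support values at the boundary states of each $T^\sigma_{i,\alpha}$. Under condition~\ref{lem:translators_1} for instance, firing $k$ along $t_{i,\alpha,0}\edge{k}t_{i,\alpha,1}$ with $sig(k)=\inp$ forces $sup(t_{i,\alpha,0})=1$ and $sup(t_{i,\alpha,1})=0$; the hypotheses $V\subseteq sig^{-1}(\enter)$ and $W\subseteq sig^{-1}(\keepze)$ then pin down $sup(t_{i,\alpha,2})$ and $sup(t_{i,\alpha,5})$ to opposite values, in a direction depending on whether the placeholders $\xi^\sigma_{3i+\alpha}$, $\theta^\sigma_{3i+\alpha}$ are instantiated by $v$'s or by $w$'s. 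Condition~\ref{lem:translators_2} and, working through the richer layout of Figure~\ref{fig:translators}.5, condition~\ref{lem:translators_3} yield analogous symmetric conclusions for the boundaries of $P_{i,\alpha}$ and $\dot{P}_{i,\alpha}$.

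With the boundary supports fixed, the primal path of length three between $t_{i,\alpha,2}$ and $t_{i,\alpha,5}$ must cross the support boundary an odd number of times. The three copies $T^\sigma_{i,0}, T^\sigma_{i,1}, T^\sigma_{i,2}$ share the variable events $X_{i,0}, X_{i,1}, X_{i,2}$, and the secondary path $\dot{P}_{i,\alpha}$ locks in step with the primal path via the synchronisation $sup(s^j_{i,\alpha})=sup(\dot{s}^j_{i,\alpha})$. This reduces the situation to the finite case analysis depicted in Figure~\ref{fig:all_border_crossing}: up to complementation there are exactly four feasible support configurations, three of which give a non-\nop signature to exactly one $X_{i,\alpha}$ (the other two are forced to \nop by having transitions both wholly inside and wholly outside the support), and one in which every $X_{i,\alpha}$, and hence every $x_{i,\alpha}$, is signed by \swap.

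The final step is to eliminate the all-\swap configuration using the freezer $F^\sigma_T$. The cases $\sigma_1,\sigma_2$ are immediate because $\swap\notin\tau$. For $\sigma_3$ (respectively $\sigma_4$), applying Lemma~\ref{lem:generator} to the generators $G^{\_,x}_j$ (respectively $G^{x,\_}_j$) with $sig(k)\in\{\inp,\out\}$ forces $sig(x_j)\in\keepo\cup\keepze$, ruling out \swap. For $\sigma_5,\sigma_6$, the freezer TSs $\mathcal{B}'_j$ and $\mathcal{B}_j$ with $sig(k)\in\{\used,\free\}$ force constant support at the two endpoints of the $x_j$-transition, again excluding $sig(x_j)=\swap$; here the extra hypotheses $Acc\cap sig^{-1}(\swap)=\emptyset$ and $sig(q_2)=\swap\Leftrightarrow sig(q_3)=\swap$ are needed to kill residual degenerate configurations in $T^\sigma_{i,\alpha}$ and $\mathcal{B}'_j$. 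I expect the main obstacle to be precisely this swap-based branch: the translator TSs of Figure~\ref{fig:translators}.5 have many intermediate states and accessor events, and propagating the $Acc$-prohibition through the alternating $a_j$-transitions to legitimise the Figure~\ref{fig:all_border_crossing} dichotomy, then coupling it with the freezer analysis to expel the all-\swap case, demands careful bookkeeping that has no counterpart in the earlier switch positions.
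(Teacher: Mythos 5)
Your proposal is correct and follows essentially the same route as the paper's proof: derive the supports of the path endpoints $t_{i,\alpha,2}$ and $t_{i,\alpha,5}$ from the interface hypotheses (with the direction depending on how $\xi^\sigma,\theta^\sigma$ are instantiated), invoke the odd-crossing/case analysis of Figure~\ref{fig:all_border_crossing} to conclude that either exactly one variable event per clause is non-\nop\ or all of them are \swap, and then expel the all-\swap\ configuration via the freezer (Lemma~\ref{lem:generator} for $\sigma_3,\sigma_4$; the $\mathcal{B}_j,\mathcal{B}'_j$ gadgets together with the $Acc$- and $q_2/q_3$-hypotheses for $\sigma_5,\sigma_6$, where the paper's concrete contradiction is that an all-\swap\ assignment would force a \res-type border crossing on an $x$-event, unavailable in those types). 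The only difference is one of detail, not of strategy, so no further comparison is needed.
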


The proof of Lemma~\ref{lem:translators} is rather technical and has therefore been moved to Section~\ref{sec:techproofs}.
Next, we have to be able to go the other way around, that is, we need to construct an indicator region $(sup, sig)$ for any given one-in-three model $M$ of $\varphi$.
It is important that, on the interface, $(sup, sig)$ is compatible with a key region $(sup_K,sig_K)$ such that both of them can be combined to a region of $U^\sigma_\varphi$ that inhibits $k$ at the key state.

For given $\varphi$ with $m$-clauses and one-in-three model $M$, our approach is as follows:
We first define for every clause $\zeta_i = \{X_{i,0}, X_{i,1}, X_{i,2}\}, i \in \{0, \dots, m-1\}$ the selector $\alpha_i = j$ by $M \cap \zeta_i = X_{i,j}$.
Hence, $\alpha_i$ is the index of the unique variable in $\zeta_i$ that is part of the model.
Again, $\beta_i$ and $\gamma_i$ are the $(\text{mod}\ 3)$-continuations of $\alpha_i$ as defined above.
Then, depending on the switch $\sigma$, we set up a support $sup^{\sigma}_{i, \alpha_i}$ covering only the states of translator $T^\sigma_i$ representing $\zeta_i$ and a separate support $sup^\sigma_{F}$ for the freezer $F^\sigma_T$, which is empty if $F^\sigma_T$ is empty.
Our goal for every $\tau \in \sigma$ is to extend the combined support $sup^\sigma = sup^\sigma_{0,\alpha_0} \cup \dots \cup sup^\sigma_{m-1,\alpha_{m-1}} \cup sup^\sigma_F$ with a signature $sig: E(T^\sigma_\varphi) \rightarrow \tau$ in order to obtain an indicator $\tau$-region for $T^\sigma_\varphi$. 
Consider the following state sets for our objective:

\begin{enumerate}
\setlength{\itemsep}{3pt}
\item 
$S^0_{\sigma_1,i}=S^0_{\sigma_2, i}=S^0_{\sigma_3, i}=S^0_{\sigma_4, i}=\{t_{i,0,0},t_{i,1,0},t_{i,2,0}\}$,
\item 
$S^0_{\sigma_5, i}=\emptyset$ and $S^0_{\sigma_6, i}=\{t'_{i,j,0}, t'_{i,j,1} \mid 0\leq j\leq 2 \}$,
\item 
$S^1_{\sigma_1,i,\alpha_i}=S^1_{\sigma_2,i,\alpha_i}=S^1_{\sigma_4,i,\alpha_i}= \{t_{i,\alpha_i,2},t_{i,\beta_i,2},t_{i,\beta_i,3},t_{i,\beta_i,4},t_{i,\gamma_i,2},t_{i,\gamma_i,3}\}$,
\item 
$S^1_{\sigma_3,i,\alpha_i}=\{t_{i,\alpha_i,3},t_{i,\alpha_i,4},t_{i,\alpha_i,5},t_{i,\beta_i,5},t_{i,\gamma_i,4},t_{i,\gamma_i,5}\}$
\item 
$S^1_{\sigma_5,i,\alpha_i}=S^1_{\sigma_6,i,\alpha_i}= N_0\cup N_1\cup N_2$ where
\begin{enumerate}
\item
$N_0=\{t'_{i,\alpha_i,2},t'_{i,\alpha_i,3}, t'_{i,\alpha_i,12}, t'_{i,\alpha_i,13}\}$
\item 
$N_1= \{t'_{i,\beta_i,2},\dots,t'_{i,\beta_i,9}, t'_{i,\beta_i,12},\dots,t'_{i,\beta_i,19}\}$,
\item
$N_2=\{t'_{i,\gamma_i,2},\dots, t'_{i,\gamma_i,6} , t'_{i,\gamma_i,12},\dots, t'_{i,\gamma_i,16}\}$,
\end{enumerate}
\item 
$sup^{\sigma_3}_F=\{g^{\_,x}_{j,0}, g^{\_,x}_{j,1}\mid j\in \{0,\dots, m-1\}\}$ and $sup^{\sigma_4}_F=\{g^{x,\_}_{j,0}, g^{x,\_}_{j,1}\mid j\in \{0,\dots, m-1\}\}$,
\item 
$sup^{\sigma_5}_F=\{b'_{j,2}, b'_{j,3}\mid j\in \{0,\dots, m-1\}\}$ and $sup^{\sigma_6}_F=\{b_{j,0},\dots, b_{j,4}\mid j\in \{0,\dots, m-1\}\}$.
\end{enumerate}

Based on this, we simply define $sup^\sigma_{i,\alpha_i}=S^0_{\sigma,i}\cup S^1_{\sigma,i,\alpha_i}$ for all $i\in \{0,\dots, m-1\}$ and all $\sigma \in \{\sigma_1, \dots, \sigma_6\}$.
For a transition $s \edge{e} s'$ of $T^\sigma_\varphi$ ,we can then set $sig(e) = \nop$ if and only if $e$ is not in $\{k, q_2, q_3\} \cup \{X_{i,\alpha_i}, x_{i,\alpha_i}, v_{3i}, \dots, v_{3i+2} \mid 0 \leq i < m\}$.
Hence, to extend the support with an appropriate signature we only have to worry about these remaining events.
Firstly, the idea is to, dependent on the turn switch position $\sigma$, assign the interaction placeholders defined in Figure~\ref{fig:operations} to these events of $T^\sigma_\varphi$.
Then, in the second step, replacing the placeholders with the interactions specified in Figure~\ref{fig:operations} leads to an indicator $\tau$-region for every $\sigma$ and every $\tau \in \sigma$.
The following Lemma~\ref{lem:indicator_regions} realizes and justifies this idea: 
\begin{lemma}[Without proof]\label{lem:indicator_regions}
For every cubic monotone boolean $3$-CNF $\varphi$ with one-in-three model $M$, every $\sigma \in \{\sigma_1, \dots, \sigma_6\}$, every $\tau \in \sigma$ and $V = \{v_0, \dots, v_{3m-1}\}$, we get an indicator $\tau$-region $(sup^\sigma, sig^\sigma)$ for $T^\sigma_\varphi$ with $sup^\sigma = sup^\sigma_{0,\alpha_0} \cup \dots \cup sup^\sigma_{m-1,\alpha_{m-1}} \cup sup^\sigma_F$ and
\[ sig^\sigma(e)=
\begin{cases}
\textsf{op}^\sigma_k, & \text{if } e=k,\\
\textsf{op}^\sigma_V, & \text{if } e \in V,\\
\textsf{op}^\sigma_M, & \text{if } e \in M,\\
\textsf{op}^\sigma_x & \text{if } e \in \{x_{i,\alpha_i} \mid 0 \leq i < m\},\\
\textsf{op}^\sigma_q & \text{if } e \in \{q_2,q_3\},\\
\nop & \text{otherwise.}
\end{cases}
\]
\end{lemma}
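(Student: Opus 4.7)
My plan is to construct $(sup^\sigma, sig^\sigma)$ explicitly and then verify the region axiom, since the indicator property is already baked into the construction. By the definition of $sig^\sigma$, a variable event $X \in V(\varphi)$ receives the non-nop interaction $\textsf{op}^\sigma_M$ exactly when it coincides with some $X_{i,\alpha_i}$, that is, when $X \in M$; otherwise it falls into the \emph{otherwise} branch and receives $\nop$. Hence $M = \{X \in V(\varphi) \mid sig^\sigma(X) \neq \nop\}$ holds by design, and because $M$ is a one-in-three model by hypothesis, the pair $(sup^\sigma, sig^\sigma)$ is an indicator region as soon as it is shown to be a region at all.

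What remains is to verify the region axiom: for every arc $s \edge{e} s'$ in $T^\sigma_\varphi$, the triple $(sup^\sigma(s), sig^\sigma(e), sup^\sigma(s'))$ must form a transition of the type template for $\tau$. The support is fully determined by the sets $S^0_{\sigma,i}$, $S^1_{\sigma,i,\alpha_i}$ and $sup^\sigma_F$ introduced above, and once Figure~\ref{fig:operations} is consulted to replace each placeholder $\textsf{op}^\sigma_\ast$ by a concrete interaction from $\tau$, the signature is likewise fixed. So the proof reduces to a finite bookkeeping that naturally splits into two halves.

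For the events to which the \emph{otherwise} branch assigns $\nop$, I would verify by inspection of $S^0_{\sigma,i}$ and $S^1_{\sigma,i,\alpha_i}$ that both endpoints of every occurrence lie on the same side of the support, so that no state change is required. By the careful choice of $S^1_{\sigma,i,\alpha_i}$, the only events that can possibly cross a boundary are $k$, the $v_j$'s, the variables in $M$, their counterparts $x_{i,\alpha_i}$, and (for $\sigma_5, \sigma_6$) $q_2$ and $q_3$. For these remaining non-nop events I would then verify arc by arc that the interaction $sig^\sigma(e)$ prescribed by Figure~\ref{fig:operations} actually performs the state change indicated by $sup^\sigma$ at the endpoints. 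This check factors through the six switch positions and, inside each position, along the three roles $\alpha \in \{\alpha_i, \beta_i, \gamma_i\}$ that the copies $T^\sigma_{i,0}, T^\sigma_{i,1}, T^\sigma_{i,2}$ play in a translator; the three coloured areas in Figures~\ref{fig:translators}.1 and \ref{fig:translators}.5 pre-depict exactly these three projections onto a single copy and make each case entirely visual. The freezer TSs are handled analogously against the templates $G^{\eta,\varrho}_j$, $\mathcal{B}_j$, $\mathcal{B}'_j$, whose supports $sup^\sigma_F$ were tailored so that the prescribed signatures form valid $\tau$-transitions.

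The main obstacle is simply the sheer number of elementary cases: six switch positions, multiple $\tau \in \sigma$ inside each, three translator roles per clause, several freezer shapes, and for $\sigma_5, \sigma_6$ the additional accessory events $a_0, \dots, a_5$ with their bidirectional edges. Conceptually, however, nothing goes beyond reading off Figure~\ref{fig:operations}, evaluating the resulting interaction of $\tau$ at the two support values $0$ and $1$ via Figure~\ref{fig:interactions}, and confirming agreement with the endpoints of the arc.
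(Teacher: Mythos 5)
Your proposal is correct and matches the paper's intent exactly: the paper states this lemma without proof, remarking only that one must verify for every transition $s \edge{e} s'$ of $T^\sigma_\varphi$ that $sup^\sigma(s) \edge{sig^\sigma(e)} sup^\sigma(s')$ is a transition of the template TS $\tau$, which is precisely the arc-by-arc bookkeeping you describe. Your additional observation that the indicator property $M = \{X \mid sig^\sigma(X) \neq \nop\}$ holds by construction, and your organization of the check into nop versus non-nop events across switch positions and translator roles, are sensible elaborations of the same routine verification.
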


The lemma does not need a proof, as we only need to verify for every transition $s \edge{e} s'$ of $T^\sigma_\varphi$ that there is mapped transition $sup^\sigma(s) \edge{sig^{\sigma}(e)}sup^\sigma(s')$ in TS $\tau$ for every $\tau\in \sigma$.
\begin{figure}
\centering
\begin{tabular}{ p{1cm} p{1.5cm}p{1.5cm}p{1.5cm}p{1.5cm} p{1cm} }
$\sigma$ & $\textsf{op}^{\sigma}_k$ & $\textsf{op}^{\sigma}_M$ & $\textsf{op}^{\sigma}_V$ &  $\textsf{op}^{\sigma}_x$ &$\textsf{op}^{\sigma}_q$ \\ \hline
$\sigma_1$ & \inp & \inp & \out & \out& \\
$\sigma_2$ & \inp & \inp & \set& \set& \\
$\sigma_3$ &  \inp & \swap& \swap & \res& \\
$\sigma_4$ &  \inp & \swap & \swap & \set& \\
$\sigma_5$ &\free & \swap & \swap& \set & \swap \\
$\sigma_6$ & \used & \swap & \swap & \set &  
\end{tabular}
\caption{For $\sigma\in \{\sigma_1,\dots,\sigma_6\}$ the operations $\textsf{op}^{\sigma}_k, \textsf{op}^{\sigma}_M, \textsf{op}^{\sigma}_V, \textsf{op}^{\sigma}_x$ and $\textsf{op}^{\sigma}_q$ to be used in Lemma~\ref{lem:indicator_regions}.}
\label{fig:operations}
\end{figure}


\subsection{Details of the Key Union}\label{sec:key_unions}

This subsection defines the key union $K^\sigma_m$ for all numbers $m$ of clauses and every $\sigma \in \{\sigma_1,\dots, \sigma_6\}$.
In particular, $K^\sigma_m = U(H^\sigma, D^\sigma, G^\sigma, F^\sigma_K)$ consists of the \emph{head} $H^\sigma$, the \emph{duplicator} $D^\sigma$, the \emph{generator} $G^\sigma$, and the \emph{freezer} $F^\sigma_K$.
How these ingredients are constructed depends on $\sigma$.
Firstly, for $\sigma \in \{\sigma_1, \dots, \sigma_4\}$ we leave the generator empty and proceed as follows:
 
\begin{enumerate}
\item 
The head $H^\sigma=H$ is chosen as depicted in Figure~\ref{fig:key_unions}.1.
It provides key event $k$ and key state $s_{key}=h_{0,6}$ as well as the full interface $V, W$.
\item 
The duplicator $D^\sigma=U(G^{c,c}_0,\dots, G^{c,c}_{6m-2})$ is constructed from generator templates.
It provides the events $C=\{c_0,\dots,c_{6m-2}\}$ that, for a key region, receive \nop\ and therefore synchronize the head states $h_{j,6}$ and $h_{j+1,6}$ for all $j\in \{0, \dots, 6m-2\}$. 
\item 
For a key region, the freezer $F^\sigma_K$ assures that $k$ is assigned \inp\ or \out. 
Moreover, it prevents all events of $Q = \{q_0, \dots, q_{3m-1}\}$ and of $Y=\{y_0, \dots, y_{3m-1}\}$ from receiving \swap.
We firstly let $F^{\sigma_1}_K = F^{\sigma_2}_K = U(F_0, F_1)$ using $F_0$ from Figure~\ref{fig:key_unions}.2 and $F_1$ from Figure~\ref{fig:key_unions}.3.
Then, $F^{\sigma_3}_K = U(F_0, F_2, G^{\_,q}_0, \dots, G^{\_,q}_{3m-1}, G^{\_,y}_0, \dots, G^{\_,y}_{3m-1})$ is build of $F_0$ from Figure~\ref{fig:key_unions}.2, $F_2$ from Figure~\ref{fig:translators}.6, as well as generator templates from Figure~\ref{fig:translators}.2.
Similarly, $F^{\sigma_4}_K = U(F_0, G^{n,\_}_0, G^{\_,q}_0, \dots, G^{\_,q}_{3m-1}, G^{\_,y}_0, \dots, G^{\_,y}_{3m-1})$.
\end{enumerate}

At this point, one may notice that $\sigma_1$ and $\sigma_2$ actually transform input $\varphi$ into the same TS and thus, could be consolidated into one switch position.
But since there are differences in the constructed regions as defined in Figure \ref{fig:operations}, we keep the two switch positions distinguished to make our argumentation simpler.

For $\sigma \in \{\sigma_5, \sigma_6\}$ we create different key union ingredients as follows:

\begin{enumerate}
\item 
Here $H^\sigma = U(H'_0, \dots, H'_{3m-1})$ consists of multiple TSs $H'_j$ from Figure~\ref{fig:key_unions}.4.
The head again introduces $k$, but here $h'_{0,2}$ is the key state and only $V$ is provided to the interface.
\item  
The duplicator $D^\sigma = U(D_{0}, \dots, D_{18m-1})$ consists of multiple TSs $D_{j}$ from Figure~\ref{fig:key_unions}.8.
It provides $Acc$ to the interface and prevents these events from \swap\ in key regions.
\item 
The generator $G^\sigma = U(G_{0}, \dots, G_{3m-1})$, made of multiple $G_{j}$ from Figure~\ref{fig:key_unions}.9, provides $W$ for the interface and prevents the respective events from \swap\ in a key region.
\item
The freezer $F^\sigma_K = U(F'_0, F'_1, F'_2)$ consists of the TSs $F'_0,F'_1,F'_2$ from Figure~\ref{fig:key_unions}.5, Figure~\ref{fig:key_unions}.6 and Figure~\ref{fig:key_unions}.7 and provides the interface events $q_2, q_3$.
In a key region, the freezer makes sure that $q_2$ is assigned \swap\ if and only if $q_3$ gets \swap\ and, furthermore, enforces \nop\ or \swap\ onto event $z$, which synchronizes some states in other TSs.
\end{enumerate}

\begin{figure}[t!]
\centering
\begin{tikzpicture}[new set = import nodes]
\begin{scope}[nodes={set=import nodes}]
\node at (-0.75,0) {1)};
\foreach \i in {0,...,6} {\coordinate (h0\i) at (\i*2cm,0);}
\foreach \i in {0,...,6} {\coordinate (h3m_1\i) at (\i*2cm,-3);}
\foreach \i in {0,...,6} {\coordinate (h3m\i) at (\i*2cm,-4.5);}
\foreach \i in {0,...,6} {\coordinate (h6m_1\i) at (\i*2cm,-7.5);}
\foreach \i in {h00,h03,h3m_10,h3m_13,h3m0,h3m3,h6m_10,h6m_13} {\fill[red!40, rounded corners] (\i) +(-0.6,-0.25) rectangle +(0.7,0.25);}
\foreach \i in {0,...,6} {\node (h0\i) at (h0\i) {\scalebox{\nodeScale}{$h_{0,\i}$}};}
\foreach \i in {0,...,6} {\node (h3m_1\i) at (h3m_1\i) {\scalebox{\nodeScale}{$h_{3m-1,\i}$}};}
\foreach \i in {0,...,6} {\node (h3m\i) at (h3m\i) {\scalebox{\nodeScale}{$h_{3m,\i}$}};}
\foreach \i in {0,...,6} {\node (h6m_1\i) at (h6m_1\i) {\scalebox{\nodeScale}{$h_{6m-1,\i}$}};}

\coordinate (d0) at(0,-1.5);
\coordinate (d1) at(12cm,-1.5cm);
\coordinate (r0_0) at(11,-0.75);
\coordinate (r0_1) at(6cm,-0.75cm);
\coordinate (r0_2) at(0.5cm,-0.75cm);
\node (d0) at (d0) {\scalebox{\nodeScale}{$\vdots$}};
\node (d1) at (d1) {\scalebox{\nodeScale}{$\vdots$}};
\draw[->, , rounded corners] (h06) -- (r0_0)--(r0_1)--node[above] { \scalebox{\edgeScale}{$r_{0}$}}(r0_2)-- (d0);
\coordinate (r3m_2_0) at(11cm,-2.25cm);
\coordinate (r3m_2_1) at(6cm,-2.25cm);
\coordinate (r3m_2_2) at(0.5cm,-2.25cm);
\draw[->, , rounded corners] (d1) -- (r3m_2_0)--(r3m_2_1)--node[above] { \scalebox{\edgeScale}{$r_{3m-2}$}}(r3m_2_2)-- (h3m_10);
\coordinate (r3m_1_0) at(11cm,-3.75cm);
\coordinate (r3m_1_1) at(6cm,-3.75cm);
\coordinate (r3m_1_2) at(0.5cm,-3.75cm);
\draw[->, , rounded corners] (h3m_16) -- (r3m_1_0)--(r3m_1_1)--node[above] { \scalebox{\edgeScale}{$r_{3m-1}$}}(r3m_1_2)-- (h3m0);
\coordinate (d2) at(0,-6);
\coordinate (d3) at(12cm,-6cm);
\coordinate (r3m_0) at(11,-5.25);
\coordinate (r3m_1) at(6cm,-5.25cm);
\coordinate (r3m_2) at(0.5cm,-5.25cm);
\node (d2) at (d2) {\scalebox{\nodeScale}{$\vdots$}};
\node (d3) at (d3) {\scalebox{\nodeScale}{$\vdots$}};
\draw[->, , rounded corners] (h3m6) -- (r3m_0)--(r3m_1)--node[above] { \scalebox{\edgeScale}{$r_{3m}$}}(r3m_2)-- (d2);
\coordinate (r6m_2_0) at(11cm,-6.75cm);
\coordinate (r6m_2_1) at(6cm,-6.75cm);
\coordinate (r6m_2_2) at(0.5cm,-6.75cm);
\draw[->, , rounded corners] (d3) -- (r6m_2_0)--(r6m_2_1)--node[above] { \scalebox{\edgeScale}{$r_{6m-2}$}}(r6m_2_2)-- (h6m_10);

\graph { 
%
(h00) ->["\scalebox{\edgeScale}{$k$}"] (h01) ->["\scalebox{\edgeScale}{$z_0$}"] (h02) ->["\scalebox{\edgeScale}{$v_0$}"] (h03) ->["\scalebox{\edgeScale}{$k$}"] (h04) ->["\scalebox{\edgeScale}{$q_0$}"] (h05)->["\scalebox{\edgeScale}{$z_0$}"](h06);
%
(h3m_10) ->["\scalebox{\edgeScale}{$k$}"] (h3m_11) ->["\scalebox{\edgeScale}{$z_{3m-1}$}"] (h3m_12) ->["\scalebox{\edgeScale}{$v_{3m-1}$}"] (h3m_13) ->["\scalebox{\edgeScale}{$k$}"] (h3m_14) ->["\scalebox{\edgeScale}{$q_{3m-1}$}"] (h3m_15)->["\scalebox{\edgeScale}{$z_{3m-1}$}"](h3m_16);
%
(h3m0) ->["\scalebox{\edgeScale}{$k$}"] (h3m1) ->["\scalebox{\edgeScale}{$w_0$}"] (h3m2) ->["\scalebox{\edgeScale}{$p_{0}$}"] (h3m3) ->["\scalebox{\edgeScale}{$k$}"] (h3m4) ->["\scalebox{\edgeScale}{$y_{0}$}"] (h3m5)->["\scalebox{\edgeScale}{$w_0$}"](h3m6);
%
(h6m_10) ->["\scalebox{\edgeScale}{$k$}"] (h6m_11) ->["\scalebox{\edgeScale}{$w_{3m-1}$}"] (h6m_12) ->["\scalebox{\edgeScale}{$p_{3m-1}$}"] (h6m_13) ->["\scalebox{\edgeScale}{$k$}"] (h6m_14) ->["\scalebox{\edgeScale}{$y_{3m-1}$}"] (h6m_15)->["\scalebox{\edgeScale}{$w_{3m-1}$}"](h6m_16);

(h06)->["\scalebox{\edgeScale}{$c_0$}"] (d1); 
(d1)->["\scalebox{\edgeScale}{$c_{3m-2}$}"] (h3m_16); 
(h3m_16)->["\scalebox{\edgeScale}{$c_{3m-1}$}"] (h3m6); 
(h3m6)->["\scalebox{\edgeScale}{$c_{3m}$}"] (d3); 
(d3)->["\scalebox{\edgeScale}{$c_{6m-2}$}"] (h6m_16); 
};
\end{scope}
\begin{scope}[nodes={set=import nodes},yshift=-8.5cm]
		\node (B) at (-0.75,0) {$2)$};
	\foreach \i in {0,...,4} {\coordinate (\i) at (\i*1.5cm,0);}
	\foreach \i in {0} {\fill[red!40, rounded corners] (\i) +(-0.5,-0.25) rectangle +(0.5,0.35);}
	\foreach \i in {3} {\fill[red!40, rounded corners] (\i) +(-2,-0.25) rectangle +(0.5,0.35);}
	\foreach \i in {0,...,4} {\node (n\i) at (\i) {\scalebox{\nodeScale}{$f_{0,\i}$}};}
		
\end{scope}
\graph {
	(import nodes);
			n0 ->["\scalebox{\edgeScale}{$k$}"]n1;
			n1 ->["\scalebox{\edgeScale}{$n_0$}"]n2;
			n2 ->["\scalebox{\edgeScale}{$z_0$}"]n3;
			n3 ->["\scalebox{\edgeScale}{$k$}"]n4;
			};
\begin{scope}[nodes={set=import nodes},yshift=-8.5cm, xshift=7.5cm]
		\node (B) at (-0.75,0) {$3)$};
		\foreach \i in {0,...,2} {\coordinate (\i) at (\i*1.5cm,0);}
		\foreach \i in {1} {\fill[red!40, rounded corners] (\i) +(-2,-0.25) rectangle +(0.5,0.35);}
		\foreach \i in {0,...,2} {\node (m\i) at (\i) {\scalebox{\nodeScale}{$f_{1,\i}$}};}
\end{scope}
\graph {
	(import nodes);
			m0 ->["\scalebox{\edgeScale}{$q_0$}"]m1;
			m1 ->["\scalebox{\edgeScale}{$k$}"]m2;
		};
\begin{scope}[nodes={set=import nodes}, yshift=-9.5cm]
		\node (B) at (-0.75,0) {$4)$};
		\foreach \i in {0,...,5} { \coordinate (\i) at (\i*1.5cm,0) ;}
		\foreach \i in {2,5} {\fill[red!40, rounded corners] (\i) +(-0.5,-0.25) rectangle +(0.5,0.35);}
		\foreach \i in {0,...,5} { \node (\i) at (\i*1.5cm,0) {\scalebox{\nodeScale}{$h'_{j,\i}$}};}
\end{scope}
\graph {
	(import nodes);
			0 <->["\scalebox{\edgeScale}{$k$}"]1<->["\scalebox{\edgeScale}{$m$}"]2 <->["\scalebox{\edgeScale}{$v_j$}"]3 <->["\scalebox{\edgeScale}{$k$}"]4  <->["\scalebox{\edgeScale}{$\_$}"]5;

			};
\begin{scope}[nodes={set=import nodes}, yshift=-10.5cm]
		\node (B) at (-0.75,0) {$5)$};
		\foreach \i in {0,...,8} { \coordinate(\i) at (\i*1.55cm,0);}
		\foreach \i in {2,5,8} {\fill[red!40, rounded corners] (\i) +(-0.5,-0.25) rectangle +(0.5,0.35);}
		\foreach \i in {0,...,8} { \node (\i) at (\i*1.55cm,0) {\scalebox{\nodeScale}{$f'_{0,\i}$}};}
\end{scope}
\graph {
	(import nodes);
			0 <->["\scalebox{\edgeScale}{$k$}"]1<->["\scalebox{\edgeScale}{$m$}"]2 <->["\scalebox{\edgeScale}{$q_0$}"]3 <->["\scalebox{\edgeScale}{$k$}"]4  <->["\scalebox{\edgeScale}{$m$}"]5<->["\scalebox{\edgeScale}{$q_1$}"]6<->["\scalebox{\edgeScale}{$k$}"]7<->["\scalebox{\edgeScale}{$\_$}"]8;
			};
\begin{scope}[nodes={set=import nodes}, yshift=-11.5cm]
		\node (B) at (-0.75,0) {$6)$};
		\foreach \i in {0,...,5} { \coordinate (\i) at (\i*1.6cm,0) ;}
		\foreach \i in {2,5} {\fill[red!40, rounded corners] (\i) +(-0.5,-0.25) rectangle +(0.5,0.35);}
		\foreach \i in {0,...,5} { \node (\i) at (\i*1.6cm,0) {\scalebox{\nodeScale}{$f'_{1,\i}$}};}
\end{scope}
\graph {
	(import nodes);
			0 <->["\scalebox{\edgeScale}{$k$}"]1<->["\scalebox{\edgeScale}{$q_2$}"]2 <->["\scalebox{\edgeScale}{$q_3$}"]3 <->["\scalebox{\edgeScale}{$k$}"]4  <->["\scalebox{\edgeScale}{$\_$}"]5;
			};
\begin{scope}[nodes={set=import nodes}, yshift=-12.5cm]
		\node (B) at (-0.75,0) {$7)$};
		\foreach \i in {0,...,9} { \coordinate (\i) at (\i*1.4cm,0);}
		\foreach \i in {2} {\fill[red!40, rounded corners] (\i) +(-0.5,-0.25) rectangle +(0.5,0.35);}
		\foreach \i in {6} {\fill[red!40, rounded corners] (\i) +(-1.9,-0.25) rectangle +(0.5,0.35);}
		\foreach \i in {9} {\fill[red!40, rounded corners] (\i) +(-0.5,-0.25) rectangle +(0.3,0.35);}
		\foreach \i in {0,...,9} { \node (\i) at (\i*1.4cm,0) {\scalebox{\nodeScale}{$f'_{2,\i}$}};}
\end{scope}
\graph {
	(import nodes);
			0 <->["\scalebox{\edgeScale}{$k$}"]1<->["\scalebox{\edgeScale}{$q_2$}"]2 <->["\scalebox{\edgeScale}{$q_0$}"]3 <->["\scalebox{\edgeScale}{$z$}"]4  <->["\scalebox{\edgeScale}{$q_1$}"]5 <->["\scalebox{\edgeScale}{$z$}"]6 <->["\scalebox{\edgeScale}{$q_3$}"]7 <->["\scalebox{\edgeScale}{$k$}"]8<->["\scalebox{\edgeScale}{$\_$}"]9;
			};

\begin{scope}[nodes={set=import nodes}, yshift =-13.5cm]
		\node (B) at (-0.75,0) {$8)$};
		\foreach \i in {0,...,8} { \coordinate (\i) at (\i*1.5cm,0) ;}
		\foreach \i in {4} {\fill[red!40, rounded corners] (\i) +(-3.5,-0.25) rectangle +(0.5,0.35);}
		\foreach \i in {8} {\fill[red!40, rounded corners] (\i) +(-0.5,-0.25) rectangle +(0.5,0.35);}
		\foreach \i in {0,...,8} { \node (d\i) at (\i) {\scalebox{\nodeScale}{$d_{j,\i}$}};}

\end{scope}
\graph {
	(import nodes);
			d0 <->["\scalebox{\edgeScale}{$k$}"]d1<->["\scalebox{\edgeScale}{$p_j$}"]d2- >["\scalebox{\edgeScale}{$z$}"]d3 <->["\scalebox{\edgeScale}{$z$}"]d4 <->["\scalebox{\edgeScale}{$p_j$}"]d5<->["\scalebox{\edgeScale}{$a_j$}"]d6<->["\scalebox{\edgeScale}{$k$}"]d7<->["\scalebox{\edgeScale}{$\_$}"]d8;
			};
\begin{scope}[nodes={set=import nodes}, yshift =-14.5cm]
		\node (B) at (-0.75,0) {$9)$};
		\foreach \i in {0,...,8} { \coordinate (\i) at (\i*1.5cm,0) ;}
		\foreach \i in {4} {\fill[red!40, rounded corners] (\i) +(-3.5,-0.25) rectangle +(0.5,0.35);}
		\foreach \i in {8} {\fill[red!40, rounded corners] (\i) +(-0.5,-0.25) rectangle +(0.5,0.35);}
		\foreach \i in {0,...,8} { \node (g\i) at (\i) {\scalebox{\nodeScale}{$g_{j,\i}$}};}
\end{scope}
\graph {
	(import nodes);
			g0 <->["\scalebox{\edgeScale}{$k$}"]g1<->["\scalebox{\edgeScale}{$y_j$}"]g2- >["\scalebox{\edgeScale}{$z$}"]g3 <->["\scalebox{\edgeScale}{$z$}"]g4 <->["\scalebox{\edgeScale}{$y_j$}"]g5<->["\scalebox{\edgeScale}{$w_j$}"]g6<->["\scalebox{\edgeScale}{$k$}"]g7<->["\scalebox{\edgeScale}{$\_$}"]g8;
			};	

\end{tikzpicture}
\caption{
The gadget TSs for the key union $K^\sigma_m$ where the red areas mark the support of a key region as defined for Lemma~\ref{lem:key_union}. 
(1) $H$,
(2,3) $F_0$ and $F_1$,
(4) $H'_j$,
(5-7) $F'_0$, $F'_1$ and $F'_2$,
(8) $D_j$,  
(9) $G_j$.
}
\label{fig:key_unions}
\end{figure}

The following lemma establishes the interface compatibility of all key-regions with all inhibitor regions as demanded in Lemma~\ref{lem:translators}.
Moreover, it shows the existence of a key region that is even compatible with the regions from Lemma~\ref{lem:indicator_regions}.
\begin{lemma}\label{lem:key_union}
Let $\sigma \in \{\sigma_1, \dots, \sigma_6\}$ and $\tau \in \sigma$.
If $(sup_K,sig_K)$ is a $\tau$-key region of $K^\sigma_m$, that is, where $k$ is inhibited at the key state, then
\begin{enumerate}
\item\label{lem:key_union_1}  
either $sig_K(k)=\inp$, $V \subseteq sig^{-1}_K(\enter)$ and $W \subseteq sig^{-1}_K(\keepze)$ or $sig_K(k)=\out$, $V \subseteq sig^{-1}_K(\exit)$ and $W \subseteq sig^{-1}_K(\keepo)$ in case of $\sigma_1, \dots, \sigma_4$ and 
\item\label{lem:key_union_2} 
$sig_K(k) \in \{\used,\free\}$, $Acc \cap sig^{-1}_K(\swap) = W \cap sig^{-1}_K(\swap) =\emptyset$, $V\subseteq sig^{-1}_K(\swap)$ and $sig_K(q_2)=\swap$ if and only if $sig_K(q_3)=\swap$ for $\sigma \in \{\sigma_5, \sigma_6\}$.
\end{enumerate}
Furthermore, we can always create a $\tau$-key region $(sup_K, sig_K)$ for $K^\sigma_m$ with 
\begin{enumerate}
\item\label{lem:construction_key_union_1}  
$sig_K(k) = \inp$, $V \subseteq sig^{-1}_K(\textsf{op}^\sigma_V)$ and $W \subseteq sig^{-1}_K(\nop)$ if $\sigma \in \{\sigma_1,\dots, \sigma_4\}$ or
\item\label{lem:construction_key_union_2}  
$sig_K(k) = \textsf{op}^\sigma_k$, $(Acc \cup W) \subseteq sig_K^{-1}(\nop)$ and $(V \cup \{q_2,q_3\}) \subseteq sig_K^{-1}(\swap)$, otherwise.
\end{enumerate}
\end{lemma}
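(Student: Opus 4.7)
The plan is to split the proof into two parts corresponding to the two assertions of the lemma (the necessary shape of any key region and the explicit construction of one), and inside each part to split by the switch position blocks $\sigma \in \{\sigma_1, \dots, \sigma_4\}$ versus $\sigma \in \{\sigma_5, \sigma_6\}$, since the head, duplicator, generator and freezer ingredients all differ between these blocks.

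For the necessity direction in the first block, I would start at the key state $s_{key} = h_{0,6}$ of the head $H$: since $k$ occurs at every $h_{j,0}$ and $h_{j,3}$ but is inhibited at $h_{0,6}$, the interaction $sig_K(k)$ must be defined on $sup_K(h_{j,0})$ and $sup_K(h_{j,3})$ but undefined on $sup_K(h_{0,6})$, forcing $sig_K(k) \in \{\inp, \out, \used, \free\}$. I would then exclude \used\ and \free\ using the freezer gadgets $F_0$, $F_1$ (and $F_2$ or $G^{n,\_}_0$ for $\sigma_3, \sigma_4$): each of these contains a short chain with two $k$-transitions separated by other events, and for a test-only $k$ the forced equality of intermediate supports is incompatible with the signatures that $\tau$ makes available for the remaining events. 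Once $sig_K(k) \in \{\inp, \out\}$ is fixed, I would walk through each row $h_{j,0}\edge{k}\cdots\edge{z_j \text{ or } w_j}h_{j,6}$, summing state changes and using the duplicator gadgets $G^{c,c}_j$ together with Lemma~\ref{lem:generator} to synchronize $h_{j,6}$ with $h_{j+1,6}$; the parity of state changes forced along each row then pins $v_j$ to \enter\ (resp.\ \exit) and $w_j$ to \keepze\ (resp.\ \keepo).

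For the necessity direction in the second block, the key state $s_{key}$ lies in $H'_0$ where $k$ occurs at $h'_{j,0}$ and $h'_{j,4}$ but not at $h'_{j,2}$, and since $\tau$ contains \swap\ but we need $k$ undefined on some input, the only options are $sig_K(k) \in \{\used, \free\}$. Applying Lemma~\ref{lem:generator} to each $D_j$ and each $G_j$, which are built from the generator template with $k$ at the vertical arcs, immediately yields $Acc \cup W \subseteq sig^{-1}_K(\keepo \cup \keepze)$, excluding \swap. The fully bidirectional path $k, m, v_j, k$ in $H'_j$ with $k$ undefined at $h'_{j,2}$ forces $sig_K(v_j)=\swap$, since a non-\swap\ signature cannot reconcile the two support flips demanded by the two surrounding $k$-transitions. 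Finally, the freezer chain $f'_{1,0}\edge{k}f'_{1,1}\edge{q_2}f'_{1,2}\edge{q_3}f'_{1,3}\edge{k}f'_{1,4}$ together with the longer pattern in $F'_2$ forces $sig_K(q_2)=\swap \Leftrightarrow sig_K(q_3)=\swap$ by eliminating the two mixed combinations via case analysis on the support of $f'_{1,2}$ (and on the intermediate $z$-state in $F'_2$).

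For the constructive direction, I would simply instantiate the red supports depicted in Figure~\ref{fig:key_unions} as $sup_K$, set $sig_K(k)=\inp$ and $sig_K(v_j)=\textsf{op}^\sigma_V$, $sig_K(w_j)=\nop$ for $\sigma \in \{\sigma_1,\dots,\sigma_4\}$, respectively $sig_K(k)=\textsf{op}^\sigma_k$ with \swap\ on $V \cup \{q_2, q_3\}$ and \nop\ on $Acc \cup W$ for $\sigma \in \{\sigma_5, \sigma_6\}$, and assign \nop\ (or \swap\ where forced by a bidirectional arc) to all remaining events. Verification is then a routine gadget-by-gadget check that every transition $s\edge{e}s'$ of $K^\sigma_m$ produces a valid transition $sup_K(s)\edge{sig_K(e)}sup_K(s')$ in the template TS of $\tau$. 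The main obstacle will be the bookkeeping: the six switch positions assemble slightly different combinations of freezer and generator gadgets, all sharing events in $V$, $W$, $Acc$, $\{q_2,q_3\}$ and $\{z\}$, and Lemma~\ref{lem:generator} must be invoked with the correct orientation in each case depending on whether $sig_K(k)$ is \inp, \out, \used\ or \free.
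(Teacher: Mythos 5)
Your overall architecture matches the paper's: first pin down $sig_K(k)$, then propagate support constraints through the gadgets, then exhibit the explicit region from the red supports of Figure~\ref{fig:key_unions}. For $\sigma_1,\dots,\sigma_4$ your sketch is essentially the paper's argument, though note that the ``parity'' step pinning $v_j$ to \enter\ is not self-contained once $\swap\in\tau$ ($\sigma_3,\sigma_4$): there one must first exclude $\swap$ for $q_j$ and $y_j$ via the generators $G^{\_,q}_j, G^{\_,y}_j$ in the freezer (Lemma~\ref{lem:generator}), not merely synchronize $h_{j,6}$ via the duplicator, and the contradiction for $sig_K(k)\in\{\used,\free\}$ is obtained back at the key state in $H$ (the freezers only force $z_0,q_0$ or $n_0$ out of \exit/\enter), not inside the freezer itself.

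The genuine gap is in the $\sigma_5,\sigma_6$ block, in the claim that applying Lemma~\ref{lem:generator} to $D_j$ and $G_j$ ``immediately yields'' $(Acc\cup W)\cap sig_K^{-1}(\swap)=\emptyset$. The gadgets $D_j$ and $G_j$ (Figure~\ref{fig:key_unions}.8--9) are nine-state forward--backward chains, not instances of the four-state generator template, so Lemma~\ref{lem:generator} does not apply to them. Worse, a local analysis of $D_j$ cannot exclude $sig_K(a_j)=\swap$: with $sig_K(k)=\used$ one can set $sig_K(a_j)=sig_K(p_j)=\swap$ and $sig_K(z)=\set$ and obtain a locally consistent region of $D_j$, since $\set\in\tau$ for every $\tau\in\sigma_6$. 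The paper's proof has to earn this conclusion through a chain you omit entirely: $F'_0$ forces $sig_K(q_0)=sig_K(q_1)=\swap$; $F'_1$ gives the $q_2/q_3$ equivalence; these two facts applied to $F'_2$ force $sig_K(z)\in\{\nop,\swap\}$; and only then does the placement of the $z$- and $p_j$-arcs in $D_j$ yield $sup_K(d_{j,2})=sup_K(d_{j,4})$ and hence $sup_K(d_{j,5})=sup_K(d_{j,6})$, which finally rules out $\swap$ for $a_j$ (and analogously for $w_j$ in $G_j$). Your proposal never mentions the role of the event $z$ outside of the $q_2/q_3$ step, so this essential bridge between the freezer $F^\sigma_K$ and the duplicator/generator gadgets is missing.
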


The proof of Lemma~\ref{lem:key_union} is again very technical and, thus, can be found in Section~\ref{sec:techproofs}.
The following lemma connects the functionalities of $T^\sigma_\varphi$ and $K^\sigma_m$ for all $\sigma\in \{\sigma_1, \dots, \sigma_6\}$:
\begin{lemma}\label{lem:synergy}
If $\varphi$ is a cubic monotone boolean $3$-CNF with $m$ clauses, $\sigma \in \{\sigma_1, \dots, \sigma_6\}$ and $\tau\in \sigma$ then $U^\sigma_\varphi$ is $\tau$-feasible if and only if $\varphi$ has a one-in-three model.
\end{lemma}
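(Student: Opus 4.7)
The plan is to prove both directions of the biconditional using the structural decomposition $U^\sigma_\varphi=U(K^\sigma_m,T^\sigma_\varphi)$ and the lemmas already established for each component, with the main effort going into the separation atoms that do not concern the single key pair $(k,s_{key})$.

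For the forward direction I will assume that $U^\sigma_\varphi$ is $\tau$-feasible. Then $U^\sigma_\varphi$ has in particular the $\tau$-ESSP, so the atom $(k,s_{key})$ is solved by some $\tau$-region $(sup,sig)$ of $U^\sigma_\varphi$. By the region definition for unions from Subsection~\ref{sec:unions}, this $(sup,sig)$ projects to a $\tau$-key region $(sup_K,sig_K)$ of $K^\sigma_m$, which still inhibits $k$ at $s_{key}\in S(K^\sigma_m)$, and to a $\tau$-region $(sup_T,sig_T)$ of $T^\sigma_\varphi$ agreeing with $(sup_K,sig_K)$ on the interface $\{k\}\cup V\cup W$ (plus $Acc$ in positions $\sigma_5,\sigma_6$). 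Lemma~\ref{lem:key_union} then pins $(sup_K,sig_K)$ to one of the configurations listed in its items~\ref{lem:key_union_1} and~\ref{lem:key_union_2}, which coincide exactly with the hypotheses of the matching items of Lemma~\ref{lem:translators}. Applying Lemma~\ref{lem:translators} to $(sup_T,sig_T)$ therefore yields the required one-in-three model $M=\{X\in V(\varphi)\mid sig_T(X)\ne\nop\}$.

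For the backward direction I will take a one-in-three model $M$ of $\varphi$ and combine two regions. Lemma~\ref{lem:indicator_regions} supplies an indicator $\tau$-region $(sup^\sigma,sig^\sigma)$ of $T^\sigma_\varphi$ built from $M$, and the construction part of Lemma~\ref{lem:key_union} supplies a compatible $\tau$-key region $(sup_K,sig_K)$ of $K^\sigma_m$. By the deliberately coordinated choice of the operation placeholders in Figure~\ref{fig:operations}, both regions assign the same $\tau$-interaction to every shared interface event, so they glue into a single $\tau$-region of $U^\sigma_\varphi$ whose projection to $K^\sigma_m$ already inhibits $k$ at $s_{key}$. This solves the key ESSP atom.

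The main obstacle is to show that \emph{every} remaining ESSP atom and \emph{every} SSP atom of $U^\sigma_\varphi$ is also solvable, since feasibility demands all of them. My plan is to exploit the rigid gadget structure: for an atom whose states and event lie inside a single gadget $A_i$ of the union I will construct a local $\tau$-region that witnesses the separation and lift it to $U^\sigma_\varphi$ by giving every state of every untouched gadget the same constant support value and signing every foreign event with $\nop$; the blanc events introduced as $s\fbedge{\_}s'$ can always be fitted with an admissible signature in $\tau$ once the support on their two endpoints is known. The delicate atoms are the ones that touch the shared interface events $V,W,Acc,k,q_2,q_3,z,\dots$, where an assignment in one gadget constrains the others; for these I will reuse the key/indicator region of the backward direction together with its systematic variants obtained by complementing the support and by cyclically permuting the roles of the three branches $T^\sigma_{i,0},T^\sigma_{i,1},T^\sigma_{i,2}$ (each permutation still corresponds to a valid one-in-three assignment on the affected clauses), and the generator-based regions supplied by Lemma~\ref{lem:generator} for the freezer and duplicator components. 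Finally, Lemma~\ref{lem:union_validity}, invoked with the joining $A(U^\sigma_\varphi)$ when $\sigma$ contains $\inp$ and an \enter-interaction, and with $A^+(U^\sigma_\varphi)$ when $\sigma$ contains $\swap$ and a \test-interaction, transfers the (E)SSP from $U^\sigma_\varphi$ onto the modest TS actually output by the reduction, so $\tau$-feasibility of the union and of the joined TS coincide.
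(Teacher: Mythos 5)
Your overall skeleton is the paper's: the only-if direction (project the region inhibiting $k$ at $s_{key}$ onto $K^\sigma_m$ and $T^\sigma_\varphi$, feed the interface conditions of Lemma~\ref{lem:key_union} into Lemma~\ref{lem:translators}) is exactly the published argument and is complete. The if direction also begins the same way, gluing the indicator region of Lemma~\ref{lem:indicator_regions} to the constructed key region of Lemma~\ref{lem:key_union} via the coordinated placeholders of Figure~\ref{fig:operations} to solve the key atom.

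The gap is in how you discharge the remaining (E)SSP atoms. Your toolbox for the interface-touching atoms --- the key/indicator region, its support complement, and ``cyclic permutations of the three branches'' of a translator --- does not suffice, for two reasons. First, re-selecting which variable of clause $\zeta_i$ is the non-\nop\ one does \emph{not} extend to a consistent assignment on the rest of the union: that variable occurs in two further clauses (the formula is cubic), so the permuted choice is generally not a one-in-three selection there, and the resulting support cannot be closed off to a $\tau$-region by your recipe. Second, several atoms demand regions that no variant of an indicator region can provide: for instance, inhibiting $X_{i,\alpha_i}$ at the states $t_{i,\beta_i,2}$, $t_{i,\beta_i,3}$, $t_{i,\gamma_i,2}$ requires a region in which \emph{two} variable events of the same clause ($X_{i,\alpha_i}$ and $X_{i,\gamma_i}$) are simultaneously non-\nop, whereas every indicator region has exactly one per clause; similarly, inhibiting and separating around the events $v_j, w_j, z_j, q_j, c_j$ forces bespoke supports threaded simultaneously through $H$, the duplicator and the freezers. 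This is precisely why the paper devotes all of Section~\ref{sec:secondary_proofs} to explicit atom-by-atom constructions, including the six-case analysis of Figure~\ref{fig:possibilities} for how two variables of one clause co-occur in another translator. Without that, or an argument of comparable detail, Condition~\ref{condition_four} of the scheme --- the heart of the if direction --- remains unproven. (A minor point: the closing appeal to Lemma~\ref{lem:union_validity} is not part of this lemma, which concerns the union $U^\sigma_\varphi$ itself rather than the joined TS; it belongs to the subsequent proof of Theorem~\ref{the:main_result}.)
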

\begin{proof}
\textit{If:} If $\varphi$ has a one-in-three model, then using the corresponding region $(sup^\sigma,sig^\sigma)$ of $T^\sigma_\varphi$ defined in Lemma~\ref{lem:indicator_regions} and the key region $(sup^\sigma_K, sig^\sigma_K)$ of $K^\sigma_\varphi$ introduced in Lemma~\ref{lem:key_union} yields a combined region $R$ of $U^\sigma_\varphi$ that inhibits the key event $k$ at $s_{key}$.
Section~\ref{sec:secondary_proofs} presents a series of lemmas that altogether prove that the existence of $R$ implies the (E)SSP of $U^\sigma_\varphi$.

\textit{Only-if:} If $U^\sigma_\varphi$ is feasible then the key event is inhibitable at the key state in $U^\sigma_\varphi$. 
Projecting the respective region to $T^\sigma_\varphi$, we get an indicator due to Lemmas~\ref{lem:translators} and~\ref{lem:key_union}.
Hence, $\varphi$ is one-in-three satisfiable.
\end{proof}

Using Lemma~\ref{lem:union_validity}, Lemma~\ref{lem:synergy}, the observation that our construction is in polynomial time, the fact that feasibility is in NP, we have shown Theorem~\ref{the:main_result}.

\section{NP-completeness of Feasibility for seven more Petri Net Classes}\label{sec:npc_cases}

This section presents and proves the following theorem:
\begin{theorem}\label{the:hardness_result} 
Deciding $\tau$-feasibility as well as $\tau$-language viability is NP-complete for 
\begin{enumerate}
\item\label{the:hardness_result_nop_inp_free}
modest TSs and $\tau = \{\nop, \inp, \free\}$ or $\tau = \{\nop, \inp, \used, \free\}$, 
\item\label{the:hardness_result_nop_out_used}
modest TSs and $\tau = \{\nop, \out, \used\}$ or  $\tau = \{\nop, \out, \used, \free\}$, and 
\item\label{the:hardness_result_nop_set_res}
general TSs and $\tau = \{\nop, \set, \res \} \cup \omega$ with non-empty $\omega \subseteq \{\used, \free\}$. 
\end{enumerate}
\end{theorem}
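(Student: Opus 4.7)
The plan is to establish NP-hardness by giving polynomial time reductions from cubic monotone one-in-three $3$-SAT, following the general union-of-gadgets architecture of Section~\ref{sec:main_result} but with freshly designed translator and key components that cope with the very restricted interactions of these seven classes. As a first reduction step, I would invoke Lemma~\ref{lem:isomorphic_types} under the isomorphism that swaps \inp\ with \out, \set\ with \res, and \used\ with \free. This immediately collapses the seven classes to three base cases: $\tau=\{\nop,\inp,\free\}$ and $\tau=\{\nop,\inp,\used,\free\}$ cover condition~\ref{the:hardness_result_nop_inp_free} and, via the isomorphism, also condition~\ref{the:hardness_result_nop_out_used}; and one reduction for $\tau=\{\nop,\set,\res,\used\}$ together with a small adaptation for $\tau=\{\nop,\set,\res,\used,\free\}$ handles condition~\ref{the:hardness_result_nop_set_res}, again via isomorphism for the \free-variant.

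For the two monotone-decreasing cases, the key observation is that \inp\ is the only available signature that can lower the support from $1$ to $0$, while \nop, \used, and \free\ must preserve it, so in any region inhibiting a suitable key event $k$ at a key state every path from $sup{=}1$ to $sup{=}0$ must use an odd number of \inp-labeled transitions. I would design translator gadgets, one per clause $\zeta_i=\{X_{i,0},X_{i,1},X_{i,2}\}$, consisting of short paths that share only the variable events, arranged so that exactly one of $sig(X_{i,0}), sig(X_{i,1}), sig(X_{i,2})$ can take \inp\ while the other two must take \nop. The accompanying key union would be a head chain introducing $k$ and enforcing the required signatures on the interface events, supplemented by small freezer TSs that rule out \used\ and \free\ on the variable events. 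Assembling the pieces along the lines of Lemma~\ref{lem:synergy} and joining the resulting union into a single modest TS yields the reduction; because neither of the two joining variants in Lemma~\ref{lem:union_validity} applies literally (the first requires $\{\out,\set,\swap\}\cap\tau\neq\emptyset$ and the second requires \swap), I would introduce a new joining construction tailored to $\{\nop,\inp,\free,[\used]\}$, using an \inp-labelled chain of connector transitions whose incoming events are freshly added for each gadget.

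For the classes $\{\nop,\set,\res\}\cup\omega$ with non-empty $\omega\subseteq\{\used,\free\}$, the reductions are allowed to produce only general, not necessarily modest, TSs; this is essential because without \inp, \out, or \swap\ the gadgets cannot simultaneously be loop-free and simple while still forcing meaningful region behavior. Here, the translator gadgets can freely use loops $s\edge{e}s$ whose signature the region forces to a test interaction from $\omega$. Each variable event $X_{i,\alpha}$ of clause $\zeta_i$ is embedded in a small diamond of \set- and \res-transitions so that, in a region inhibiting the key event $k$ (which will take \used\ or \free, whichever is in $\omega$), at most one of $sig(X_{i,0}),sig(X_{i,1}),sig(X_{i,2})$ may differ from \nop, and at least one must differ in order to reach a non-trivial support value. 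As before, the key union installs the interface constraints and handles the remaining (E)SSP atoms through auxiliary regions patterned after Section~\ref{sec:key_unions}.

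The main obstacle will be designing the translator gadgets so that the forced signatures admit exactly the one-in-three semantics: without \out\ or \swap\ in the monotone cases, every gadget must be arranged as a single downward staircase across its states, which severely limits how multiple copies can be glued together while keeping the (E)SSP atoms solvable; and in the set/res cases, the idempotence of \set\ and \res\ relative to the current marking means that the \emph{ordering} arguments used in Lemma~\ref{lem:translators} collapse, so the equivalence between one-in-three satisfiability of $\varphi$ and $\tau$-feasibility of the constructed union would have to be re-proved by an explicit case analysis that tracks the signatures of all interface events individually. As in the main theorem, this detailed verification together with the solvability of all secondary separation atoms would naturally be postponed to the dedicated technical proof section.
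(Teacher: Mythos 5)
You correctly identify the isomorphism argument that collapses conditions 1 and 2 (and the three classes of condition 3) to a small number of base cases, the fact that the set/res classes force general rather than modest inputs, and the overall plan of a key event $k$ whose inhibitability at a key state $q$ encodes one-in-three satisfiability. But the heart of any such reduction is the clause gadget that forces, in every inhibiting region, \emph{exactly one} of $sig(X_{i,0}), sig(X_{i,1}), sig(X_{i,2})$ to be non-\nop\ using only these weak interaction sets, and you state this property as a design goal without exhibiting a gadget that achieves it. The paper's construction does it with a single diamond containing \emph{all six} permutations of the three variable events as paths from a common source $t_{i,0}$ (support $1$) to a common sink $t_{i,5}$ (support $0$): since only \inp\ (resp.\ \res) can lower the support and nothing available can raise it, each path forces at least one \inp, two \inp's on one path are impossible, and because every variable event occurs first on some path and last on another, each non-chosen event has one transition entirely inside and one entirely outside the support, which forces \nop\ and simultaneously rules out \used\ and \free\ --- so no freezer gadgets are needed at all. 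Without this (or an equivalent) mechanism the plan cannot be verified; with only the three cyclic rotations used by the translators of Section~\ref{sec:translators}, for instance, the forcing fails.

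The second gap is architectural. You commit to the union-plus-joining scheme of Section~\ref{sec:main_result} and correctly observe that neither case of Lemma~\ref{lem:union_validity} applies, but the ``new joining construction'' you propose for $\{\nop,\inp,\free\}$ and $\{\nop,\inp,\used,\free\}$ is itself a nontrivial unproven lemma: with no interaction that raises the support, an \inp-labelled connector chain is monotone, and you would still have to verify reachability of every gadget from one initial state, separability of the connector states, and inhibitability of the fresh connector events at all states --- precisely the obligations Lemma~\ref{lem:union_validity} discharges for the other types. The paper sidesteps this entirely (these seven classes are explicitly handled by ``one additional reduction that follows a different pattern''): it builds a single connected TS $A^\tau_\varphi$ directly, with a static center $s_0, s_1, q$ and one compartment per clause, adds auxiliary compartments and backward $x_{i,j}$-events for the set/res case, and applies a loop-enhancement there so that no event needs to be inhibited at its own target; the (E)SSP is then established by explicit regions in Lemma~\ref{lem:sec_sep_atoms}. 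I would drop the union architecture and the freezers here and construct the TS directly.
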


Notice that Theorem~\ref{the:hardness_result}.\ref{the:hardness_result_nop_set_res} does not restrict input to modest TSs.
Otherwise, it easy to show that feasibility becomes tractable if $\tau$ belongs to these classes: 
Assume a modest TS $A$ contains a transition $s\edge{e}s'$.
If $\neg s' \edge{e}$, then $A$ is not feasible because $e$ is not inhibitable at $s'$.
Hence, we can assume $s' \edge{e}s''$ for some $s''\in S(A)$. 
If $s=s''$, then $s,s'$ are not separable, if $s\not=s''$ then $s',s''$ are not separable.
Consequently, a modest TS $A$ is feasible if and only if it consists of at most one state.

We again present polynomial time reductions of the NP-complete cubic monotone one-in-three $3$-SAT problem \cite{MR2001} to the corresponding feasibility problems making sure for every TS $A^{\tau}_\varphi$  constructed from a given cubic monotone $3$-CNF $\varphi$ that the ESSP implies the SSP.
By Lemma~\ref{lem:isomorphic_types} the proofs for Theorem~\ref{the:hardness_result}.\ref{the:hardness_result_nop_inp_free} and ~\ref{the:hardness_result}.\ref{the:hardness_result_nop_out_used} are the same as the respective types of nets are isomorphic.

In every case, we again install a \emph{key event} $k$ and a \emph{key state} $q$ in $A^{\tau}_\varphi$ such that $k$ is inhibitable at $q$ by a key region $(sup, sig)$ if and only if a one-in-three model $M$ exists.
Like before, the variables $V(\varphi)$ are used as events in $A^{\tau}_\varphi$ and their key signature $sig$ tells us how to find $M$ and vice versa.

This idea is put into practice by creating six directed labeled paths per clause $C_i=\{X_{i,0},X_{i,1},X_{i,2}\}$ that commonly start at state $t_{i,0}$, terminate at $t_{i,5}$ and consist of three transitions permuting the events $X_{i,0}, X_{i,1}, X_{i,2}$.
The TS $A^{\tau}_\varphi$ fulfills the following conditions:
\begin{enumerate}
\item
For every $i \in \{0, \dots, m-1\}$ and every permutation $(\alpha, \beta, \gamma)$ of $\{0,1,2\}$ there is a path $t_{i,0} \edge{X_{i,\alpha}} t \edge{X_{i,\beta}} t' \edge{X_{i,\gamma}} t_{i,5}$ in $A^{\tau}_\varphi$.
\item
If $(sup,sig)$ is a key $\tau$-region of $A^{\tau}_\varphi$, that is, one that inhibits $k$ at $q$, then  $sup(t_{0,0}) = \dots = sup(t_{m-1,0}) \not= sup(t_{0,5}) = \dots = sup(t_{m-1,5})$.
\item 
If $i \in \{0, \dots, m-1\}$ and $(sup,sig)$ is a key $\tau$-region of $A^{\tau}_\varphi$ then \emph{exactly one} of $sig(X_{i,\alpha})$, $sig(X_{i,\beta})$, $sig(X_{i,\gamma})$ is different from $\nop$.
Hence, the signature tells us how to build $M=\{X \in V(\varphi) \mid sig(X) \not=\nop\}$, a one-in-three model of $\varphi$.
\item 
If $\varphi$ has a one-in-three model then $(A^{\tau}_\varphi, k, q)$ is solvable.
\item 
If $(A^{\tau}_\varphi, k, q)$ is solvable, then $A^{\tau}_\varphi$ has the $\tau$-E(SSP).
\end{enumerate}
Clearly, having these conditions proves that there is a one-in-three model for $\varphi$ if and only if $A^{\tau}_\varphi$ has the E(SSP).

Next, we define how $A^{\tau}_\varphi$ is constructed from $\varphi$.
See Figure \ref{fig:reduction} for a visualization of the following concepts.
Firstly, we call $A_\varphi$ the basic TS with states $S = \{s_0, s_1, q\} \cup \{t_{i,0},\dots, t_{i,8} \mid 0\leq i \leq m-1\}$ and events $E = \{k, h\} \cup \{h_i, r_i \mid 0 \leq i \leq m-1\} \cup V(\varphi)$.
To omit a lengthy and complex definition of the transitions in $A_\varphi$, we use Figure~\ref{fig:reduction}, which depicts $\delta(s, e)$ with black arcs for all states $s \in \{s_0, s_1, q, t_{i,0}, \dots, t_{i,8}\}$ and all events $e \in \{k, h, h_i, r_i, X_{i,0}, X_{i,1}, X_{i,2}\}$.
If $\tau = \{\nop, \res, \free\}$ or $\tau = \{\nop, \res, \used, \free\}$ then we simply use the basic TS, that is, $A^{\tau}_\varphi = A_\varphi$.

If $\tau$ is one of $\{\set, \res, \used\}$, $\{\set, \res, \free\}$, or $\{\set, \res, \used, \free\}$, the construction of $A^{\tau}_\varphi$ is more complex.
We first require the extended TS $A^+_\varphi$ with extended states $S(A^{+}_\varphi) = S(A_\varphi) \cup \{ m_0,\dots, m_4\} \cup \{p_{i,0}, \dots, p_{i,3} \mid 0 \leq i < m\}$ and extended events $E(A^{+}_\varphi) = E(A_\varphi) \cup \{a,c,u,v\} \cup\{a_i,b_i, x_i \mid 0 \le i < m\}$.
The transitions of $A^{+}_\varphi$ are also an extension in the way that $\delta(A^{+}_\varphi)(s,e) = \delta(A_\varphi)(s,e)$ for basic states $s \in S(A_\varphi)$ and basic events $e \in E(A_\varphi)$ where $\delta(A_\varphi)(s,e)$ is defined.
Thus, the black arcs in Figure~\ref{fig:reduction} illustrate part of the extended transitions.
Aside from this, the brown arcs present the remaining transition function $\delta(A^{+}_\varphi)(s,e)$ for all $s \in \{s_0, t_{i,0}, \dots, t_{i,8}, m_0, \dots, m_4, p_{i,0}, \dots, p_{i,3}\}$ and all $e \in \{k, h, h_i, a, c, u, v, a_i, b_i, x_{i,0}, x_{i,1}, x_{i,2}\}$.

While still being depictable, $A^{+}_\varphi$ is not yet a complete TS $A^{\tau}_\varphi$ for extensions of toggle nets.
This requires the \emph{loop-enhancement} $A^{\times}_\varphi$ of $A^{+}_\varphi$ on the same states $S(A^{\times}_\varphi) = S(A^{+}_\varphi)$ and events $E(A^{\times}_\varphi) = E(A^{+}_\varphi)$ but with loop-enhanced transitions, that is, for all $s, s' \in S(A^{+}_\varphi)$ and $e \in E(A^{+}_\varphi)$ where $\delta(A^{+}_\varphi)(s,e) = s'$ we have $\delta(A^{\times}_\varphi)(s,e) = s'$ and $\delta(A^{\times}_\varphi)(s',e) = s'$.
Now, $A^{\tau}_\varphi = A^{\times}_\varphi$.
However, for understanding it is mostly better to deal with $A^{+}_\varphi$ instead of the complicated $A^{\times}_\varphi$.
Therefore, Figure~\ref{fig:reduction} desists from showing all the loops.
\begin{figure}[t]
\centering
\begin{tikzpicture}[new set = import nodes, ]
\begin{scope}[nodes={set=import nodes},xshift=-5.7cm, rotate=-90, scale =0.87 ]
		
		\coordinate (c0) at (-2.5,5.5) ;	
		\coordinate (c1) at (-2.5,3.5) ;	
		\coordinate (c2) at (-2.5,1.5) ;	
		\coordinate(c3) at (0,5.5) ;		
		\coordinate (c4) at (0,3.5) ;	
		\coordinate (c5) at (0,1.5) ;	
		\coordinate (c6) at (2.5,5.5);	
		\coordinate(c7) at (-5,5.5) ;	
		\coordinate(c8) at (-3.5,8) ;	
		\foreach \i in {c0} {\fill[red!15, rounded corners] (\i) +(-3.2,-4.5) rectangle +(0.9,2.8);}
		\foreach \i in {c7} {\fill[blue!15, rounded corners] (\i) +(-.25,-0.45) rectangle +(3,0.4);}
		\foreach \i in {c0} {\fill[blue!15, rounded corners] (\i) +(-0.35,-4.3) rectangle +(0.7,0.4);}
		
		\node (0) at (-2.5,5.5) {\nscale{$t_{i,0}$}};
		\node (1) at (-2.5,3.5) {\nscale{$t_{i,1}$}};	
		\node (2) at (-2.5,1.5) {\nscale{$t_{i,2}$}};
		\node (3) at (0,5.5) {\nscale{$t_{i,3}$}};
		\node (4) at (0,3.5) {\nscale{$t_{i,4}$}};
		\node (5) at (0,1.5) {\nscale{$t_{i,5}$}};
		\node (6) at (2.5,5.5) {\nscale{$t_{i,6}$}};
		\node (7) at (-5,5.5) {\nscale{$t_{i,7}$}};
		\node (8) at (-3.5,8) {\nscale{$t_{i,8}$}};
		%
\end{scope}
\begin{scope}[nodes={set=import nodes},scale =0.95 ]
		\node (dummy1) at (-6.125, 1.5) { $\vdots$};
		\node (dummy2) at (-6.125, -1.5) { $\vdots$};
		\node (dummy3) at (-9.25,0) {\textcolor{brown}{$\dots$}};
		\node (dummy4) at (-7.5,-1.75) {\textcolor{brown}{$\vdots$}};
		\end{scope}
\begin{scope}[nodes={set=import nodes},scale =0.95 ]
		\node (q) at (-6.125, 0) {$q$};
	
		\coordinate(cs0) at (-7.5,0);
		\foreach \i in {cs0} {\fill[red!15, rounded corners] (\i) +(0.5,2.25) rectangle +(-0.5,-0.5);}
		\foreach \i in {cs0} {\fill[blue!15, rounded corners] (\i) +(0.4,0.5) rectangle +(-0.4,-0.4);}
		
		\node (s0) at (-7.5,0) {\nscale{$s_0$}};
		\node (s1) at (-7.5,2) {\nscale{$s_1$}};
		\coordinate (cm0) at (-8.5,3);
		\coordinate(cm1) at (-10,3) ;
		\coordinate(cm2) at (-10,4.5);
		\coordinate(cm3) at (-8.5,4.5);
		\coordinate (cm4) at (-7.5,3.75);
		\foreach \i in {cm0} {\fill[red!15, rounded corners] (\i) +(0.5,1.9) rectangle +(-2,-0.4);}
		%
		\node (m0) at (-8.5,3) {\nscale{\textcolor{brown}{$m_0$}}};
		\node (m1) at (-10,3) {\nscale{\textcolor{brown}{$m_1$}}};
		\node (m2) at (-10,4.5) {\nscale{\textcolor{brown}{$m_2$}}};
		\node (m3) at (-8.5,4.5) {\nscale{\textcolor{brown}{$m_3$}}};
		\node (m4) at (-7.5,3.75) {\nscale{\textcolor{brown}{$m_4$}}};
		\node (p10) at (-9,-1) {\nscale{\textcolor{brown}{$p_{i,0}$}}};
		\node (p11) at (-10.5,-1) {\nscale{\textcolor{brown}{$p_{i,1}$}}};
		\node (p12) at (-10.5,-2.5) {\nscale{\textcolor{brown}{$p_{i,2}$}}};
		\node (p13) at (-9,-2.5) {\nscale{\textcolor{brown}{$p_{i,3}$}}};
\end{scope}
\graph {
	(import nodes);
			s0 ->[thick,"$\escale{$k$}$"]s1;
			s0 ->[thick,"$\escale{$h$}$"]q;
			s0 ->[pos=0.3 ,thick, dotted, bend left=65, "$\escale{$r_0$}$"]0;
			s1 ->[pos=0.3 , thick, dotted , bend left=65, "$\escale{$r_0$}$"]8;
			%
			%
			m0 ->[ color =brown , "$\escale{$k$}$"]m1;
			m0 ->[ color =brown , "$\escale{$c$}$"]m3;
			m0 ->[ color =brown , swap, "$\escale{$u$}$"]m4;
			m1 ->[color =brown ,"$\escale{$v$}$"]m2;
			m3 ->[color =brown, swap, "$\escale{$k$}$"]m2;
			m3 ->[color =brown,  "$\escale{$h$}$"]m4;
			s0 ->[color =brown,  dotted, thick, bend left= 20, "$\escale{$a$}$"]m0;
			p10 ->[ color =brown , swap, "$\escale{$v$}$"]p11;
			p10 ->[ color =brown , "$\escale{$h_{i}$}$"]p13;
			p11 ->[ color =brown , swap, "$\escale{$b_{i}$}$"]p12;
			p12 ->[color =brown, swap, "$\escale{$u$}$"]p13;
			s0 ->[color =brown,swap,  dotted, thick, "$\escale{$a_0$}$"]dummy3;
			s0 ->[color =brown, dotted, thick, "$\escale{$a_i$}$"]p10;
			s0 ->[color =brown, dotted, thick, "$\escale{$a_{m-1}$}$"]dummy4;
			q ->[thick,swap, "$\escale{$h_0$}$"]dummy1;
			q ->[thick,"$\escale{$h_i$}$"]5;
			q ->[thick,"$\escale{$h_{m-1}$}$"]dummy2;
			0 ->[thick, pos=0.3,"$\escale{$k$}$"]8;
			0 ->[thick, bend left = 20, "\escale{$X_{i,0}$}"]1;
			0->[thick, bend right = 20, swap, "\escale{$X_{i,2}$}"]3;
			0->[thick, bend left = 20, "\escale{$X_{i,1}$}"]7;
			1 ->[thick, bend left  = 20,  "\escale{$X_{i,1}$}"]2;
			1->[ thick, bend right = 20, swap, "\escale{$X_{i,2}$}"]4;
			2->[thick, bend right = 20, swap, "\escale{$X_{i,2}$}"]5;
			3 ->[thick, bend left = 20, "$\escale{$X_{i,0}$}$"]4;
			3 ->[thick, bend right = 20, swap, "$\escale{$X_{i,1}$}$"]6;
			4 ->[thick, bend left = 20, "\escale{$X_{i,1}$}"]5;
			6 ->[thick, bend left=30,"$\escale{$X_{i,0}$}$"]5;
			7->[thick, swap, bend right=15,swap ,"\escale{$X_{i,0}$}"]2;
			7->[thick, bend left =40, swap,"\escale{$X_{i,2}$}"]6;
			0 <-[color=brown, bend right= 20, swap, "\escale{$x_{i,0}$}"]1;
			0<-[color=brown, bend left = 20, "\escale{$x_{i,2}$}"]3;
			0<-[color=brown, bend right = 20,swap, "\escale{$x_{i,1}$}"]7;
			1 <-[color=brown, bend right= 20, swap, "\escale{$x_{i,1}$}"]2;
			1<-[color=brown,  bend left= 20, "\escale{$x_{i,2}$}"]4;
			2<-[color=brown, bend left= 20,"\escale{$x_{i,2}$}"]5;
			3 <-[color=brown, bend right= 20,swap, "$\escale{$x_{i,0}$}$"]4;
			3 <-[color=brown, bend left= 20,"$\escale{$x_{i,1}$}$"]6;
			4 <-[color=brown, bend right= 20,swap, "\escale{$x_{i,1}$}"]5;
			6 <-[color=brown, bend right= 20,bend left=15,swap, "$\escale{$x_{i,0}$}$"]5;
			7<-[color=brown, bend right = 30,swap,"\escale{$x_{i,0}$}"]2;
			7<-[color=brown, bend left =60,"\escale{$x_{i,2}$}"]6;
};
\end{tikzpicture}
\caption{
The black arcs in isolation illustrate $A_\varphi$.
Aside from the static center on states $s_0, s_1, q$, TS $A_\varphi$ contains a compartment of states $t_{i,0}, \dots, t_{i,8}$ for every clause $C_i$ of $\varphi$.
Together with the brown arcs, we get $A^{+}_\varphi$ which adds one compartment on $m_0, \dots, m_4$ and one on $p_{i,0}, \dots, p_{i,3}$ for every clause $C_i$.
Restricted to the presented parts of $A_\varphi$, respectively $A^{+}_\varphi$, the blue areas, respectively red areas, mark the support of the region that inhibits $k$ at $q$ defined in Lemma~\ref{lem:key_region}.
The dotted arc do not add to the mechanism except for making sure that all states are reachable form the initial state.
}
\label{fig:reduction}
\end{figure}

At this point, we are ready to provide the main piece of our proof.
The next lemma shows the equivalence between the one-in-three satisfiability of $\varphi$ and the inhibitability of $k$ at $q$:
\begin{lemma}\label{lem:key_region}
\label{lem:key_nop_inp_free[used]}
If $\tau$ is $\{\nop, \inp, \free\}$ or $\{\nop, \inp, \used, \free\}$ or $\{\nop, \set, \res\} \cup \omega$ with non-empty $\omega \in \{\used, \free\}$ then the key event $k$ is $\tau$-inhibitable at the key state $q$ in $A^{\tau}_\varphi$ if and only if $\varphi$ is one-in-three satisfiable.
\end{lemma}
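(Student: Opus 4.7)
The plan is to prove the biconditional in two directions, splitting the three permitted shapes of $\tau$ into two structurally distinct cases: the \emph{basic} cases $\tau \in \{\{\nop, \inp, \free\}, \{\nop, \inp, \used, \free\}\}$ operating on $A_\varphi$, and the \emph{extended} cases $\tau = \{\nop, \set, \res\} \cup \omega$ for non-empty $\omega \subseteq \{\used,\free\}$ operating on the loop-enhanced $A^{\times}_\varphi$. By the duality of Lemma~\ref{lem:isomorphic_types} applied to the pair $(\used,\free)$ (and $(\set,\res)$), I may assume $\used \in \omega$ in the extended case, which reduces the signature analysis of the key event to the symmetric choice $sig(k) = \used$.

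For the reverse direction in the basic case, the transitions $s_0 \edge{k} s_1$, $s_0 \edge{h} q$, and $t_{i,0} \edge{k} t_{i,8}$ force $sig(k)$ to be a partial operation applicable on $1$; the alternative $sig(k) = \free$ is ruled out because it would require $sig(h)$ to map $0$ to $1$, which no interaction in $\tau$ does. Hence $sup(s_0) = 1 = sup(t_{i,0})$ and $sup(q) = 0$, and the $q \edge{h_i} t_{i,5}$ arcs then pin $sup(t_{i,5}) = 0$. In every clause compartment, each of the six permutation paths from $t_{i,0}$ to $t_{i,5}$ must flip state from $1$ to $0$ using three variable events, all of which must be applicable at $1$, restricting their signatures to $\{\nop, \inp, \used\}$; since only $\inp$ flips, a short parity argument (three consecutive $\inp$s fail after the first drops to $0$) shows exactly one variable per clause carries $\inp$, so $M = \{X \in V(\varphi) \mid sig(X) \neq \nop\}$ is a one-in-three model of $\varphi$. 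The extended case follows the same pattern but with more work: the $m$- and $p$-gadgets attached to $s_0$ are designed so that $sig(k) = \used$ forces $sig(h) = \res$, $sig(u) = \res$, $sig(h_i) \in \{\nop,\res\}$, and therefore $sup(t_{i,5}) = 0$. The loop-enhancement of $A^\times_\varphi$ then forces every signature on a flipping edge to be a total operation; together with the edge $t_{i,6} \edge{X_{i,j}} t_{i,5}$, this restricts $sig(X_{i,j})$ to $\{\nop, \res\}$. The crucial step, which differs substantially from the basic case because $\res$ remains applicable at state $0$, uses the reverse events $x_{i,j}$: if both $X_{i,\alpha}$ and $X_{i,\beta}$ had signature $\res$, then $sig(x_{i,\alpha}) = \set$ (forced by $t_{i,1} \edge{x_{i,\alpha}} t_{i,0}$ under the loop-enhancement), combined with $t_{i,2} \edge{x_{i,\alpha}} t_{i,7}$, would yield $sup(t_{i,7}) = 1$, contradicting $sup(t_{i,7}) = \res(sup(t_{i,0})) = 0$.

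For the forward direction I build the region shown in blue (basic case) or red (extended case) in Figure~\ref{fig:reduction}: in the basic case, take $sig(k) = \inp$, set $sig(X_{i,\alpha_i}) = \inp$ for the selected variable of each clause, and $\nop$ elsewhere; in the extended case, take $sig(k) = \used$, $sig(X_{i,\alpha_i}) = \res$ with the conjugate $sig(x_{i,\alpha_i}) = \set$, and assign the $m$- and $p$-gadget events their signatures as dictated by the red support. Verification then reduces to a routine transition-by-transition check that each arc maps into the template TS of $\tau$. The main obstacle I anticipate is the extended case: the $m$- and $p$-gadgets have to force the one-in-three structure via the reverse-event trick sketched above while simultaneously admitting a coherent extension of the forward-direction region across the gadget boundaries, which requires careful bookkeeping of supports and signatures throughout.
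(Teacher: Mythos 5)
Your proposal is correct and follows essentially the same route as the paper's own proof: the same forcing of $sup(q)=0$, $sup(t_{i,0})=1$ and $sup(t_{i,5})=0$, the same parity argument over the six permutation paths in the basic case, the same use of the $m$-/$p$-gadgets and the reverse events $x_{i,j}$ in the extended case (the paper phrases this as forced propagation from the unique $\res$-event rather than as your contradiction from two $\res$-events, but the mechanism is identical), and the same explicit key regions for the converse. One half-sentence is missing in the basic case: after narrowing the variable signatures to $\{\nop,\inp,\used\}$ via applicability at $t_{i,0}$, you must still invoke the transition of each variable into $t_{i,5}$ (where $sup(t_{i,5})=0$) to exclude $\used$, since otherwise $M=\{X\mid sig(X)\not=\nop\}$ could contain a second, $\used$-labelled variable of the clause.
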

\begin{proof}
\emph{Only-if}:
Let $(sup, sig)$ be a $\tau$-region inhibiting $k$ at $q$ in $A^{\tau}_\varphi$.
We show for every clause $C_i$ that there is exactly one variable event $X \in \{X_{i,0}, X_{i,1}, X_{i,2}\}$ with $sig(X) = \{\inp, \used, \free\}$ while the other two have \nop-signature.
Consequently, the set $M=\{X \in V(\varphi) \mid sig(X) \not = \nop\}$ will be a one-in-three model of $\varphi$.

For a start, $\tau = \{\nop, \inp, \free\}$ or $\tau = \{\nop, \inp, \used, \free\}$.
As $k$ is inhibited at $q$, assume first that $sig(k) = \free$ and $sup(q)=1$.
But then $sup(s_0) = sup(s_1) = 0$ which means that $sig(h) \not \in \tau$. 
Hence, $sig(k) \in \{\inp, \used\}$ and $sup(q)=0$.
This implies $sup(t_{i,0})=1$ and $sig(h_i) \in \{\nop,\free\}$ and, thus, $sup(t_{i,5})=0$.
By this, we get $sig(X_{i,0}), sig(X_{i,1}), sig(X_{i,2}) \in \{\nop, \inp, \free\}$.
In $A^\tau_\varphi$, there is a path $t_{i,0} \edge{X_{i,\alpha}} t_1\edge{X_{i,\beta}} t_2 \edge{X_{i,\gamma}} t_{i,5}$ for every permutation $(\alpha, \beta, \gamma)$ of $\{0,1,2\}$.
As all variable events of the clause occur exactly once on each of these six paths, there has to be exactly one $X \in \{X_{i,0},X_{i,1},X_{i,2}\}$ with $sig(X) = \inp$.
Moreover, for every $j \in \{0, 1, 2\}$ there are both, a path that starts with $X_{i,j}$ and another that ends on $X_{i,j}$.
Consequently, for $Y \in \{X_{i,0}, X_{i,1}, X_{i,2}\} \setminus X$ there are transitions $s \edge{Y} s'$ and $z \edge{Y} z'$ with $sup(s) = sup(s') \not= sup(z)=sup(z')$ implying $sig(Y)=\nop$.

Next, let $\tau = \{\nop, \set, \res\} \cup \omega$ with non-empty $\omega \in \{\used, \free\}$.
Assume $sig(k)=\used$ and $sup(q)=0$, which implies $sup(s_0) = sup(m_0) = sup(t_{i,0}) = 1$ and $sig(h) = \res$ and, thus, $sup(m_4)=0$.
We immediately get $sig(u)=\res$ implying $sup(p_{0,3})=0$ and $sig(h_i)\in \{\nop, \res,\free\}$ and, thus, $sup(t_{i,5})=0$. 
For every $j\in \{0,1,2\}$, we have $sig(X_{i,j}) \in \{\nop, \res, \free\}$ by $\edge{X_{i,j}} t_{i,5}$.
Accordingly, $\edge{x_{i,j}} t_{i,0}$ leads to $sig(x_{i,j}) \in \{\nop, \set, \used\}$.
As $sup(t_{i,0}) \not= sup(t_{i,5})$, there must be at least one event in $X_{i,0}, X_{i,1}, X_{i,2}$, respectively $x_{i,0}, x_{i,1}, x_{i,2}$, with signature not in $\{\nop, \free\}$, respectively $\{\nop, \used\}$.
We show that there must be exactly one $j$ with $sig(X_{i,j}) = \res$ and $sig(x_{i,j}) = \set$, while the other four events have \nop-signature.
For example, if $sig(X_{i,2})=\res$ then $sup(t_{i,3})=0$ and, thus, $sig(x_{i,0})=sig(x_{i,1})=\nop$ and $sig(x_{i,2}) = \set$.
This implies $sup(t_{i,1}) = sup(t_{i,2})=1, sup(t_{i,4}) =0 $ and, consequently, $sig(X_{i,0}) = sig(X_{i,1})=\nop$.
Notice that none of the events can get $\used$ or $\free$.
A similar explanation works for $j \in \{0,1\}$.

The case $sig(e)=\free$ and $sup(q)=1$, follows by the same argumentation by inverting the support and interchanging $\set$ with $\res$ and $\used$ with $\free$.

\emph{If}:
Let $M \subseteq V(\varphi)$ be a one-in-three model of $\varphi$ and, for a start, let $\tau = \{\nop, \inp, \free\}$ or $\tau = \{\nop, \inp, \used, \free\}$.
We define a $\tau$-region $(sup, sig)$ of $A^{\tau}_\varphi$ that inhibts $k$ at $q$ by $sup = \{s_0\} \cup \{s \mid s \edge{X}: X \in M\}$ and $sig(k) = \inp$, $sig(X) = \inp$ for all $X \in M$ and $sig(e) = \nop$ for all other events $e$ in $E(A^{\tau}_\varphi)$.

If $\tau$ contains, beside $\{\nop, \set, \res\}$, the interaction $\used$ then we let $M'=\{x_i \mid X_i \in M\}$ and create a $\tau$-region $(sup', sig')$ of $A^{\tau}_\varphi$ to inhibit $k$ at $q$ by the support $sup' = sup \cup \{s_1, t_{0,8},\dots, t_{m-1,8}, m_0, \dots, m_3\}$ and $sig'(k) = \used$, $sig'(X) = \res$ for all $X \in M$, $sig'(x) = \set$ for all $x \in M'$ and $sig'(e) = \nop$ for all other events $e$ in $E(A^{\tau}_\varphi)$.
If $\used$ is not in $\tau$ then $\free$ is and we get a $\tau$-region by inverting the support and interchanging $\set$ with $\res$ and $\used$ with $\free$.
\end{proof}

Using Lemma \ref{lem:isomorphic_types}, we have also covered the types of nets where $\tau$ is $\{\nop, \out, \used\}$ or $\{\nop, \out, \used, \free\}$.
Using the same lemma, we can finish our proof for Theorem \ref{the:hardness_result} by the following lemma:
\begin{lemma}\label{lem:sec_sep_atoms}
\label{lem:key_nop_inp_free[used]_ESSP}
If $\tau$ is $\{\nop, \inp, \free\}$ or $\{\nop, \inp, \used, \free\}$ or $\{\nop, \set, \res\} \cup \omega$ with non-empty $\omega \in \{\used, \free\}$ and the key event $k$ is $\tau$-inhibitable at the key state $q$ in $A^{\tau}_\varphi$ then $A^{\tau}_\varphi$ has the E(SSP).
\end{lemma}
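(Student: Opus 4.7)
The plan proceeds in four stages. I would start from the key region $(sup, sig)$ delivered by the \emph{if}-direction of Lemma~\ref{lem:key_region} and first enumerate the (E)SSP atoms that it already solves: by inspection, it inhibits $k$ at every state where its support disagrees with $sup(q)$, it inhibits every variable event $X \in M$ at all states outside its source set, and it separates $s_0$ from $q$ as well as many intermediate compartment states. Many atoms are thus handled for free, and I would fix a concrete list of atoms that remain open, organized by compartment index $i$ and by event.

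For atoms that are local to a single clause compartment---ESSP for variable events $X_{i,j} \notin M$, for the auxiliary events $h_i, r_i$, and in the toggle case for $a_i, b_i, x_{i,j}$; and SSP for distinct states among $t_{i,0}, \dots, t_{i,8}$---I would build template regions supported on a carefully chosen subset of the states of compartment $i$ (together with $s_0$) and with \nop\ signature elsewhere. The flexibility offered by the six permutation paths $t_{i,0}\to\dots\to t_{i,5}$ allows the support boundary to be positioned so that it crosses exactly the desired transition while all other path transitions remain internal or external to the support. Instantiating such templates per clause index disposes of all compartmental atoms by symmetry.

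For hub-based atoms involving $s_0, s_1, q$, and in the toggle case $m_0,\dots,m_4$ and $p_{i,\cdot}$, I would write down a small collection of ad hoc regions that are non-trivial only on the hub (or on the needed $m$- and $p$-states). Finally, for the toggle types $\{\nop,\set,\res\}\cup\omega$ I would translate the preceding templates by replacing \inp\ by \swap, and by pairing \set\ with \res\ on the $x_{i,j}$ counterparts where the loop-enhancement of $A^\tau_\varphi$, which turns each basic transition into a bidirectional one, forces it. The signatures produced by the \emph{if}-direction of Lemma~\ref{lem:key_region} give the template for this translation.

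The principal obstacle will be verifying that each candidate region is consistent with \emph{every} transition of $A^\tau_\varphi$, especially in the loop-enhanced toggle TS where the coexistence of forward and backward arcs for each basic transition tightly constrains admissible signatures---every event incident to a loop must map to \nop\ or \swap. Moreover, the three type families ($\{\nop,\inp,\free\}$ with optional \used, and the two toggle variants) must be argued separately, even though the geometric layout of supports is essentially the same in all cases. Because the construction is highly symmetric across clauses and across the six permutation paths inside a compartment, the verification reduces to checking one representative transition per template, which keeps the case analysis finite and tractable rather than genuinely difficult.
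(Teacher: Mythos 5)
Your overall skeleton---start from the key region, list the atoms it already solves, then dispose of the remaining ones with per-compartment template regions plus a handful of hub regions, exploiting the symmetry of the six permutation paths---is essentially the strategy the paper follows, and for the consumer-type classes $\{\nop,\inp,\free\}$ and $\{\nop,\inp,\used,\free\}$ your plan would go through. The problem is in your fourth stage, the classes $\{\nop,\set,\res\}\cup\omega$.

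First, you propose to ``translate the preceding templates by replacing \inp\ by \swap,'' but \swap\ is not an element of any of these types; the correct substitution (which the paper uses for the SSP regions) is \inp\ by \res. Second, your stated constraint that ``every event incident to a loop must map to \nop\ or \swap'' is wrong on both counts: \swap\ is undefined as a loop label (it maps $0\mapsto 1$ and $1\mapsto 0$, never $b\mapsto b$), and, more importantly, \set\ and \res\ \emph{are} compatible with the loops that the enhancement $A^{\times}_\varphi$ introduces, since those loops sit at the \emph{target} states of transitions ($\set$ realizes $0\to 1$ together with $1\to 1$, and $\res$ realizes $1\to 0$ together with $0\to 0$). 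If loops really forced \nop\ on every incident event, no event of $A^{\times}_\varphi$ could ever receive a non-\nop\ signature and the whole construction would collapse. The idea you are missing is the one the paper states explicitly: because $\edge{e}s$ in $A^{+}_\varphi$ implies the loop $s\edge{e}s$ in $A^{\times}_\varphi$, the event $e$ \emph{occurs} at every such $s$, so the ESSP only ever demands inhibiting $e$ at states $s$ with both $\neg(s\edge{e})$ and $\neg(\edge{e}s)$; this prunes the atom list rather than constraining the signatures, and it is what makes the \set/\res\ (paired with \used\ or \free\ for the actual inhibition) assignment feasible. Without this observation, and with the incorrect \swap\ substitution, your stage four does not produce valid regions.
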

\begin{proof}
We present for each event $e\in E(A^{\tau}_\varphi)$, respectively state $s\in S(A^{\tau}_\varphi)$), a set of regions which inhibit $e$ at, respectively separate $s$ from, all states in $A^{\tau}_\varphi$.

Assume that $\tau = \{\nop, \inp, \free\}$ or $\tau = \{\nop, \inp, \used, \free\}$.
For brevity, we use the following scheme to define $sig(e)$ based on a given support $sup$:
If $sup(s)=1$ and $sup(s')=0$ for all $s\edge{e}s'$ then $sig(e)= \inp$, if $sup(s)=sup(s')=0$ for all $s\edge{e}s'$ then $sig(e) = \free$ and, otherwise, $sig(e)=\nop$.
Using this, we can define a region simply by defining $sup$.

The inhibition of $k$ at $q$ and $h$ at $s_1$ already follows from Lemma~\ref{lem:key_region}.
Furthermore, if $(sup,sig)$ inhibits $k$ at $q$ then the region $sup \cup \{q\}$ inhibits $r_0, \dots, r_{m-1}$ at $s_1$.
For $X \in V(\varphi)$ let $sup_X = \{s \mid s\edge{k}\} \cup \{s\mid s\edge{X}\}\cup \{q\}\cup \{t_{n,0},\dots, t_{n,7}\mid 0\le n< m, X\not\in C_n\}$.
The region $sup_{X}$ inhibits $X$ at all $S_X=\{t_{n,1},\dots, t_{n,8}\mid 0\le n< m, X\in C_n\}$.
Moreover, $sup'_X = S(A^{\tau}_\varphi) \setminus S_X$ inhibits $X$ at the states of $sup'_X$, where the signature of $X$ is $\free$.
The regions $sup_{X_0}, \dots, sup_{X_m}$ also complete the inhibition of $k$ as for every $t \in \{t_{i,1},\dots, t_{i,8}\}$ there is $X \in \{X_{i,0}, X_{i,1}, X_{i,2}\}$ with $\neg (t \edge{X})$.
The regions $\{s_0, s_1\}$ and $\{s_0, s_1, q\}$ complete the inhibition of $h, r_0, \dots, r_{m-1}$.

The inhibition of $k$ at $q$ separates $s_0$ and $s_1$ and the region $\{s_0, s_1\}$ separates $s_0, s_1$ from all the other states.
The region $\{s_0, s_1, s_2\}$ completes the separation of $q$.
The regions $sup_{X_{i,0}}, sup_{X_{i,1}}, sup_{X_{i,2}}$ complete the separation of $t_{i,0},\dots, t_{i,8}$.

Now assume $\tau = \{\nop, \set, \res\} \cup \omega$ with non-empty $\omega \in \{\used, \free\}$.
For the $\tau$-ESSP of $A^{\times}_\varphi$, it is sufficient to prove the inhibition of $e \in E(A^{+}_\varphi)$ at states $s\in S(A^{+}_\varphi)$ where $\neg s\edge{e}$ and $\neg \edge{e}s$.
By definition, if $\edge{e}s$ in $A^{+}_\varphi$ then $s\edge{e}s$ in $A^{\times}_\varphi$ and $e$ does not need to be inhibitable at $s$.
We proceed like in the previous case and use the following scheme to define a signature $sig(e)$ for given support $sup$:
If $sup(s) = sup(s') = 1$ for all $s\edge{e}s'$ then $sig(e) = \used$, if $sup(s)=1$ and $sup(s')=0$ for all $s\edge{e}s'$ then $sig(e) = \res$, if $sup(s)=0$ and $sup(s')=1$ then $sig(e) = \set$ and, otherwise, $sig(e) = \nop$.

The inhibition of $k$ at $q, m_4, t_{0,5}, t_{1,5}, \dots, t_{m-1,5}, p_{0,0}, \dots, p_{m-1,3}$ is done by the key region of Lemma~\ref{lem:key_region}.
For $X \in V(\varphi)$ let $sup_X = \{s, s' \mid s\edge{k}s'\} \cup \{s\mid s\edge{X}\}\cup \{q,m_4\}\cup \{t_{n,0},\dots, t_{n,7}\mid 0\leq n< m, X\not\in C_n\}$.
The regions $sup_{X_0}, \dots, sup_{X_{m-1}}$ complete the inhibition of $k$ in $A^{\tau}_\varphi$ because for every state $t \in \{t_{i,1},\dots, t_{i,7}, t_{i,8}\}$ there is an event $X \in \{X_{i,0},X_{i,1}, X_{i,2}\}$ such that $\neg (t\edge{X})$.
Moreover, they prove $r_0, \dots, r_{m-1}$ inhibitable at all states except for $m_0,\dots, m_3$. 

For every $X \in \{X_{i,0},X_{i,1},X_{i,2}\}$ ($x\in \{x_{i,0},x_{i,1},x_{i,2}\}$) and all $t \in \{t_{i,0},\dots, t_{i,7}\}$ with $\neg t\edge{X}$ ($\neg t\edge{x}$) we immediately get $\edge{X}t$ ($\edge{x}t$).
Hence, for $X\in V(\varphi)$ the regions $sup^1_{X,x} = \{t_{n,0},\dots, t_{n,7}\mid 0\le n <  m, X \in C_n\}\cup \{p_{n,0}, \dots, p_{n,3} \mid \edge{X}s\bedge{h_n}\}$ and $sup^2_{X,x}=S(A^{\tau}_\varphi)\setminus \{m_0,\dots, m_4, p_{0,0},\dots, p_{m-1,3}\}$ inhibit $X$ and its corresponding event $x$ in $A^{\tau}_\varphi$.
Moreover, $sup^2_{X,x}$ inhibits $r_0,\dots, r_{m-1}$ at $m_0,\dots, m_4$ proving these events to be inhibitable, too.
The regions $sup^1_h=S(A^{\tau}_\varphi) \setminus (\{m_0,m_1,m_2\} \cup \{s \mid \edge{k}s\})$ and $sup^2_h=\{s_0,s_1, q, m_0,\dots, m_4,\}$ inhibit $h$.
The region $sup^1_{b_n}=S(A^{\tau}_\varphi)\setminus \{m_3, m_4,p_{0,0},p_{0,3},\dots,p_{m-1,0},p_{m-1,3} \}$ and the region $sup^1_{b_n}=\{p_{n,0},\dots, p_{n,3}\}$ inhibit $b_n$.
Furthermore, the region $\{s_0,p_{n,0}\}$ inhibits $a_n$.
The regions $sup^1_u=\{s_0,q,m_0,m_4\}\cup \{t_{n,0},\dots, t_{n,7}, p_{i,2},p_{n,3}\mid 0 \leq n < m\}$ and $sup^2_u=\{p_{n,0},\dots, p_{n,3}\mid 0 \leq n < m\}$ settle the inhibition of $u$.
The regions $sup^1_v=\{m_1,m_2\} \cup \{s \mid \edge{k}s \}\cup \{p_{i,0},\dots, p_{n,3}\mid 0 \leq n< m\}$ and $sup^2_v=\{m_0,\dots, m_4\} \cup\{p_{n,0},\dots, p_{n,1} \mid 0 \leq n < m\}$ settle the inhibition of $v$.
Finally, the region $sup_c = \{m_0, m_3\}$, respectively $sup_a = \{s_0, m_0\}$, inhibits the event $c$, respectively the event $a$.

It is easy to see that the state separating regions defined above for the consumer nets can be used here to separate the same states simply by replacing $\inp$ by $\res$.
Moreover, the states $S(A_\varphi)$ are clearly separable from the states $S(A^{\times}_\varphi)\setminus S(A_\varphi)$ as $S(A_\varphi)$ is itself a $\tau$-support. 
The remaining SSP-atoms are solved by $\{m_0\},\{m_0,m_1\}, \{m_0,m_3\}, \{m_0,m_3,m_4\},\{m_0,m_3,m_4\}$ and by $\{p_{i,0}\}$, $\{p_{i,0},p_{i,1}\}$,$\{p_{i,0},p_{i,1},p_{i,2}\}$ for $i\in \{0,\dots, m-1\}$.
If $\used$ is not available, then $\free$ is and we can modify all regions accordingly by inverting the support and interchanging $\set$ with $\res$ and $\used$ with $\free$.
\end{proof}

\section{Polynomial Time Net Synthesis for 36 Types of Nets}\label{sec:polynomial_cases}

This section proves the tractability of synthesis for the 36 types of nets given in the following theorem:
\begin{theorem}\label{the:polynomial_cases}
There is a polynomial time algorithm, that, on input TS $A$, synthesizes a $\tau$-net $N$ with state graph isomorphic to $A$ or rejects $A$ if $N$ does not exist, for every
\begin{enumerate}
\item\label{the:polynomial_cases_nop_set}
$\tau = \{\nop, \set\} \cup \omega$ with $\omega \subseteq \{\out, \used, \free\}$, 
\item\label{the:polynomial_cases_nop_res}
$\tau = \{\nop, \res\} \cup \omega$ with $\omega \subseteq \{\inp, \used, \free\}$, 
\item\label{the:polynomial_cases_flip_flop}
$\tau = \{\nop, \swap\} \cup \omega$ with $\omega \subseteq \{\inp, \out, \used, \free\}$, 
\item\label{the:polynomial_cases_trivial2}
$\tau = \{\nop\} \cup \omega$ with $\omega \subseteq \{\used, \free\}$.
\end{enumerate}
\end{theorem}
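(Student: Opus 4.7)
The plan is to prove all four items by exhibiting, for each case, a polynomial-time algorithm that either constructs a set $\R$ of $\tau$-regions of the input TS $A$ solving every SSP and ESSP atom of $A$, or rejects $A$ because no such $\R$ exists. Given $\R$, the $\tau$-net $N(A,\R)$ described in Section~\ref{sec:preliminaries} is built in polynomial time and has state graph isomorphic to $A$ by the synthesis characterization of \cite{BBD2015}. By Lemma~\ref{lem:isomorphic_types}, the dualities $\inp \leftrightarrow \out$, $\set \leftrightarrow \res$, $\used \leftrightarrow \free$ reduce item 2 to item 1 and cut the case distinctions of items 3 and 4 in half, so it suffices to treat one representative per isomorphism class.

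For item 4, every $i \in \tau \subseteq \{\nop, \used, \free\}$ satisfies $i(x) = x$ wherever defined, so every $\tau$-region has constant support on the reachable part of $A$. Consequently no two distinct states of $A$ are $\tau$-separable, forcing $|S(A)| \leq 1$ as a necessary feasibility condition. Under this constraint every event is a self-loop at $s_0$ and ESSP is vacuous, so the algorithm just tests these structural conditions in linear time.

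For item 3 we lift Schmitt's approach from \cite{S1996}: since $\swap$ acts as XOR with $1$, in any $\tau$-region $(sup, sig)$ of $A$ the value $sup(s)$ equals $sup(s_0)$ plus, modulo $2$, the number of $\swap$-signed events along any directed $s_0$-to-$s$ path in $A$. Cycles contribute linear consistency equations over $\mathrm{GF}(2)$, and the extra interactions in $\{\inp, \out, \used, \free\} \cap \tau$ add only local disjunctions enumerable per event in constant time. Each (E)SSP atom induces a single separating equation; the resulting system is solved by Gaussian elimination in polynomial time, and is infeasible exactly when the atom admits no solving region.

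For items 1 and 2 we treat the representative $\tau \supseteq \{\nop, \set\}$ with $\tau \subseteq \{\nop, \set, \out, \used, \free\}$ and observe that no interaction of $\tau$ ever decreases the value of a place. The support of every $\tau$-region is therefore monotone along transitions, so the set $U = sup^{-1}(1)$ is upward-closed under the reachability relation of $A$. For each (E)SSP atom, the algorithm starts from a canonical candidate $U$ (for SSP separating $s$ from $s'$, the reachability closure of $s'$; for ESSP inhibiting $e$ at $s$, a closure determined by which of $\used, \out, \free$ is available and by the desired support value of $s$) and verifies that every event admits a consistent signature: a transition of support-type $0 \to 0$ admits signatures from $\{\nop, \free\}$, one of type $0 \to 1$ admits $\{\set, \out\}$, and one of type $1 \to 1$ admits $\{\nop, \set, \used\}$, each intersected with $\tau$. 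If some event's transitions are inconsistent, $U$ is refined by forcing the appropriate states into or out of $U$ and the check repeats until a fixed point is reached or infeasibility is certified. The main obstacle will be to prove correctness of this refinement: whenever any solving region exists, the canonical fixed point must still solve the atom. The argument will exploit that the intersection of valid upward-closed supports is itself upward-closed and valid, so that a canonical extremal solving region exists and is computed in polynomial time.
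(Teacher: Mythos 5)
Your proposal matches the paper's proof in all four cases: item 4 by the same one-state triviality argument, item 3 by the same extension of Schmitt's $\mathrm{GF}(2)$ method, and items 1--2 by the same canonical-extremal-region idea that the paper implements as Algorithm~\ref{alg:compute_support} (the paper grows a \emph{minimal backward-closed} support for the $\res$-representative, which is exactly the dual of your forward-closed fixed point for the $\set$-representative, and its Lemma~\ref{lem:algorithm} is precisely your ``intersection of valid supports is valid'' argument, proved by induction on the growth steps). The only caution is your phrase ``forcing states into or out of $U$'': the refinement must be monotone (states are only ever added), since that is what makes termination and the extremality argument go through, as in the paper's while-loop.
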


Notice that input is not limited to modest TSs for any case of Theorem~\ref{the:polynomial_cases}.
Hence, our efficient methods are robust with respect to general input and do not depend on any restrictions.

The following subsection introduces a new polynomial time algorithm for Theorem~\ref{the:polynomial_cases}.\ref{the:polynomial_cases_nop_set} and~\ref{the:polynomial_cases}.\ref{the:polynomial_cases_nop_res}.
The basic idea is to compute a region set $\R$ solving all (E)SSP atoms of a given TS $A$, if such a set exists.
Using $\R$, the sought net $N(A, \R)$ can easily be computed.

For Theorem~\ref{the:polynomial_cases}.\ref{the:polynomial_cases_flip_flop}, Section \ref{sec:flip_flop} shows how to extend the algorithm of Schmitt \cite{S1996} in order to efficiently synthesize nets.

Before going into the two subsections, we turn towards Theorem~\ref{the:polynomial_cases}.\ref{the:polynomial_cases_trivial2} which covers types of nets with a rather trivial synthesis problem.
A related polynomial time algorithm is established in the proof of the following lemma:
\begin{lemma}\label{lem:nop[used][free]}
If $\tau=\{\nop\} \cup \omega$ is a type of nets with $\omega \subseteq \{\used, \free\}$ then $\tau$-feasibility of a given TS $A$ can be decided in $\mathcal{O}(1)$ time.
A $\tau$-feasible TS $A$ can be synthesized into a $\tau$-net $N$ with $A(N)$ isomorphic to $A$ in $\mathcal{O}(\vert E(A)\vert)$ time.
\end{lemma}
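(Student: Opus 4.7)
The plan rests on a single structural observation: every interaction in $\tau \subseteq \{\nop, \used, \free\}$ acts as the identity wherever it is defined, since $\nop(0)=0$, $\nop(1)=1$, $\used(1)=1$, and $\free(0)=0$. Consequently, for any $\tau$-region $(sup, sig)$ of a given TS $A=(S, E, \delta, s_0)$ and any transition $s\edge{e}s'$, the induced step $sup(s)\edge{sig(e)}sup(s')$ in the template TS for $\tau$ forces $sup(s)=sup(s')$. Because every state of $A$ is reachable from $s_0$ along a directed path, this equality propagates across $S$, so the support of any $\tau$-region must be globally constant.

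From here the algorithm is immediate. For feasibility I would simply test $|S(A)|=1$, which is $\mathcal{O}(1)$ on any representation of $A$ that carries its state count. If the test fails, no region can separate two distinct states, SSP collapses, and we reject. Otherwise I would output the trivial net $N=(\emptyset, E(A), f, \emptyset)$ with no places, transition set $E(A)$, vacuous flow function $f$, and empty initial marking; emitting this description costs $\mathcal{O}(|E(A)|)$ time, matching the claimed bound.

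Correctness of the synthesis step is straightforward. With no places, every event is always enabled at the unique marking $\emptyset$ and fires back to $\emptyset$, so the state graph of $N$ is a single state carrying a self-loop for each event of $E(A)$. Since $A$ itself is reachable and has only the single state $s_0$, determinism forces all of its transitions to be self-loops at $s_0$, whence $A(N)$ is isomorphic to $A$. No SSP atoms exist on a one-state TS, and under the standard convention that $E(A)$ contains only events occurring in $\delta$ no ESSP atoms exist either; should the input be allowed to carry dead events, those can be either inhibited by a constant-$0$ (resp.\ constant-$1$) region with signature $\used$ (resp.\ $\free$) when $\omega\neq\emptyset$, or rejected in a single linear pass when $\tau=\{\nop\}$.

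There is no real obstacle here, since the whole statement is essentially a corollary of the identity behaviour of $\nop$, $\used$ and $\free$. The only point requiring a brief sanity check is the ESSP boundary when $\tau=\{\nop\}$ on non-reduced inputs, and as indicated above this costs at most one extra linear scan and does not affect the claimed complexities.
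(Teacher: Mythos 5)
Your proposal is correct and follows essentially the same route as the paper: any $\tau$-region has constant support because $\nop$, $\used$, $\free$ are partial identities, so reachability forces rejection of every TS with more than one state, and the single-state case is synthesized by a trivial net in $\mathcal{O}(\vert E(A)\vert)$ time. The only differences are cosmetic — the paper uses one place with $\nop$ flow for every event where you use the empty place set, and your explicit treatment of dead events on non-reduced inputs is slightly more careful than the paper, which simply assumes the single-state TS is reduced.
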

\begin{proof}
Let $s_0 \edge{e} s_1$ be any transition of $A$ with $s_0 \not= s_1$.
To separate these states, we require a $\tau$-region $(sup, sig)$ with $sup(s_0) \not= sup(s_1)$.
However, $sig(e) \in \tau$ implies that $sup(s_0) = sup(s_1)$.
Hence, $s_0$ and $s_1$ are not $\tau$-separable.
As all states have to be reachable, this implies that input TSs with more than one state cannot be $\tau$-feasible and, thus, are discarded after a constant time check.

Being reduced, $A = (\{s\}, E, \{s \edge{e} s \mid e \in E\}, s\})$ is the only single-state TS.
For this input, $N=(\{p\}, E(A), \{(p,e,\nop) \mid e \in E(A)\}, \{p\})$ has a state graph isomorphic to $A$ and is, thus, written to the output in $\mathcal{O}(\vert E(A) \vert )$ time.
\end{proof}

\subsection{Net Synthesis by Incremental  Region Growing}\label{sec:nop_res}

The result of this section is the following contribution to Theorem~\ref{the:polynomial_cases}:
\begin{lemma}\label{lem:polynomial_cases_nop_resOrSet}
If $\tau=\{\nop, \res\} \cup \omega$ is a type of nets with $\omega \subseteq \{\inp, \used, \free\}$ or $\tau=\{\nop, \set\} \cup \omega'$ with $\omega' \subseteq \{\out, \used, \free\}$ then any given TS $A$ can be synthesized into a $\tau$-net $N$ with $A(N)$ isomorphic to $A$, respectively rejected if the net $N$ does not exist, in $\mathcal{O}(\vert E(A)\vert \vert S(A)\vert^6 \max\{\vert E(A)\vert, \vert S(A)\vert\})$ time.
\end{lemma}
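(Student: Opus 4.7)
First, by Lemma~\ref{lem:isomorphic_types} the isomorphism exchanging $\inp\leftrightarrow\out$, $\set\leftrightarrow\res$, and $\used\leftrightarrow\free$ maps the second family of types case-by-case to the first, so I only need to treat $\tau = \{\nop,\res\}\cup\omega$ with $\omega \subseteq \{\inp,\used,\free\}$. By the synthesis framework recalled in Section~\ref{sec:preliminaries}, computing a region set $\R$ that solves every SSP and ESSP atom of the input TS $A$ lets us emit the net $N(A,\R)$ with state graph isomorphic to $A$, while any unsolvable atom lets us correctly reject $A$. Hence it suffices to decide and solve each atom individually in polynomial time.

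The key structural observation that drives the algorithm is that no interaction in $\tau$ realizes the transition shape $(sup(s),sup(s'))=(0,1)$, because only $\out$, $\set$, and $\swap$ would accept it and none of these lies in $\tau$. Consequently every $\tau$-region $(sup,sig)$ of $A$ satisfies the Horn-like implication $sup(s')=1 \Rightarrow sup(s)=1$ for every transition $s\edge{e}s'$ of $A$; equivalently, the $1$-support is closed under predecessors and the $0$-support under successors. A second observation is that an event $e$ admits \emph{some} signature from $\tau$ under a given support if and only if the set of shapes on its transitions avoids the simultaneous presence of $(1,0)$ and $(1,1)$: every other shape set is covered by one of $\nop$ (shapes $\{(0,0),(1,1)\}$), $\res$ ($\{(0,0),(1,0)\}$), $\inp$ ($\{(1,0)\}$), $\used$ ($\{(1,1)\}$), or $\free$ ($\{(0,0)\}$).

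The plan is a greedy incremental region-growing procedure, executed once per atom. For an SSP atom $(s_0,s_0')$ I would seed the two orientations $(sup(s_0),sup(s_0')) \in \{(0,1),(1,0)\}$; for an ESSP atom $(e_0,s_0)$ I would seed each of the at most three choices of $sig(e_0)\in \{\inp,\used,\free\}\cap\tau$ together with the forced value of $sup(s_0)$ (the input on which $sig(e_0)$ is undefined) and the shape constraints that the choice imposes on every $e_0$-transition. From each seed the algorithm first closes the partial support under the Horn implication, backward-propagating $1$'s to predecessors and forward-propagating $0$'s to successors, and aborts the seed on any conflict with fixed values. It then processes events: whenever an event $e$ already carries a forced $(1,0)$-transition it is committed to $sig(e)\in\{\res,\inp\}$, forcing $sup(s')=0$ on the targets of all other $e$-transitions with source $1$; symmetrically, a forced $(1,1)$-transition commits $e$ to $\{\nop,\used\}$ and forces $sup(s')=1$ on the targets of $e$-transitions with source $1$. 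These forcings are iterated to a fixpoint, any still-undetermined states are defaulted to $0$ with one last Horn sweep, and the resulting region is checked against the atom's goal; successes go into $\R$, and an atom whose seeds all fail causes the whole instance to be rejected.

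The main obstacle is the completeness of this deterministic, non-backtracking procedure: whenever a $\tau$-region solving the atom exists and is compatible with a given seed, the procedure outputs one. The argument rests on verifying that the family of valid partial supports extending a seed is closed under pointwise meet, so that a unique inclusion-minimal $1$-support compatible with the seed exists and is precisely what the propagation computes; the event-consistency rule then guarantees that no bad $(1,0),(1,1)$-pattern is introduced when it could have been avoided. This reduces to a short case analysis on pairs of interactions in $\tau$ and on the three legal shapes $(0,0),(1,0),(1,1)$. Once completeness is in hand, the complexity is routine: there are $O(|S|^2+|E|\cdot|S|)$ atoms, $O(1)$ seeds per atom, and $O((|E|+|S|)^2)$ work per seed for propagation and consistency checking, followed by a linear emission of $N(A,\R)$; the resulting running time fits comfortably inside the stated bound.
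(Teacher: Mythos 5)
Your proposal is correct and follows essentially the same route as the paper: reduce the \set-family to the \res-family via Lemma~\ref{lem:isomorphic_types}, then solve each (E)SSP atom by growing the unique inclusion-minimal region support from an atom-specific seed, using exactly the two forcing rules (predecessor-closure of the $1$-support and the event-consistency rule excluding simultaneous $(1,0)$ and $(1,1)$ shapes) that constitute the paper's Algorithm~\ref{alg:compute_support}, with correctness resting on the same minimality/meet-closure property proved in Lemma~\ref{lem:algorithm}. The only difference is presentational: you propagate forced $0$'s and detect conflicts eagerly, whereas the paper grows the $1$-support monotonically, defaults the rest to $0$, and checks the atom's goal only at the end.
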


Before we can go into the proof of this lemma, we introduce the respective algorithm.
The core subroutine of this method is Algorithm \ref{alg:compute_support}.
According to Lemma \ref{lem:algorithm}, this method accepts a given set of states $Q \subseteq S(A)$ and returns a minimal superset $sup \supseteq Q$ that, together with a matching signature, forms a region of $A$.
Later, in Lemma \ref{lem:nop_res[inp][used][free]}, we show that the regions $\R$ derived from Algorithm \ref{alg:compute_support} solve all (E)SSP atoms of $A$.
Using that $\R$ is small enough and that $N(A, \R)$ is isomorphic to $A$ leads to an efficient synthesis method for reset nets and their combinations with \inp, \used, and \free.
The tractability for synthesis of set nets and extensions follows from type isomorphisms.
\begin{algorithm}
\KwData{TS $A$ and set of states $Q \subseteq S(A)$}
\KwResult{A support $sup \supseteq Q$ for a region of $A$.}

\While{$\exists\ s \in Q, s' \not\in Q, e \in E(A): (s' \edge{e} s) \vee (s \edge{e} s' \wedge z \edge{e} z' \text{ for } z, z' \in Q)$}
{
$Q = Q \cup \{s'\}$\;
}
\Return $sup = Q$\;
\
\caption{
Given $Q \subseteq S(A)$, the algorithm minimally extends $Q$ to the support of a $\tau$-region $(sup, sig)$ of $A$ for all reset types of nets $\tau = \{\nop, \res\} \cup \omega $ extended with $\omega \subseteq \{\inp, \used, \free\}$.
}
\label{alg:compute_support}
\end{algorithm}
\begin{lemma}\label{lem:algorithm}
If $\tau=\{\nop, \res\} \cup \omega$ is a type of nets with $\omega \subseteq \{\inp, \used, \free\}$ and $A$ is a TS and $Q \subseteq S(A)$ then the result $sup$ of Algorithm~\ref{alg:compute_support} started on $Q$ forms a $\tau$-region $(sup, sig)$ of $A$ with
\[sig(e) = 
\begin{cases}
\used, & \text{if } \used \in \tau \text{ and } \{s,s'\mid s\edge{e}s'\}\subseteq sup,\\
\free, & \text{if } \free \in \tau \text{ and } \{s,s'\mid s\edge{e}s'\} \cap sup = \emptyset,\\
\inp, & \text{if } \inp \in \tau \text{ and for all } s\edge{e}s': sup(s) = 1, sup(s') = 0,\\
\res, & \text{if } \inp \not\in \tau \text{ and for all } s\edge{e}s': sup(s) = 1, sup(s') = 0,\\
\res, & \text{if for at least one but not all } s\edge{e}s' : sup(s) = 1, sup(s') = 0,\\
\nop, &\text{otherwise,}
\end{cases}
\]
for all $e \in E(A)$.
Moreover, for all $\tau$-regions $(sup', sig')$ of $A$ with $Q \subseteq sup'$ it is true that even $sup \subseteq sup'$.
Algorithm~\ref{alg:compute_support} terminates after $\mathcal{O}(\vert E(A)\vert \vert S(A)\vert^5)$ time.
\end{lemma}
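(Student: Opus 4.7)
My plan is to split the argument into three parts: establishing that the output is a $\tau$-region, proving minimality by a loop invariant, and bounding the runtime.

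For region validity, I would first extract the two closure properties that $sup$ enjoys once the while-condition no longer triggers: (i) there is no arc $s' \edge{e} s$ with $s \in sup$ and $s' \notin sup$, and (ii) no event $e$ simultaneously admits a transition $s \edge{e} s'$ with $s \in sup$, $s' \notin sup$ together with a transition $z \edge{e} z'$ satisfying $\{z,z'\} \subseteq sup$. Together these imply that, for any event $e$, the label pairs $(sup(s), sup(s'))$ occurring on its transitions lie in $\{(0,0),(1,0),(1,1)\}$, and the last two never coexist. I would then run through the six clauses of the stated definition of $sig(e)$ and check case-by-case that every transition $s \edge{e} s'$ is mapped correctly in the template TS of $\tau$, using the evaluations $\nop(b)=b$, $\inp(1)=\res(1)=0$, $\res(0)=\free(0)=0$, and $\used(1)=1$. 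The only slightly subtle point is clause~5, but the closure above forces the non-$(1,0)$-transitions of a mixed event to be of type $(0,0)$, which $\res$ handles.

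Minimality, the real heart of the argument, follows from the loop invariant: at every point in the execution, $Q \subseteq sup'$ holds for every $\tau$-region $(sup', sig')$ of $A$ whose support contains the initial argument $Q_0$. I would prove this by induction on iterations, the base being trivial. In the inductive step, when Case~1 adds $s'$ because of $s' \edge{e} s$ with $s \in Q$, the inclusion $Q \subseteq sup'$ gives $sup'(s)=1$, so $sup'(s')=0$ would force $sig'(e)$ to realise $0\to 1$, which no interaction of $\tau \subseteq \{\nop,\inp,\res,\used,\free\}$ can do. When Case~2 adds $s'$, the transitions $s \edge{e} s'$ and $z \edge{e} z'$ with $s,z,z' \in Q$ would force $sig'(e)$ to realise $1\to 0$ and $1\to 1$ at the same time, again impossible inside $\tau$. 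Both cases therefore force $s' \in sup'$, preserving the invariant. The crux I anticipate is to phrase these forbidden-interaction arguments uniformly across all admissible $\omega \subseteq \{\inp,\used,\free\}$, but the hypothesis $\{\out,\set,\swap\} \cap \tau = \emptyset$ is exactly the syntactic reason why no single interaction of $\tau$ can accomplish $0 \to 1$ or reconcile $1\to 0$ with $1\to 1$.

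For termination and runtime, each iteration strictly enlarges $Q$ inside the finite set $S(A)$, so the loop runs at most $|S(A)|$ times. Checking the while-condition amounts to scanning, for each candidate $s'\notin Q$, its incoming and outgoing arcs and comparing with the at most $|E(A)|\cdot|S(A)|$ transitions of $A$; a straightforward implementation is polynomial and fits comfortably inside the claimed $\mathcal{O}(|E(A)|\cdot|S(A)|^5)$ budget, so no refined analysis is necessary.
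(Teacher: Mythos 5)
Your proposal is correct and follows essentially the same route as the paper's proof: termination by monotone growth of $Q$, region validity via the two closure properties guaranteed by the exhausted while-condition (the paper phrases this directly as a case analysis on the transitions of each event rather than isolating the closure properties first), and minimality by exactly the same loop invariant, using that no interaction in $\{\nop,\inp,\res,\used,\free\}$ realizes $0\to 1$ or simultaneously $1\to 0$ and $1\to 1$. The runtime discussion is slightly less explicit than the paper's, but the bound is generous enough that your coarse estimate suffices.
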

\begin{proof}
That the algorithm terminates is trivial as every iteration extends $Q$, which is possible for at most $\vert S(A)\vert$ times.
After termination, $sup$ obviously contains input $Q$.
Moreover, there are no events $e \in E(A)$ participating in a transition $s' \edge{e} s$ with $s \in sup, s' \not\in sup$.
On the other hand, if there is transition $s \edge{e} s'$ with $s \in sup, s' \not\in sup$ then no other transition $z \edge{e} z'$ can be completely inside $sup$ and hence, we can validly assign $sig(e) = \res$ or even $sig(e) = \inp$, if \inp\ is available and $z \in sup, z' \not\in sup$, all along.
Otherwise, if all transitions of $e$ are completely in- or outside $sup$, assigning $sig(e)=\nop$ is always valid.
If $e$'s transitions are consistently inside, respectively outside, $sup$ then even $sig(e)=\used$, respectively $sig(e)=\free$, is possible, given that the interaction is available.
Hence, $(sup, sig)$ is a $\tau$-region of $A$ in every case.

Now let $(sup',sig')$ be any $\tau$-region of $A$ with $Q \subseteq sup'$.
We show by induction that the set $Q_i$ resulting from $i$ while-iterations of Algorithm~\ref{alg:compute_support} fulfills $Q_i \subseteq sup'$.
For a start, $Q_0 = Q \subseteq sup'$.
Assume that $Q_i \subseteq sup'$ and $Q_{i+1} \not \subseteq sup'$ and let $\{s'\} = Q_{i+1} \setminus Q_i$ which, thus, fulfills $s' \not\in sup'$.
As $s'$ is added to $Q_{i+1}$, there are $s \in Q_i \subseteq sup'$ and $e \in E(A)$ such that either $s' \edge{e} s$ or $s \edge{e} s'$ and $z \edge{e} z'$ with $z,z' \in Q_i \subseteq sig'$.
But then $sig'(e)$ cannot be any interaction in $\{\nop, \res, \inp, \used, \free\}$, a contradiction. 
When the while loop terminates after $n$ iterations, then $sup$ becomes $Q_n$ and, thus, fulfills $sup = Q_n \subseteq sup'$.

As there are at most $\vert S(A)\vert$ while-iterations and as checking the while-condition takes $\mathcal{O}(\vert E(A)\vert\vert S(A)\vert^4)$ time, Algorithm~\ref{alg:compute_support} runs in $\mathcal{O}(\vert E(A)\vert \vert S(A)\vert^5)$ time.
\end{proof}

Having a way to reliably produce $\tau$-regions for the net classes of interest, we argue that they are versatile enough to solve all (E)SSP atoms.
Otherwise, the computed regions $\R$ would not suffice to synthesize the net $N(A, \R)$.
\begin{lemma}\label{lem:nop_res[inp][used][free]}
If $\tau=\{\nop, \res\} \cup \omega$ with $\omega \subseteq \{\inp, \used, \free\}$ and $A$ is a TS then $e \in E(A)$ is $\tau$-inhibitable at $s \in S(A)$ where $\neg (s\edge{e})$ if and only if
\begin{enumerate}
\item\label{lem:nop_res[inp][used][free]_inp}
$\inp \in \tau$ and the region $(sup, sig)$ returned by Algorithm~\ref{alg:compute_support} on input $Q = \{z \mid z\edge{e}\}$ satisfies $sig(e)=\inp$ and $sup(s)=0$, or
\item\label{lem:nop_res[inp][used][free]_used}
$\used\in \tau$ and the region $(sup, sig)$ returned by Algorithm~\ref{alg:compute_support} on input $Q=\{z, z' \mid z \edge{e} z'\}$ satisfies $sig(e)=\used$ and $sup(s)=0$, or
\item\label{lem:nop_res[inp][used][free]_free}
$\free\in \tau$ and the region $(sup, sig)$ returned by Algorithm~\ref{alg:compute_support} on input $Q = \{s\}$ satisfies $sig(e)=\free$.
\end{enumerate}
Two states $s, s' \in S(A)$ are $\tau$-separable if and only if the region $(sup, sig)$ returned by Algorithm~\ref{alg:compute_support} fulfills $sup(s') = 0$ for $Q = \{s\}$ or $sup(s)=0$ for $Q = \{s'\}$.
\end{lemma}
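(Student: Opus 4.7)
The plan is to leverage the minimality property from Lemma~\ref{lem:algorithm}: for any $\tau$-region $(sup', sig')$ of $A$ with $Q \subseteq sup'$, the output $sup$ of Algorithm~\ref{alg:compute_support} on input $Q$ satisfies $sup \subseteq sup'$. In both the ESSP and SSP statements, the ``if'' direction is immediate because the region constructed by the algorithm directly witnesses the claimed separation or inhibition: in the three ESSP cases, $sig(e) \in \{\inp, \used, \free\}$ with $sup(s) = 0$ (or $sup(s) = 1$ in the \free-case, since $s \in Q$) makes $e$ undefined on $sup(s)$; and in the SSP case, $s \in Q \subseteq sup$ while $sup(s') = 0$ yields $sup(s) \neq sup(s')$.

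For the ``only if'' direction of ESSP, assume $(sup', sig')$ inhibits $e$ at $s$. Since $\tau = \{\nop, \res\} \cup \omega$ and $sig'(e)$ must be undefined on $sup'(s)$, we have $sig'(e) \in \{\inp, \used, \free\}$, and I would split into these three cases. If $sig'(e) = \inp$, every $z \edge{e} z'$ satisfies $sup'(z) = 1$, $sup'(z') = 0$, so $Q = \{z \mid z \edge{e}\} \subseteq sup'$; minimality gives $sup \subseteq sup'$, so all targets $z'$ of $e$-transitions remain outside $sup$ while all sources lie in $sup$, forcing $sig(e) = \inp$ under the algorithm's priority rule (note that \used\ and \free\ cannot apply because sources and targets have different membership). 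Moreover $sup'(s) = 0$ forces $sup(s) = 0$. The case $sig'(e) = \used$ uses $Q = \{z, z' \mid z \edge{e} z'\} \subseteq sup'$, so minimality puts all $e$-transitions entirely inside $sup$, triggering $sig(e) = \used$; and again $sup(s) \le sup'(s) = 0$. The case $sig'(e) = \free$ uses $Q = \{s\}$ (valid because $sup'(s) = 1$), and minimality keeps all $e$-transitions outside $sup$ (since they are outside $sup'$), producing $sig(e) = \free$.

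For the ``only if'' direction of SSP, if some region separates $s$ and $s'$, flip it so that $sup'(s) = 1$, $sup'(s') = 0$, apply the algorithm to $Q = \{s\}$, and use minimality to conclude $sup(s') \le sup'(s') = 0$; the symmetric case picks $Q = \{s'\}$. The main subtlety to verify carefully is the interaction of the algorithm's signature-priority rule with the case distinction on $sig'(e)$: specifically, in the \inp-case one must argue that no transition of $e$ has both endpoints in $sup$ (so the rule does not divert to \used\ or to the ``at least one but not all'' branch), which follows from $sup \subseteq sup'$ together with $sig'(e) = \inp$ excluding such transitions in $sup'$. Once this is pinned down, the correspondence between algorithmic outputs and abstract witnesses is exact, and the lemma follows.
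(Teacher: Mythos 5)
Your proof is correct and follows essentially the same route as the paper's: both directions rest on the minimality property $Q \subseteq sup \subseteq sup'$ from Lemma~\ref{lem:algorithm}, with the identical case split on $sig'(e) \in \{\inp, \used, \free\}$ and the same choice of seed sets $Q$. Your extra care about the signature-assignment rule in the \inp-case (ruling out a transition of $e$ lying entirely inside $sup$) is a valid and welcome detail that the paper handles implicitly via $sup \cap Y = \emptyset$.
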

\begin{proof}
The if-direction for an ESSP atom $(A, e, s)$ is trivial, as $e$ is inhibited at $s$ by the pair $(sup, sig)$ from Algorithm~\ref{alg:compute_support}, a $\tau$-region according to Lemma \ref{lem:algorithm}.
Reversely, if $e$ is inhibited at $s$ by a $\tau$-region $(sup',sig')$ then interaction $sig'(e)$ is not defined on $sup'(s)$.
Hence, $sig'(e) \not\in \{\nop, \res\}$.
Let $X = \{x \mid x \edge{e}\}$, $Y = \{y \mid \edge{e} y\}$, and $Z = \{z, z' \mid z \edge{e} z'\}$.
If $sig'(e) = \inp$ ($sig'(e) = \used$, $sig'(e) = \free$) then $sup'(s) = 0$ ($sup'(s) = 0$, $sup'(s) = 1$).
Notice that $sup' \cap Y = \emptyset$ ($Z \subseteq sup'$, $Z \cap sup' = \emptyset$).
By Lemma~\ref{lem:algorithm}, the region $(sup, sig)$ returned by Algorithm~\ref{alg:compute_support} on $Q = X$ ($Q = Z$, $Q = \{s\}$) satisfies $Q \subseteq sup \subseteq sup'$.
Hence, $sup(s) = 0$ ($sup(s) = 0$, $sup(s) = 1$) and $sup \cap Y = \emptyset$ ($Z \subseteq sup$, $Z \cap sup = \emptyset$), which implies $sig(e)=\inp$, ($sig(e)=\used$, $sig(e)=\free$), too.

The if-direction for the SSP atom $(A, s, s')$ is trivial, again, as the $\tau$-region of Algorithm~\ref{alg:compute_support} separates $s, s'$.
Reversely, let $s, s'$ be separated by a $\tau$-region $(sup',sig')$ where, without loss of generality, $sup'(s)=1$ and $sup'(s')=0$.
The result $(sup, sig)$ of Algorithm~\ref{alg:compute_support} on $Q = \{s\}$ is a $\tau$-region by Lemma~\ref{lem:algorithm} that fulfills $Q \subseteq sup \subseteq sup'$.
Hence, $sup(s)=1$ and $sup(s')=0$, too.
\end{proof}

By the required versatility of the regions from Algorithm~\ref{alg:compute_support}, we can now prove Lemma~\ref{lem:polynomial_cases_nop_resOrSet}:
\begin{proof}[Proof of Lemma \ref{lem:polynomial_cases_nop_resOrSet}]
Let $S = S(A)$ and $E = E(A)$.
The idea is to firstly produce a region set $\R$ that solves all (E)SSP atoms of $A$.
If we cannot find $\R$, then we reject $A$.
There are $\mathcal{O}(\vert E\vert \vert S\vert)$ ESSP atoms $(A, e, s)$.
Depending on the availability of $\inp, \used, \free$ in $\tau$, we have to test the inhibitability of $e$ at $s$ by up to three calls of Algorithm~\ref{alg:compute_support} with inputs $Q_\inp = \{z \mid z\edge{e}\}$, $Q_\used = \{z, z' \mid z \edge{e} z'\}$, and $Q_\free = \{s\}$.
In every case, the method's running time of $\mathcal{O}(\vert E\vert \vert S\vert^5)$ heavily dominates the time for the creation of the input.
If all tests succeed, then we haved picked up enough regions to solve all ESSP atoms in $\mathcal{O}(\vert E\vert^2 \vert S\vert^6)$ time.
Otherwise, Lemma \ref{lem:nop_res[inp][used][free]} allows us to reject $A$.
Notice that, in case of $\tau=\{\nop, \res\}$ there must not be any ESSP atoms as inhibiting interactions $\inp, \used, \free$ are missing.
In this case, we would reject $A$ if it had event $e\in E$ and state $s \in S$ with $\neg (s \edge{e})$.

Next, there are $\mathcal{O}(\vert S\vert^2)$ SSP atoms $(A, s, s')$.
By Lemma \ref{lem:nop_res[inp][used][free]}, we have to call Algorithm~\ref{alg:compute_support} with $Q_s = \{s\}$ and $Q_{s'} = \{s'\}$ to decide the separability of $s, s'$.
After $\mathcal{O}(\vert E\vert \vert S\vert^7)$ time, either $\R$ solves all SSP atoms or we can reject $A$.

Hence, using $\mathcal{O}(\vert E\vert \vert S\vert^6 \max\{\vert E\vert, \vert S\vert\})$ time in total, we decide the feasibility of $A$ and, in the positive case, get $\R$.
Computing $N(A, \R)$ consumes $\mathcal{O}(\vert \R\vert \vert E\vert) = \mathcal{O}(\vert E\vert \vert S\vert \max\{\vert E\vert, \vert S\vert\})$ time, which is dominated by the previous costs.

If $\tau=\{\nop, \set\} \cup \omega'$ with $\omega' \subseteq \{\out, \used, \free\}$ our approach is to synthesize a net $N'$ for the isomorphic type $\tau'$ that replacing $\set$ with $\res$, $\out$ with $\inp$, $\used$ with $\free$, and $\free$ with $\used$.
In order to obtain a $\tau$-net $N$, we simply revert the interaction replacement in the flow function $f(N')$.
Obviously, $A(N)$ is isomorphic to $A(N')$, which is isomorphic to $A$.
\end{proof}

\subsection{Net Synthesis for Relatives of Flip-Flop-Nets}\label{sec:flip_flop}

Last step in proving Theorem~\ref{the:polynomial_cases} is to cover item \ref{the:polynomial_cases_flip_flop}, the relatives of flip-flop nets:
\begin{lemma}\label{lem:flip_flop_cases}
If $\tau=\{\nop, \swap\} \cup \omega$ with $\omega \subseteq \{\inp, \out, \used, \free\}$ then a given TS $A$ can be synthesized into a $\tau$-net $N$ with $A(N)$ isomorphic to $A$, respectively rejected if $N$ does not exist, in polynomial time.
\end{lemma}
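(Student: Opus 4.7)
The plan is to generalize Schmitt's flip-flop synthesis algorithm \cite{S1996} to cover the additional interactions $\inp, \out, \used, \free$. The key structural observation is that whenever $\swap \in \tau$, every $\tau$-region $(sup, sig)$ of $A = (S, E, \delta, s_0)$ is completely determined by a pair $(X, b_0)$ consisting of a toggle vector $X \in \mathbb{F}_2^E$, with $X_e = 1$ iff $sig(e) \in \{\swap, \inp, \out\}$, together with a basepoint $b_0 = sup(s_0)$; by reachability, $sup(s)$ equals $b_0$ plus the $\mathbb{F}_2$-sum of $X_e$ along any $s_0$-to-$s$ path, and consistency collapses to one $\mathbb{F}_2$-equation per independent cycle of the undirected skeleton of $A$. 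Hence the set of admissible $(X, b_0)$ forms an affine subspace $\mathcal{U} \subseteq \mathbb{F}_2^{E \cup \{\ast\}}$ computable by Gaussian elimination in polynomial time.

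First I would solve every SSP atom $(A, s, s')$ by adjoining to $\mathcal{U}$ the linear equation $sup(s) + sup(s') \equiv 1 \pmod 2$, which in terms of $(X, b_0)$ translates into a single toggle-parity equation along any fixed skeleton path from $s$ to $s'$; solvability of the augmented system decides separability. Next I would handle each ESSP atom $(A, e, s)$ by trying, for every inhibiting interaction $i \in \tau \cap \{\inp, \out, \used, \free\}$, the following fixed-signature hypothesis: if $i \in \{\inp, \out\}$, set $X_e = 1$, pin $sup(s)$ to the value on which $i$ is undefined, and pin $sup(z)$ to the required source value for every $z$ with $z \edge{e} z'$; if $i \in \{\used, \free\}$, set $X_e = 0$ and analogously pin $sup(s)$ as well as both endpoints of every $e$-transition. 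Each pin is a single linear equation on $(X, b_0)$, so the augmented system retains size $\mathcal{O}(|E| + |S|)$ and is solved by Gaussian elimination in polynomial time; the ESSP atom is solvable iff at least one of the at most four hypotheses admits a solution in $\mathcal{U}$.

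If some atom is unsolvable, then by the equivalence feasibility $\Leftrightarrow$ SSP $\wedge$ ESSP from \cite{BBD2015} the TS $A$ is not $\tau$-feasible, and I would reject. Otherwise the collection $\R$ of regions obtained from the $\mathcal{O}(|E||S| + |S|^2)$ successful solutions is a witness set, and the net $N(A, \R)$ from Section~\ref{sec:preliminaries} is a $\tau$-net whose reachability graph is isomorphic to $A$. The main obstacle I foresee is the bookkeeping for $\used$ and $\free$, because fixing $sig(e)$ to one of these interactions can force pointwise support values at several, potentially distant states at once; nevertheless each such pin is only one linear equation on $(X, b_0)$, so the whole procedure stays within the linear-algebraic regime over $\mathbb{F}_2$ and the total running time is polynomial in $|S| + |E|$.
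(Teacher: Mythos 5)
Your proposal is correct and follows essentially the same route as the paper: both generalize Schmitt's $\mathbb{F}_2$ linear-algebra method by encoding supports via toggle parities along spanning-tree paths and augmenting the chord-equation system with pinning equations for each SSP atom and each candidate inhibiting interaction of an ESSP atom. The only cosmetic differences are that you carry the basepoint $sup(s_0)$ as an explicit extra variable and enumerate up to four hypotheses per ESSP atom, whereas the paper exploits the complement-support symmetry to argue only for the $\inp$ and $\used$/$\free$ systems.
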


This works simply by modifying Schmitt's algorithm \cite{S1996} which is based on the ability to efficiently solve equations over the boolean field $\mathbb{F}_2$.
As flip-flop nets have already been covered there and as types of nets $\tilde{\tau} = \{\nop,\swap\} \cup \omega$ with $\omega \subseteq  \{\out, \used, \free\}$ are isomorphic to $\tau = \{\nop,\swap\} \cup \tilde{\omega}$, where $\tilde{\omega}$ mirrors $\omega$ simply by replacing \inp\ with \out, \used\ with \free\ and vice versa, it is sufficient to show the Lemma for types of nets

\begin{enumerate}
\item\label{item:nop_swap[used][free]}
$\tau = \{\nop,\swap\} \cup \omega$ with $\omega\subseteq \{\used, \free\}$ and 
\item\label{item:nop_swap_inp[out][used][free]}
$\tau = \{\nop,\inp, \swap \} \cup \omega$ with $\omega \subseteq \{\out,\used,\free\}$ but not  $\tau = \{\nop, \inp, \out, \swap\}$.
\end{enumerate}

The following roughly summarize the main steps of Schmitts algorithm and afterwards we show how this can be modified for our cases.
If we are given a TS $A$, we firstly interpret it as a directed graph on nodes $S(A)$ and with (labeled) directed arcs given by the transitions $s \edge{e} s'$.
As a second step, we use breadth first search to compute in linear time a spanning tree $A'$ of $A$ with the initial state $s_0$ as the root node.
Notice that $A'$ spans all nodes of $A$ as we can be certain that $A$ reaches every state from $s_0$ by at least one path.
In $A'$, however, every node $s \in S(A)$ is now reached by exactly one directed path $\pi_s = s_0 \edge{e_1} \dots \edge{e_n} s$.
Moreover, every transition $s \edge{e} s'$ of $A$ that fails to be an edge of $A'$ is called \emph{chord}.

With the previous definition, our goal for every (E)SSP atom $(A, x, y)$ is to define a system $M_{x,y}$ of equations over the boolean field $\mathbb{F}_2$ using the events $E$ as variables.
If $M_{x,y}$ has a solution $\rho: E \rightarrow \mathbb{F}_2$ assigning either $0$ or $1$ to every event, it leads to a region solving the original atom.
Otherwise, the atom is shown to be unsolvable.
For every $s \in S(A)$, we define the mapping $\psi_s: E \rightarrow \mathbb{F}_2$ assigning to every $e \in E$ the parity $\psi_s(e)$ of occurrences of event $e \in E$ on the path $\pi_s$, $0$ for even (including zero occurrences) and $1$ for odd.
Then, every cord $t: s \edge{e} s'$ is associated with the linear chord equation $\psi_t: (\rho(e) + \sum_{e' \in E} (\psi_s(e') + \psi_{s'}(e')) \cdot \rho(e')) \mod 2 = 0 $.
Combining all cord equations together defines the basic equation system $\Psi$.

Any solution $\rho$ to $\Psi$ is called abstract region with the following meaning:
As long as $\tau$ contains \nop\ and \swap, we can derive real $\tau$-regions $(sup, sig)$ from $\rho$ by defining the support $sup(s) = \sum_{e \in E} \psi_s(e) \cdot \rho(e) \mod 2$ for all $s \in S(A)$.
The chord equations make sure that the event parity on any other path to $s$ is equal to $sup(s)$.
This justifies to set $sig(e) = \nop$ for all $e \in E$ with $\rho(e) = 0$ and, otherwise, if $\rho(e) = 1$ then $sig(e) = \swap$ is possible.
Another possibility would be to select the complementary support  $sup(s) = (1 + \sum_{e \in E} \psi_s(e) \cdot \rho(e)) \mod 2$.

Based on the specific type of nets $\tau$ and the atom $(A,x,y)$, we augment $\Psi$ with additional equations to obtain $M_{x,y}$.
If, beside \nop\ and \swap, $\tau$ does not contain anything beyond $\{\inp, \out, \used, \free\}$ then, to solve an SSP atom $(A, s, s')$, it is already enough to extend $\Psi$ with the equation $\sum_{e \in E} (\psi_s(e) + \psi_{s'}(e)) \cdot \rho(e) = 1 \mod 2$ to find $M_{s,s'}$.
This equation simply makes sure that $sup(s) \not= sup(s')$.
If $\rho$ does not exist, then $s$ and $s'$ cannot be separated, even though if any subset of $\{\inp, \out, \used, \free\}$ was additionally available in $\tau$.

How to solve an ESSP atom $(A, e, s)$ depends on the availability of \inp, \out, \used, and \free\ in $\tau$.
Taking flip-flop nets, this brings along only \inp\ and \out\ and we obtain $M_{e,s}$ by complementing $\Psi$ with the equation $\rho(e) = 1$ and for all transitions $z \edge{e} z'$ with the equations $\sum_{e' \in E} (\psi_{s}(e') + \psi_{z}(e')) \cdot \rho(e') = 1$.
This makes sure that all sink states of $e$-transitions have the same support as $s$ and all source states opposite support. 
If $\rho$ is a solution to $M_{e,s}$, we define the support and signature like before except for $e$ where we  let $sig(e) = \inp$ if $sup(z) = 1$ for any (that also means all) $z \edge{e}$ and, otherwise, $sig(e) = \out$.
If $\rho$ does not exist, the ESSP atom cannot be solved.

Compared to the number of transitions in $A$, the system $M_{x,y}$ has at most a linear amount of equations.
As solving $M_{x,y}$ is in polynomial time, solving (E)SSP atoms is tractable for flip-flop nets.
Having the polynomial size set $\R$ of regions for all these atoms after polynomial time, we can synthesize the net $N(A, \R)$ in polynomial time, too. 
Keeping this approach in mind, we are ready to prove the remainder of Lemma~\ref{lem:flip_flop_cases}:

\begin{proof}[Proof of Lemma \ref{lem:flip_flop_cases}]
As discussed before, we only have to show for every respective $\tau$ that the ESSP atoms $(A, e, s)$ can be expressed as extensions of $\Psi$.
The idea is to create one or more systems $M^i_{e,s}$ for every $i \in \{\inp, \out, \used,\free\} \cap \tau$ that have at least one solution $\rho$ if and only if the atom can be solved with a region $(sup, sig)$ where $sig(e) = i$.
Furthermore, as every abstract region $\rho$ induces two complementary $\tau$-supports, the atom $(A,e,s)$ is $\inp$-solvable ($\used$-solvable) if and only if it is $\out$-solvable ($\free$-solvable). 
Therefore, it is sufficient to argue only for $M^\inp_{e,s}$ and $M^\free_{e,s}$.

For $M^\inp_{e,s}$, we simply use the same system as for flip-flop nets.
Having a solution $\rho$, we first define support and signature as before where for $e$ we  let $sig(e) = \inp$ if $sup(z) = 1$ for any (that also means all) $z \edge{e}$.
Otherwise, if $sup(z) = 0$, we switch to the complementary support, which turns $sup(z)$ to $1$ and thus, $sig(e) = \inp$, again.
If $M^\inp_{e,s}$ has no solution then it is not possible to inhibit $e$ at $s$ using \inp.

To get $M^\used_{e,s}$, we take $\Psi$ and add the equation $\rho(e)=0$ plus for every transition $z \edge{e} z'$ the equations $\sum_{e' \in E} (\psi_{s}(e') + \psi_{z}(e')) \cdot \rho(e') = 1$.
These equations make sure that $e$ can be assigned \used\ or \free\ and that states incident to $e$ behave correctly.
If $\rho$ is a solution to $M^\used_{e,s}$, we again define support and signature like in all previous cases except for $e$.
In fact, we let $sig(e) = \used$ if $sup(z) = 1$ for any (that also means all) $z \edge{e}$.
Otherwise, if $sup(z) = 0$, we complement the support turning $sup(z)$ to $1$ and allowing $sig(e) = \used$.
Again, if $M^\used_{e,s}$ has no solution then it is not possible to inhibit $e$ at $s$ by \used.
\end{proof}

\section{Conclusion}

In this paper we investigate the complexity of boolean net synthesis for the 128 practically more relevant \nop-afflicted classes.
In total, we prove 84 cases NP-hard and provide polynomial time algorithms for 36 classes.
As a side product, this paper introduces a very general reduction scheme that serves well for NP-completeness proofs in this matter.

For the eight classes (\nop, \inp, [\used]), (\nop, \out, [\free]), (\nop, \set, \res) and (\nop, \swap) extended with at least one of \set\ and \res, we leave the complexity of synthesis open.
They remain for future work.

While the first four items of this list are just surprisingly difficult with respect to their very limited interactions set, the really interesting classes are the last four which are built only from \nop, \swap, \set, and \res.
None of these events can be used to solve ESSP-atoms.
Consequently, any input TS $A$ with at least one state $s$ and event $e$ with $\neg (s \edge{e})$ is instantly unfeasible and can be rejected.
Only if every event occurs at every state of $A$, we have to search for a respective net of the given class.
But then ESSP is already solved and we only have to check SSP to test feasibility, which turns out fairly difficult, too.
This is partly caused by the fact that solving SSP atoms with one of the four mentioned subclasses of (\nop, \swap, \set, \res) can be shown NP-complete.
Hence, here tractability for feasibility would depend very much on the restriction of $s \edge{e}$ for all states $s$ and all events $e$ of $A$.

\section{Technical Proofs from Section \ref{sec:main_result}}\label{sec:techproofs}

\begin{proof}[Proof of Lemma~\ref{lem:translators}]
(\ref{lem:translators_1}-\ref{lem:translators_2}): 
Firstly, if $T^\sigma_\varphi$ installs a non empty freezer then, by Lemma~\ref{lem:generator}, we have $sig(x_j)\not=\swap$.\newline
Let $i\in \{0,\dots, m-1\}$.
For abbreviation we define $S_0=\{t_{i,\alpha,2}\mid 0\le i \le m-1, 0\le \alpha \le 2\}$ and $S_1=\{t_{i,\alpha,5} \mid 0\le i \le m-1, 0\le \alpha \le 2\}$.
We now argue for $\sigma\in \{ \sigma_1,\sigma_2,\sigma_4\}$ and see later that the arguments are symmetrically true for $\sigma=\sigma_3$.\newline
If condition (\ref{lem:translators_1}) is satisfied then by $sig_K(k)=\inp$,  we have for all $\alpha \in \{0,\dots,2\}$ that $sup_K(t_{i,\alpha,1})=0$ which with $V \subseteq sig_K^{-1}(\enter)$ and $W \subseteq sig_K^{-1}(\keepze)$ implies that $S_0\subseteq sup^{-1}(1)$ and $S_1\subseteq sup^{-1}(0)$.
Symmetrically, condition (\ref{lem:translators_2}) implies that $S_0\subseteq  sup^{-1}(0)$ \emph{and} $S_1\subseteq sup^{-1}(1)$.

Let $\alpha\in \{0,1,2\}$ and $\beta=\alpha+1 \mod 3$ and $\gamma=\alpha+2 \mod 3$.
Firstly, $S_0\subseteq sup^{-1}(1), S_1\subseteq sup^{-1}(0)$ ($S_0\subseteq sup^{-1}(0), S_1\subseteq sup^{-1}(1)$) implies that at least one element of $X_{i,0},X_{i,1},X_{i,2}$ has a signature from $\{\inp,\res,\swap\}$ ($\{\out,\set,\swap\}$), cf. Figure~\ref{fig:translators}.1.
Secondly, $t_{i,\alpha,2}$ is a source of $X_{i,\alpha}$ and $t_{i,\beta,5}$ is a sink of $X_{i,\alpha}$ which implies $sig(X_{i,\alpha})\not\in \{\set,\out,\free,\used\}$ ($sig(X_{i,\alpha})\not\in  \{\res,\inp,\free,\used\}$), cf. Figure~\ref{fig:interactions}. 
By a similar argument and $sig(x_j)\not=\swap$ for all $j\in \{0,\dots, m-1\}$, we obtain that $S_0\subseteq sup^{-1}(1), S_1\subseteq sup^{-1}(0)$ ($S_0\subseteq sup^{-1}(0), S_1\subseteq sup^{-1}(1)$) implies that at least one element of $\{x_{i,0},x_{i,1},x_{i,2}\}$ has a signature from $\{\out,\set\}$ ($\{\inp,\res\}$) and all of them are prevented to have a signature from $\{\res,\inp,\swap,\free,\used\}$ ($\{\set,\out,\swap,\free,\used\}$).\newline 
We now argue that there is exactly one variable event of $T_i$ with a signature different from $\nop$.
To do so, we show that if $sig(X_{i,\alpha})\not=\nop$ then $sig(X_{i,\beta})=sig(X_{i,\gamma})=\nop$.\newline 
If $S_0\subseteq sup^{-1}(1), S_1\subseteq sup^{-1}(0)$ ($S_0\subseteq sup^{-1}(0), S_1\subseteq sup^{-1}(1)$) and $sig(X_{i,\alpha})\in \{\inp,\res,\swap\}$ ($sig(X_{i,\alpha})\in \{\out,\set,\swap\}$) implying that $sup(t_{i,\alpha,3})=0$ ($sup(t_{i,\alpha,3})=1$) then we can immediately conclude the following fact:
By $sup(t_{i,\alpha,2})=1$ ($sup(t_{i,\alpha,2})=0$) and $sup(t_{i,\alpha,3})=0$ ($sup(t_{i,\alpha,3})=1$) and Figure~\ref{fig:interactions} we have that  $x_{i,\alpha}\in \{\set, \out\} $ ($x_{i,\alpha}\in \{\inp, \res\}$) which implies $sup(t_{i,\beta,4})=1$ ($sup(t_{i,\beta,4})=0$).
Moreover, by $sig(x_{i,\beta}),sig(x_{i,\gamma})\not\in \{\res,\inp\}$ ($sig(x_{i,\beta}),sig(x_{i,\gamma})\not\in \{\set,\out\}$), we have $sup(t_{i,\beta,2})=sup(t_{i,\beta,3})=1$ ($sup(t_{i,\beta,2})=sup(t_{i,\beta,3})=0$).
The inclusion (exclusion) of $t_{i,\beta,2},t_{i,\beta,3},t_{i,\beta,4}$ additionally prevents the \swap\ signature for the events $X_{i,\beta},X_{i,\gamma}$ resulting in $sup(t_{i,\alpha,4})=sup(t_{i,\alpha,5})=0$ ($sup(t_{i,\alpha,4})=sup(t_{i,\alpha,5})=1$), cf. Figure~\ref{fig:interactions}.
Consequently, for $X\in \{X_{i,\beta},X_{i,\gamma}\}$ there are transitions $s\edge{X}s'$ and $s''\edge{X}s'''$ such that $sup(s)=sup(s')=0$ and $sup(s'')=sup(s''')=1$ assuring the \nop\ signature for $X$.\newline 
Consequently, $(sup,sig)$ satisfies for all $i\in \{0,\dots,m-1\}$ the condition $\vert M=\{X\in V(\varphi)\vert sig(X)\not=\nop\}\cap E(T_i) \vert =1 $ which makes $M$ a one-in-three model of $\varphi$.

(\ref{lem:translators_3}): 
All $a\in Acc$ satisfy that if $s\edge{a}s'$ in $T^\sigma_{\varphi}$ then $s'\edge{a}s$ in $T^\sigma_{\varphi}$. 
Consequently, $Acc\cap sig_K^{-1}(\swap)=\emptyset$ assures state synchronization for $a$-labeled transitions: $sup(s)=sup(s')$. 
For abbreviation we define $S_0=\{t'_{i,\alpha,2}\mid 0\le i \le m-1, 0\le \alpha \le 2\}$ and $S_1=\{t'_{i,\alpha,11}\mid 0\le i \le m-1, 0\le \alpha \le 2\}$.

Let $\sigma=\sigma_5$ and $i\in \{0,\dots, m-1\}$.
By definition of $\sigma_5$ we have $sig(k)=\free$.
Firstly, we note that by $sig(q_2)=\swap$ if and only if $sig(q_3)=\swap$  we have $sup(b'_{j,2})=sup(b'_{j,3})$ implying $sig(x_j)\not=\swap$ for $j\in \{0,\dots,m-1\}$.
Secondly, we show that for $T^\sigma_i$ there is exactly one variable event with a border crossing signature and both of the others are mapped to \nop.
By $sig(k)=\free$, $W \cap sig^{-1}(\swap)=\emptyset$ and $V \subseteq sig^{-1}(\swap)$ we have $S_0\subseteq sup^{-1}(1)$ and $S_1\subseteq sup^{-1}(0)$.
We get the following consequences:
Firstly, at least one variable event of $X_{i,0},X_{i,1},X_{i,2}$ must have a border crossing signature, that is, \swap.
Secondly, if more than one variable event has a \swap\ signature, then \emph{all} variable events must be mapped to \swap. 
By the state synchronization explained in Section \ref{sec:general_scheme} and defined above, the latter case would imply for $\alpha\in \{0,1,2\}$ that $sup(t'_{i,\alpha,17})=1$ and $sup(t'_{i,\alpha,15})=0$, which contradicts $\res\not\in \tau$.
Hence, exactly one variable event $X\in \{X_{i,0},X_{i,1},X_{i,2}\}$ has a \swap\ signature per translator (clause) $T^\sigma_{i}$ ($\zeta_i$).
For $Y\in \zeta_i\setminus \{X\}$ there are transitions $s\fbedge{Y}s'$ and $s''\fbedge{Y}s'''$ in $T^\sigma_i$ such that $sup(s)=sup(s')=1$ and $sup(s'')=sup(s''')=0$ which implies $sig(Y)=\nop$.
Consequently, $M=\{X\in V(\varphi)\mid sig(X)\not=\nop\}$ is a one-in-three model of $\varphi$.

Let $\sigma=\sigma_6$ and $i\in \{0,\dots, m-1\}$.
For $j\in \{0,\dots, m-1\}$ the signature $sig(k)=\used$ ($sig(k)=\free$) implies $sig(x_j)\not\in \{\inp, \res, \swap\}$ ($sig(x_j)\not\in \{\out, \set, \swap\}$), cf. Figure~\ref{fig:translators}.3.  
If $sig(k)=\used$ ($sig(k)=\free$) then, by $W\cap sig_K^{-1}(\swap)=\emptyset$ and $V\subseteq sig_K^{-1}(\swap)$, we easily have $S_0\subseteq sup^{-1}(1)$ and $S_1\subseteq sup^{-1}(0)$ ($S_0\subseteq sup^{-1}(0)$ and $S_1\subseteq sup^{-1}(1)$), c.f. Figure~\ref{fig:translators}.5.
Symmetrically to the argumentation for $\sigma=\sigma_5$, this implies that either exactly one or all variable event(s) must have a \swap\ signature, where the latter case contradicts for $sig(k)=\used$ ($sig(k)=\free$) that $sig(x_j)\not\in \{\res, \swap\}$ ($sig(x_j)\not\in \{ \set, \swap\}$).
Hence, similar to the case $\sigma=\sigma_5$, that results in $M=\{X\in V(\varphi)\mid sig(X)\not=\nop\}$ being a one-in-three model of $\varphi$.
\end{proof}

\begin{proof}[Proof of Lemma \ref{lem:key_union}]
Let $C=\{c_0,\dots, c_{6m-2}\}$ and $Z=\{z_0,\dots, z_{3m-1}\}$.

(\ref{lem:key_union_1}): Firstly, we show that $sig_K(k)\in \{\out,\inp\}$ for all $\sigma \in \{\sigma1, \dots, \sigma_4\}$.

If $\sigma\in \{\sigma_1,\sigma_2\}$ assume that $sig_K(k)=\used$ ($sig_K(k)=\free$).
As $F^\sigma_K=U(F_0,F_1)$, we have $sup_K(f_{0,3})=sup_K(f_{1,1})=sup_K(h_{0,4})=1$ ($sup_K(f_{0,3})=sup_K(f_{1,1})=sup_K(h_{0,4})=0$), cf. Figure~\ref{fig:key_unions}.2, Figure~\ref{fig:key_unions}.3 and Figure~\ref{fig:key_unions}.4.
Hence, we have $sig_K(z_0), sig_K(q_0)\not\in \{\inp,\res\}$ ($sig_K(z_0), sig_K(q_0)\not\in \{\out,\set\}$) implying that $sup_K(h_{0,6})=1$ ($sup_K(h_{0,6})=0$) and, thus, $k$ is not inhibited at the key state, a contradiction.
Consequently, $sig_K(k)\in \{\out,\inp\}$.

If $\sigma=\sigma_3$ ($\sigma=\sigma_4$), then $F^\sigma_K$ contains $F_0,F_2$ ($F_0,F_1$) and $G^{\_,q}_0,\dots, G^{\_,q}_{3m-1}$.
Assume that $sig_K(k)=\used$ ($sig_K(k)=\free$).
All states in $S(F^\sigma_K)$ incident to event $k$ have to be part (outside of) the support.
By $F_2$ ($G^{\_,q}_0$) we get that $n_0$ ($q_0$) cannot be in $\exit$ ($\enter$) and, thus, $sup(f_{0,2}) = 1$ ($sup(f_{1,0}) = 0$).
Thus, $S(F^\sigma_K)\subseteq sup_K^{-1}(1)$ ($S(F^\sigma_K)\subseteq sup_K^{-1}(0)$) which clearly implies $sup_K(h_{0,4})=sup_K(h_{0,6})=1$ ($sup_K(h_{0,4})=sup_K(h_{0,6})=0$).
Consequently, if $sig_K(k)\in \{\used,\free\}$ then $k$ is not inhibited at $h_{0,6}$, a contradiction.
Hence, $sig_K(k)\in \{\out,\inp\}$.

Secondly, assume that we inhibit $sig_K(k)=\inp$ at the key state $h_{0,6}$, which means $sup(h_{j,6})=0$.
We show that $V\subseteq sig^{-1}_K(\enter)$ and $W\subseteq sig^{-1}_K(\keepze)$.
Applying Lemma~\ref{lem:generator} to $D^\sigma$, we have $C\subseteq sig_K^{-1}(\nop)=sig_K^{-1}(\keepo)\cap sig_K^{-1}(\keepze)$.
Let $j\in \{0,\dots, 6m-1\}$.
By $sig_K(k)=\inp$, $C\subseteq sig_K^{-1}(\nop)$ and $sup(h_{0,6})=0$ we have $sup(h_{j,1})=sup(h_{j,4})=sup(h_{j,6})=0$ and $sup(h_{j,3})=1$. 
If $\swap\not\in \tau$ then $sup_K(h_{j,6})=0$  implies $Z, W\subseteq sig_K^{-1}(\keepze)$ and $sup(h_{j,2})=0$.
If $\swap\in \tau$ then by Lemma~\ref{lem:generator} we have $Q,Y\subseteq sig_K^{-1}(\keepze)$ and $sup_K(h_{j,5})=0$.
This implies $Z, W\subseteq sig_K^{-1}(\keepze)$ and $sup(h_{j,2})=0$, too. 
Hence, by $h_{j,3}=1$, we obtain $V\subseteq sig^{-1}_K(\enter)$.
Symmetrically, if $ sig_K(k)=\out $ we obtain that $V\subseteq sig_K^{-1}(\exit) $ and $ W\subseteq sig_K^{-1}(\keepo)$.

To prove the existence of an announced key region of $K^\sigma_\varphi$ for $\sigma\in \{\sigma_1,\dots, \sigma_4\}$ we, firstly, define the following subsets and operation containers and, secondly, show how they are to composed to a corresponding region:
\begin{enumerate}
\item
$S_0=\{h_{j,0},h_{j,3}\mid j\in \{0,\dots, 6m-1\}\}$,
\item 
$S_1=\{g^{\_,q}_{j,0}, g^{\_,q}_{j,1}, g^{\_,y}_{j,0}, g^{\_,y}_{j,1} \mid j\in \{0,\dots, 3m-1\}\}$,
\item 
$S_2=\{g^{c,c}_{j,0}, g^{c,c}_{j,1} \mid j\in \{0,\dots, 6m-2\}\}$,
\item 
$S_3=\{g^{n,\_}_{0,0}, g^{n,\_}_{0,1}, f_{0,0},f_{0,2}, f_{0,3}, f_{1,0}, f_{1,1}, f_{2,0}, f_{2,3} \}$,
\item 
for $\sigma\in \{\sigma_1,\dots, \sigma_4\}$: $sup^\sigma_K=S(K^\sigma_m)\cap (S_0\cup S_1\cup S_2\cup S_3)$,
\item 
$\textsf{op}^{\sigma_1}_n=\out$, $\textsf{op}^{\sigma_2}_n=\set$, $\textsf{op}^{\sigma_3}_n=\swap$, $\textsf{op}^{\sigma_4}_n=\set$.
\end{enumerate} 
For $\sigma\in \{\sigma_1,\dots, \sigma_4\}$ the set $sup^\sigma_K$ allows the signature $sig^\sigma_K$, where for $e\in E(K^\sigma_m)$ we have that
\[sig^\sigma_K(e)
\begin{cases}
\inp \text{ if } e=k, \\
\textsf{op}^\sigma_V \text{ if } e\in \{v_0,\dots, v_{3m-1}\}, \\
\textsf{op}^\sigma_n \text{ if } e=n_0, \\
\nop \text{ otherwise } 
\end{cases}
\]
Clearly, the region $(sup^\sigma_K, sig^\sigma_K)$ inhibits $k$ at $h_{0,0}$ and satisfies the condition of the lemma, cf. Figure~\ref{fig:key_unions}.
Hence, the first claim is proven.

(\ref{lem:key_union_2}): 
Let $j\in \{0,\dots, 3m-1\}$.
By definition of $\sigma$ clearly we have either $sig_K(k)=\used$, $sup_K(h'_{j,2})=0$ and $sup_K(h'_{j,1})=sup_K(h'_{j,3})=1$ or $sig_K(k)=\free$,  $sup_K(h'_{j,2})=1$ and $sup_K(h'_{j,1})=sup_K(h'_{j,3})=0$.
Both cases easily imply $sig_K(m)=sig_K(v_j)=\swap$ and, consequently, $V\subseteq sig^{-1}_K(\swap)$.
Similarly, we obtain $sig_K(q_0)=sig_K(q_1)=\swap$.
By $sup_K(f'_{1,1})=sup_K(f'_{1,3})$, we easily get $sig_K(q_2)=\swap$ if and only $sig_K(q_3)=\swap$ which by $sup_K(f'_{2,1})=sup_K(f'_{2,7})$ implies  $sup_K(f'_{2,2})=sup_K(f'_{2,6})$, too.
Now, if $sig_K(z)\not=\swap$, then, by $sig_K(q_0)=sig_K(q_1)=\swap$, we have $sup_K(f'_{2,2})\not=sup_K(f'_{2,3})=sup_K(f'_{2,4})\not=sup_K(f'_{2,5})=sup_K(f'_{2,6})$ which implies $sig_K(z)=\nop$.
By $sig_K(z)\in \{\swap, \nop\}$ we have $sup_K(d_{j,2})=sup_K(d_{j,4})$ and $sup_K(g_{j,2})=sup_K(g_{j,4})$ which, with the occurrences of $p_j$ and $y_j$, implies that $sup_K(d_{j,1})=sup_K(d_{j,5})=sup_K(d_{j,6})$ and $sup_K(g_{j,1})=sup_K(g_{j,5})=sup_K(g_{j,6})$.
Hence, we have $(W\cup Acc)\cap sig^{-1}_K(\swap)=\emptyset$.

To prove the existence of an announced key region of $K^\sigma_\varphi$ for $\sigma\in \{\sigma_5, \sigma_6\}$ we, firstly, define the following subsets and, secondly, show how they are to composed to a corresponding region:
\begin{enumerate}
\item
$S^{\sigma_5}_0=\{h'_{j,2},h'_{j,5}\mid j\in \{0,\dots, 3m-1\}\}$ and $S^{\sigma_6}_0= S(H^{\sigma_6})\setminus S^{\sigma_5}_0$,
\item 
$S^{\sigma_5}_1=\{f'_{0,2},f'_{0,5}, f'_{0,8}, f'_{1,2}, f'_{1,5}, f'_{2,2}, f'_{2,5}, f'_{2,6}, f'_{2,9} \}$ and  $S^{\sigma_6}_1= S(F^{\sigma_6}_K)\setminus S^{\sigma_5}_1$,
\item 
$S^{\sigma_5}_2=\{d_{j,2}, d_{j,3}, d_{j,4}, d_{j,8} \mid j\in \{0,\dots, 18m-1\}\}$ and $S^{\sigma_6}_2=S(D^{\sigma_6})\setminus S^{\sigma_5}_2$,
\item 
$S^{\sigma_5}_3=\{g_{j,2}, g_{j,3}, g_{j,4}, g_{j,8} \mid j\in \{0,\dots, 3m-1\}\}$ and $S^{\sigma_6}_3=S(G^{\sigma_6})\setminus S^{\sigma_5}_3$,
\item 
for $\sigma\in \{\sigma_5, \sigma_6\}$: $sup^\sigma_K=S(K^\sigma_m)\cap (S^\sigma_0\cup S^\sigma_1\cup S^\sigma_2\cup S^\sigma_3)$.
\end{enumerate} 

For $\sigma\in \{\sigma_5,\sigma_6\}$ the set $sup^\sigma_K$ allows the signature $sig^\sigma_K$, where for $e\in E(K^\sigma_m)$ we have that $sig^\sigma_K(e)=$
\[
\begin{cases}
\textsf{op}^\sigma_k \text{ if } e=k, \\
\swap \text{ if } e\in \{v_0,\dots, v_{3m-1}\}\cup \{ m, q_0,q_1,q_2,q_3, \_\}\cup \{p_0,\dots, p_{18m-1}, y_0,\dots, y_{3m-1}\} \\
\nop \text{ otherwise } 
\end{cases}
\]
Clearly, the region $(sup^\sigma_K, sig^\sigma_K)$ inhibits $k$ at $h'_{0,2}$ and satisfies the condition of the lemma, cf. Figure~\ref{fig:key_unions}.
Hence, the lemma is proven.
\end{proof}


\section{Concluding the ESSP and the SSP from a Key Region}\label{sec:secondary_proofs}

In this section we show for $\sigma\in \{\sigma_1,\dots, \sigma_6\}$ and $\tau\in \sigma$ that the inhibition of the key event at the key state in $U^\sigma_\varphi$ by a $\tau$-region implies the ESSP and the SSP for $U^\sigma_\varphi$ with respect to $\tau$.
In our reduction, events of the same kind are numbered from $0$ up to $n$, for an $n\in \mathbb{N}$, for example $c_0,\dots, c_{6m-2}$ or $y_0,\dots, y_{3m-1}$.
We occasionally refer to a subset of such numbered events in form of $\{e_{i-1},\dots, e_{j+1}\}\subseteq \{e_0,\dots, e_n\}$.
Of course, if $i=0$ or $j=n$ then the there are no events $e_{i-1}, e_{j+1}$.
However, regarding these cases separately renders a lot of case analyses and makes the proofs much more tedious without any additional insight.
Hence, for the sake of readability we refrain from such a case analyses and bid the reader to accept this little inaccuracy and consider such $e_{i-1}, e_{j+1}$ as not listed.

\subsection{Concluding the ESSP and the SSP for $\sigma_1,\sigma_2,\sigma_3,\sigma_4$}

In this section, we show for $\sigma\in \{\sigma_1,\dots, \sigma_4\}$ that $U^\sigma_\varphi$ has the (E)SSP if $k$ is inhibitable at $s_{key}$ in $U^\sigma_\varphi$.
Our approach for the ESSP is as follows:
Let $s\in \bigcup_{i=1}^{4} S(U^{\sigma_i}_\varphi)$ be a state and $e\in \bigcup_{i=1}^{4} E(U^{\sigma_i}_\varphi)$ be an event such that $\neg s\edge{e}$. 
We present a subset of states $S\subseteq \bigcup_{i=1}^{4} S(U^{\sigma_i}_\varphi)$ such that for $i\in \{1,\dots, 4\}$ the set $sup_i=S\cap S(U^{\sigma_i}_\varphi)$ is a support of $U^{\sigma_i}_\varphi$ that for $\tau\in \sigma_i$ allows a signature $sig_i$ such that $(sup_i, sig_i)$ is a $\tau$-region that inhibits $e$ at $s$ in $U^{\sigma_i}_\varphi$ if this is necessary.
Frequently, it is possible to inhibit an event or many events at different states simultaneously with the same region.
Hence, for the sake of readability we present tables with the columns '$E$', 'Support', 'Target States' with the following meanings:
\begin{enumerate}
\item
$E$: Here the set of events are presented which are inhibited at the target states by the current region.
\item
Support: 
Here the set $S\subseteq \bigcup_{i=1}^{4} S(U^{\sigma_i}_\varphi)$ is presented. 
\item
Target States: 
Here a set of states $S'\subseteq \bigcup_{i=1}^{4} S(U^{\sigma_i}_\varphi)$ is presented, such that for $\sigma\in \{\sigma_1,\dots, \sigma_4\}$ all events of $E$ are inhibited at $S'\cap S(U^\sigma_\varphi)$ by the appropriate region with the support $S\cap S(U^\sigma_\varphi)$.
Clearly, $S\cap S'=\emptyset$.
\end{enumerate}
Having a support $sup$, it remains to present an appropriate signature $sig$ allowed by $sup$.
Instead of representing the signature for each support and each $\tau$ explicitly we rather use again a general scheme that works for \emph{almost} all ESSP atoms.
More exactly, given a set $S$ defined by a certain row of the table the implied support $sup$ allows a $\tau$-signature $sig$ such that for each $\sigma\in \{\sigma_1,\dots,\sigma_4\}$, $\tau\in \sigma$ and $e\in E(U^\sigma_\varphi)$ it holds:
	\[ sig(e)=
\begin{cases}
\inp, & \text{if } e\in E \text{ or } \swap\not\in\tau \text{ and } sup(s)=1 \text{ and } sup(s')=0\\
\swap, & \text{if } sup(s)\not=sup(s') \text{ and } \swap\in \tau \\
\set\ (\out) & \text{if } sup(s)=0 \text{ and } sup(s') =1 \text{ and } \swap\not\in\tau \text{ and } \set\in\tau\ (\set\not\in\tau) \\
\nop, &\text{if } sup_i(s)=sup(s')\\
\end{cases}
\]
Note that, by $S\cap S'=\emptyset$ and $sig(e)=\inp$ for all $e\in E$ such a region actually inhibits all events of $E$ at all states of $S'$.
As already mentioned, some ESSP atoms $\{s,e\}$ requires a special treatment and need to be discussed individually.
However, these cases are very seldom and they will be discussed at the appropriate place.

If the ESSP for $U^\sigma_\varphi$ is proven then it remains to argue for the SSP.
This will be done at the very end of this section in Lemma~\ref{lem:ssp_for_sigma_1_to_sigma_4}.

\begin{lemma}\label{lem:} The key event is inhibitable. \end{lemma}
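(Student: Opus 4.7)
The starting point is the combined $\tau$-region $R=(sup_R,sig_R)$ of $U^\sigma_\varphi$ obtained by merging the key region from Lemma~\ref{lem:key_union} with the indicator region from Lemma~\ref{lem:indicator_regions}. By construction $sig_R(k)=\inp$, so $R$ inhibits $k$ at every state $s$ with $sup_R(s)=0$. Inspecting the supports listed in the proof of Lemma~\ref{lem:key_union} and in Lemma~\ref{lem:indicator_regions} shows that this single region already covers the overwhelming majority of target states: the head states $h_{j,1},h_{j,2},h_{j,4},h_{j,5},h_{j,6}$, every translator state outside $sup^{\sigma}_{i,\alpha_i}$, and every gadget state assigned $0$ by the key region are handled for free.

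The only states still to take care of are those inside $sup_R$ at which $k$ does not fire: a few interior states of the translator gadgets (namely those states of $S^1_{\sigma,i,\alpha_i}$ with no outgoing $k$-arc) together with a handful of isolated states from the freezer components, such as $f_{0,2}$ and $f_{1,0}$ and, for $\sigma\in\{\sigma_3,\sigma_4\}$, some states of $F_2$ and of the generators $G^{n,\_}_0$, $G^{\_,q}_j$, $G^{\_,y}_j$. For each of them I would contribute one row to the table announced in the preamble, specifying a support obtained from $R$ by flipping the offending state (or its local gadget orbit) out of $sup_R$; the generic signature scheme of Section~\ref{sec:secondary_proofs} then yields a valid $\tau$-signature automatically. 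Because each problematic state lies inside a small local gadget whose non-$k$ events can consistently be assigned \nop, \swap, \inp, \set, or \out, the resulting $\tau$-region always exists.

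The main hurdle is to make one table of supports work uniformly across all four switch positions at once. Since $\sigma_1,\sigma_2$ forbid \swap\ whereas $\sigma_3,\sigma_4$ permit it, each chosen support must avoid forcing a boundary event into \swap\ where that interaction is unavailable, yet must remain compatible with the extra freezer gadgets that $\sigma_3,\sigma_4$ add to $F^\sigma_K$ and $F^\sigma_T$. The generic signature scheme is built to dispatch these $\tau$-dependencies on the fly once the support is chosen, so the remaining work is a routine row-by-row verification that the selected support is closed under the boundary behaviour of every gadget in every $\sigma_i$. I expect this bookkeeping, rather than any conceptually deep step, to account for the bulk of the argument.
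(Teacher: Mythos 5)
Your overall decomposition agrees with the paper's: the hypothesised key/indicator region with $sig(k)=\inp$ already inhibits $k$ at every support-$0$ state, and what remains is to supply extra regions for the support-$1$ states with no outgoing $k$-arc (the translator states of $S^1_{\sigma,i,\alpha_i}$, plus $f_{0,2}$, $f_{1,0}$ and a few freezer states). The gap is in how you propose to build those extra regions. ``Flipping the offending state (or its local gadget orbit) out of $sup_R$'' cannot succeed, because the events incident to a translator state are not local to its gadget: each variable event occurs in three different translators, and the interface events $v_j,w_j$ also occur in the head $H$, whose $k$-sources $h_{j,0},h_{j,3}$ are pinned to support $1$ by $sig(k)=\inp$. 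Concretely, take $t_{i,\beta_i,3}$ under $\sigma_1$: dropping it from the support makes $X_{i,\beta_i}$ cross $1\to 0$ on $t_{i,\beta_i,2}\edge{X_{i,\beta_i}}t_{i,\beta_i,3}$, forcing $sig(X_{i,\beta_i})=\inp$ (no \swap\ in $\tau_1$), which contradicts its $0\to 0$ occurrence $t_{i,\alpha_i,3}\edge{X_{i,\beta_i}}t_{i,\alpha_i,4}$; dropping $t_{i,\beta_i,2}$ as well leaves $v_{3i+\beta_i}$ with a $0\to 0$ transition in the translator and a $0\to 1$ transition $h_{3i+\beta_i,2}\edge{v_{3i+\beta_i}}h_{3i+\beta_i,3}$ in the head, which no interaction of Figure~\ref{fig:interactions} realises, and $h_{3i+\beta_i,3}$ cannot be removed because it is a $k$-source. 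The same propagation blocks $f_{0,2}$ via $z_0$. So no subset of $sup_R$ does the job, and the ``routine row-by-row verification'' you defer to would fail on its first row.

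What is actually needed --- and what the paper's first table row supplies --- is a fresh support that is \emph{not} contained in $sup_R$: in the translators it keeps exactly the $k$-sources $t_{i,\alpha,0}$, and in the head it additionally includes $h_{j,2}$ (for $j<3m$), $h_{j,5}$ (for $j\ge 3m$) and $h_{j,6}$, so that every interface event $v_j,w_j$ has both endpoints on the same side (hence \nop) and every translator-internal event lies entirely outside the support, while $k$ retains an \inp\ signature. A single such region then inhibits $k$ at all of $S(T^\sigma_\varphi)$ and at $f_{0,2}$ simultaneously, with one further region for $f_{1,0}$ (and a signature tweak $sig(n_0)=\res$ for $\sigma_3$). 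Your proposal is missing this construction, which is the substantive content of the lemma.
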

\begin{proof}
An input key region inhibits $k$ already in $K^\sigma_m$ except for the states $f_{0,2},f_{1,0}$ and at the relevant states of $F^{\sigma_3}_T$ and $F^{\sigma_4}_T$.
Therefore, it only remains to show that $k$ is inhibitable at the relevant states of $U(T^\sigma_0,\dots, T^\sigma_{m-1})$ and at $f_{0,2}$.
The first row of the next table deals with the inhibition of $k$ at the states of $U(T^\sigma_0,\dots, T^\sigma_{m-1})$ and at $f_{0,2}, f_{1,0}$.
Here, for $\sigma_3$, we deviate slightly from the construction rules for the signature, that is, for this case the event $n$ is mapped by $sig$ to $\res$: $sig(n)=\res$.
The second row of the table is dedicated to the inhibition of $k$ at $f_{1,0}$ and, therefore, concerns exclusively $U^{\sigma_1}_\varphi$ and $U^{\sigma_2}_\varphi$. 
That $sig(k)=\inp$ implies for all generators $G^{\eta,\varrho}_j$ installed by the respective union $U^\sigma_\varphi$ that exactly the source states of the  $k$-labeled transition has to be included by the support.
For readability, the table does not enumerate these states explicitly, but there are assumed to be included.

\linespread{1.4}
\begin{longtable}{ p{0.5cm} p{9.7cm}   p{2cm}  }
$E$ & Support & Target States   \\ \hline
$\{k\}$ & 
\raggedright{
$ \bigcup_{i=0}^{6m-1}\{h_{i,0},h_{i,3},h_{i,6}\}, \bigcup_{i=0}^{3m-1}\{h_{i,2}\}, \bigcup_{i=3m}^{6m-1}\{h_{i,5}\}$,
 \newline $\{t_{i,\alpha,0}\mid 0\le i\le m-1, 0\le \alpha\le 2\}, \{f_{1,0}, f_{1,1}\}$, $\{f_{0,0}, f_{0,3}, f_{2,0}, f_{2,3}\}$, 
}
& $S(T^\sigma_\varphi), f_{0,2}$\\ \hline
$\{k\}$ & 
\raggedright{
$\bigcup_{i=0}^{6m-1}\{h_{i,0},h_{i,3},h_{i,6}\}, \bigcup_{i=3m}^{6m-1}\{h_{i,2}\} , \bigcup_{i=0}^{3m-1}\{h_{i,5}\}$,
$\{f_{0,0}, f_{0,2}$, $f_{0,3}, f_{1,0}\}$, $S(U(T^{\sigma_1}_0,\dots, T^{\sigma_1}_{m-1}))\setminus\{ t_{i,\alpha,1} \mid 0\le i\le m-1, 0\le \alpha \le 2\}$ 
}
& $f_{1,0}$ \\ 
\end{longtable}
\end{proof}
\begin{lemma} The events $q_0,\dots, q_{3m-1}$ are inhibitable. \end{lemma}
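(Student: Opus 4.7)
The plan is to mimic the table-based construction used for the key event, producing for each $i \in \{0, \dots, 3m-1\}$ one or two supports whose induced regions (via the signature scheme defined at the beginning of this section) jointly inhibit $q_i$ at every state $s$ of $U^\sigma_\varphi$ with $\neg s \edge{q_i}$. First I would catalogue the occurrences of $q_i$: the single transition $h_{i,4} \edge{q_i} h_{i,5}$ inside the head $H$, the transition $f_{1,0} \edge{q_0} f_{1,1}$ inside $F_1$ (only for $\sigma \in \{\sigma_1,\sigma_2\}$ and $i=0$), and the four transitions of the generator $G^{\_,q}_i$ (only for $\sigma \in \{\sigma_3,\sigma_4\}$). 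Thus, the target state set is essentially all of $S(U^\sigma_\varphi)$ minus these few sources.

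Next, I would construct a candidate support analogous to that used for the key event: include $h_{i,3}, h_{i,4}$ together with the synchronized layer of head states forced by the duplicator gadgets $G^{c,c}_j$ (namely $h_{j,0}, h_{j,3}, h_{j,4}, h_{j,6}$ or their complements, depending on parity), the matching halves of the translator TSs $T^\sigma_{j,\alpha}$, and the appropriate halves of the freezers $F^\sigma_K, F^\sigma_T$. The signature scheme then assigns $\inp$ (or $\swap$, when available) to $q_i$; the interface events in $V \cup W$ land on $\nop$ because $V \subseteq sig^{-1}(\enter)$ and $W \subseteq sig^{-1}(\keepze)$ already hold in the key region; and the events $z_j, y_j$ inside $H$ are absorbed by $\nop$ thanks to the red-marked blocks in Figure~\ref{fig:key_unions}. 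If a single support fails to inhibit $q_0$ at $f_{1,0}$ (for $\sigma_1, \sigma_2$) or at some state of $G^{\_,q}_j$ with $j \neq i$ (for $\sigma_3, \sigma_4$), a second, complementary support—still containing $h_{i,4}$ but flipping the phase of the remaining head rows—closes the gap, exactly as the two-row schema of the key event case dealt with $f_{1,0}$.

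For the generator entries with $\sigma \in \{\sigma_3, \sigma_4\}$, I would rely on Lemma~\ref{lem:generator}: once we set $sig(k) = \inp$, every $q_j$ with $j \neq i$ must lie in $\keepze$, so assigning it $\nop$ and including exactly the source states of the $k$-transitions inside $G^{\_,q}_j$ produces a consistent support. As in the proof of the preceding lemma, I might need a single ad hoc deviation from the scheme (for instance forcing $sig(n_0) = \res$ under $\sigma_3$, or tuning the signature of a blanc event) to accommodate the freezer $F_2$; this is local and does not affect the inhibition.

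The main obstacle I anticipate is the simultaneous validity across the four switch positions, since $F^{\sigma_1}_K = F^{\sigma_2}_K = U(F_0, F_1)$ offers much less room than $F^{\sigma_3}_K$ and $F^{\sigma_4}_K$, which drag along the generator families $G^{\_,q}_\ast, G^{\_,y}_\ast$. The payoff of bundling all four cases into one table is readability, so some care will be needed in choosing the phase of $sup$ on the $\bigcup_i \{h_{i,0},\dots,h_{i,6}\}$-layer so that both the $F_1$ inhibition (for $\sigma_1,\sigma_2$) and the generator inhibition (for $\sigma_3,\sigma_4$) are covered by the same row, or else by a clearly documented second row. Once the two rows are fixed, verification reduces, exactly as before, to checking for every listed support $S$ and every $\tau \in \sigma$ that the scheme yields a valid $\tau$-region inhibiting $q_i$ at the listed target states.
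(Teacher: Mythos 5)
Your overall strategy (a small table of supports, signatures derived from the general scheme, target states covering all states where $q_i$ does not occur) matches the paper's, and your catalogue of the occurrences of $q_i$ in $H$, $F_1$ and $G^{\_,q}_i$ is correct. However, your treatment of the switch positions $\sigma_3,\sigma_4$ contains a step that cannot work. You propose to fix $sig(k)=\inp$ and then invoke Lemma~\ref{lem:generator} to push the events $q_j$, $j\neq i$, into $\keepze$. But the same lemma applies to the generator $G^{\_,q}_i$ that contains the very event $q_i$ you want to inhibit: $sig(k)=\inp$ forces $sig(q_i)\in\keepze=\{\nop,\res,\free\}$, hence $sig(q_i)\neq\inp$. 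Since the claim must hold for every $\tau\in\sigma_3\cup\sigma_4$, in particular for the base types $\{\nop,\inp,\res,\swap\}$ and $\{\nop,\inp,\set,\swap\}$ in which $\inp$ is the \emph{only} partial interaction, a region with $sig(k)=\inp$ can then not inhibit $q_i$ at any state at all. (Your support is also internally inconsistent with this choice: putting both $h_{i,3}$ and $h_{i,4}$ into the support keeps the transition $h_{i,3}\edge{k}h_{i,4}$ entirely inside it, which already rules out $sig(k)=\inp$.)

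The paper's construction avoids exactly this trap: in the one region of its table that assigns $\inp$ to $q_j$ and whose support meets the generators, the event $k$ receives $\swap$ (available in every $\tau\in\sigma_3\cup\sigma_4$), so Lemma~\ref{lem:generator} imposes no constraint on $q_j$; in the other two regions $k$ is $\nop$. A second, smaller defect: the interface conditions $V\subseteq sig^{-1}(\enter)$ and $W\subseteq sig^{-1}(\keepze)$ are consequences of a region being a \emph{key} region (Lemma~\ref{lem:key_union}); they do not transfer to the fresh regions you build here, and indeed in the paper's first row the events of $V$ are border-crossing rather than $\nop$. The table-based plan is salvageable, but as written the $\sigma_3/\sigma_4$ case of your construction does not produce a valid inhibiting region.
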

\begin{proof}
Let $i\in \{0,\dots, m-1\}$ and $\ell\in \{0,\dots,2\}$ such that $j=3i+\ell \in \{0,\dots, 3m-1\}$.
The regions of the first two rows prove $q_j$ to be inhibitable in the TSs it occurs in, the last row is dedicated to the states of the other TSs.
\linespread{1.2}
\begin{longtable}{ p{0.5cm} p{7cm}   p{4.5cm}  }

$E$ & Support & Target States   \\ \hline
$\{q_j\}$ 
&
\raggedright{
$\bigcup_{i=0}^{6m-1}\{h_{i,1},h_{i,4}\}, \bigcup_{i=0}^{3m-1}\{h_{i,2}\}, \bigcup_{i=3m}^{6m-1}\{h_{i,5}\}$,  
$\bigcup_{i=0}^{6m-2}\{g^{c,c}_{i,2}, g^{c,c}_{i,3}\}$, $\bigcup_{i=0}^{3m-1}\{g^{\_,q}_{i,1}, g^{\_,q}_{i,2}\}$, $\bigcup_{i=0}^{3m-1}\{g^{\_,y}_{i,2}, g^{\_,y}_{i,3}\}$, $\bigcup_{i=0}^{m-1}\{g^{x,\_}_{i,2}, g^{x,\_}_{i,3}, g^{\_,x}_{i,2}, g^{\_,x}_{i,3}\}$, 
$\{f_{0,1}, f_{0,4},f_{1,0}, f_{1,2},  f_{2,0}, f_{2,3},  g^{n,\_}_{0,0},g^{n,\_}_{0,3} \}$,
$\{t_{i,0,1},t_{i,1,1},t_{i,2,1}\mid 0\le i \le m-1\}$
} &
\raggedright{$\bigcup_{i=0}^{6m-1}\{h_{i,0}, h_{i,3} \}, \bigcup_{i=3m}^{6m-1}\{h_{i,2} \}$, $\bigcup_{i=0}^{3m-1}\{g^{\_,q}_{i,0}, g^{\_,q}_{i,3}\}$
}\arraybackslash \\ \hline					
$\{q_j\}$ 
& 
$ \{h_{j,3},h_{j,4},h_{j+3m,2},h_{j+3m,6} \}$, 
\raggedright{
$\{g^{\_,q}_{j,0}, g^{\_,q}_{j,2}, g^{c,c}_{j-1,1}, g^{c,c}_{j-1,3}, g^{c,c}_{j,0}, g^{c,c}_{j,2}\}$, 
if $j=0:\{f_{1,0}\}$ ,
$\{t_{i,\ell,2},\dots, t_{i,\ell,5}\}$
} 
& 
\raggedright{remaining of $S(H)$, $\bigcup_{j=0}^{3m-1}S(G^{\_,q}_j)\setminus~\{g^{\_,q}_{j,0}\}$}\arraybackslash \\ \hline
$\{q_j\}$ &
$\{h_{j,0},\dots, h_{j,4}, g^{\_,q}_{j,0}, g^{\_,q}_{j,2}\}$, if $j=0:\{f_{1,0}\}$ &
remaining states \\
\end{longtable}
\end{proof}
\begin{lemma} The events $Z=\{z_0,\dots, z_{3m-1}\}$ are inhibitable. \end{lemma}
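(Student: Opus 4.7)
The plan is to follow the template established for the $q_j$: we exhibit a short table of regions whose supports work simultaneously for every $\sigma\in\{\sigma_1,\dots,\sigma_4\}$ and every $\tau\in\sigma$. The event $z_j$ occurs only at the two head-transitions $h_{j,1}\edge{z_j}h_{j,2}$, $h_{j,5}\edge{z_j}h_{j,6}$ and, when $j=0$, additionally at $f_{0,2}\edge{z_0}f_{0,3}$ in $F_0$. Since $\inp\in\tau$ for every class considered, any region whose support includes $h_{j,1},h_{j,5}$ (and $f_{0,2}$ for $j=0$) while excluding $h_{j,2},h_{j,6}$ (and $f_{0,3}$) admits $sig(z_j)=\inp$ by the generic scheme and thereby inhibits $z_j$ at every state outside its support.

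First, I would write a global row that modifies the first row of the $q_j$-proof by replacing the head-block $\{h_{i,2}\mid 0\le i\le 3m-1\}$ with $\{h_{i,5}\mid 0\le i\le 3m-1\}$, so that the $z_j$-sources $h_{j,1}$ and $h_{j,5}$ lie in the support while the sinks $h_{j,2},h_{j,6}$ stay outside. The generator-, freezer-, and translator-fragments of the $q_j$-row are reused almost verbatim, with only the $G^{\_,q}_i$ and $G^{\_,y}_i$ supports swapped from $\{g^{\_,q}_{i,1},g^{\_,q}_{i,2}\}$ to $\{g^{\_,q}_{i,2},g^{\_,q}_{i,3}\}$ (and analogously for $G^{\_,y}_i$), so that the forced signatures $sig(q_i)=sig(y_i)=\nop$ become globally consistent. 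The generic scheme then yields $sig(k)\in\{\out,\swap\}$, $sig(z_j)=\inp$, $sig(v_{j'}),sig(w_{j'}),sig(p_{j'})\in\{\out,\set,\swap\}$, $sig(q_{j'}),sig(y_{j'})\in\{\nop,\used\}$ and $sig(c_i)=\nop$, inhibiting $z_j$ at all of $\{h_{i,0},h_{i,3},h_{i,6}\mid 0\le i\le 6m-1\}$, at the unused corners of the generator gadgets, and at the translator states outside the support.

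Second, I would add a local row with the $k$-closed support $\{h_{j,0},h_{j,1},h_{j,3},h_{j,4},h_{j,5}\}\cup\{t_{i,\ell,2},\dots,t_{i,\ell,5}\}$, where $j=3i+\ell$, together with $\{f_{0,0},f_{0,1},f_{0,2}\}$ when $j=0$. This region maps $k, q_j$, and all other local events to $\nop$, yet still forces $sig(z_j)=\inp$, thus inhibiting $z_j$ at the states of the $h_j$-cluster and of the translator $T^\sigma_{i,\ell}$ not covered by the global row. A third, small clean-up row uses the support $\{h_{j,0},h_{j,1}\}\cup\{f_{0,0},f_{0,1},f_{0,2}\}$ (relevant for $j=0$) to inhibit $z_0$ at any residual state of $F_0, F_1, F_2$ and $G^{n,\_}_0$.

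The main obstacle is, as in the $q_j$-lemma, to find a single support that simultaneously satisfies the constraints of the key event, Lemma~\ref{lem:generator} on every generator in the union, and the translator structure of Figure~\ref{fig:translators} for every $\tau\in\sigma_1\cup\dots\cup\sigma_4$. For $\tau_1=\{\nop,\inp,\out\}$, where $\swap$ is unavailable, $sig(k)$ is rigidly $\out$, which pins the signatures of the $\varrho$-events in every generator to $\keepo$; the target $sig(q_i)=sig(y_i)=\nop$ fits inside $\keepo$, so the modifications of the head- and generator-supports described above keep the region globally consistent. For $\sigma_3,\sigma_4$, the freedom given by $\swap\in\tau$ lets $sig(k)=\swap$, which absorbs all remaining constraints effortlessly. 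Once this compatibility across the four classes is verified, the generic signature scheme produces the required $\tau$-region in every case, and the three rows together inhibit $z_j$ at every target state.
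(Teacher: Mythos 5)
Your overall template is the right one --- a global support that sends every $z_j$ across the border plus one or two local supports for the leftover states, with the generic signature scheme doing the bookkeeping --- and you correctly locate all occurrences of $z_j$ (the two head transitions and $f_{0,2}\edge{z_0}f_{0,3}$). But the specific supports you propose do not induce consistent signatures for the interface events whose occurrences span $H$, the freezer, and the translators, and that consistency is the entire content of the lemma. Concretely, in your global row: (i) after you remove $h_{j,2}$ from the support, the transition $h_{j,2}\edge{v_j}h_{j,3}$ runs $0\to 0$ in $H$, while the translator occurrence $t_{i,\alpha,1}\edge{v_{3i+\alpha}}t_{i,\alpha,2}$ runs $1\to 0$ (you must keep $t_{i,\alpha,1}$ in the support, since $k$ is $0\to 1$ everywhere else and $t_{i,\alpha,0}\edge{k}t_{i,\alpha,1}$ would otherwise run $0\to 0$); an event with transitions $0\to 0$ and $1\to 0$ forces $sig(v_j)=\res$, which is unavailable in $\tau_1=\{\nop,\inp,\out\}$ and in the $\res$-free members of $\sigma_2,\sigma_4$. (ii) Reusing the $F_1$ fragment verbatim keeps $f_{1,0}$ in the support, so $q_0$ runs $1\to 0$ in $F_1$ but $1\to 1$ at $h_{0,4}\edge{q_0}h_{0,5}$; no interaction maps $1$ to two different values, so the row is not a region for $\sigma_1,\sigma_2$ at all. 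Your asserted signatures ($v,w,p$ entering, $q,y$ in $\{\nop,\used\}$) simply do not follow from the support you wrote down, which is a sign the rows were not traced transition by transition.

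Your second row has the analogous defect: with only $\{h_{j,0},h_{j,1},h_{j,3},h_{j,4},h_{j,5}\}\cup\{t_{i,\ell,2},\dots,t_{i,\ell,5}\}$ in the support, the event $w_{3i+\ell}$ runs $0\to 1$ on $t_{i,\ell,1}\edge{w_{3i+\ell}}t_{i,\ell,5}$ but $0\to 0$ on both of its $H$-transitions, again impossible. The paper's proof resolves exactly these tensions by a different placement of the head states: in its first row it puts $h_{i,2}$ into the support only for $i\geq 3m$ and $h_{i,5}$ only for $i\leq 3m-1$ (so that both $w$-transitions of $H$ are non-crossing and $v$ and $w$ can take \nop\ consistently with their translator occurrences, while $z_j$ alone crosses), it keeps $f_{1,0},f_{1,1}$ out of the support, and in the local row it adds $h_{j+3m,2},h_{j+3m,6}$ precisely so that $w_j$ crosses $0\to 1$ in $H$ whenever it crosses $0\to 1$ in the translator. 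Until your supports are adjusted along these lines and each shared event is checked against all of its transitions in all gadgets for the $\swap$- and $\res$-free classes, the proposal does not establish the lemma.
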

\begin{proof}
The first row of the following table is dedicated to the inhibition of $Z$ at certain states of $H$ and the sources/sinks of $k$ in $F_2$, $G^{n,\_}_0$ and $G^{\_,q}_0,\dots,G^{\_,q}_{3m-1}$.
Hence, for $F_2$ and each generator $G^{\eta,\varrho}_j$ installed by $U^\sigma_\varphi$ we, firstly, assume the sinks $g^{\eta,\varrho}_{j,2}, g^{\eta,\varrho}_{j,3}$ and $f_{2,2},f_{2,3}$ of $k$ to be included by sup.
Secondly, to assure the inhibition of $Z$ at all states of $F_2$, $G^{n,\_}_0$ and $G^{\_,q}_0,\dots, G^{\_,q}_{3m-1}$, too, we use almost the same region but include now for $F_2$, $G^{n,\_}_0$ and $G^{\_,q}_0,\dots, G^{\_,q}_{3m-1}$ the sources of $k$ instead of its sinks.
For the sake of readability we refrain from the explicit presentation of these states.
The second row is about the inhibition of $z_0$ at the states of  $F_1$ and concerns only the switch $\sigma_1$.
Finally, the third row proves for $j\in \{0,\dots, 3m-1\}$ the inhibition of $z_j$ at the remaining states.
\linespread{1.4}
\begin{longtable}{ p{0.5cm} p{9cm}   p{3cm}  }

$E$ & Support & Target States   \\ \hline
$Z$ & 
\raggedright{
$\bigcup_{i=0}^{6m-1} \{h_{i,1}, h_{i,4}\}$, $\bigcup_{i=3m}^{6m-1} \{h_{i,2}\}$ , $\bigcup_{i=0}^{3m-1} \{h_{i,5}\}$, 
$\{f_{0,1},f_{0,2}, f_{1,2}\}$
}& 
\raggedright{
$\bigcup_{i=0}^{6m-1}\{h_{i,0}\}$, $\{f_{0,0},f_{0,3}\}$, $S(T^\sigma_\varphi)$
}\arraybackslash \\ \hline
$\{z_0\}$
&
\raggedright{%
$\{h_{0,0},h_{0,1}, h_{0,3}, h_{0,4}, h_{0,5}, h_{3m,2}, h_{3m,6}, g^{c,c}_{3m-1,2}, g^{c,c}_{3m-1,3}\}$, $\{g^{c,c}_{3m,0}, g^{c,c}_{3m,1}\}$, $\{t_{0,\alpha,2},\dots, t_{0,\alpha,5} \mid 0\le \alpha\le 2\}$}
&
$\{f_{1,0},f_{1,1},f_{1,2}\}$
\\ \hline
$\{z_j\}$ &
\raggedright{%
$\{h_{j,0},h_{j,1},h_{j,5}\}$, $j=0: \{f_{0,2}, f_{1,1},f_{1,2},  f_{2,1}, f_{2,3}, g^{n,\_}_{0,1}, g^{n,\_}_{0,3}\}$}
& 
\text{ remaining states }
\end{longtable}
\end{proof}
\begin{lemma} The events $y_0,\dots, y_{3m-1}$ are inhibitable. \end{lemma}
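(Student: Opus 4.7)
The plan is to mirror the proof just given for the events $q_0, \dots, q_{3m-1}$, exploiting the structural symmetry of the head $H$: the first half of its rows (states $h_{0, \cdot}, \dots, h_{3m-1, \cdot}$ with events $z_i, v_i, q_i$) has an isomorphic counterpart in the second half (states $h_{3m, \cdot}, \dots, h_{6m-1, \cdot}$ with events $w_{i-3m}, p_{i-3m}, y_{i-3m}$). Each event $y_j$ with $j \in \{0, \dots, 3m-1\}$ appears only at the single transition $h_{3m+j, 4} \edge{y_j} h_{3m+j, 5}$ in $H$ and, in case $\sigma \in \{\sigma_3, \sigma_4\}$, at the transition $g^{\_, y}_{j,2} \edge{y_j} g^{\_, y}_{j,3}$ in the freezer generator $G^{\_, y}_j$. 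By Lemma~\ref{lem:generator} applied to the generator $G^{\_, y}_j$ (and by Lemma~\ref{lem:key_union} when that generator is absent), any key $\tau$-region forces $sig(y_j) \in \keepze = \{\nop, \res, \free\}$, so $y_j$ is automatically prohibited from receiving \swap.

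Along the lines of the $q_j$-table, I would present three rows of supports. The first row uses a large support covering $\bigcup_{i=0}^{6m-1} \{h_{i, 1}, h_{i, 4}\}$ together with the appropriate halves of $H$ (now with the roles of $\{h_{i, 2}\mid 0\le i\le 3m-1\}$ and $\{h_{i, 5}\mid 3m\le i\le 6m-1\}$ swapped relative to the $q_j$-construction), the sinks of $k$ in all generator templates, and the translator states $\{t_{i, \alpha, 1} \mid 0\le i \le m-1, 0\le \alpha\le 2\}$; it simultaneously inhibits $y_j$ at the head states carrying support value $0$ and at the $G^{\_, y}_i$-source-of-$k$ states for $i \neq j$. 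The second row provides a localised support, concentrated around $\{h_{3m+j, 3}, h_{3m+j, 4}, h_{j, 2}, h_{j, 6}\}$ and the neighbouring generator states $\{g^{\_, y}_{j, 0}, g^{\_, y}_{j, 2}, g^{c, c}_{3m+j-1, 1}, g^{c, c}_{3m+j-1, 3}, g^{c, c}_{3m+j, 0}, g^{c, c}_{3m+j, 2}\}$, and the translator path states $\{t_{i, \ell, 2}, \dots, t_{i, \ell, 5}\}$ (with $i, \ell$ chosen as for $q_j$), to handle the states in $H$ and $G^{\_, y}_j$ not yet covered. The third row uses a small local support $\{h_{3m+j, 0}, \dots, h_{3m+j, 4}, g^{\_, y}_{j, 0}, g^{\_, y}_{j, 2}\}$ to finish inhibition at the remaining states. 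Each support is extended to a full $\tau$-region by the signature scheme used throughout this section, treating $y_j$ as a boundary event receiving \inp{} (respectively \swap{} when \swap{} is available and needed), and setting \nop{} on events whose transitions lie entirely inside or entirely outside the support.

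The main obstacle, as in the preceding lemmas, is verifying that each of these supports is consistent with a valid $\tau$-signature for every $\tau \in \sigma_1 \cup \sigma_2 \cup \sigma_3 \cup \sigma_4$ and simultaneously compatible with all the key-region constraints collected in Lemma~\ref{lem:key_union} (in particular $V \subseteq \enter$, $W \subseteq \keepze$, $sig(k) = \inp$) and the freezer constraints from Lemma~\ref{lem:generator} (which pin down $y_j, q_j, x_j$ to \keepze{} or \keepo). In particular, one must check that the chosen support on $T^\sigma_\varphi$ does not accidentally force any border-crossing transition on an event that $y_j$'s signature would require to be \nop{}-valued, and that the generator sub-regions stay consistent across the whole union. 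Because the second half of $H$ is structurally a relabelled mirror of the first half, and because the $y_j$-generator $G^{\_, y}_j$ plays exactly the same role with respect to $y_j$ as $G^{\_, q}_j$ does for $q_j$, this verification is essentially a straight transcription of the $q_j$-analysis with the substitutions $(z_i, v_i, q_i) \leftrightarrow (w_i, p_i, y_i)$ and $h_{i, \cdot} \leftrightarrow h_{i+3m, \cdot}$.
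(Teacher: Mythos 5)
Your overall strategy --- a table of explicit supports completed by the section's generic signature scheme, anchored at the single occurrence $h_{3m+j,4}\edge{y_j}h_{3m+j,5}$ --- is the same as the paper's, and the observation that the second half of $H$ mirrors the first is the right starting point. But the claim that the verification is ``essentially a straight transcription'' under $(z_i,v_i,q_i)\leftrightarrow(w_i,p_i,y_i)$ and $h_{i,\cdot}\leftrightarrow h_{i+3m,\cdot}$ hides a genuine gap: this substitution is an automorphism of $H$ alone, not of $U^\sigma_\varphi$. The events $v_j$ and $w_j$ both also occur in the translator $T^\sigma_{i,\alpha}$ (on $t_{i,\alpha,1}\edge{\xi^\sigma_{3i+\alpha}}t_{i,\alpha,2}$ and $t_{i,\alpha,1}\edge{\theta^\sigma_{3i+\alpha}}t_{i,\alpha,5}$), $z_0$ occurs in $F_0$, and $q_0$ occurs in $F_1$, while their images under the substitution have no such occurrences (and vice versa). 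So the portions of each support lying outside $H$ cannot be transcribed; they must be re-chosen to match the signatures that the new head part forces on $k$, $v_j$ and $w_j$.

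Concretely, your first row keeps the translator states $\{t_{i,\alpha,1}\}$ from the $q_j$-table while swapping the head halves. In the head this makes $v_i$ border-preserving ($h_{i,2},h_{i,3}\notin sup$ for $i<3m$, so $v_i$ maps $0\to 0$), yet in the translator $t_{i,\alpha,1}\in sup$ and $t_{i,\alpha,2},t_{i,\alpha,5}\notin sup$ force both $v_{3i+\alpha}$ and $w_{3i+\alpha}$ to cross the border $1\to 0$; no interaction of $\{\nop,\inp,\out\}$ or of $\{\nop,\inp,\set\}$ realizes both behaviours, so this support admits no signature for the types in $\sigma_1$ and $\sigma_2$. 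Your second row fails analogously ($w_j$ is $0\to 0$ in $H$ but $0\to 1$ in the translator, and $z_0$'s occurrence in $F_0$ is left unattended). The paper's proof repairs exactly this point: its first row uses (essentially) the \emph{complement} of the $q_j$ translator support, so that $k$ steps $0\to 1$ and $v,w$ stay inside the support on the translators, and its second row is the much smaller support $\{h_{3m+j,3},h_{3m+j,4}\}\cup\{g^{\_,y}_{j,0},g^{\_,y}_{j,2}\}$ that leaves all of $T^\sigma_\varphi$ outside, making $v$ and $w$ uniformly \nop; it also drops the $F_1$-specific adjustments since $y_0$, unlike $q_0$, has no freezer occurrence. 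Your plan is salvageable, but the translator and freezer components of each support must be derived from the forced signatures rather than copied.
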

\begin{proof}
The first row of the following table is, firstly, dedicated to the inhibition of $y_j$ at $h_{j,3}$ and $g^{\_,y}_{j,0}, g^{\_,y}_{j,1}$.
To do so, for $F_2$ and all generators $G^{\eta,\varrho}_j$ installed by $U^\sigma_\varphi$ we assume the sinks (sources) of $k$-labeled transitions to be included (excluded) and refrain from the explicit presentation of these states. 
To inhibit $y_j$ at $g^{\_,y}_{j,2}, g^{\_,y}_{j,3}$, too, we use almost the same region but include now all the sources $g^{\_,y}_{j,0}, g^{\_,y}_{j,1}$ of $k$ and exclude its sinks $g^{\_,y}_{j,2}, g^{\_,y}_{j,3}$.
If $j\in \{0,\dots, 3m-1\}$ then the following table proves $y_j$ to be inhibitable.

\linespread{1.4}
\begin{longtable}{ p{0.5cm} p{9cm}   p{2.5cm}  }

$E$ & Support & Target States   \\ \hline
$Y$ & 
\raggedright{$\bigcup_{i=0}^{6m-1}\{h_{i,1}, h_{i,4}\}, \bigcup_{i=3m}^{6m-1}\{h_{i,2}\}, \bigcup_{i=0}^{3m-1}\{h_{i,5}\}$ $\{f_{0,1}, f_{0,2}, f_{0,4}\}$, $\{f_{1,2}\}$, $S(U(T^\sigma_0,\dots,T^\sigma_{m-1}))\setminus\{t_{i,0,1}, t_{i,1,1}, t_{i,2,1}\mid 0\le i \le m-1\}$} 
& $\bigcup_{i=3m}^{6m-1}\{h_{i,3}\}$ \\ \hline
$\{ y_j \}$ & $\{ h_{j+3m, 3}, h_{j+3m, 4}\}, \{g^{\_,y}_{0,0}, g^{\_,y}_{0,2}\}$ & remaining states
\end{longtable}
\end{proof}
\begin{lemma} The events $c_0,\dots, c_{6m-2}$ and $r_0,\dots, r_{6m-2}$ and $p_0,\dots, p_{3m-1}$ are inhibitable. \end{lemma}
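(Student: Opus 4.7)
The plan is to continue the table-based scheme of the preceding lemmas, producing one or two supports per event family that, under the generic signature rule fixed at the start of the subsection, inhibit every occurrence we need. Since all three families $C = \{c_0, \dots, c_{6m-2}\}$, $R = \{r_0, \dots, r_{6m-2}\}$ and $P = \{p_0, \dots, p_{3m-1}\}$ live exclusively inside the key union $K^\sigma_m$ and do not touch the translators $T^\sigma_\varphi$, I expect the supports to be built entirely from head states and states of the freezer/duplicator components, with the whole translator $S(U(T^\sigma_0,\dots,T^\sigma_{m-1}))$ being uniformly included or uniformly excluded in each case. This keeps all transitions within the translators mapped to \nop, so only the signature of events local to $K^\sigma_m$ has to be checked.

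For $c_j$, each event appears once as the connector $h_{j,6} \edge{c_j} h_{j+1,0}$ (or analogously via the dotted shortcuts) and twice inside the generator $G^{c,c}_j$. The plan is to include the four states around this single transition together with the two states $g^{c,c}_{j,0}, g^{c,c}_{j,1}$ (the sources of $k$ in $G^{c,c}_j$), exclude the other pair $g^{c,c}_{j,2}, g^{c,c}_{j,3}$, and extend by a uniform alternating pattern along the chain of head rows so that the signature becomes \inp\ (or \swap\ where available) exactly on $c_j$. A second region, obtained by swapping the roles of the two $k$-halves in $G^{c,c}_j$, then covers the remaining generator states; Lemma~\ref{lem:generator} guarantees that both choices are consistent with a key region.

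For $r_i$ and $p_j$ I would proceed similarly. Each $r_i$ occurs only on the single dashed arc between two consecutive head blocks, so a support that contains precisely one side of that arc, padded with $h_{j,1}, h_{j,4}$ in the other head blocks and with an appropriate selection of freezer states (as already used for the $q$- and $z$-events), inhibits $r_i$ at all remaining connector states. For $p_j$, which appears as $h_{3m+j,2} \edge{p_j} h_{3m+j,3}$, a row supported on $\{h_{3m+j,2}\}$ plus a carefully chosen alternation along the chain does the job, and one extra row collects the remaining head states using $\{h_{3m+j,0},\dots,h_{3m+j,2}\}$.

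The main obstacle I anticipate is not any single region but bookkeeping across the four switch positions simultaneously: for $\sigma \in \{\sigma_3,\sigma_4\}$ the additional generators $G^{\_,q}_j, G^{\_,y}_j, G^{x,\_}_j, G^{\_,x}_j$ and the TS $F_2$ or $G^{n,\_}_0$ are present, and the chosen support must agree with the $k$-halves on each of them in order to admit the signature \inp, \swap, \out, or \set\ prescribed by the generic rule. Once the support is compatible with Lemma~\ref{lem:generator} on every installed generator and with $F_0, F_1, F_2$ where applicable, the verification that every listed target state is inhibited reduces, as in the previous lemmas, to inspecting the incident transitions in Figure~\ref{fig:key_unions}.
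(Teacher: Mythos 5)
There is a genuine gap, and it shows up already in the one concrete support you commit to. In the duplicator gadget $G^{c,c}_j$ the event $c_j$ labels \emph{both} transitions $g^{c,c}_{j,0}\edge{c_j}g^{c,c}_{j,1}$ and $g^{c,c}_{j,2}\edge{c_j}g^{c,c}_{j,3}$. If you include the $k$-sources $g^{c,c}_{j,0},g^{c,c}_{j,1}$ and exclude $g^{c,c}_{j,2},g^{c,c}_{j,3}$, one $c_j$-transition lies entirely inside the support and the other entirely outside, which forces $sig(c_j)=\nop$; such a region inhibits $c_j$ nowhere, and your ``second region'' obtained by swapping the halves has the same defect. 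The working choice is the diagonal pair $\{g^{c,c}_{j,0},g^{c,c}_{j,2}\}$ (the two $c_j$-sources), which is what the paper uses --- and note that this forces $sig(k)=\nop$ there, so these regions are \emph{not} key regions; your appeal to Lemma~\ref{lem:generator} points the wrong way, since in any region with $sig(k)\in\{\inp,\out\}$ that lemma forces $c_j$ to $\nop$, i.e.\ no key region can ever inhibit a $c$-event.

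The second gap is your premise that the translator union can be kept uniformly inside or outside the support. To inhibit $c_j,r_j$ at $h_{j,5}$ with $sig(c_j)=\inp$ you need $h_{j,6}\in sup$ and $h_{j,5}\notin sup$, so $z_j$ must enter, so $h_{j,2}\in sup$ and $h_{j,1}\notin sup$; now $v_j$ on $h_{j,2}\edge{v_j}h_{j,3}$ is either border-crossing --- in which case its other occurrence $t_{i,\ell,1}\edge{v_j}t_{i,\ell,2}$ forces a non-uniform support on the translator $T^\sigma_{i,\ell}$, exactly why the paper's first region contains $\{t_{i,\ell,0},t_{i,\ell,1}\}$ --- or it is $\nop$, which pins $h_{j,3}$ into the support and propagates further constraints along the head row that you have not checked (and that must hold simultaneously for all four switch positions, including $\sigma_1$ where $\used$ and $\free$ are unavailable as escape hatches). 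Your plan never confronts this interface between $H$ and the translators, and the phrases ``uniform alternating pattern'' and ``carefully chosen alternation'' are standing in for precisely the part of the argument where these constraints have to be discharged.
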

\begin{proof}

If $j\in \{0,\dots, 3m-1\}$, $i \in \{0,\dots, m-1\}$ and $\ell \in \{0,\dots, 2\}$ such that $j=3i+\ell$ then the first row of the following table, firstly,  proves $c_j, r_j$ and $c_{j+3m}, r_{j+3m}$ at $h_{j,5},h_{j+3m,5}$, that is, $c_k, r_k$ for all $k\in \{0,\dots, 6m-2\}$ to be inhibitable at $h_{k,5}$.
Moreover, these region inhibits the event $p_j$ at the states $h_{j+3m,0}, h_{j+3m,1}$, too.

The second row, is dedicated to the inhibition of $c_j, r_j$ at the states $S(H)\setminus\{h_{j,5}\}$.
Here, if $j\in \{0,\dots, 3m-1\}$ we assume $g^{\_,q}_{j,1}, g^{\_,q}_{j,3}$ to be included and if $j\in \{3m,\dots, 6m-2\}$ we include $g^{\_,y}_{j,1}, g^{\_,y}_{j,3}$.
For simplicity, we refrain from presenting these states explicitly.

The third row, inhibits the events  $c_j,r_j$ ($p_j$) for $j\in \{0,\dots, 6m-2\}$ ($j\in \{0,\dots, 3m-1\}$) at the remaining states of $U^\sigma_\varphi$.

\linespread{1.4}
\begin{longtable}{ p{1.6cm} p{8.3cm}   p{2.4cm}  }

$E$ & Support & Target States   \\ \hline
\raggedright{$\{c_j,r_j, c_{j+3m},$ $r_{j+3m}, p_j\}$}  
& 
\raggedright{
$\{h_{j,2}, h_{j,6}, h_{j+3m,0},h_{j+3m,1}, h_{j+3m,5}\}$, 
$\{g^{c,c}_{j-1,1}, g^{c,c}_{j-1,3}\}$, $\{g^{c,c}_{j+3m-1,1}, g^{c,c}_{j+3m-1,3}\}$, $\{g^{c,c}_{j,0}, g^{c,c}_{j,2}\}$, $\{g^{c,c}_{j+3m,0}, g^{c,c}_{j+3m,2}\}$,
 $\text{if }j=0: \{f_{0,3}, f_{0,4}\}$
$\{t_{i,\ell,0}, t_{i,\ell,1}\}$,
}  
& 
\raggedright{$\{h_{j,5}, h_{j+3m,5}\}$, $\{h_{j+3m,0}, h_{j+3m,1}\}$}\arraybackslash  \\ \hline
$\{c_j,r_j\}$  
& 
\raggedright{
$\{ h_{j,5},h_{j,6}\}$, 
$ \{g^{c,c}_{j-1,1}, g^{c,c}_{j-1,3}\}, \{g^{c,c}_{j,0}, g^{c,c}_{j,2}\}$,  $\text{if }j=0: \{f_{1,2}, f_{1,3}\}$}
& 
$S(H)\setminus\{h_{j,5}\}$\\ \hline
$\{c_j,r_j\}$  
& 
$\bigcup_{i=0}^{j} \{h_{j,0},\dots, h_{j,6}\}, \{g^{c,c}_{j,0}, g^{c,c}_{j,2}\}$ 
& 
remaining states\\ \hline
$\{p_j\}$  
& 
$\{h_{j+3m,0},\dots, h_{j+3m,2}\}$
& 
remaining states%
\end{longtable}
\end{proof}
\begin{lemma}\label{lem:} The event $n_0$ is inhibitable. \end{lemma}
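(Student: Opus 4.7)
The event $n_0$ occurs at only three transitions in the entire construction: $f_{0,1}\edge{n_0}f_{0,2}$ inside $F_0$ (present in every $K^\sigma_m$), $f_{2,0}\edge{n_0}f_{2,1}$ inside $F_2 \subseteq F^{\sigma_3}_K$, and $g^{n,\_}_{0,0}\edge{n_0}g^{n,\_}_{0,1}$ inside $G^{n,\_}_0 \subseteq F^{\sigma_4}_K$. The target set for inhibition is therefore $S(U^\sigma_\varphi)\setminus\{f_{0,1},f_{2,0},g^{n,\_}_{0,0}\}$, with the last two sources dropped whenever the corresponding gadget is absent from $K^\sigma_m$.

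The plan is to present two rows of a support table in the same style used for the preceding events. For the first row I would take the support
\[ S_1 = \{f_{0,0},f_{0,1},f_{2,0},f_{2,2},g^{n,\_}_{0,0},g^{n,\_}_{0,2}\}. \]
Every $k$-transition inside $F_0$, $F_2$, and $G^{n,\_}_0$ evaluates $(1,1)$ on the selected copy and $(0,0)$ on its partner, every $k$-transition outside these three tiny gadgets sits entirely outside $S_1$, and every $n_0$-transition runs $(1,0)$. The general signature scheme therefore sets $sig(k)=\nop$ globally, assigns $sig(n_0)=\inp$ by the explicit preference for $n_0\in E$, turns the two lone blanc events inside $F_2$ and $G^{n,\_}_0$ into $\inp$ as well, and leaves every other event on \nop. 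The resulting $\tau$-region is valid in each of the four projections and inhibits $n_0$ at every state of $S(U^\sigma_\varphi)\setminus S_1$, so only $f_{0,0}$ together with $f_{2,2}$ under $\sigma_3$ and $g^{n,\_}_{0,2}$ under $\sigma_4$ remain to be covered.

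The second row handles these leftover states by pivoting the support. I would keep $f_{0,1}, f_{2,0}, g^{n,\_}_{0,0}$ inside, drop $f_{0,0}, f_{2,2}, g^{n,\_}_{0,2}$, and extend through the rest of $K^\sigma_m$ and $T^\sigma_\varphi$ by including the sink of every remaining $k$-transition, namely the $h_{i,1}, h_{i,4}$ layers of $H$, the state $f_{1,2}$ of $F_1$, the $g_2, g_3$ states of every installed generator, and the $t_{i,\alpha,1}$ states of every translator (along with $f_{0,4}, f_{2,3}, g^{n,\_}_{0,3}$ to close the affected $k$-transitions inside $F_0$, $F_2$, and $G^{n,\_}_0$). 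Under this support every $k$-transition runs $(0,1)$, so the scheme selects $sig(k)=\out$ under $\sigma_1$, $sig(k)=\set$ under $\sigma_2$, and $sig(k)=\swap$ under $\sigma_3$ and $\sigma_4$. Each $n_0$-transition still runs $(1,0)$ and forces $sig(n_0)=\inp$, while the two blancs inside $F_2$ and $G^{n,\_}_0$ now run $(0,1)$ and absorb $\swap$, which is available in both switch positions that instantiate them. The three leftover target states $f_{0,0},f_{2,2},g^{n,\_}_{0,2}$ all lie outside $S_2$, so $\inp$ inhibits $n_0$ at each of them as required.

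The hard part will be checking that the row-2 support admits a compatible signature for every event other than $k$ and $n_0$ in all four switch positions, in particular the heavily reused labels $z_j, v_j, w_j, q_j, y_j, c_j, r_j$ and the translator events. I would resolve this by importing the mirror of the key-region support from Lemma~\ref{lem:key_union}: the second claim of that lemma already constructs supports for $sig_K(k)\in\{\out,\set,\swap\}$ depending on $\tau$, so I can reuse its $k$-synchronization pattern wholesale and then apply a purely local perturbation around the three $n_0$-transitions so that $sig(n_0)$ becomes the explicit \inp-inhibitor rather than inheriting the $(0,1)$ signature chosen for $k$. Every other event's signature then survives unchanged from the key-region analysis, which completes the verification.
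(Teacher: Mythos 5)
Your first region is exactly the one the paper uses (its second table row): the support $\{f_{0,0},f_{0,1},f_{2,0},f_{2,2},g^{n,\_}_{0,0},g^{n,\_}_{0,2}\}$ makes $k$ a \nop-event, $n_0$ an \inp-event and every other event \nop\ or \inp, and it correctly inhibits $n_0$ everywhere except at $f_{0,0}$, $f_{2,2}$ and $g^{n,\_}_{0,2}$. The gap is in your second region. If you include only the sinks of the $k$-transitions ($h_{i,1},h_{i,4}$, $f_{1,2}$, $f_{0,1},f_{0,4}$, the generator sinks, $t_{i,\alpha,1}$) and exclude everything else, then $z_0$ has the transition $h_{0,1}\edge{z_0}h_{0,2}$ crossing from $1$ to $0$ while its transitions $h_{0,5}\edge{z_0}h_{0,6}$ and $f_{0,2}\edge{z_0}f_{0,3}$ lie entirely outside the support; the only interaction compatible with both $1\mapsto 0$ and $0\mapsto 0$ is \res, which is absent from every type in $\sigma_1$ and from the \res-free types in $\sigma_2$. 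The same problem hits $q_0$ via $h_{0,4}\edge{q_0}h_{0,5}$ versus $f_{1,0}\edge{q_0}f_{1,1}$. This is precisely why the paper's corresponding support additionally contains $h_{i,2}$ for $i<3m$ and $h_{i,5}$ for $i\ge 3m$ (so that $z_i$ and $y_i$ only see $(1,1)$ or $(0,0)$ transitions and can keep \nop) together with carefully chosen generator states; your support, as described, admits no valid signature in several of the covered types.

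The proposed repair does not close the gap either. Lemma~\ref{lem:key_union} constructs a key region with $sig_K(k)=\inp$, that is, with $k$ \emph{exiting} the support; it contains no ready-made construction with $sig_K(k)\in\{\out,\set,\swap\}$ that you could import. Taking the complement of the combined key-plus-indicator region also fails: under $\sigma_3$ the two occurrences of $n_0$, at $f_{0,1}\edge{n_0}f_{0,2}$ and at $f_{2,0}\edge{n_0}f_{2,1}$, cross the key support in opposite directions, so in the complement (as in the original) $n_0$ is forced to \swap, which is total and inhibits nothing. A ``purely local'' perturbation around these transitions is not available, because flipping the support at any of the incident states changes the direction of the adjacent $k$-transitions in $F_2$ or $G^{n,\_}_0$, and the direction of $k$ is a global constraint tied to $sig(k)$. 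To finish the proof you must exhibit the second support explicitly for each switch position, as the paper does.
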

\begin{proof}
The following table proves $n_0$ to be inhibitable in $U^\sigma_\varphi$:
\linespread{1.4}
\begin{longtable}{ p{0.5cm} p{8cm}   p{3cm}  }
$E$ & Support & Target States   \\ \hline
$\{n_0\}$ & 
\raggedright{%
$\bigcup_{i=0}^{6m-1}\{h_{i,1},h_{i,4}\}, \bigcup_{i=0}^{3m-1}\{h_{i,2}\}, \bigcup_{i=3m}^{6m-1}\{h_{i,5}\}$,
 \newline $\{t_{i,\alpha,1}\mid 0\le i\le m-1, 0\le \alpha\le 2\}, \bigcup_{j=0}^{3m-1}\{g^{\_,q}_{0,0}, g^{\_,q}_{0,2} \}$, $\{f_{0,1}, f_{0,4}, f_{1,2}, f_{2,0}, f_{2,3}, g^{n,\_}_{0,0}, g^{n,\_}_{0,3}\}$, 
}
& $\{f_{0,0},f_{0,4}, f_{1,0}\}$, $\{f_{1,1}, g^{\_,n}_{0,2},f_{2,2}\}$\\ \hline
$\{n_0\}$ & 
\raggedright{%
$\{f_{0,0}, f_{0,1},f_{2,0}, f_{2,2}, g^{\_,n}_{0,0}, g^{\_,n}_{0,2}\}$
}
& remaining states\\ 
\end{longtable}
\end{proof}

\begin{lemma}\label{lem:vice_and_wire} The events of $V=\{v_0,\dots, v_{3m-1}\}$ and $W=\{w_0,\dots, w_{3m-1}\}$ are inhibitable. \end{lemma}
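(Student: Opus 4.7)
The plan is to exhibit, for each $v_j$ and each $w_j$, a short list of supports that together inhibit the event at every state where it does not occur, following the tabular format used in the preceding lemmas of this section. Signatures are read off by the standard scheme, where an event receives \inp\ (or \swap\ when \swap\ is available, or \set\ resp.\ \out\ when only those cross the boundary) precisely when its sources lie in and its targets lie outside the chosen support, and \nop\ otherwise.

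First I would locate the transitions carrying the event. For $j=3i+\alpha$ the event $v_j$ occurs only at $h_{j,2}\edge{v_j}h_{j,3}$ in $H$ and on a single translator edge in $T^\sigma_i$: at $t_{i,\alpha,1}\edge{v_j}t_{i,\alpha,2}$ for $\sigma\in\{\sigma_1,\sigma_2,\sigma_4\}$, and at $t_{i,\alpha,1}\edge{v_j}t_{i,\alpha,5}$ for $\sigma_3$, where $\xi^\sigma$ and $\theta^\sigma$ swap. The event $w_j$ occurs on $h_{j+3m,1}\edge{w_j}h_{j+3m,2}$ and $h_{j+3m,5}\edge{w_j}h_{j+3m,6}$ in $H$, and on the complementary translator edge ($t_{i,\alpha,1}\edge{w_j}t_{i,\alpha,5}$ for $\sigma_1,\sigma_2,\sigma_4$, resp.\ $t_{i,\alpha,1}\edge{w_j}t_{i,\alpha,2}$ for $\sigma_3$).

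Next I would build a \emph{bulk} region that inhibits $v_j$ at almost every state simultaneously. Its support contains the source $h_{j,2}$ and the translator source $t_{i,\alpha,1}$, extended as in the earlier lemmas by the rows $h_{\cdot,1},h_{\cdot,4}$ of $H$ together with either $h_{\cdot,2}$ or $h_{\cdot,5}$, the sinks of $k$ in every duplicator generator $G^{c,c}_\cdot$, the appropriate states of $F_0,F_1,F_2,G^{n,\_}_0$, and the sinks of $k$ in the generators $G^{\_,q}_\cdot, G^{\_,y}_\cdot, G^{x,\_}_\cdot, G^{\_,x}_\cdot$ making up $F^\sigma_K \cup F^\sigma_T$. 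By the signature scheme this forces $sig(v_j)\in\{\inp,\swap\}$ and hence inhibits $v_j$ at every state outside the support. A parallel bulk region handles $w_j$, using its two $H$-transitions and the one inside the translator. The residual states not yet handled, namely the immediate neighbors of $h_{j,2},h_{j,3}$ and of $t_{i,\alpha,1}$ (and the symmetric cluster around $h_{j+3m,1},h_{j+3m,2},h_{j+3m,5},h_{j+3m,6}$ for $w_j$), are covered by a small complementary \emph{local} region whose support contains the targets rather than the sources of the event transitions, producing $sig(v_j)\in\{\out,\set,\swap\}$ and inhibiting $v_j$ at all states in the complement.

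The main obstacle is the case split for $\sigma_3$. Because $\xi^{\sigma_3}=w$ and $\theta^{\sigma_3}=v$ inside the translators, while $H$ and the freezer $F^{\sigma_3}_T$ keep the roles of $v,w$ fixed, the bulk support must be retuned so that the unique translator border crossing of $v_j$ lies on the long path $t_{i,\alpha,1}\to t_{i,\alpha,5}$ rather than on the short one; in particular, one must verify that the variable events $X_{i,\cdot}$ and their counterparts $x_{i,\cdot}$ on the intervening states $t_{i,\alpha,2},t_{i,\alpha,3},t_{i,\alpha,4}$ can still all simultaneously be assigned \nop\ without violating the region axioms. A secondary routine check is that the bulk regions are compatible with Lemma~\ref{lem:generator} on every generator gadget of $K^\sigma_m$ and $T^\sigma_\varphi$, i.e.\ that the forced classification of $n_0$ and of the $x_j$-events into $\keepo$ or $\keepze$ is consistent with the chosen signatures of $k$ and of the $v_j,w_j$ under consideration.
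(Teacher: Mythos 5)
Your overall plan---locate the few transitions carrying $v_j$ and $w_j$, then cover all non-occurrence states by a handful of regions read off from a support---matches the shape of the paper's argument, and you have located the transitions correctly. But the proposal breaks on the central point of which signatures actually inhibit. An interaction inhibits an event $e$ at a state $s$ only if it is \emph{undefined} on $sup(s)$; of the interactions in play, only $\inp$, $\out$, $\used$ and $\free$ are partial, whereas $\set$, $\res$, $\swap$ and $\nop$ are total and therefore never inhibit anything. Your bulk region's conclusion ``$sig(v_j)\in\{\inp,\swap\}$, hence $v_j$ is inhibited at every state outside the support'' fails whenever the signature is $\swap$---which is exactly what the generic scheme assigns to a border-crossing event for every $\tau\in\sigma_3\cup\sigma_4$, since all those types contain $\swap$. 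The paper sidesteps this by explicitly forcing $sig(e)=\inp$ for the events $e$ listed in column $E$ (the first case of its signature scheme), overriding the $\swap$ default. Worse, your ``complementary local region'' with $sig(v_j)\in\{\out,\set,\swap\}$ cannot inhibit at all when the signature is $\set$ or $\swap$, and $\out$ is not guaranteed to lie in $\tau$ (for instance $\{\nop,\inp,\set\}\in\sigma_2$ and $\{\nop,\inp,\res,\swap\}\in\sigma_3$ contain no $\out$, and in the latter a $0$-to-$1$ crossing forces $\swap$); even where $\out$ is available it inhibits at states with $sup=1$, i.e.\ \emph{inside} the support, not in the complement as you claim.

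As a consequence, the residual states---$t_{i,\alpha,0}$, $h_{j,1}$, the remaining head rows and the $G^{\_,y}_j$-generators---cannot be dispatched by a complemented region. Each of them needs a further $\inp$-region whose support excludes that state while still containing every source and excluding every sink of $v_j$ (respectively $w_j$), and whose border-crossing side effects on the variable events, the $c$-events of the duplicator and the $q$-, $y$- and $x$-generators are all legal for every $\tau$ in the switch position. This is precisely the content of rows 1, 3 and 4 of the paper's table, where the supports must be coordinated across the three translators sharing a variable; that work, which is the genuinely laborious part of the lemma, is absent from your proposal.
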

\begin{proof} 
We proof the lemma by presenting regions for an arbitrary $i\in \{0,\dots, m-1\}$ that inhibits the events $v_{3i},\dots, v_{3i+2}, w_{3i},\dots, w_{3i+2}$ at all relevant states of $U^\sigma_\varphi$.
Let $i\in \{0,\dots, m-1\}$ and $\alpha\in \{0,1,2\}$.
The first row is dedicated to the inhibition of $x_{3i+\alpha}, w_{3i+\alpha}$ at $t_{i,\alpha,0}$ and the second row proves both of them to be inhibitable at $S(U^\sigma_\varphi )\setminus sup$, where $sup$ are the states in the corresponding table cell.
The third (fourth) row deals with the inhibition of $x_{3i},x_{3i+1},x_{3i+2}$ ($w_{3i},w_{3i+1},w_{3i+2}$) at the remaining states, that is, the rest of $\bigcup_{j=3m}^{6m-1}\{h_{j,0,\dots, h_{j,6}}\}$ ($\bigcup_{j=0}^{3m-1}\{h_{j,0,\dots, h_{j,6}}\}$),  $h_{3i+\alpha,1}$ and $G^{\_,y}_0,\dots, G^{\_,y}_{3m-1}$.
We achieve these regions as follows:
Let $j,\ell\in \{0,\dots, m-1\}\setminus\{i\}$ such that $X_{i,0}\in E(T^\sigma_j)\cap E(T^\sigma_\ell)$ and let for for $n\in \{i,j,\ell\}$ be $\alpha_n=0$ ($\alpha_n=1, \alpha_n=2$) if $X_{n,0}=X_{i,0}$ ($X_{n,1}=X_{i,0}, X_{n,2}=X_{i,0}$) and let $\beta_n=(\alpha_n+1)\text{mod}3$ and $\gamma_n=(\alpha_n+2)\text{mod}3$. 
The states $t_{n,\alpha_n,3}$, $t_{n,\beta_n,5}$ and $t_{n,\gamma_n,4}$ are the sinks and $t_{n,\alpha_n,2}$, $t_{n,\beta_n,4}$ and $t_{n,\gamma_n,3}$ the sources of $X_{i,0}$ in $T^\sigma_n$.
Having this insight, we now can define the following subsets of $S(U^\sigma_\varphi)$ to, finally, combine them to a fitting support of $U^\sigma_\varphi$:

\begin{enumerate}
\item
$S_0=\{t_{n,\alpha_n,0},t_{n,\alpha_n,1}, t_{n,\beta_n,0},t_{n,\beta_n,1}, t_{n,\gamma_n,0},t_{n,\gamma_n,1}\mid n\in \{i,j,\ell\}\}$, 
\item
$S_1=\{t_{n,\alpha_n,3},\dots, t_{n,\alpha_n,5}, t_{n,\beta_n,5},t_{n,\gamma_n,4},\dots, t_{n,\gamma_n,5} \mid n\in \{i,j,\ell\}\}$
\item
 $S_2=\{h_{3n,2}, h_{3n+1,2},h_{3n+2,2},h_{3n,6}, h_{3n+1,6},h_{3n+2,6},\mid n\in \{i,j,\ell\}\}$
\item
$S_3=\{ g^{x,\_}_{n,0}, g^{x,\_}_{n,2}, g^{\_,x}_{n,0}, g^{\_,x}_{n,2} \mid  n\in \{0,\dots, m-1\}: x_n=x_{i,0}\}$
\item 
$S_4=\{t_{n,\alpha_n,2},t_{n,\beta_n,2},\dots, t_{n,\beta_n,4}, t_{n,\gamma_n,2},t_{n,\gamma_n,3}\mid n\in \{i,j,\ell\}\}$, 
\item $S_5=\{h_{3n+3m,p}, h_{3n+3m+1,p}, h_{3n+3m+2,p} \mid n\in \{i,j,\ell\}, p\in \{0,1,3,4, 5\}\}$
\end{enumerate}
Note that, for the third row, we rather have need of some further states of $S(U(G^{c,c}_0,\dots, G^{c,c}_{6m-2}))$ because of some incoming/outgoing c-labeled transitions in $H$.
Depending on $i,j,\ell$ the actual presentation of these states would render a lot of case analyses as, for example, $j=i+1$, $j=\ell+1$.
However, we can always choose the fitting states of the corresponding TSs and, therefore, we do not represent these states explicitly.
\linespread{1.4}
\begin{longtable}{ p{2cm} p{7cm}   p{3cm}  }
$E$ & Support & Target States   \\ \hline
$V, W$ 
& 
\raggedright{
$\bigcup_{i=0}^{6m-1}\{h_{i,0},h_{i,4}\}, \bigcup_{i=0}^{3m-1}\{h_{i,2}\}$, $\bigcup_{i=3m}^{6m-1}\{h_{i,5}\}$,
$\{f_{0,1},f_{0,4},f_{1,0},f_{1,2},f_{2,1},f_{2,2}\}$,
$\{g^{n,\_}_{0,1}, g^{n,\_}_{0,2}\}$, $\bigcup_{i=0}^{6m-2}\{g^{c,c}_{i,2}, g^{c,c}_{i,3}\}$, $\bigcup_{i=0}^{3m-1}\{g^{\_,y}_{i,2}, g^{\_,y}_{i,3}, g^{\_,q}_{i,1}, g^{\_,q}_{i,2}\}$ ,   $\bigcup_{i=0}^{m-1}\{g^{\_,x}_{i,2}, g^{\_,x}_{i,3}, g^{x,\_}_{i,2}, g^{x, \_}_{i,3}\}$, 
$\{t_{i,0,1}, t_{i,1,1}, t_{i,2,1} \mid 0\leq i \leq m-1 \}$} &
\raggedright{$\{t_{i,0,0}, t_{i,1,0}, t_{i,2,0}\mid 0\le i\le m-1\}$}\arraybackslash \\ \hline
$\{v_{3i+\alpha}, w_{3i+\alpha}\}$ 
& 
\raggedright{
$\{h_{{3i+\alpha},0},\dots,h_{{3i+\alpha},2}, h_{{3i+\alpha}+3m,0},h_{{3i+\alpha}+3m,1}$, $h_{{3i+\alpha}+3m,5}\}$, 
$\{g^{\_,y}_{3i+\alpha, 1}, g^{\_,y}_{3i+\alpha,3}, 
t_{i,\alpha,0},t_{i,\alpha,1}\}$} 
&
\raggedright{$S(U^\sigma_\varphi )\setminus sup$}\arraybackslash \\ \hline
$\{v_{3i},\dots,v_{3i+2}\}$ & 
\raggedright{$S_0,S_2, S_3$, 
 if $\{i,j,\ell\}\cap\{0\}\not=\emptyset: f_{0,3},f_{0,4}$},   
if $\sigma=\sigma_3$ then $S_4$, otherwise, 
 for $\sigma_1,\sigma_2,\sigma_4: S_1$
 &
 \raggedright{
remaining states 
 } \arraybackslash \\ \hline

$\{w_{3i},\dots,w_{3i+2}\}$ 
&
$S_0, S_3, S_5$, for $\sigma=\sigma_3: S_1$, for $\sigma_1,\sigma_2,\sigma_4: S_4$
&
remaining states
\end{longtable}
\end{proof}
\begin{lemma} The events $X_0,\dots,X_{m-1}$ and $x_0,\dots,x_{m-1}$ are inhibitable. \end{lemma}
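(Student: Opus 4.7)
The plan is to construct, for every variable event $X \in V(\varphi)$ and its helper $x$, two or three explicit $\tau$-region supports, each inhibiting the event at a different portion of $S(U^\sigma_\varphi)$, so that together they cover every state where the respective event does not occur. The work is done at the abstract level of supports, and the signatures are then derived from the generic rule used throughout Section C.1 (i.e.\ \inp, \swap, \set/\out, or \nop depending on whether the support changes along each transition).

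The key structural fact to exploit is that, because $\varphi$ is cubic monotone, each variable $X \in V(\varphi)$ appears in exactly three clauses and therefore occurs only inside three translator subunions $T^\sigma_{i_1},T^\sigma_{i_2},T^\sigma_{i_3}$, with the auxiliary event $x$ additionally appearing at most in one freezer generator $G^{x,\_}_{\_}$ or $G^{\_,x}_{\_}$ (for $\sigma_3,\sigma_4$). First I would treat one copy $T^\sigma_{i,\alpha}$ at a time: starting from the three possible indicator-like supports sketched in Figure \ref{fig:translators}.1, I would slightly modify them so that $X_{i,\alpha}$ (resp.\ $x_{i,\alpha}$) gets signature $\inp$ or $\swap$ rather than \nop, and adjust the other two translator copies to match by including or excluding states so that transitions on the remaining events remain consistent. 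The signatures forced on the interface events $V, W, k, q_\bullet, y_\bullet, z_\bullet$ are inherited from the key region, and for the key-union gadgets $H, F_0,F_1,F_2,G^{n,\_}_0, G^{c,c}_\bullet, G^{\_,q}_\bullet, G^{\_,y}_\bullet$ I would reuse the standard ``inhibit-$k$'' support template that has been used in the preceding lemmas, adjusting only a constant number of states to accommodate the chosen signature of $X$ (resp.\ $x$).

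Concretely, for each variable $X$ I would present three rows in the same table format as Lemma \ref{lem:vice_and_wire}: the first row gives a region with $sig(X)=\inp$ separating the source states of $X$-transitions from the sink states across the three translators; the second row inverts the roles so as to inhibit $X$ at the remaining states (using the complementary support when $\swap$ is unavailable and the same support when it is); and for $\sigma_3,\sigma_4$ a third row is devoted to inhibiting $X$ at the few remaining states of the $F^\sigma_T$-generator, using the generator-template regions from Lemma \ref{lem:generator}. The construction for $x$ is symmetric, using the secondary paths $\dot{P}_{i,\alpha}$ of Figure \ref{fig:translators}.1 and the reversed orientation of $x$-transitions in those paths; here the generator $G^{x,\_}_j$ or $G^{\_,x}_j$ plays an active role that fixes the signature of $x$ through Lemma \ref{lem:generator}.

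The main obstacle will be global consistency: because $X$ takes different positional roles ($X_{i_1,\alpha_1}$, $X_{i_2,\alpha_2}$, $X_{i_3,\alpha_3}$) in its three translators, the direction in which the support must ``change'' across an $X$-transition differs per translator, and a naive support will be incompatible with $sig(X)=\inp$. I expect to handle this by choosing, in each of the three translators, the appropriate one of the three colored region fractions in Figure \ref{fig:translators}.1 so that all three transitions labeled $X$ uniformly cross the support in the same direction, while still respecting the synchronization of initial and terminal states enforced by $V$ and $W$. Once this is achieved, the remaining signature assignments on $Q,Y,Z,C$, on the freezer TSs, and on the blanc events can all be filled in by the generic scheme, and the required inhibition follows by inspection of the support at the targeted states.
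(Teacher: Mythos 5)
Your overall strategy---per-variable supports presented in the table format of the surrounding lemmas, exploiting that each variable lives in exactly three translators, and deriving signatures from the generic scheme---matches the shape of the paper's argument, and your two main supports (one making $X$ an exit event, one the complement) do cover most states. However, there is a concrete gap that your ``global consistency'' paragraph does not address, and it is precisely where the paper spends the bulk of its proof. Fix a translator $T^\sigma_i$ and let $\alpha_i$ be the position of $X=X_{i,\alpha_i}$ in its own copy. The sources of $X$ inside $T^\sigma_i$ are $t_{i,\alpha_i,2}$, $t_{i,\beta_i,4}$, $t_{i,\gamma_i,3}$. Consider the states $t_{i,\beta_i,2}$, $t_{i,\beta_i,3}$ and $t_{i,\gamma_i,2}$, at which $X$ does not occur and must be inhibited. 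Every colored indicator fraction of Figure~\ref{fig:translators}.1 that lets $X$ cross the border uniformly necessarily places these three states \emph{inside} the support (they lie on the path segment leading up to the source $t_{i,\beta_i,4}$, resp.\ $t_{i,\gamma_i,3}$), where an \inp-event is enabled rather than inhibited; and the complementary support places them outside, where an \out-/\set-event is enabled. So neither of your two rows solves these ESSP atoms. To inhibit $X$ at $t_{i,\beta_i,3}$ as an exit event you are forced to have $sup(t_{i,\beta_i,3})=0$ and $sup(t_{i,\beta_i,4})=1$, and since $t_{i,\beta_i,3}\edge{X_{i,\gamma_i}}t_{i,\beta_i,4}$, the \emph{second} variable event $X_{i,\gamma_i}$ must receive a signature in \enter. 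This is not a local adjustment: $X_{i,\gamma_i}$ occurs in two further translators, possibly together with $X_{i,\alpha_i}$ again, and one has to verify that a consistent support exists in each of the six ways the two events can be interleaved in a foreign translator. The paper handles exactly this with a separate construction plan and the exhaustive case analysis of Figure~\ref{fig:possibilities}; your proposal contains no mechanism for it.

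A secondary, smaller underestimate: for $\sigma_3,\sigma_4$ the helper event $x_j$ must also be inhibited at the source states $g^{\_,x}_{j,0}$, resp.\ $g^{x,\_}_{j,2}$, of its own freezer generator. Lemma~\ref{lem:generator} does not hand you this for free---in a region with $sig(k)=\inp$ the generator forces $x_j$ into \keepo\ or \keepze, so these atoms need regions with $sig(k)=\swap$ and a globally different support (the paper's sets $S_7$ through $S_{17}$). Your plan treats the generator as something that ``fixes the signature of $x$'' rather than as a source of additional atoms requiring their own regions.
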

\begin{proof}
For arbitrary $i \in \{0,\dots, m-1\}, \alpha\in \{0,1,2\}$ we present regions of $U^\sigma_\varphi $ that inhibits the event $X_{i,\alpha}$  and $x_{i,\alpha}$ at all states of $S(U^\sigma_\varphi)\setminus \{s\in S(T_j)\mid j\not=i, X_{i,\alpha}, x_{i,\alpha} \in E(T_j)\}$.
Having this, by the arbitrariness of $i$ and $\alpha$ this implies that $X_0,\dots, X_{m-1},x_0,\dots,x_{m-1}$ are inhibitable in $U^\sigma_\varphi $.

Let $i \in \{0,\dots, m-1\}, \alpha_i\in \{0,1,2\}$ arbitrary but fixed.
Let $j,\ell\in \{0,\dots, m-1\}\setminus \{i\}$ such that $X_{i,\alpha_i}\in E(T^\sigma_{j})\cap E(T^\sigma_{\ell})$ and, for $n\in \{j,\ell\}$, let $\alpha_n=0$ ($\alpha_n=1, \alpha_n=2$) if $X_{n,0}=X_{i,\alpha_i}$ ($X_{n,1}=X_{i,\alpha_i}, X_{n,2}=X_{i,\alpha_i}$).
Finally, let $\beta_n=(\alpha_n+1)\text{mod}3$ and $\gamma_n=(\alpha_n+2)\text{mod}3$ for $n\in \{i,j,\ell\}$.
For $n\in \{i,j,\ell\}$, we get:
\begin{enumerate}
\item
The states $t_{n,\alpha_n,2}, t_{n,\beta_n,4}, t_{n,\gamma_n,3}$ are the sources of $X_{i,\alpha_i}$ and the sinks of $x_{i,\alpha_i}$ in $T_n$.
\item
The states $t_{n,\alpha_n,3}, t_{n,\beta_n,5}, t_{n,\gamma_n,4}$ are the sinks of $X_{i,\alpha_i}$ and the sources of $x_{i,\alpha_i}$ in $T_n$. 
\end{enumerate}
We now define subsets of $S(U^\sigma_\varphi)$ which will be used to combine supports of regions of $U^\sigma_\varphi$
\begin{enumerate}
\item $S_0=\{t_{n,0,0}, t_{n,1,0},t_{n,2,0}, t_{n,0,1}, t_{n,1,1},t_{n,2,1}\mid n\in \{i,j,\ell\} \}$
\item $S_1=\{t_{n,\alpha_n,2}, t_{n,\beta_n,2},\dots,t_{n,\beta_n,4}, t_{n,\gamma_n,2}, t_{n,\gamma_n,3}\mid n\in \{i,j,\ell\}\}$
\item $S_2=\{g^{\_,x}_{n,0}, g^{\_,x}_{n,2}, g^{x,\_}_{n,0}, g^{x,\_}_{n,2}\mid n\in \{0,\dots, m-1\}: x_n=x_{i,\alpha_i}\}$,
\item $S_3=\{g^{\_,x}_{n,1}, g^{\_,x}_{n,3}, g^{x,\_}_{n,1}, g^{x,\_}_{n,3}\mid n\in \{0,\dots, m-1\}: x_n=x_{i,\alpha_i}\}$,
\item $S_4=\{g^{\_,q}_{3n,0}, g^{\_,q}_{3n,2}, g^{\_,q}_{3n+1,0}, g^{\_,q}_{3n+1,2},g^{\_,q}_{3n+2,0}, g^{\_,q}_{3n+2,2} \mid n\in \{i,j,\ell\} \}$,
\item $S_5=\{h_{3n,3}, h_{3n+1,3}, h_{3n+2,3}, h_{3n,4}, h_{3n+1,4}, h_{3n+2,4}\mid n\in \{i,j,\ell\}  \}$, 
\item $S_6=\{h_{3n+3m,n'}, h_{3n+3m+1,n'}, h_{3n+3m+2,n'}\mid n'\in \{0,1,3,4,5\}, n\in \{i,j,\ell\}  \}$, 
\end{enumerate}
The following table proves $X_{i,\alpha_i}$ to be inhibitable at all states in question besides of $t_{i,\beta_i,2}, t_{i,\beta_i,3}$ and $t_{i,\gamma_i,2}$.
Observe that, if $\tau\in \sigma_3$ then $\swap\in\tau$.
\linespread{1.4}
\begin{longtable}{ p{1cm}  p{8cm}  p{3cm} }
$E$ & Support  & Target States \\ \hline
$X_{i,\alpha_i}$ 
&
$S_1,S_2,S_4,S_5$ and for $\sigma_1$ if $0\in \{i,j,\ell\}:f_{1,0}$ and for $\sigma_3:S_0$
& 
$S(U^\sigma_\varphi)\setminus sup$ \\ \hline
$X_{i,\alpha_i}$ 
&
$S_1,S_3,S_6$ and for $\sigma_1,\sigma_2,\sigma_4: S_0$ 
& 
$S(U^\sigma_\varphi)\setminus sup$ \\\hline
\end{longtable}
To prove that $X_{i,\alpha_1}$ is inhibitable at the remaining states $t_{i,\beta_i,2}, t_{i,\beta_i,3}$ and $t_{i,\gamma_i,2}$, too, we need to additionally involve the variable event $X_{i,\gamma_i}$, which, by the further occurrences of $X_{i,\alpha_i}, X_{i,\gamma_i}$ in other translators, renders a lot of case analysis.
However, going through all the cases is tedious and renders no considerable insight.
Hence, instead of presenting all the cases explicitly, we give a rather general instruction how a inhibiting region can be derived:
\begin{enumerate}
\item 
For $T^\sigma_i$ we put exactly the states $t_{i,\alpha_i,2}, t_{i,\alpha_i,5}, t_{i,\beta_i,4}, t_{i,\gamma_i,3}$ into the support which makes the arcs labeled with $v_{3i+\alpha_i}, w_{3i+\alpha_i}$ and $X_{i,\gamma_i},x_{i,\alpha_i}$ incoming and $x_{i,\gamma_i}$ outgoing.
\item 
If $j\in \{0,\dots, m-1\}\setminus \{i\}$ such that $X_{i,\alpha_i}\in E(T^\sigma_j)$ and $X_{i,\gamma_i}\not\in E(T^\sigma_j)$ then if for $\delta\in \{0,1,2\}$ and $\varepsilon\in \{2,3,4\}$ the state $t_{j,\delta,\varepsilon}$ is a source of $X_{i,\alpha_i}$ in $T^\sigma_j$ then put the states $t_{j,\delta,2},\dots, t_{j,\delta,\varepsilon}$ into the support. 
For $\sigma_1,\sigma_2,\sigma_4$ ($\sigma_3$) this makes $v_{3j}, v_{3j+1}, v_{3j+2}$ ($w_{3j}, w_{3j+1}, w_{3j+2}$) incoming  events.
\item 
If $j\in \{0,\dots, m-1\}\setminus \{i\}$ such that $X_{i,\gamma_i}\in E(T^\sigma_j)$ and $X_{i,\alpha_i}\not\in E(T^\sigma_j)$ then if for $\delta\in \{0,1,2\}$ and $\varepsilon\in \{3,4,5\}$ the state $t_{j,\delta,\varepsilon}$ is a sink of $X_{i,\gamma_i}$ in $T^\sigma_j$ then put the states $t_{j,\delta,\varepsilon},\dots, t_{j,\delta,5}$ into the support. 
For $\sigma_1,\sigma_2,\sigma_4$ ($\sigma_3$) this makes $w_{3j}, w_{3j+1}, w_{3j+2}$ ($v_{3j}, v_{3j+1}, v_{3j+2}$) incoming events.
\item 
If $j\in \{0,\dots, m-1\}\setminus \{i\}$ such that $X_{i,\gamma_i}\in E(T^\sigma_j)$ and $X_{i,\alpha_i}\in E(T^\sigma_j)$ then choose the $T^\sigma_j$-part of the support corresponding to as which of $X_{j,0},X_{j,1},X_{j,2}$ the events $X_{i,\alpha_i}$ and $X_{i,\gamma_i}$ occur, cf. Figure~\ref{fig:possibilities}.
This condemns at most some additional $v$-events or $w$-events to be incoming events.
\item
Choose, in dependence of the border crossing $w$-, $v$- or $x$- events the necessary additional states of $U^\sigma_\varphi$:
If $v_j$ is incoming then choose $h_{j,3},h_{j,4}$, making $q_j$ exiting, and $g^{\_,q}_{j,0}, g^{\_,q}_{j,2}$ and $f_{1,0}$ if $j=0$ to be included.
If $w_j$ is incoming then include $h_{j+3m,2,3}, h_{j+3m,2,6}$, making possibly $c_{j+3m-1}$ entering and $c_{j+3m}$ exiting, and $g^{c,c}_{j+3m-1,1}, g^{c,c}_{j+3m-1,3}, g^{c,c}_{j+3m,0}, g^{c,c}_{j+3m,2}$, respectively.
Finally, for $x_{i,\alpha_i}$ and $x_{i,\gamma_i}$ choose the support with their respective generator-\emph{sources} which, by the presence of \swap\ for the relevant cases is always fitting.
\end{enumerate}
\begin{figure}[t!]
\centering
\begin{tikzpicture}[scale = 0.75]
\begin{scope}
\node at (-0.75,0) {a)};
\foreach \i in {0,...,11} {\coordinate (\i) at (\i*1.5cm,0);}
\foreach \i in {0, 3, 6, 9} {\fill[black!15, rounded corners] (\i) +(-0.5,-0.25) rectangle +(0.5,0.25);}
\foreach \i in {0,...,3} {\node (t\i) at (\i) {\scalebox{\nodeScale}{$t_{j,0,\i}$}};}
\foreach \i in {4,...,7} {\pgfmathparse{int(\i-4}  ,  \node (t\i) at (\i) {\scalebox{\nodeScale}{$t_{j,1,\pgfmathresult}$}};}
\foreach \i in {8,...,11} {\pgfmathparse{int(\i-8)} , \node (t\i) at (\i) {\scalebox{\nodeScale}{$t_{j,2,\pgfmathresult}$}};}
\graph{ 
(t0) ->[bend left=30, "\scalebox{\edgeScale}{$X_{i,\alpha_i}$}"] (t1) ->[bend left=30, "\scalebox{\edgeScale}{$X_{j,1}$}"] (t2) ->[bend left=30, "\scalebox{\edgeScale}{$X_{i,\gamma_i}$}"] (t3);
(t4) ->[bend left=30, "\scalebox{\edgeScale}{$X_{j,1}$}"] (t5) ->[bend left=30, "\scalebox{\edgeScale}{$X_{i,\gamma_i}$}"] (t6)->[bend left=30, "\scalebox{\edgeScale}{$X_{i,\alpha_i}$}"](t7);
 (t8)- >[bend left=30, "\scalebox{\edgeScale}{$X_{i,\gamma_i}$}"](t9)->[bend left=30, "\scalebox{\edgeScale}{$X_{i,\alpha_i}$}"](t10)->[bend left=30, "\scalebox{\edgeScale}{$X_{j,1}$}"](t11);
 (t0) <-[swap, bend right=30, "\scalebox{\edgeScale}{$x_{i,\alpha_i}$}"] (t1) <-[swap, bend right=30, "\scalebox{\edgeScale}{$x_{j,1}$}"] (t2) <-[swap, bend right=30, "\scalebox{\edgeScale}{$x_{i,\gamma_i}$}"] (t3);
(t4) <-[swap, bend right=30, "\scalebox{\edgeScale}{$x_{j,1}$}"] (t5) <-[swap, bend right=30, "\scalebox{\edgeScale}{$x_{i,\gamma_i}$}"] (t6)<-[swap, bend right=30, "\scalebox{\edgeScale}{$x_{i,\alpha_i}$}"](t7);
 (t8)<- [swap, bend right=30, "\scalebox{\edgeScale}{$x_{i,\gamma_i}$}"](t9)<-[swap, bend right=30, "\scalebox{\edgeScale}{$x_{i,\alpha_i}$}"](t10)<-[swap, bend right=30, "\scalebox{\edgeScale}{$x_{j,1}$}"](t11);
};
\end{scope}
\begin{scope}[yshift=-2.5cm]
\node at (-0.75,0) {b)};
\foreach \i in {0,...,11} {\coordinate (\i) at (\i*1.5cm,0);}
\foreach \i in {0,11} {\fill[black!15, rounded corners] (\i) +(-0.5,-0.25) rectangle +(0.5,0.25);}
\foreach \i in {2,5,8} {\fill[black!15, rounded corners] (\i) +(-0.5,-0.25) rectangle +(2,0.25);}
\foreach \i in {0,...,3} {\node (t\i) at (\i) {\scalebox{\nodeScale}{$t_{j,0,\i}$}};}
\foreach \i in {4,...,7} {\pgfmathparse{int(\i-4}  ,  \node (t\i) at (\i) {\scalebox{\nodeScale}{$t_{j,1,\pgfmathresult}$}};}
\foreach \i in {8,...,11} {\pgfmathparse{int(\i-8)} , \node (t\i) at (\i) {\scalebox{\nodeScale}{$t_{j,2,\pgfmathresult}$}};}
\graph{ 
(t0) ->[bend left=30, "\scalebox{\edgeScale}{$X_{i,\alpha_i}$}"] (t1) ->[bend left=30, "\scalebox{\edgeScale}{$X_{i,\gamma_i}$}"] (t2) ->[bend left=30, "\scalebox{\edgeScale}{$X_{j,2}$}"] (t3);
(t4) ->[bend left=30, "\scalebox{\edgeScale}{$X_{i,\gamma_i}$}"] (t5) ->[bend left=30, "\scalebox{\edgeScale}{$X_{j,2}$}"] (t6)->[bend left=30, "\scalebox{\edgeScale}{$X_{i,\alpha_i}$}"](t7);
 (t8)- >[bend left=30, "\scalebox{\edgeScale}{$X_{j,2}$}"](t9)->[bend left=30, "\scalebox{\edgeScale}{$X_{i,\alpha_i}$}"](t10)->[bend left=30, "\scalebox{\edgeScale}{$X_{i,\gamma_i}$}"](t11);
(t0) <-[swap, bend right=30, "\scalebox{\edgeScale}{$x_{i,\alpha_i}$}"] (t1) <-[swap, bend right=30, "\scalebox{\edgeScale}{$x_{i,\gamma_i}$}"] (t2) <-[swap, bend right=30, "\scalebox{\edgeScale}{$x_{j,2}$}"] (t3);
(t4) <-[swap, bend right=30, "\scalebox{\edgeScale}{$x_{i,\gamma_i}$}"] (t5) <-[swap, bend right=30, "\scalebox{\edgeScale}{$x_{j,2}$}"] (t6)<-[swap, bend right=30, "\scalebox{\edgeScale}{$x_{i,\alpha_i}$}"](t7);
 (t8)<- [swap, bend right=30, "\scalebox{\edgeScale}{$x_{j,2}$}"](t9)<-[swap, bend right=30, "\scalebox{\edgeScale}{$x_{i,\alpha_i}$}"](t10)<-[swap, bend right=30, "\scalebox{\edgeScale}{$x_{i,\gamma_i}$}"](t11);
};
\end{scope}
\begin{scope}[yshift=-5cm]
\node at (-0.75,0) {c)};
\foreach \i in {0,...,11} {\coordinate (\i) at (\i*1.5cm,0);}
\foreach \i in {3,4} {\fill[black!15, rounded corners] (\i) +(-0.4,-0.25) rectangle +(0.5,0.25);}
\foreach \i in {0,6,9} {\fill[black!15, rounded corners] (\i) +(-0.5,-0.25) rectangle +(2,0.25);}
\foreach \i in {0,...,3} {\node (t\i) at (\i) {\scalebox{\nodeScale}{$t_{j,0,\i}$}};}
\foreach \i in {4,...,7} {\pgfmathparse{int(\i-4}  ,  \node (t\i) at (\i) {\scalebox{\nodeScale}{$t_{j,1,\pgfmathresult}$}};}
\foreach \i in {8,...,11} {\pgfmathparse{int(\i-8)} , \node (t\i) at (\i) {\scalebox{\nodeScale}{$t_{j,2,\pgfmathresult}$}};}
\graph{ 
(t0) ->[bend left=30,"\scalebox{\edgeScale}{$X_{j,0}$}"] (t1) ->[bend left=30,"\scalebox{\edgeScale}{$X_{i,\alpha_i}$}"] (t2) ->[bend left=30,"\scalebox{\edgeScale}{$X_{i,\gamma_i}$}"] (t3);
(t4) ->[bend left=30,"\scalebox{\edgeScale}{$X_{i,\alpha_i}$}"] (t5) ->[bend left=30,"\scalebox{\edgeScale}{$X_{i,\gamma_i}$}"] (t6)->[bend left=30,"\scalebox{\edgeScale}{$X_{j,0}$}"](t7);
 (t8)- >[bend left=30,"\scalebox{\edgeScale}{$X_{i,\gamma_i}$}"](t9)->[bend left=30,"\scalebox{\edgeScale}{$X_{j,0}$}"](t10)->[bend left=30,"\scalebox{\edgeScale}{$X_{i,\alpha_i}$}"](t11);
(t0) <-[swap, bend right=30,"\scalebox{\edgeScale}{$x_{j,0}$}"] (t1) <-[swap, bend right=30,"\scalebox{\edgeScale}{$x_{i,\alpha_i}$}"] (t2) <-[swap, bend right=30,"\scalebox{\edgeScale}{$x_{i,\gamma_i}$}"] (t3);
(t4) <-[swap, bend right=30,"\scalebox{\edgeScale}{$x_{i,\alpha_i}$}"] (t5) <-[swap, bend right=30,"\scalebox{\edgeScale}{$x_{i,\gamma_i}$}"] (t6)<-[swap, bend right=30,"\scalebox{\edgeScale}{$x_{j,0}$}"](t7);
 (t8) <- [swap, bend right=30,"\scalebox{\edgeScale}{$x_{i,\gamma_i}$}"](t9)<-[swap, bend right=30,"\scalebox{\edgeScale}{$x_{j,0}$}"](t10)<-[swap, bend right=30,"\scalebox{\edgeScale}{$x_{i,\alpha_i}$}"](t11);
};
\end{scope}
\begin{scope}[yshift=-7.5cm]
\node at (-0.75,0) {d)};
\foreach \i in {0,...,11} {\coordinate (\i) at (\i*1.5cm,0);}
\foreach \i in {1,4,7,10} {\fill[black!15, rounded corners] (\i) +(-0.5,-0.25) rectangle +(0.5,0.25);}
\foreach \i in {0,...,3} {\node (t\i) at (\i) {\scalebox{\nodeScale}{$t_{j,0,\i}$}};}
\foreach \i in {4,...,7} {\pgfmathparse{int(\i-4}  ,  \node (t\i) at (\i) {\scalebox{\nodeScale}{$t_{j,1,\pgfmathresult}$}};}
\foreach \i in {8,...,11} {\pgfmathparse{int(\i-8)} , \node (t\i) at (\i) {\scalebox{\nodeScale}{$t_{j,2,\pgfmathresult}$}};}
\graph{ 
(t0) ->[bend left=30,"\scalebox{\edgeScale}{$X_{i,\gamma_i}$}"] (t1) ->[bend left=30,"\scalebox{\edgeScale}{$X_{i,\alpha_i}$}"] (t2) ->[bend left=30,"\scalebox{\edgeScale}{$X_{j,2}$}"] (t3);
(t4) ->[bend left=30,"\scalebox{\edgeScale}{$X_{i,\alpha_i}$}"] (t5) ->[bend left=30,"\scalebox{\edgeScale}{$X_{j,2}$}"] (t6)->[bend left=30,"\scalebox{\edgeScale}{$X_{i,\gamma_i}$}"](t7);
 (t8)- >[bend left=30,"\scalebox{\edgeScale}{$X_{j,2}$}"](t9)->[bend left=30,"\scalebox{\edgeScale}{$X_{i,\gamma_i}$}"](t10)->[bend left=30,"\scalebox{\edgeScale}{$X_{i,\alpha_i}$}"](t11);
(t0) <-[swap, bend right=30,"\scalebox{\edgeScale}{$x_{i,\gamma_i}$}"] (t1) <-[swap, bend right=30,"\scalebox{\edgeScale}{$x_{i,\alpha_i}$}"] (t2) <-[swap, bend right=30,"\scalebox{\edgeScale}{$x_{j,2}$}"] (t3);
(t4) <-[swap, bend right=30,"\scalebox{\edgeScale}{$x_{i,\alpha_i}$}"] (t5) <-[swap, bend right=30,"\scalebox{\edgeScale}{$x_{j,2}$}"] (t6)<-[swap, bend right=30,"\scalebox{\edgeScale}{$x_{i,\gamma_i}$}"](t7);
 (t8)<- [swap, bend right=30,"\scalebox{\edgeScale}{$x_{j,2}$}"](t9)<-[swap, bend right=30,"\scalebox{\edgeScale}{$x_{i,\gamma_i}$}"](t10)<-[swap, bend right=30,"\scalebox{\edgeScale}{$x_{i,\alpha_i}$}"](t11);
};
\end{scope}
\begin{scope}[yshift=-10cm]
\node at (-0.75,0) {e)};
\foreach \i in {0,...,11} {\coordinate (\i) at (\i*1.5cm,0);}
\foreach \i in {2,5,7,8} {\fill[black!15, rounded corners] (\i) +(-0.5,-0.25) rectangle +(0.5,0.25);}
\foreach \i in {2,5,11} {\fill[black!15, rounded corners] (\i) +(-2,-0.25) rectangle +(0.5,0.25);}
\foreach \i in {0,...,3} {\node (t\i) at (\i) {\scalebox{\nodeScale}{$t_{j,0,\i}$}};}
\foreach \i in {4,...,7} {\pgfmathparse{int(\i-4}  ,  \node (t\i) at (\i) {\scalebox{\nodeScale}{$t_{j,1,\pgfmathresult}$}};}
\foreach \i in {8,...,11} {\pgfmathparse{int(\i-8)} , \node (t\i) at (\i) {\scalebox{\nodeScale}{$t_{j,2,\pgfmathresult}$}};}
\graph{ 
(t0) ->[bend left=30,"\scalebox{\edgeScale}{$X_{i,\gamma_i}$}"] (t1) ->[bend left=30,"\scalebox{\edgeScale}{$X_{j,1}$}"] (t2) ->[bend left=30,"\scalebox{\edgeScale}{$X_{i,\alpha_i}$}"] (t3);
(t4) ->[bend left=30,"\scalebox{\edgeScale}{$X_{j,1}$}"] (t5) ->[bend left=30,"\scalebox{\edgeScale}{$X_{i,\alpha_i}$}"] (t6)->[bend left=30,"\scalebox{\edgeScale}{$X_{i,\gamma_i}$}"](t7);
 (t8)- >[bend left=30,"\scalebox{\edgeScale}{$X_{i,\alpha_i}$}"](t9)->[bend left=30,"\scalebox{\edgeScale}{$X_{i,\gamma_i}$}"](t10)->[bend left=30,"\scalebox{\edgeScale}{$X_{j,1}$}"](t11);
(t0) <-[swap, bend right=30,"\scalebox{\edgeScale}{$x_{i,\gamma_i}$}"] (t1) <-[swap, bend right=30,"\scalebox{\edgeScale}{$x_{j,1}$}"] (t2) <-[swap, bend right=30,"\scalebox{\edgeScale}{$x_{i,\alpha_i}$}"] (t3);
(t4) <-[swap, bend right=30,"\scalebox{\edgeScale}{$x_{j,1}$}"] (t5) <-[swap, bend right=30,"\scalebox{\edgeScale}{$x_{i,\alpha_i}$}"] (t6)<-[swap, bend right=30,"\scalebox{\edgeScale}{$x_{i,\gamma_i}$}"](t7);
 (t8)<- [swap, bend right=30,"\scalebox{\edgeScale}{$x_{i,\alpha_i}$}"](t9)<-[swap, bend right=30,"\scalebox{\edgeScale}{$x_{i,\gamma_i}$}"](t10)<-[swap, bend right=30,"\scalebox{\edgeScale}{$x_{j,1}$}"](t11);
};
\end{scope}
\begin{scope}[yshift=-12.5cm]
\node at (-0.75,0) {f)};
\foreach \i in {0,...,11} {\coordinate (\i) at (\i*1.5cm,0);}
\foreach \i in {2, 5,8,11} {\fill[black!15, rounded corners] (\i) +(-0.5,-0.25) rectangle +(0.5,0.25);}
\foreach \i in {} {\fill[black!15, rounded corners] (\i) +(-0.5,-0.25) rectangle +(1.5,0.25);}
\foreach \i in {0,...,3} {\node (t\i) at (\i) {\scalebox{\nodeScale}{$t_{j,0,\i}$}};}
\foreach \i in {4,...,7} {\pgfmathparse{int(\i-4}  ,  \node (t\i) at (\i) {\scalebox{\nodeScale}{$t_{j,1,\pgfmathresult}$}};}
\foreach \i in {8,...,11} {\pgfmathparse{int(\i-8)} , \node (t\i) at (\i) {\scalebox{\nodeScale}{$t_{j,2,\pgfmathresult}$}};}
\graph{ 
(t0) ->[bend left=30,"\scalebox{\edgeScale}{$X_{j,0}$}"] (t1) ->[bend left=30,"\scalebox{\edgeScale}{$X_{i,\gamma_i}$}"] (t2) ->[bend left=30,"\scalebox{\edgeScale}{$X_{i,\alpha_i}$}"] (t3);
(t4) ->[bend left=30,"\scalebox{\edgeScale}{$X_{i,\gamma_i}$}"] (t5) ->[bend left=30,"\scalebox{\edgeScale}{$X_{i,\alpha_i}$}"] (t6)->[bend left=30,"\scalebox{\edgeScale}{$X_{j,0}$}"](t7);
(t8)- >[bend left=30,"\scalebox{\edgeScale}{$X_{i,\alpha_i}$}"](t9)->[bend left=30,"\scalebox{\edgeScale}{$X_{j,0}$}"](t10)->[bend left=30,"\scalebox{\edgeScale}{$X_{i,\gamma_i}$}"](t11);
(t0) <-[swap, bend right=30,"\scalebox{\edgeScale}{$x_{j,0}$}"] (t1) <-[swap, bend right=30,"\scalebox{\edgeScale}{$x_{i,\gamma_i}$}"] (t2) <-[swap, bend right=30,"\scalebox{\edgeScale}{$x_{i,\alpha_i}$}"] (t3);
(t4) <-[swap, bend right=30,"\scalebox{\edgeScale}{$x_{i,\gamma_i}$}"] (t5) <-[swap, bend right=30,"\scalebox{\edgeScale}{$x_{i,\alpha_i}$}"] (t6)<-[swap, bend right=30,"\scalebox{\edgeScale}{$x_{j,0}$}"](t7);
(t8)<- [swap, bend right=30,"\scalebox{\edgeScale}{$x_{i,\alpha_i}$}"](t9)<-[swap, bend right=30,"\scalebox{\edgeScale}{$x_{j,0}$}"](t10)<-[swap, bend right=30,"\scalebox{\edgeScale}{$x_{i,\gamma_i}$}"](t11);
};
\end{scope}
\end{tikzpicture}
\caption{
All cases of how two events $X_{i,\alpha_i},X_{i,\gamma_i}$ from $T_i$ can occur as events $X_{j,0},X_{j,1},X_{j,2}$ of $T_j$.
The grey regions show how to inhibit $X_{i,\alpha_i},X_{i,\gamma_i}$.
a) $X_{j,0}=X_{i,\alpha_i}, X_{j,2}=X_{i,\gamma_i}$,  b) $X_{j,0}=X_{i,\alpha_i}, X_{j,1}=X_{i,\gamma_i}$,  c) $X_{j,1}=X_{i,\alpha_i}, X_{j,2}=X_{i,\gamma_i}$,  d) $X_{j,1}=X_{i,\alpha_i}, X_{j,0}=X_{i,\gamma_i}$,  e) $X_{j,2}=X_{i,\alpha_i}, X_{j,0}=X_{i,\gamma_i}$,  f) $X_{j,2}=X_{i,\alpha_i}, X_{j,1}=X_{i,\gamma_i}$.
}
\label{fig:possibilities}
\end{figure}
The given construction plan yields a region that inhibits $X_{i,\alpha_i}$ at the remaining states $t_{i,\beta_i,2}, t_{i,\beta_i,3}$ and $t_{i,\gamma_i,2}$ of $T^\sigma_i$.
Altogether, this proves $X_0,\dots, X_{m-1}$ to be inhibitable in $U^\sigma_\varphi$.
Moreover, by the symmetry of the occurrences of $X_j$ and $x_j$ the event $x_j$ is inhibitable at all states of $U^\sigma_\varphi$ in question besides of $g^{\_,x}_{j,0}$, respectively $g^{x,\_}_{j,2}$.
Observe, that in these cases the \swap\ operation is always available. 
Actually, the inhibition of $x_i$ at these states requires some extra effort.
With the definitions above, we define the following sets which will be used to yield a support that inhibits $x_{i,\alpha_i}$ at $g^{\_,x}_{j,0}$, respectively $g^{x,\_}_{j,2}$ where $x_j=x_{i,\alpha_i}$:
\begin{enumerate} 
\item
$S_{7}=\bigcup_{n=0}^{3m-1}\{h_{n,1},h_{n,2},h_{n,3}\}$,
\item
$S_{8}=\{h_{3n+3m,n'}, h_{3n+3m+1,n'}, h_{3n+3m+2,n'}\mid n\in \{i,j,\ell\}, n'\in \{1,4,5\}  \}$,  
\item 
$S_{9}=\bigcup_{n'=3m}^{6m-1}\{h_{n',1},h_{n',2},h_{n',3}\mid n'\not\in \{3n+3m,3n+3m+1,3n+3m+2\}: n\in \{i,j,\ell\}\}$,
\item 
$S_{10}=\{t_{n,0,0},t_{n,1,0},t_{n,2,0}\mid n\in \{0,\dots, m-1\}\}$, 
\item
$S_{11}=(S(T^\sigma_i)\cup  S(T^\sigma_j)\cup  S(T^\sigma_\ell))\setminus (S_0\cup S_1)$,
\item
$S_{12}=\{g^{\_,x}_{n,1},g^{\_,x}_{n,2}, g^{x,\_}_{n,0}, g^{x,\_}_{n,3} \mid n\in \{0,\dots, m-1\}: x_n=x_{i,\alpha_i}\}$,
\item $S_{13}=\{f_{0,0},f_{0,4},f_{2,2},f_{2,3}\}$
\item
$S_{14}=\bigcup_{n=3m}^{6m-1}\{h_{n,1},h_{n,2},h_{n,3}\}$,
\item
$S_{15}=\{h_{3n,n'}, h_{3n+1,n'}, h_{3n+2,n'}\mid n\in \{i,j,\ell\}, n'\in \{1,2,4\}  \}$,  
\item 
$S_{16}=\bigcup_{n'=0}^{3m-1}\{h_{n',1},h_{n',2},h_{n',3}\mid n'\not\in \{3n,3n+1,3n+2\}: n\in \{i,j,\ell\}\}$.
\item
$S_{17}=\{g^{\_,q}_{3n,0},g^{\_,q}_{3n,3}, g^{\_,q}_{3n+1,0},g^{\_,q}_{3n+2,3}, g^{\_,q}_{3n+2,0},g^{\_,q}_{3n+3,3} \mid n\in \{i,j,\ell\} \}$,
\end{enumerate}
Now, to enrich the set $S_7\cup \dots \cup S_{13}$ to a fitting support of $U^\sigma_\varphi$ if $\sigma = \sigma_4$, respectively the set $S_{10}\cup \dots \cup S_{17}$ if $\sigma=\sigma_3$, we have to take the remaining generators into account.
That is, for all the other generators $G^{\eta,\varrho}_n$, different from the ones affected by definition of $S_{12}$, respectively by $S_{12},S_{17}$, we put the sinks $g^{\eta,\varrho}_{n,0}, g^{\eta,\varrho}_{n,1}$ of $k$ into $sup$.
This proves the lemma.
\end{proof}
To show that the inhibition of the key event at the key state in $U^\sigma_\varphi$ implies the ESSP, it remains to show that the unique events are inhibitable, too.
Each unique event occurs exactly once in $U^\sigma_\varphi$ and, therefore, only in one single TS.
Regarding this fact and the symmetry of the gadget TSs, mainly the generators, all these events are inhibitable by an input region, too.
Hence, we state the next lemma without proof.
\begin{lemma}[Without proof] The unique events are inhibitable. \end{lemma}
Finally, the next lemma states that if $k$ is inhibitable at $h_{0,6}$ in $U^\sigma_\varphi$ then $U^\sigma_\varphi$ has the SSP.
\begin{lemma}\label{lem:ssp_for_sigma_1_to_sigma_4} If $k$ is inhibitable at $h_{0,6}$ in $U^\sigma_\varphi$ then $U^\sigma_\varphi$ has the SSP. \end{lemma}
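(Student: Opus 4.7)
The plan is to piggy-back on the ESSP regions already built in the preceding lemmas and only close the few remaining separation gaps with small, purpose-built regions. Concretely, I would first observe that every support $sup$ listed in the tables of this subsection, together with the uniformly defined signature scheme that assigns $\inp$/$\swap$/$\set$/$\nop$ according to the position of the transition's endpoints relative to $sup$, is a legitimate $\tau$-region. Hence each such region separates every state inside its support from every state outside. So the collection of supports used to solve the ESSP automatically solves a large portion of the SSP, and it suffices to enumerate the pairs $(s,s')$ for which \emph{no} existing support contains exactly one of the two.

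The second step is to walk component by component through $U^\sigma_\varphi$ and check which pairs are still open. For the head $H$, the regions used for inhibiting $k$, $q_j$, $z_j$, $v_j$, $w_j$ and $c_j, r_j$ already distinguish states by their index $j$ (through the duplicator generators) as well as by their position $0,\dots,6$ within each $h_{j,\ast}$ line (through the combination of key and path regions). For the duplicator, freezer, and generator TSs of $K^\sigma_m$, the four states $g^{\eta,\varrho}_{j,0},\dots,g^{\eta,\varrho}_{j,3}$ of every generator template are pairwise separated by picking one of the four natural singleton or pair supports $\{g^{\eta,\varrho}_{j,0}\},\{g^{\eta,\varrho}_{j,0},g^{\eta,\varrho}_{j,1}\},\{g^{\eta,\varrho}_{j,0},g^{\eta,\varrho}_{j,2}\}$ and assigning the corresponding signature by the general scheme. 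The same kind of local argument handles the paths $F_0, F_1, F_2$.

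Third, for the translator union $T^\sigma_\varphi$, the indicator-style regions of Lemma~\ref{lem:indicator_regions} (which exist whenever $k$ is inhibitable at $h_{0,6}$) already separate the states $t_{i,\alpha,2},\dots,t_{i,\alpha,5}$ along each primal path in a non-trivial way, and the regions from Lemma~\ref{lem:vice_and_wire}, together with those used to inhibit $X_{i,\alpha}$ and $x_{i,\alpha}$, distinguish the $t_{i,\alpha,\ell}$ across the three different branches $\alpha\in\{0,1,2\}$ of one translator as well as across different translators $T^\sigma_i$. For the few pairs that are still not separated — essentially $(t_{i,\alpha,0},t_{i,\alpha,1})$ and pairs inside freezer copies $\mathcal{B}_j,\mathcal{B}'_j$ or $G^{x,\_}_j,G^{\_,x}_j$ — I would use local regions whose support contains exactly one state of the pair and extend it minimally until the general signature scheme produces a valid region; the mechanism is identical to that used in Algorithm~\ref{alg:compute_support} and always succeeds here because each such pair lies on a two-step path away from an interface event that already has a non-$\nop$ signature in some key region.

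The main obstacle I expect is purely bookkeeping: verifying that the small local extensions do not conflict with the global constraints imposed by the interface (in particular, with $V\subseteq sig^{-1}(\enter)$, $W\subseteq sig^{-1}(\keepze)$ in switches $\sigma_1,\sigma_2$ and the \swap-forbidding constraints in $\sigma_3,\sigma_4$). The clean way to handle this is to only ever \emph{shrink} the ESSP regions to single-TS supports when constructing the missing SSP witnesses, so that the interface conditions become vacuous. Putting all of this together then yields a region set that solves every SSP atom, which proves the lemma.
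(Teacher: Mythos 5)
Your overall skeleton matches the paper's: a $\tau$-region separates every state of its support from every state outside it, so the supports already built for the ESSP atoms settle most SSP atoms, and only a residue of pairs needs dedicated regions. The paper phrases this via ``input regions'' (if $s\edge{e}$ and $e$ is inhibited at $s'$ by a region with $sig(e)=\inp$, then $sup(s)=1\neq 0=sup(s')$) and then closes the residue using the uniqueness of events inside each gadget plus a handful of explicitly named global supports (e.g.\ $\bigcup_{i}\{h_{i,5},h_{i,6}\}$, and the second row of the table in Lemma~\ref{lem:vice_and_wire} to split $h_{i,0}$ from $h_{i,3}$). Up to that point your plan is sound.

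The gap is in how you propose to build the missing witnesses. First, ``shrinking the ESSP regions to single-TS supports so that the interface conditions become vacuous'' does not produce regions of $U^\sigma_\varphi$: a region of a union must assign a \emph{single} signature to each event that is simultaneously consistent with the support restricted to every component TS in which that event occurs. The events you need for the residual pairs ($k$, $X_{i,\alpha}$, $x_{i,\alpha}$, $v_j$, $w_j$, $q_j$, $y_j$, $z_j$, $c_j$, \dots) are exactly the shared ones. If such an event crosses the border of your local support, its signature must lie in $\exit\cup\enter\cup\{\swap\}$, but in every other component the support is empty, so the same event there needs an interaction defined on $0$ with value $0$; for $\tau_1=\{\nop,\inp,\out\}$ no border-crossing interaction satisfies this, and the ``local'' region simply fails to exist. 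This is precisely why the paper's separating supports span many gadgets at once. Second, the appeal to the mechanism of Algorithm~\ref{alg:compute_support} is out of scope: its minimal-extension guarantee is proved only for $\{\nop,\res\}\cup\omega$ with $\omega\subseteq\{\inp,\used,\free\}$ and hinges on $\res$ being available to absorb events whose transitions leave the support inconsistently; for $\tau\in\sigma_1\cup\dots\cup\sigma_4$ (which need not contain $\res$) an ``extend until the uniform signature scheme validates'' loop has no success guarantee, and asserting that it ``always succeeds here'' assumes exactly what has to be shown. To repair the proof you would need to exhibit, as the paper does, concrete global supports for the residual pairs and check event-by-event that each shared event admits a signature in $\tau$ for every switch position.
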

\begin{proof}
For simplification, in the following by a key region we mean a region of $U^\sigma_\varphi$ that inhibits $k$ at $h_{0,6}$.
Firstly, if $s\edge{e}$ and $\neg s'\edge{e}$ in $U^\sigma_\varphi$ such that $e$ is inhibitable at $s'$ by an input region, that is, there is a region of $(sup,sig)$ of $U^\sigma_\varphi$ with $sig(e)=\inp$ and $sup(s')=0$, then $s$ and $s'$ are separated.
Secondly, we note, that each inhibiting region presented in the former lemmata are input regions of $U^\sigma_\varphi$.
Thirdly, for $i\in \{0,\dots, m-1\}$ and $\ell\in \{0,\dots, 2\}$ it is true, that each event of $T_{i,\ell}$ is unique in $T_{i,\ell}$.
Hence, the states of $T_{i,\ell}$ are separable.
If $i\in \{0,\dots, 6m-1\}$ then the set of states $\{h_{i,0},\dots, h_{i,6}\}$ can be extended to a region of $U^\sigma_\varphi$.
Hence, with respect to $H$, it suffices to investigate the separability of $\{h_{i,0},\dots, h_{i,6}\}$.
By the uniqueness of $v_j,q_j,p_j,y_j$ in $H$ for $j\in \{0,\dots, 3m-1\}$, the states $h_{i,2},h_{i,4}, h_{i,6}$ are separable.
The inhibition of $k$ ($z_j$, respectively $w_j$) separates $\{h_{i,0}, h_{i,3}\}$ ($\{h_{i,1}, h_{i,5}\}$) from all other states.
The set $\bigcup_{i=0}^{6m-1}\{h_{i,5},h_{i,6}\}$ is extendable to a region of $U^\sigma_\varphi$ that separates $h_{i,1}$ from $h_{i,5}$.
Furthermore, the second row of Lemma~\ref{lem:vice_and_wire} proves that $h_{i,0}$ and $h_{i,3}$ are separable, too.
By the uniqueness of $q_0,k$ ($n,z_0$) in $F_1$ ($F_0$), the states of $F_1$ ($F_0\setminus\{f_{0,0},f_{0,3}\}$) are separable.
For $F_0$ it suffices to show that $f_{0,0}$ and $f_{0,3}$ are separable. 
Of course $\{f_{0,0},f_{0,1}\}$, possibly together with $\{g^{n,\_}_{0,0}, g^{ n, \_}_{0,2}\}$ or $\{f_{2,0}, f_{2,2}\}$, respectively, is extendable to a separating region.
This region separates $\{g^{n,\_}_{0,0}, g^{ n, \_}_{0,2}\}$ from $\{g^{n,\_}_{0,1}, g^{ n, \_}_{0,3}\}$ and  $\{f_{2,0}, f_{2,2}\}$ from  $\{f_{2,1}, f_{2,3}\}$, respectively.
An input key region solves the remaining state separation atoms for $G^{n,\_}_0$  and $F_2$.
For each applied generator $G^{\eta,\varrho}_j$, an input key region separates $\{g^{\eta,\varrho}_{j,0}, g^{\eta,\varrho}_{j,1}\}$ from $\{g^{\eta,\varrho}_{j,2}, g^{\eta,\varrho}_{j,3}\}$.
Finally, we note that for each applied generator, there is at least one region presented such that $\{g^{\eta,\varrho}_{j,0}, g^{\eta,\varrho}_{j,2}\}$ are included (excluded) are excluded (included) $\{g^{\eta,\varrho}_{j,1}, g^{\eta,\varrho}_{j,3}\}$.
Consequently, this proves the lemma.
\end{proof}


\subsection{Concluding the ESSP and the SSP for $\sigma_5,\sigma_6$}
Let $\tau\in\sigma\in \{\sigma_5,\sigma_6\}$ and $\test^{\sigma_5}=\free$ and $\test^{\sigma_6}=\used$.
In this section we present explicitly supports showing that the inhibition of the key event at the key state implies the $\tau$-(E)SSP for  $U^\sigma_\varphi$.
The following observation helps us to reduce the number of ESSP atoms that have to be solved explicitly:
If $A$ is a gadget TS installed by $U^\sigma_\varphi$ such that for an event $e\in E(U^\sigma_\varphi )$ holds that $e\not\in E(A)$ then $e$ can be inhibited at all states of $S(A)$ by a $\tau$-region $(S(U^\sigma_\varphi)\setminus S(A), sig)$ where for $e'\in E(U^\sigma_\varphi)$ holds that $sig(e')=\used$ if $e'=e$ and, otherwise, $sig(e')=\nop$.
Hence, for any event $e\in E(U^\sigma_\varphi)$ and any TS $A$ installed by $U^\sigma_\varphi$ we only show $e$ to be inhibitable at a state $s$ of $A$ in question, that is, where $\neg s\edge{e}$, if $e$ actually occurs in $A$, that is, $e\in E(A)$.
To prove for a every event $e$ in a given subset of $E(U^\sigma_\varphi)$ that it is inhibitable at all states $s$ in a given subset of $S(U^\sigma_\varphi)$, we will simply define a support $sup^\sigma \subseteq S(U^\sigma_\varphi)$ that allows an inhibiting region.
Note, that for the sake of simplicity we sometimes present a set $S \subseteq S(U^{\sigma_5}_\varphi)\cup S(U^{\sigma_6}_\varphi)$ such that $sup^{\sigma_5}=S\cap S(U^{\sigma_5}_\varphi)$ and $sup^{\sigma_6}=S\cap S(U^{\sigma_6}_\varphi)$.
Such a presented support $sup^\sigma$ allows always a signature $sig^\sigma$ that for $e\in E(U^\sigma_\varphi)$ is defined by the following rules: 
\[sig^\sigma(e)=
\begin{cases}
\test^\sigma, & \text{if } e\in E \\
\set, & \text{if }  s\edge{e}s'\fbedge{e}s''\in U^\sigma_\varphi \text{ such that } sup(s)\not=sup(s')=sup(s'')=1 \ (*)\\
\swap, & \text{if } e \in E(U^\sigma_\varphi )\setminus E \text{ and } sup(s)\not=sup(s') \text{ and not } * \\
\nop, & \text{if }  e \in E(U^\sigma_\varphi )\setminus E \text{ and } sup(s)=sup(s') \text{ and not } *
\end{cases}
\]
Note, that this approach to define a signature implies that the signature only depends on the given support $sup^\sigma$ and the switch $\sigma$.
Therefore, for the sake of simplicity, in the sequel we often refer to a given support $sup^\sigma$ as to the region $(sup^\sigma,sig^\sigma)$ which it allows and, e.g., say $sup^\sigma$ inhibits $e$ at $s$ instead of $(sup^\sigma,sig^\sigma)$ inhibits $e$ at $s$.

In the following proofs, for $n\in \{0,\dots, m-1\}$ if we have given one of $\alpha_n, \beta_n$ or  $\gamma_n$ as an element of  $\{0,1,2\}$ then the value of the others is assumed to be determined by the following definitions:
\begin{enumerate}
\item given $\alpha_n$: $\beta_n=(\alpha_n+1)\text{mod}3$ and $\gamma_n=(\alpha_n+2)\text{mod}3$,
\item given $\beta_n$: $\gamma_n=(\beta_n+1)\text{mod}3$ and $\alpha_n=(\beta_n+2)\text{mod}3$,
\item given $\gamma_n$: $\alpha_n=(\gamma_n+1)\text{mod}3$ and $\beta_n=(\gamma_n+2)\text{mod}3$.
\end{enumerate}

\begin{lemma}\label{lem:event_k} The key event is inhibitable. \end{lemma}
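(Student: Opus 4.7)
The plan is to emulate the table-based proof of the corresponding lemma for $\sigma_1,\dots,\sigma_4$: I would present a small collection of explicit supports of $U^\sigma_\varphi$, rely on the signature rule stated at the start of this subsection to turn each one into a $\tau$-region, and argue that collectively they inhibit $k$ at every $s\in S(U^\sigma_\varphi)$ where $\neg s\edge{k}$. By the observation opening this subsection, only the gadgets in which $k$ actually occurs need attention, namely $H^\sigma$, $F^\sigma_K$, each translator $T^\sigma_{i,\alpha}$, and each freezer component $\mathcal{B}_j$ (for $\sigma_6$) or $\mathcal{B}'_j$ (for $\sigma_5$).

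First I would import the explicit key region $(sup^\sigma_K,sig^\sigma_K)$ provided by Lemma~\ref{lem:key_union}.\ref{lem:construction_key_union_2}. Its signature has $sig^\sigma_K(k)=\textsf{op}^\sigma_k$, which is $\free$ for $\sigma_5$ and $\used$ for $\sigma_6$, and by construction the support contains every $k$-sink (for $\sigma_5$) or excludes every $k$-source (for $\sigma_6$) in every component of $K^\sigma_m$. Consequently this single region already inhibits $k$ at all states of $H^\sigma$ and $F^\sigma_K$ where $k$ is absent. What remains is to extend the region across $T^\sigma_\varphi$, i.e., to enlarge the support by a set $sup^\sigma_T\subseteq S(T^\sigma_\varphi)$ on which $k$ is correctly tested. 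For every translator $T^\sigma_{i,\alpha}$ I include (for $\sigma_5$) or exclude (for $\sigma_6$) exactly those $t'_{i,\alpha,\ell}$ with $\ell\notin\{0,1\}$, and for the freezer component $\mathcal{B}^{(\prime)}_j$ the analogous $k$-free states; since every non-$k$ transition inside these gadgets is a forward-backward pair, the additional events can receive $\swap$ or $\nop$ without obstruction.

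The main obstacle is to make the combined region globally consistent at the interface $V\cup W\cup Acc$. Lemma~\ref{lem:key_union}.\ref{lem:key_union_2} requires $V\subseteq sig^{-1}(\swap)$, $W\cap sig^{-1}(\swap)=\emptyset$, $Acc\cap sig^{-1}(\swap)=\emptyset$ and $sig(q_2)=\swap\Leftrightarrow sig(q_3)=\swap$, and these demands must agree with the signatures that $sup^\sigma_T$ forces on the border-crossing edges of each $T^\sigma_{i,\alpha}$. A short case distinction on $\alpha\in\{0,1,2\}$, matched against the template in Figure~\ref{fig:translators}.5 and the freezer pictures in Figure~\ref{fig:translators}.3 and~\ref{fig:translators}.4, shows that exactly one of $X_{i,0},X_{i,1},X_{i,2}$ in each translator can be mapped to $\swap$ while the other two as well as all of $x_{i,0},x_{i,1},x_{i,2}$ remain $\nop$, mirroring the bookkeeping in Lemma~\ref{lem:translators}.\ref{lem:translators_3}. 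Once this single extension is verified, $k$ is inhibited at every remaining state of $T^\sigma_\varphi$ simultaneously, closing the proof.
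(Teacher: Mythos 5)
There is a genuine gap, and it is structural: a \emph{single} region extending the key region cannot do the job, which is exactly why the paper's proof uses several independent regions ($R^k_2$, $R^k_3$, $R^k_4$ on top of the key region itself). Concretely, take $\sigma_5$, where the only inhibiting interaction is \free, so any region inhibiting $k$ at $s$ must have $sig(k)=\free$ and $sup(s)=1$ while every $k$-incident state gets support $0$. Your plan keeps the key region on $K^\sigma_m$, hence inherits its interface conditions $V\subseteq sig^{-1}(\swap)$ and $W\cap sig^{-1}(\swap)=Acc\cap sig^{-1}(\swap)=\emptyset$. But then the transition $t'_{i,\alpha,11}\edge{w_{3i+\alpha}}t'_{i,\alpha,1}$ with $sup(t'_{i,\alpha,1})=0$ forces $sup(t'_{i,\alpha,11})=0$ (putting it at $1$ would make $w_{3i+\alpha}$ a \swap), so $k$ cannot be inhibited at $t'_{i,\alpha,11}$ — nor, by Lemma~\ref{lem:translators}.\ref{lem:translators_3}, at roughly half of each translator, since any region compatible with the key-region interface must realize one of the three indicator supports of Figure~\ref{fig:translators}.5. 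Your proposed support (all $t'_{i,\alpha,\ell}$ with $\ell\notin\{0,1\}$) is therefore not a region at all. The paper escapes this by using \emph{fresh} regions $R^k_2$, $R^k_3$ that deliberately violate the key-region interface (they assign \swap\ to the relevant $w$- and $a$-events and choose matching supports in $G^\sigma$ and $D^\sigma$); these only need to inhibit $k$ at the listed target states, not at $s_{key}$.

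The second problem is coverage. The event $k$ also occurs in the duplicator TSs $D_j$ and the generator TSs $G_j$ (Figure~\ref{fig:key_unions}.8 and~\ref{fig:key_unions}.9), which you omit from your list of gadgets needing attention, and your claim that the key region already inhibits $k$ everywhere in $F^\sigma_K$ is false: for $\sigma_5$ the key-region support necessarily excludes $f'_{2,3},f'_{2,4}$ (because $q_0,q_1$ are \swap\ and $f'_{2,2}$ must be at $1$), and likewise excludes $d_{n,5}$ and $g_{n,5}$ (because $a_n$ and $w_n$ are not \swap\ while $d_{n,6}$, $g_{n,6}$ are $k$-sources). The paper's proof names precisely these leftover states in its opening sentence and dedicates $R^k_3$ and $R^k_4$ to them. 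Any correct proof must supply additional regions for $f'_{2,3},f'_{2,4}$, $d_{0,5},\dots,d_{18m-1,5}$, $g_{0,5},\dots,g_{3m-1,5}$ and the uncovered translator states; your proposal provides none.
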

\begin{proof}

In the following for $\sigma\in \{\sigma_5,\sigma_6\}$ let $i\in \{0,\dots, m-1\}$ and $\alpha_i\in \{0,1,2\}$ arbitrary but fixed and $j,\ell\in \{0,\dots, m-1\}\setminus \{i\}$ such that $X_{i,\alpha_i}\in E(T^\sigma_j)\cap E(T^\sigma_\ell)$.
For $n\in \{j,\ell\}$ let $\alpha_n=0$ ($\alpha_n=1$,$\alpha_n=2$) if $X_{n,0}=X_{i,\alpha_i}$ ($X_{n,1}=X_{i,\alpha_i},X_{n,2}=X_{i,\alpha_i}$).

If $\sigma=\sigma_5$ ($\sigma=\sigma_6$) then a key region inhibits $k$ at all states of the key union besides $f'_{2,3},f'_{2,4}$ and $d_{0,5},\dots,d_{18m-1,5}$ and $g_{0,5},\dots, g_{3m-1,5}$.
Furthermore, with respect to $T^\sigma_\varphi$ a key union inhibits $k$ at all states of $F^\sigma_T$ and at $t_{i,\alpha_i,2}$ ($t_{i,\beta_i,11}$).

Hence, by the arbitrariness of $i$ and $\alpha_i$ and the symmetry of the translators, to prove $k$ to be inhibitable in $U^\sigma_\varphi$ it is sufficient to present regions that show the inhibition of $k$ at $f'_{2,3},f'_{2,4}$ and at $t_{i,\alpha_i,3},\dots, t_{i,\alpha_i,11}$ ($t_{i,\beta_i,2},\dots, t_{i,\beta_i,10}$) and at $g_{3i+\alpha_i,5}$ ($g_{3i+\beta_i,5}$) and at $d_{0,5},\dots,d_{18m-1,5}$.
To attack these challenge we define the following sets of states that will help us to compose corresponding supports of $U^\sigma_\varphi$:
\begin{enumerate}
\item
$S^{\sigma_5}_0=\emptyset$, $S^{\sigma_6}_0=\{t'_{n,0,0},t'_{n,0,1}, t'_{n,1,0},t'_{n,1,1}, t'_{n,2,0},t'_{n,2,1}\mid n\in \{i,j,\ell\}\}$
\item
$S^{\sigma_5}_1=S^{\sigma_6}_1=\{t'_{n,\alpha_n,4},\dots, t'_{n,\alpha_n,11}, t'_{n,\alpha_n,12}, t'_{n,\alpha_n,13}, t'_{n,\alpha_n,15},\dots, t'_{n,\alpha_n,20} \mid n\in \{i,j,\ell\}\}$
\item
$S^{\sigma_5}_2=S^{\sigma_6}_2=\{ t'_{n,\beta_n,10}, t'_{n,\beta_n,11}, t'_{n,\beta_n,18}, t'_{n,\beta_n,19} \mid n\in \{i,j,\ell\}\}$
\item
$S^{\sigma_5}_3=S^{\sigma_6}_3=\{t'_{n,\gamma_n,7},\dots, t'_{n,\gamma_n,11}, t'_{n,\gamma_n,15}, t'_{n,\gamma_n,16}, t'_{n,\gamma_n,18},\dots, t'_{n,\gamma_n,20} \mid n\in \{i,j,\ell\}\}$
\item 
$S^{\sigma_5}_4=\{b'_{n,2},b'_{n,3}\mid n\in \{0,\dots, m-1\}\}$ and $S^{\sigma_6}_4=S(F^{\sigma_6}_T)$,
\item
$S^{\sigma_5}_5=\emptyset$ and $S^{\sigma_6}_5=S(H^{\sigma_6})$,
\item
$S^{\sigma_5}_6=\{f'_{1,2}, f'_{2,2}, f'_{2,3}, f'_{2,4}, f'_{2,5}, f'_{2,6}\}$ and $S^{\sigma_6}_6=S(F^{\sigma_6}_K)$,
\item With the index set $I_a=\{18n+6\alpha_n+3, 18n+6\beta_n+5, 18n+6\gamma_n+4\mid n\in \{i,j,\ell\}\}$ corresponding to the affected a-events we have 
\begin{enumerate}
\item
$S^{\sigma_5}_7=\{d_{n,3}, d_{n,4}, d_{n,5}\mid n\in I_a\}$,
\item
$S^{\sigma_5}_8=\{d_{n,2}, d_{n,3}, d_{n,4}\mid n\in \{0,\dots, 18m-1\}\setminus I_a\}$,
\item
$S^{\sigma_6}_7=\{d_{n,0}, d_{n,1}, d_{n,3}, d_{n,4},d_{n,6}, d_{n,7} \mid n\in I_a  \}$,
\item
$S^{\sigma_6}_8=\{d_{n,0}, \dots, d_{n,8}\mid n\in \{0,\dots, 18m-1\}\setminus I_a\}$,
\end{enumerate}
\item With the index set $I_w=\{3n,3n+1,3n+2\mid n\in \{i,j,\ell\}\}$ corresponding to the affected w-events we have 
\begin{enumerate}
\item
$S^{\sigma_5}_9=\{g_{n,3}, g_{n,4}, g_{n,5}\mid n\in I_w\}$
\item
$S^{\sigma_5}_{10}=\{g_{n,2}, g_{n,3}, g_{n,4}\mid n\in \{0,\dots, 3m-1\}\setminus I_w\}$,
\item
$S^{\sigma_6}_9=\{g_{n,0}, g_{n,1}, g_{n,3}, g_{n,4},g_{n,6}, g_{n,7} \mid n\in I_w  \}$, 
\item
$S^{\sigma_6}_{10}=\{g_{n,0}, \dots, g_{n,8}\mid n\in \{0,\dots, 3m-1\}\setminus I_w\}$,
\end{enumerate}
\item
$S^{\sigma_5}_{11} = S(U(T^{\sigma_5}_0,\dots,T^{\sigma_5}_{m-1} ))\setminus \{t'_{n,n',0}, t'_{n,n',1},t'_{n,n',2}, t'_{n,n',5}, t'_{n,n',8}, t'_{n,n',11}\mid n\in \{0,\dots,m-1\},n' \in \{0,1,2\} \} $,
\item
$S^{\sigma_5}_{12}=\{d_{n,3}, d_{n,4}, d_{n,5}\mid n\in \{0,\dots, 18m-1\}\}$, $S^{\sigma_5}_{13}=\{g_{n,2}, g_{n,3}, g_{n,4}\mid n\in \{0,\dots, 3m-1\}\}$,
\item
$S^{\sigma_5}_{11}=\{t'_{n,n',0}, t'_{n,n',1},t'_{n,n',2}, t'_{n,n',5}, t'_{n,n',8}, t'_{n,n',11}\mid  n\in \{0,\dots,m-1\},n' \in \{0,1,2\} \}$,
\item
$S^{\sigma_6}_{12}=\{d_{n,0}, d_{n,1}, d_{n,3}, d_{n,4},d_{n,6}, d_{n,7},\mid n\in \{0,\dots, 18m-1\}\}$ and $S^{\sigma_6}_{13}=S(G^{\sigma_6})$.
\end{enumerate}
Using the just defined sets, the following table shows the inhibition of $k$ at the states in question:
\begin{longtable}{  p{0.3cm}   p{5.25cm}  p{7.25cm}}
R & Support  & Target States \\ \hline
$R^k_2$
&
\raggedright{$S^\sigma_0, S^\sigma_1, S^\sigma_2,S^\sigma_3, S^\sigma_4, S^\sigma_5,S^\sigma_7,S^\sigma_8, S^\sigma_9,S^\sigma_{10}$}
&
\raggedright{if $\sigma=\sigma_5: \{t'_{i,\alpha_i,5}, t'_{i,\alpha_i,8}, t'_{i,\alpha_i,11}, g_{3i+\alpha_i,5}, f'_{2,3}, f'_{2,4}\}$, \newline if $\sigma=\sigma_6: \{t'_{i,\beta_i,2}, t'_{i,\beta_i,5}, t'_{i,\beta_i,8}, g_{3i+\alpha_i,5}\}$}\arraybackslash\\ \hline
$R^k_3$
&
\raggedright{$S^\sigma_4,S^\sigma_5,S^\sigma_6,S^\sigma_{11}, S^\sigma_{12}, S^\sigma_{13}$}
& 
rem. of $T^{\sigma_5}_{i,\alpha_i}$/ $T^{\sigma_6}_{i,\beta_i}$, $\{d_{n,5}\mid n\in \{0,\dots, 18m-1\}\}$\\  \hline
$R^k_4$
&
\raggedright{$S(U^{\sigma_6}_\varphi)\setminus \{f'_{1,2},f'_{2,2}, \dots, f'_{2,6}\}$}
&
for $\sigma=\sigma_6:$ $\{f'_{2,3},f'_{2,3}\}$\\
\end{longtable}
\end{proof}

\begin{lemma}\label{lem:event_v} The events $v_0,\dots, v_{3m-1}$ and $w_0,\dots, w_{3m-1}$ are inhibitable.
\end{lemma}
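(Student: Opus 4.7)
The plan is to mirror the proof strategy of Lemma~\ref{lem:event_k}: for each $j\in\{0,\dots,3m-1\}$ and each target state $s$ with $\neg s\edge{v_j}$ (respectively $\neg s\edge{w_j}$), I will exhibit a subset $sup^\sigma\subseteq S(U^\sigma_\varphi)$ whose canonical signature (as defined at the start of Section~\ref{sec:secondary_proofs} using $\test^{\sigma_5}=\free$, $\test^{\sigma_6}=\used$) forms a $\tau$-region of $U^\sigma_\varphi$ inhibiting the event at $s$. Writing $j=3i+\alpha$, I first note that $v_j$ occurs only at a single transition of the head component $H'_j$ (and not in any other $H'_n$) and at a single transition of the unique translator TS $T^\sigma_{i,\alpha}$; similarly $w_j$ occurs only inside the generator $G_j$ and in $T^\sigma_{i,\alpha}$. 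By the initial observation of the subsection, $v_j$ and $w_j$ are trivially inhibited at the states of all gadget TSs that do not contain them by using the support $S(U^\sigma_\varphi)\setminus S(A)$, so only the states of $H^\sigma, G^\sigma, D^\sigma, F^\sigma_K, F^\sigma_T$ and of $T^\sigma_{i,\alpha}$ itself really need attention.

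Concretely, I would reuse the composite state sets $S^\sigma_0,\dots, S^\sigma_{13}$ introduced in the proof of Lemma~\ref{lem:event_k}, modifying only the translator part: for $v_j$ the trace on $T^\sigma_{i,\alpha}$ contains exactly the states adjacent to the $v_j$-transition (for $\sigma_5$ the sink $t'_{i,\alpha,1}$, for $\sigma_6$ the source side, cf.\ Figure~\ref{fig:translators}.5) together with the unique chain of states linked to it by events that will receive \nop\ or \swap\ under the canonical rule; on the other translators $T^\sigma_{n,\alpha_n}$ with $n\neq i$ we either include the corresponding full $P_{n,\alpha_n}\cup \dot P_{n,\alpha_n}$ initial block or exclude it entirely, so that the $a$-events remain synchronized and the $x_n$-events do not become forced to \swap. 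For $w_j$ the analogous support, symmetric along the path, is constructed. On $K^\sigma_m$, I would add exactly those states of $H$, $F^\sigma_K$, $D^\sigma$ and $G^\sigma$ that keep the single $v_j$-edge (respectively $w_j$-edge) crossing the support while all other $v$- and $w$-events as well as the $m,q_0,q_1,q_2,q_3,z$ and $Acc$ events receive \swap\ or \nop.

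Two complementary regions per event should suffice: one whose support contains the source of the $v_j$- (or $w_j$-) transition, handling inhibition at all states outside; and one whose support contains the sink, handling the remaining atoms at the incident gadget. In both cases the canonical signature rule immediately assigns $\test^\sigma$ to the target event and the remaining events land on legitimate $\tau$-interactions because the only boundary crossings on $V$ are by \swap, the only ones on $W$ are by \nop/\swap\ forbidden to switch polarity more than once (as enforced by $F^\sigma_T$ and $F^\sigma_K$ via Lemma~\ref{lem:key_union}), and the $Acc$-events remain bidirectional inside one half of the support.

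The main obstacle will be the same bookkeeping nuisance encountered in Lemma~\ref{lem:event_k}: verifying, for every non-target event $e$ of $U^\sigma_\varphi$, that the induced border behaviour on the chosen support is uniformly realizable by a single interaction of $\tau$. The delicate points are the translators $T^\sigma_{n,\alpha_n}$ for $n\neq i$ sharing variable events $X_{n,\alpha_n}$ with $T^\sigma_{i,\alpha}$, where one must choose the trace of $sup^\sigma$ case-by-case according to Figure~\ref{fig:possibilities}, and the compatibility with the freezers, which constrains $sig(x_n)\notin\{\swap\}$ and forces $sig(q_2)=\swap \Leftrightarrow sig(q_3)=\swap$. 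Once those constraints are respected, correctness of the inhibition follows directly from the construction rule for $sig^\sigma$, completing the lemma.
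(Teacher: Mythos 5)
Your overall strategy is the same as the paper's: note that $v_j$ and $w_j$ live only in $H'_j$ (resp.\ $G_j$) and in one translator $T^\sigma_{i,\alpha}$, dispose of all foreign gadgets by the $S(U^\sigma_\varphi)\setminus S(A)$ observation, reuse the regions $R^k_2,R^k_3$ from Lemma~\ref{lem:event_k} to cover most of the translator, and then build a few explicit supports for what is left. However, the proposal stops at the plan. For a lemma of this kind the explicit supports \emph{are} the proof: every nontrivial claim lives in checking that a concretely chosen $sup^\sigma$ assigns a legal interaction to every event of $U^\sigma_\varphi$, and you defer exactly that. The paper has to introduce fresh sets $S_0,\dots,S_9$ for this lemma (not merely a "modified translator part" of the $S^\sigma_0,\dots,S^\sigma_{13}$ from Lemma~\ref{lem:event_k}), and the delicate residual states are precisely the ones your sketch does not name: $t'_{i,\alpha,0}$, the single head state $h'_{j,1}$, the generator state $g_{j,1}$, and for $w_j$ additionally $t'_{i,\alpha,5},t'_{i,\alpha,8}$, each of which forces its own tailored support.

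Two concrete points would fail as written. First, "two complementary regions per event should suffice" is false: in every $\tau\in\sigma_5\cup\sigma_6$ the only inhibiting interaction is $\test^\sigma$, so an inhibiting region must place \emph{all} states incident to $v_j$- (resp.\ $w_j$-) transitions on one fixed side of the support; a support "containing the source" but not the sink would give $sig(v_j)=\swap$, which inhibits nothing. Consequently you cannot trade source-side against sink-side regions, and the paper in fact needs three to five distinct supports per event. Second, complementing a support is not a legal move \emph{within} one type: $\{\nop,\set,\swap,\free\}$ contains neither \res\ nor \used, so the complement of a support that uses \set\ or \free\ is generally not a region for the same $\tau$. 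The paper's set complementation $S^{\sigma_6}=S(U^{\sigma_6}_\varphi)\setminus S^{\sigma_5}$ works only \emph{across} the $\sigma_5$/$\sigma_6$ divide, exploiting that the two positions carry the mirrored interactions; it cannot be used, as your sketch suggests, to double the coverage of a single region for one fixed $\tau$.
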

\begin{proof}

If $\sigma=\sigma_5$ ($\sigma=\sigma_6\}$), $i\in \{0,\dots, m-1\}$ and $\alpha_i\in \{0,1,2\}$, then the lemma is justified for $v_0,\dots, v_{3m-1}$  if we show the inhibition of $v_{3i+\alpha_i}$ ($v_{3i+\beta_i}$) in $H'_{3i+\alpha_i}$ ($H'_{3i+\beta_i}$) and $T^{\sigma}_{i,\alpha_i}$ ($T^{\sigma}_{i,\beta_i}$).
For $\sigma=\sigma_5$ ($\sigma=\sigma_6$)  the event $v_{3i+\alpha_i}$ ($v_{3i+\beta_i}$) is, besides of $t'_{i,\alpha_i,0}$ ($t'_{i,\beta_i,0}$), already inhibited in $T^{\sigma_5}_{i,\alpha_i}$ ($T^{\sigma_6}_{i,\beta_i}$) by using the supports $R^k_2,R^k_3$ of Lemma~\ref{lem:event_k}.
Hence, for $v_{3i+\alpha_i}$ ($v_{3i+\beta_i}$) it remains to show the inhibition at $t'_{i,\alpha_i,0}$ ($t'_{i,\beta_i,0}$) and the states in question from $S(H'_{3i+\alpha_i})$ ($S(H'_{3i+\beta_i})$).
To do so, we define the following sets to be used for composing fitting supports:
\begin{enumerate}
\item
$S_0=\{t'_{n,0,0},t'_{n,1,0},t'_{n,2,0},b'_{n,0},b'_{n,5}\mid n\in \{0,\dots,m-1\}\}$
\item
$S_1=\{h'_{n,0},h'_{n,4},h'_{n,5},g_{n,0},g_{n,2},g_{n,3}, g_{n,4},g_{n,7}, g_{n,8}\mid n\in \{0,\dots, 3m-1\}$
\item 
$S_2=\{f'_{0,0},f'_{0,4}, f'_{0,5}, f'_{0,6}, f'_{1,0}, f'_{1,4}, f'_{2,0},  f'_{2,8}\}$
\item 
$S_3=\{d_{n,0},d_{n,7}\mid n\in \{0,\dots, 18,-1\}\}$
\item
$S_4=S_0\cup S_1\cup S_2\cup S_3$ and $S_5=(S(U^{\sigma_6}_\varphi)\setminus S_4)\cup \{b_{n,1},b_{n,2}\mid n\in \{0,\dots,m-1\}\}$
\item
$S_6=\{h'_{n,0}, h'_{n,1}\mid n\in \{0,\dots, m-1\} \}\cup\{f'_{0,2}, f'_{0,3}, f'_{0,4}\}$ and $S_7=S(U^{\sigma_6}_\varphi)\setminus S_6$
\end{enumerate}
For $\sigma_5$ ($\sigma_6$) the set $S_4$ ($S_5$) is a support that allows a signature such that $v_{3i+\alpha_i}$ ($v_{3i+\beta_i}$) is inhibited at $t'_{i,\alpha_i,0}$ ($t'_{i,\beta_i,0}$) and $h'_{3i+\alpha_i,0}, h'_{3i+\alpha_i,4}, h'_{3i+\alpha_i,5}$ ($h'_{3i+\beta_i,0}, h'_{3i+\beta_i,4}, h'_{3i+\beta_i,5}$).
Finally, for $\sigma_5$ ($\sigma_6$) the support $S_6$ ($S_7$) can be used for the inhibition of $v_{3i+\alpha_i}$ ($v_{3i+\beta_i}$) at the last state standing $h'_{3i+\alpha_i,1}$ ($h'_{3i+\beta_i,1}$).

We now argue that $w_{3i+\alpha_i}$ is for $\sigma_5$ and $\sigma_6$ inhibitable in $T_{i,\alpha_i}$ and $G_{3i+\alpha_i}$.
Firstly, we observe that for $\sigma_5$ ($\sigma_6$) the key region inhibits $w_{3i+\alpha_i}$ at $t'_{i,\alpha_i,2}$ ($t'_{i,\alpha_i,11}$).
Secondly, the region $R^k_3$ of Lemma~\ref{lem:event_k} inhibits $w_{3i+\alpha_i}$ at all remaining states of $T^{\sigma_5}_{i,\alpha_i}$ ($T^{\sigma_6}_{i,\alpha_i}$) besides of $t'_{i,\alpha_i,0}, t'_{i,\alpha_i,5}, t'_{i,\alpha_i,8}$.
The inhibition at $t'_{i,\alpha_i,0}$ can be done by the region which is build on the support $S_4$ ($S_5$) defined above.
Thirdly, for the inhibition at $t'_{i,\alpha_i,5}, t'_{i,\alpha_i, 8}$ we define the following sets:
\begin{enumerate}
\item
$S_8=S(T^{\sigma_5}_{i,\alpha_i})\setminus \{t'_{i,\alpha_i,12},\dots,  t'_{i,\alpha_i,20}\}$,
\item
$S_9= \{d_{n,6}, d_{n,7}, d_{n,8}\mid n\in \{18i+6\alpha_i+3, 18i+6\alpha_i+4, 18i+6\alpha_i+5\}\}$ 
\end{enumerate}
Then for $\sigma_5$, respectively for $\sigma_6$, the support $(S(T^{\sigma_6}_{i,\alpha_i})\setminus S_8)\cup S_9$, respectively $S_8\cup S_9\cup S(G_{3i+\alpha_i})$, allows a signature such that $w_{3i+\alpha_i}$ is inhibited at $t'_{i,\alpha_i,5},t'_{i,\alpha_i,8}$.
That is, the inhibition of $w_{3i+\alpha_i}$ in $T_{i,\alpha_i}$ is completed.
We now argue for the states in question of $G_{3i+\alpha_i}$.
The inhibition at $g_{3i+\alpha_i,0}, g_{3i+\alpha_i,2},\dots, g_{3i+\alpha_i,4}, g_{3i+\alpha_i,7}, g_{3i+\alpha_i,8}$ is done for $\sigma_5$ ($\sigma_6$ ) with $S_4$ ($S_5$).
Finally, we can complete the set $\{g_{3i+\alpha_i,0},g_{3i+\alpha_i,1}, g_{3i+\alpha_i,2},g_{3i+\alpha_i,3},\}$, respectively the set $\{g_{3i+\alpha_i,0},g_{3i+\alpha_i,1}, g_{3i+\alpha_i,5},\dots, g_{3i+\alpha_i,8}\}$, to fitting region of $U^{\sigma_5}_\varphi$, respectively $U^{\sigma_6}_\varphi$, to inhibit $w_{3i+\alpha_i}$ at the last state standing: $g_{3i+\alpha_i,1}$.
\end{proof}

\begin{lemma}\label{lem:event_z} The event $z$ is inhibitable \end{lemma}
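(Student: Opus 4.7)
The goal is to inhibit $z$ at every state $s$ with $\neg s\edge{z}$. By the opening remark of this subsection, only those gadgets that actually contain $z$ are relevant, namely $F'_2$, all $D_j$ for $j\in\{0,\dots,18m-1\}$, and all $G_j$ for $j\in\{0,\dots,3m-1\}$. Their $z$-incident states are $\{f'_{2,3},\dots,f'_{2,6}\}$, $\{d_{j,2},d_{j,3},d_{j,4}\}$ and $\{g_{j,2},g_{j,3},g_{j,4}\}$; all remaining states of these gadgets are the inhibition targets.

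The plan is to force $sig(z)=\test^\sigma$ in every constructed region. For $\sigma=\sigma_5$ this is $\free$, so the set
\[S_z=\{f'_{2,3},\dots,f'_{2,6}\}\cup\bigcup_{j=0}^{18m-1}\{d_{j,2},d_{j,3},d_{j,4}\}\cup\bigcup_{j=0}^{3m-1}\{g_{j,2},g_{j,3},g_{j,4}\}\]
must be disjoint from the support; for $\sigma=\sigma_6$, $\test=\used$ and $S_z$ has to lie inside the support. For each target state $s$ I would then adjoin $s$ to the support (for $\sigma_5$) or remove $s$ from the support (for $\sigma_6$) and propagate this choice locally through the containing gadget. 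The schematic signature rule introduced at the start of the subsection then pins down $sig(e)$ for every event $e\ne z$, and correctness is verified arc by arc in direct analogy with the tables of Lemmas~\ref{lem:event_k} and~\ref{lem:event_v}.

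The main obstacle will be keeping these local choices globally consistent across the many gadgets that share events with $D_j$, $G_j$ and $F'_2$. Specifically, $p_j$ and $a_j$ tie $D_j$ to the head $H^\sigma$ and to the translators via the interface set $Acc\cup V$; $y_j$ and $w_j$ tie $G_j$ to $H^\sigma$ and to the translators via $W$; and $q_0,q_1,q_2,q_3$ tie $F'_2$ to $F'_0,F'_1$. The delicate spot is the asymmetric transition $d_{j,2}\edge{z}d_{j,3}$ (and its analogue in $G_j$): once $sig(z)=\test^\sigma$ is fixed, both endpoints have rigid support values, which leaves no slack when trying to reconcile the forced signature of $p_j$ inside $D_j$ with the interface conditions established in Lemmas~\ref{lem:translators} and~\ref{lem:key_union}.

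The crucial observation that should let the plan go through is that those interface conditions already permit $\swap$-signatures for $p_j$, $a_j$, $y_j$ and $w_j$, and these were in fact used in the explicit key region of Lemma~\ref{lem:key_union}. Any local clash in a target-state construction can therefore be absorbed by setting the signature of the offending shared event to $\swap$, whose global compatibility across $H^\sigma$ and the translators was already prepared by the supports employed in Lemma~\ref{lem:event_v}. With this in hand, the ``easy'' targets $d_{j,0},d_{j,5},\dots,d_{j,8}$, $g_{j,0},g_{j,5},\dots,g_{j,8}$ and $f'_{2,0},f'_{2,1},f'_{2,2},f'_{2,7},f'_{2,8},f'_{2,9}$ should be coverable by a single global support per $\sigma$, while the few remaining border states $d_{j,1}$, $g_{j,1}$, and $f'_{2,1},f'_{2,7}$ should require only one additional tailor-made support apiece, completing the inhibition of $z$.
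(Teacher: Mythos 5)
Your overall strategy is the right one and matches the paper's: give $z$ the signature $\test^\sigma$ by keeping all $z$-incident states of $F'_2$, $D_j$, $G_j$ out of (for $\sigma_5$) resp.\ inside (for $\sigma_6$) the support, put the target states on the other side, and let the schematic signature rule do the rest. However, the step that is supposed to make this globally consistent is both missing and wrongly justified. First, the interface conditions of Lemmas~\ref{lem:translators} and~\ref{lem:key_union} are constraints on \emph{key} and \emph{indicator} regions only; a region that inhibits $z$ is under no obligation to satisfy them, so ``reconciling'' with them is a non-issue. Second, your ``crucial observation'' that those conditions permit \swap-signatures for $p_j$, $a_j$, $y_j$ and $w_j$ is false for $a_j$ and $w_j$: Lemma~\ref{lem:key_union}.2 explicitly demands $Acc\cap sig_K^{-1}(\swap)=W\cap sig_K^{-1}(\swap)=\emptyset$, and the explicit key region assigns them \nop. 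So the mechanism you rely on to absorb clashes at the shared events does not exist for exactly the two events that are shared with the translators. (Also, $p_j$ and $y_j$ are private to $D_j$ and $G_j$ for $\sigma_5,\sigma_6$, so they create no cross-gadget obligation in the first place.)

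What actually closes the argument is much more elementary, and you never carry it out: in $D_j$ and $G_j$ the non-$z$-incident states form a single block $\{d_{j,0},d_{j,1},d_{j,5},\dots,d_{j,8}\}$ resp.\ $\{g_{j,0},g_{j,1},g_{j,5},\dots,g_{j,8}\}$ whose boundary transitions carry only the private events $p_j$, $y_j$ (which may take \swap), while $a_j$, $w_j$ and $k$ lie entirely inside the block and hence get \nop, compatible with excluding all translator and head states. In $F'_2$ the boundary events are $q_0$ and $q_3$, which forces $sup(f'_{0,2})\neq sup(f'_{0,3})$ and $sup(f'_{1,2})\neq sup(f'_{1,3})$, i.e.\ the supports on $F'_0$ and $F'_1$ must be chosen to match. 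Doing this yields the paper's single support $S_0\cup S_1\cup S_2$ (resp.\ its complement for $\sigma_6$), which inhibits $z$ at \emph{all} target states at once; no additional tailor-made regions for ``border states'' such as $d_{j,1}$ or $f'_{2,7}$ are needed. Until you exhibit such a support and check the handful of shared events ($k$, $m$, $q_0,\dots,q_3$, $a_j$, $w_j$, $v_j$) against it, the proof is not complete.
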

\begin{proof}
For $\sigma_5$ ($\sigma_6$) the support $S_0\cup S_1\cup S_2$ ($S(U^{\sigma_6}_\varphi)\setminus (S_0\cup S_1\cup S_2)$),where 
\begin{enumerate}
\item
$S_0=\{f'_{0,0}, f'_{0,1}, f'_{0,2}, f'_{1,0}, f'_{1,1}, f'_{1,2}, f'_{2,0}, f'_{2,1}, f'_{2,2}, f'_{2,7}, f'_{2,8}, f'_{2,9}\}$,
\item
$S_1=\{d_{n,0},d_{n,1},d_{n,5},d_{n,6},d_{n,7},d_{n,8}\mid n\in \{0,\dots, 18m-1\}\}$,
\item
$S_2=\{g_{n,0},g_{n,1},g_{n,5},g_{n,6},g_{n,7},g_{n,8}\mid n\in \{0,\dots, 3m-1\}\}$.
\end{enumerate}
allows an inhibiting  region of $U^{\sigma_5}_\varphi$ ($U^{\sigma_6}_\varphi$),
\end{proof}

\begin{lemma}\label{lem:event_p_and_y} The events $p_0,\dots, p_{18m-1}$ and $y_0,\dots,y_{3m-1}$ are inhibitable.
\end{lemma}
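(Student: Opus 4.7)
The opening observation of this section reduces the task considerably: since $p_j$ occurs only in the duplicator $D_j$ and $y_j$ only in the generator $G_j$, it suffices to inhibit $p_j$ at the five states $d_{j,0}, d_{j,3}, d_{j,6}, d_{j,7}, d_{j,8}$ where $p_j$ is absent in $D_j$, and $y_j$ at the analogous states $g_{j,0}, g_{j,3}, g_{j,6}, g_{j,7}, g_{j,8}$ in $G_j$. I will follow the tabular support-construction pattern of Lemmas~\ref{lem:event_v}--\ref{lem:event_z}, producing one family of regions for $\sigma_5$ and another for $\sigma_6$.

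The local strategy inside $D_j$ is dictated by which interactions are available. For $\sigma_5$ the only inhibiting operation in $\tau$ is $\free$, so I set $sig(p_j) = \free$, which forces the four states $d_{j,1}, d_{j,2}, d_{j,4}, d_{j,5}$ \emph{outside} the support and requires $sup(s)=1$ at the target state $s$. For $\sigma_6$ I use $sig(p_j) = \used$, which symmetrically \emph{includes} these four states and demands $sup(s)=0$ at the target. The $p_j$-edges together with the two $z$-edges $d_{j,2}\to d_{j,3}\leftrightarrow d_{j,4}$ partition the five bad states into three zones, so a single region cannot simultaneously handle all of them; I plan a small family of regions, each tailored to one zone, analogous to $R^k_2, R^k_3, R^k_4$ from Lemma~\ref{lem:event_k}. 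For targets adjacent to the central $z$-edges I additionally need $sig(z)=\swap$, and in every case $sig(k)$ must be set to $\swap$ (rather than the key region's $\free$ or $\used$) so that the $k$-transitions $d_{j,0}\leftrightarrow d_{j,1}$ and $d_{j,6}\leftrightarrow d_{j,7}$ are allowed to cross the boundary of the support.

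The events $y_j$ in $G_j$ admit the same treatment verbatim: $G_j$ is structurally identical to $D_j$ with $y_j, w_j$ in the roles of $p_j, a_j$, and the proofs of the preceding lemmas already provided supports (see in particular $S^\sigma_9, S^\sigma_{10}, S^\sigma_{13}$ in Lemma~\ref{lem:event_k}) on which the same construction can be overlaid.

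The main obstacle will not be the local argument inside $D_j$ or $G_j$ but the extension of each local support to a globally valid region of $U^\sigma_\varphi$: once $sig(k)=\swap$ and possibly $sig(z)=\swap$ are fixed, every other gadget (the head, the translators, the freezers, the remaining duplicators and generators, and the connector chain of Lemma~\ref{lem:union_validity}) must be assigned consistent support so that $k$- and $z$-labeled transitions are either entirely inside, entirely outside, or truly cross the boundary. I expect to manage this bookkeeping by reusing the building blocks $S^\sigma_0,\dots,S^\sigma_{13}$ of Lemma~\ref{lem:event_k}, assigning $\nop$ to every event that is not strictly needed, and ensuring that the shared duplicator events $a_{j'}$ and generator events $w_{j'}$ for $j'\neq j$ remain in $sig^{-1}(\nop)\cup sig^{-1}(\swap)$. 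As in the earlier lemmas of this section, the combinatorial work is mechanical but heavy; no new conceptual difficulty arises.
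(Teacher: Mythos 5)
Your overall plan coincides with the paper's: the section's opening observation confines the task to the five states of $D_j$, respectively $G_j$, that are not incident to $p_j$, respectively $y_j$; the inhibited event receives signature $\free$ for $\sigma_5$ and $\used$ for $\sigma_6$; no single region covers all five targets, so a small family is built zone by zone and then extended over the remaining gadgets. This is exactly what the paper does with its supports $S^\sigma_0,\dots,S^\sigma_3$ (handling the targets $d_{j,3}/g_{j,3}$, then $d_{j,0}/g_{j,0}$, then $g_{j,6},g_{j,7},g_{j,8}$ and $d_{j,6},d_{j,7},d_{j,8}$).

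The one step that would fail as stated is the claim that ``in every case $sig(k)$ must be set to $\swap$.'' That assignment is forced only for the zone containing $d_{j,0}$, where $d_{j,0}\fbedge{k}d_{j,1}$ necessarily crosses the border of the support. For the zone $\{d_{j,6},d_{j,7},d_{j,8}\}$ it is contradictory: to inhibit $p_j$ via $\free$ at both $d_{j,6}$ and $d_{j,7}$ (for $\sigma_5$; dually via $\used$ for $\sigma_6$) both states must lie on the same, undefined side of the interaction, so the bidirectional transition $d_{j,6}\fbedge{k}d_{j,7}$ lies entirely on one side of the border while $d_{j,0}\fbedge{k}d_{j,1}$ lies entirely on the other; the only interaction of $\tau$ consistent with both $0\edge{}0$ and $1\edge{}1$ is \nop, which is precisely what the paper's support $S^\sigma_3$ induces. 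For the zone $\{d_{j,3}\}$ the choice $sig(k)=\swap$ is not contradictory but gratuitously imposes the global constraint that \emph{every} $k$-labeled transition of $U^\sigma_\varphi$ crosses the border; the paper sidesteps this by keeping all $k$-incident states on one side so that $sig(k)=\nop$ (its support $S^\sigma_0$ contains no state incident to $k$). Replace the blanket $\swap$-assignment for $k$ by a per-zone choice --- $\swap$ only where a $k$-edge genuinely crosses, $\nop$ otherwise --- and your construction goes through as outlined.
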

\begin{proof}

Let $i\in \{0,\dots, m-1\}$ and  $\alpha_i \in \{0,\dots,2\}$ be arbitrary but fixed.
We need the following sets:
\begin{enumerate}
\item
$S^{\sigma_5}_0=\{f'_{2,4}, f'_{2,5},g_{n,3}, d_{n',3}\mid n\in \{0,\dots, 3m-1\},n'\in \{0,\dots, 18m-1\}\}$, 
\item $S^{\sigma_5}_1=M_0\cup M_1\cup M_2$, where
\begin{enumerate}
\item
$M_0=\{t'_{n,0,0},t'_{n,1,0},t'_{n,2,0},b_{n,0}, b_{n,3},b'_{n,0}, b'_{n,5},h'_{n,0},h'_{n,5}\mid n\in \{0,\dots, m-1\}\}$,
\item
$M_1=\{f'_{0,0}, f'_{0,4}, f'_{0,5}, f'_{0,6}, f'_{1,0}, f'_{1,4}, f'_{2,0}, f'_{2,8}\}$,
\item
$M_2=\{d_{n,0},d_{n,7}, g_{n',0},g_{n',7}\mid n\in \{0,\dots, 18m-1\}, n'\in \{0,\dots, 3m-1\}\}$,
\end{enumerate}

\item 
$S^{\sigma_5}_2=(S(T^{\sigma_5}_{i,\alpha_i})\setminus \{t_{i,\alpha_i,0}, t_{i,\alpha_i,1}\})\cup \{h'_{3i+\alpha_i,3}, h'_{3i+\alpha_i,4},g_{3i+\alpha_i,6}, g_{3i+\alpha_i,7}, g_{3i+\alpha_i,8}\}$,
\item $S^{\sigma_5}_3=N_0\cup N_1$, where 
\begin{enumerate}
\item
$N_0=S(T_{i,\alpha_i})\setminus \{t'_{i,\alpha_i,0}, t'_{i,\alpha_i,1}, t'_{i,\alpha_i,2}, t'_{i,\alpha_i,5}, t'_{i,\alpha_i,8}, t'_{i,\alpha_i,11}\}$,
\item
$N_1=\{d_{n,6},d_{n,7},d_{n,8}\mid n\in \{18i+6\alpha_i,\dots, 18i+6\alpha_i+5\}\}$.
\end{enumerate}
\end{enumerate}
In the sequel for $n\in \{0,\dots,3\}$ let $S^{\sigma_6}_n=S(U^{\sigma_6}_\varphi)\setminus S^{\sigma_5}_n$.
Let $n\in \{0,\dots, 3m-1\}$ and $n'\in \{0,\dots, 18m-1\}$ and $\sigma\in \{\sigma_5,\sigma_6\}$. 
The inhibition of $y_n$  and $p_n$ at $g_{n,3}$ and $d_{n',3}$ is allowed by the support $S^{\sigma}_0$. 
The support $S^{\sigma}_1$ allows the inhibition of $y_n$  at $g_{n,0}$  and $p_n$ and $d_{n',0}$.
Moreover, $y_{3i+\alpha_i}$ can be inhibited at $g_{3i+\alpha_i,6}, \dots, g_{3i+\alpha_i,8}$ by $S^{\sigma}_2$.
Finally, $S^{\sigma}_3$ allows for $n\in \{18i+6\alpha_i,\dots, 18i+6\alpha_i+5\}$ the inhibition of $p_n$ at $d_{n,6},d_{n,7},d_{n,8}$.
By the arbitrariness of $i$ and $\alpha_i$ this proves the lemma.

\end{proof}

\begin{lemma}\label{lem:event_X} The variable events $X_0,\dots,X_{m-1}$ and $x_0,\dots,x_{m-1}$ are inhibitable. \end{lemma}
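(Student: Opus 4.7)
The plan is to proceed in the same spirit as the analogous lemma for $\sigma_1,\dots,\sigma_4$, but adapt the region constructions to the more involved translator TS of Figure~\ref{fig:translators}.5 and to the inhibiting interactions available in $\sigma_5,\sigma_6$. Fix arbitrary $i\in\{0,\dots,m-1\}$ and $\alpha_i\in\{0,1,2\}$; since $\varphi$ is cubic monotone, $X_{i,\alpha_i}$ occurs in exactly two further translators $T^\sigma_j,T^\sigma_\ell$ with $j,\ell\not=i$. For each $n\in\{i,j,\ell\}$ I read off from the label structure of $X_{n,0},X_{n,1},X_{n,2}$ the index $\alpha_n$ for which $X_{n,\alpha_n}=X_{i,\alpha_i}$; this determines the six forward-backward transitions on which $X_{i,\alpha_i}$ sits in $U^\sigma_\varphi$, namely between $t'_{n,\alpha_n,3}\fbedge{}t'_{n,\alpha_n,4}$, $t'_{n,\beta_n,6}\fbedge{}t'_{n,\beta_n,7}$, $t'_{n,\gamma_n,9}\fbedge{}t'_{n,\gamma_n,10}$ (and analogously for $x_{i,\alpha_i}$ on the lower paths). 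As in $\sigma_1,\dots,\sigma_4$, the goal is to exhibit supports that, under the region construction scheme of this section, enforce either the \swap\ signature of $X_{i,\alpha_i}$ (inhibiting it at a state of the chosen support) or the \nop\ signature at complementary states.

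My first step is to assemble three \emph{core} supports, one per target group of states, from pieces analogous to the $S_0,\dots, S_6$ used in Lemma~\ref{lem:event_z} and Lemma~\ref{lem:event_p_and_y}. The first core support isolates, for every $n\in\{i,j,\ell\}$, one endpoint of the three border transitions of $X_{i,\alpha_i}$ in $T^\sigma_n$ (so $X_{i,\alpha_i}$ gets \swap) while choosing for the freezer components $\mathcal{B}_j,\mathcal{B}'_j$ exactly the states that force $sig(x_{i,\alpha_i})\not=\swap$, then extending across the interface via the key-union ingredients: the $W$- and $Acc$-events must take \nop\ and the $V$-events \swap, using the prescriptions from Lemma~\ref{lem:key_union}.\ref{lem:construction_key_union_2}. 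The second core support is the complement inside $U^\sigma_\varphi$, which by the construction rule assigns the mirrored set of signatures and handles the states outside the first support. A third support, built from the pattern $S_8\cup S_9$ of Lemma~\ref{lem:event_k}, handles the few remaining states inside $T^\sigma_{i,\alpha_i}$ between the two path branches.

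The hard part, as in the $\sigma_1,\dots,\sigma_4$ proof, will be the inhibition of $X_{i,\alpha_i}$ at the ``adjacent'' states of its own border transitions in $T^\sigma_{i,\alpha_i}$ (the analogues of $t_{i,\beta_i,2},t_{i,\beta_i,3},t_{i,\gamma_i,2}$), because any chosen support that inhibits there inevitably turns some other variable event $X_{i,\gamma_i}$ into a border-crossing event, and the interaction of $X_{i,\gamma_i}$ with the three clauses in which it occurs has to be controlled as well. I will handle this by a case distinction mirroring Figure~\ref{fig:possibilities}: for each translator $T^\sigma_n$, $n\not=i$, I split according to whether $X_{i,\alpha_i}$, $X_{i,\gamma_i}$ or both occur in $T^\sigma_n$ and, in each of the six combined cases, prescribe exactly which of the sources/sinks of $X_{i,\alpha_i}$ and $X_{i,\gamma_i}$ on the upper and lower path of $T^\sigma_n$ are put into the support. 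The induced border-crossings on $V,W$ and on the accumulators $Acc$ are then absorbed by routinely enlarging the support on $H^\sigma$, $D^\sigma$, $G^\sigma$ and $F^\sigma_K$ according to the scheme of Lemma~\ref{lem:event_v}, using the availability of \swap\ on $V$ and \nop\ on $W\cup Acc$ granted by the key region.

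Finally, the events $x_0,\dots,x_{m-1}$ are obtained for free: each $x_{i,\alpha_i}$ sits on the lower path in a completely symmetric role to $X_{i,\alpha_i}$, the only difference being that \swap\ is forbidden on $x_{i,\alpha_i}$ by the freezer $F^\sigma_T$ (cf.\ $\mathcal{B}_j,\mathcal{B}'_j$). For the states where the symmetric argument directly produces a \swap-inhibiting region for $X_{i,\alpha_i}$, the same support yields a $\set$- or \textsf{test}-inhibiting region for $x_{i,\alpha_i}$ under the modified rule (*) of the region-construction scheme of this subsection, so $x_{i,\alpha_i}$ is inhibited at precisely the same target states. For the residual states $g^{\_,x}_{j,0}$ and $g^{x,\_}_{j,2}$ of the freezer generators, I will supplement the argument by the explicit freezer-compatible supports used in Lemma~\ref{lem:event_z}, which is unproblematic because those states are completely decoupled from the translator paths. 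Summing over all $i$ and $\alpha_i$ then yields the lemma.
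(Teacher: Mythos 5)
Your plan rests on a premise that fails for the types handled by $\sigma_5$ and $\sigma_6$: you propose to inhibit $X_{i,\alpha_i}$ by forcing a \swap\ signature ``inhibiting it at a state of the chosen support''. But \swap, like \nop, \set\ and \res, is a \emph{total} function on $\{0,1\}$ (cf.\ Figure~\ref{fig:interactions}), so a \swap-labelled event is never inhibited anywhere. Every $\tau\in\sigma_5\cup\sigma_6$ is a subset of $\{\nop,\set,\res,\swap,\used,\free\}$, so the only available interactions that can solve an ESSP atom are \used\ and \free\ --- which is exactly why the signature scheme of this subsection reserves $\test^\sigma\in\{\used,\free\}$ for the events in $E$. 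A \used- (resp.\ \free-) signature forces \emph{all} states incident to $X_{i,\alpha_i}$ into (resp.\ out of) the support and then inhibits the event at every state on the other side in one stroke. This makes the ``hard part'' of your plan --- the case analysis \`a la Figure~\ref{fig:possibilities} over how $X_{i,\alpha_i}$ and $X_{i,\gamma_i}$ co-occur in other translators --- both unnecessary and unworkable here: unnecessary because the single \used/\free\ region already covers the ``adjacent'' states $t'_{n,\cdot,2}, t'_{n,\cdot,5},\dots$ that you single out, and unworkable because the regions it would produce assign only total interactions and hence inhibit nothing. (That case analysis belongs to the $\sigma_1,\dots,\sigma_4$ proof, where inhibition is done with \inp\ and border-crossing behaviour is what matters.)

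The intended argument is correspondingly short: take as support exactly the endpoints of the nine $X_{i,\alpha_i}$-labelled forward-backward transitions across the three translators in which the variable occurs, add the matching states $d_{\cdot,6},d_{\cdot,7}$ in the affected $D$-gadgets so that the neighbouring $a$-events can consistently take \swap\ there, and give $X_{i,\alpha_i}$ the signature \used\ (for $\sigma_6$; for $\sigma_5$ complement the support and use \free). Two further slips in your write-up: $X_{i,\alpha_i}=X_{n,\alpha_n}$ sits at positions $9,10$ of the copy $T^\sigma_{n,\beta_n}$ and at positions $6,7$ of $T^\sigma_{n,\gamma_n}$ (you have these two interchanged), and the residual occurrences of $x_{i,\alpha_i}$ that need separate treatment live in the freezers $\mathcal{B}_n$ and $\mathcal{B}'_n$ of Figure~\ref{fig:translators}.3--4, not in generator states $g^{\_,x}_{j,0}$, $g^{x,\_}_{j,2}$, which exist only for $\sigma_3,\sigma_4$; likewise a ``$\set$-inhibiting region'' for $x_{i,\alpha_i}$ cannot exist, since \set\ is total as well.
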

\begin{proof}
Let $\sigma\in \{\sigma_5,\sigma_6\}$, $i\in \{0,\dots, m-1\}$, $\alpha_i\in \{0,\dots, 2\}$ and $j,\ell\in \{0,\dots, m-1\}\setminus \{i\}$ such that $X_{i,\alpha_i}\in E(T^\sigma_j)\cap E(T^\sigma_\ell)$. 
For $n\in \{j,\ell\}$ let $\alpha_n=0$ ($\alpha_n=1$, $\alpha_n=2$) if $X_{n,0}=X_{i,\alpha_i}$ ($X_{n,1}=X_{i,\alpha_i}$, $X_{n,2}=X_{i,\alpha_i}$).
Using the following sets:
\begin{enumerate}
\item
$S_0=\{t'_{n,\alpha_n,3}, t'_{n,\alpha_n,4}, t'_{n,\beta_n,9}, t'_{n,\beta_n,10}, t'_{n,\gamma_n,6}, t'_{n,\gamma_n,7}\mid n\in \{i,j,\ell\}\}$
\item
$S_1=\{d_{18n+6\alpha_n+3,n'}, d_{18n+6\beta_n+2,n'} , d_{18n+6\gamma_n+1,n'}, \mid n\in \{i,j,\ell\}, n'\in \{6,7\}\}$
\end{enumerate}
we have that $S^{\sigma_5}_0=S_0\cup S_1$ ($S(U^{\sigma_6}_\varphi)\setminus S^{\sigma_5}_0$) is a support that allows the inhibition of $X_{i,\alpha_i}$ in $T^{\sigma_5}_\varphi$ ($T^{\sigma_6}_\varphi$) and, therefore, in $U^{\sigma_5}_\varphi$ ($U^{\sigma_6}_\varphi$).
Similarly, we obtain the inhibition of $x_{i,\alpha_i}$ in $T^{\sigma}_\varphi$.
Finally, the set $\{b'_{n,0}, b'_{n,0}, b'_{n,4}, b'_{n,5}, b'_{n,6}\}$ ($\{b_{n,1},b_{n,2}\}$) can be enhanced to a support of $U^{\sigma_5}_\varphi$ ($U^{\sigma_6}_\varphi$) allowing the inhibition of $x_n=x_{i,\alpha_i}$ in $B'_n$ ($B_n$).

\end{proof}

\begin{lemma}\label{lem:event_m} The event $m$ is inhibitable. \end{lemma}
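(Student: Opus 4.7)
The plan is to first observe that the event $m$ occurs only inside the head TSs $H'_0, \dots, H'_{3m-1}$ and inside $F'_0$ of the key-union freezer. By the observation stated at the beginning of Section~\ref{sec:secondary_proofs}, $m$ is inhibitable at every state of every other gadget $A$ via the region $(S(U^\sigma_\varphi) \setminus S(A), sig)$ that assigns $\test^\sigma$ to $m$ and $\nop$ to every other event. Hence only the states $h'_{j,0}, h'_{j,3}, h'_{j,4}, h'_{j,5}$ of $H'_j$ (for $j \in \{0, \dots, 3m-1\}$) and $f'_{0,0}, f'_{0,3}, f'_{0,6}, f'_{0,7}, f'_{0,8}$ of $F'_0$ remain as nontrivial targets.

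I would exhibit two supports per switch position, in the style of Lemmas~\ref{lem:event_k}--\ref{lem:event_X}. For $\sigma = \sigma_6$ (where $\test^{\sigma_6} = \used$), the first support places $h'_{j,1}, h'_{j,2}, h'_{j,3}$ and $f'_{0,1}, \dots, f'_{0,5}$ inside while leaving $h'_{j,0}, h'_{j,4}, h'_{j,5}$ and $f'_{0,0}, f'_{0,6}, f'_{0,7}, f'_{0,8}$ outside; with $sig(m) = \used$, $sig(k) = \swap$ on the outer $k$-arcs of $H'_j$ and $F'_0$, and $sig(v_j) = sig(q_0) = sig(q_1) = \nop$, this inhibits $m$ at all the listed target states except $h'_{j,3}$ and $f'_{0,3}$. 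The second support keeps $h'_{j,1}, h'_{j,2}, f'_{0,1}, f'_{0,2}, f'_{0,4}, f'_{0,5}$ inside but excludes $h'_{j,3}$ and $f'_{0,3}$, using $sig(v_j) = sig(q_0) = \swap$; this takes care of the two remaining atoms. For $\sigma = \sigma_5$, where only $\test^{\sigma_5} = \free$ is available, I use the complementary supports with $sig(m) = \free$, which by the $\used$/$\free$-symmetry observed throughout this section yields equivalent inhibitions.

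Each local prescription must be extended to a $\tau$-region of the entire union $U^\sigma_\varphi$. The recipe is essentially that of the key region of Lemma~\ref{lem:key_union}, but with the roles of $k$ and $m$ interchanged in the head: where the key region is forced to have $sig_K(k) = \test^\sigma$ and $\swap$ on $V \cup \{q_2, q_3\}$, the new regions take $sig(m) = \test^\sigma$ and $sig(k) = \swap$, while other events absorb $\nop$ or $\swap$ in a compatible way. The remaining components--translators $T^\sigma_{i,\alpha}$, duplicators $D_j$, generators $G_j$, the variable-block freezers $B_j$ or $B'_j$, and the residual freezers $F'_1, F'_2$--are then filled by the analogue of the sets $S_0, \dots, S_k$ used in Lemmas~\ref{lem:event_k}--\ref{lem:event_X}.

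The main obstacle is the interface synchronization with $sig(k) = \swap$. The event $k$ propagates through almost every gadget of the construction, and the chains of $\fbedge{k}$-arcs in $F'_2$, $B_j$ (resp.~$B'_j$), $D_j$, $G_j$, and the translators severely constrain which neighbouring support choices remain admissible. I expect the verification to be routine but bookkeeping-heavy: one checks arc by arc that the translator paths $P_{i,\alpha}, \dot P_{i,\alpha}$ and the $B_j$/$B'_j$ chains admit the modified signature, using that each such gadget already tolerated $\swap$ on all its $k$-surrounding events during the construction of the key region. No new structural obstruction arises beyond those already resolved in Lemma~\ref{lem:key_union}.
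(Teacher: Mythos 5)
Your overall plan is the paper's: $m$ occurs only in the $H'_j$ and in $F'_0$, all other gadgets are handled by the blanket observation at the head of the section, the remaining atoms are solved by a few supports with $sig(m)=\test^\sigma$, and $\sigma_5$/$\sigma_6$ are exchanged by complementation. The execution fails at your first support, though. For $\sigma_6$ you place $f'_{0,1},\dots,f'_{0,5}$ inside and $f'_{0,0},f'_{0,6},f'_{0,7},f'_{0,8}$ outside. The three $k$-arcs of $F'_0$ then realize three different patterns: $f'_{0,0}\fbedge{k}f'_{0,1}$ crosses the border, which on a forward-backward arc forces $sig(k)=\swap$; $f'_{0,3}\fbedge{k}f'_{0,4}$ lies entirely inside, forcing $sig(k)\in\keepo$; and $f'_{0,6}\fbedge{k}f'_{0,7}$ lies entirely outside, forcing $sig(k)\in\keepze$. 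These constraints are jointly unsatisfiable, so the support admits no signature at all and is not a region; its complement, which you propose for $\sigma_5$, inherits the same defect. (The $H'_j$ part of your support is fine --- the problem is confined to $F'_0$.) Independently, your stated signature contradicts your stated support: $f'_{0,5}$ is inside and $f'_{0,6}$ outside, so the forward-backward arc $f'_{0,5}\fbedge{q_1}f'_{0,6}$ cannot carry $sig(q_1)=\nop$.

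The repair is what the paper's supports $S_0,S_1$ (and their complements for $\sigma_6$) implement: take $f'_{0,3}$ \emph{outside} and $f'_{0,7}$ \emph{inside}, so that every $k$-arc of $H'_j$, $F'_0$, $F'_1$, $F'_2$ is border-crossing and $sig(k)=\swap$ is globally consistent, with $q_0$ and $q_1$ receiving $\swap$ rather than $\nop$ (which is also what keeps $F'_2$ consistent, where $q_0,q_1$ reoccur). This corrected first region then already inhibits $m$ at $f'_{0,3}$, and the states that remain uncovered are $f'_{0,7}$ and $h'_{j,3}$, not your $f'_{0,3}$ and $h'_{j,3}$ --- so your accounting of the leftover atoms is also off, and an additional region toggling the membership of $f'_{0,6},f'_{0,7}$ is needed besides the one excluding $h'_{j,3}$. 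The remaining parts of your plan (extending the local supports to the whole union using that $k$ is the only border-crossing event leaving $U(F^\sigma_K,H^\sigma)$, and that a $\swap$ signature for $k$ is tolerated by every gadget) are sound and match the paper.
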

\begin{proof}
For $\sigma\in \{\sigma_5,\sigma_6\}$ the event $m$ occurs only in $U(F^\sigma_K,H^\sigma)$.
Using the sets
\begin{enumerate}
\item
$S_0=\{f'_{0,0}, f'_{0,3}, f'_{0,6},, f'_{0,8},f'_{1,0},f'_{1,4}, f'_{2,1}, f'_{2,2}, f'_{2,5},f'_{2,6}, f'_{2,7}\}$
\item
$S_1=\{f'_{0,0}, f'_{0,3}, f'_{0,7},f'_{1,0},f'_{1,4}, f'_{2,1}, f'_{2,2}, f'_{2,8}, f'_{2,9}\}$
\item
$S_2=\{h'_{n,0}, h'_{n,4}, h'_{n,5}\mid n\in \{0,\dots, 3m-1\}\}$
\end{enumerate}
we have $S_0\cup S_2$, respectively $S_1\cup S_2$, as supports of $U(F^{\sigma_5}_K,H^{\sigma_5})$ that, altogether, allow the inhibition of $m$ at all states of $U(F^{\sigma_5}_K,H^{\sigma_5})$ in question, besides of $h'_{0,3},\dots h'_{3m-1,3}$.
Moreover, $k$ is the only event that $S_0\cup S_2$ and $S_1\cup S_2$ require to be border crossing that occurs in TSs of $U^{\sigma_5}_\varphi \setminus U(F^{\sigma_5}_K,H^{\sigma_5})$.
Hence, $S_0\cup S_2$ and $S_1\cup S_2$ are enhanceable to fitting supports of $U^{\sigma_5}_\varphi$.
By complementation of $S_0\cup S_2$ and $S_1\cup S_2$ in $U(F^{\sigma_6}_K,H^{\sigma_6})$, similarly, we obtain corresponding supports of $U^{\sigma_6}_\varphi$.
Finally, $S_3=$
\[
\{h'_{n,3}, h'_{n,4}, g_{n,3}, g_{n,4}\mid n\in \{0,\dots, 3m-1\}\}\cup \{t'_{n,n',0}, t'_{n,n',1}\mid n\in \{0,\dots, m-1\}, n'\in \{0,1,2\}\}
\]
respectively $S(U^{\sigma_6}_\varphi)\setminus S_3$, is a support of $U^{\sigma_5}_\varphi$, respectively $U^{\sigma_6}_\varphi$, inhibiting $m$ at $h'_{0,3},\dots h'_{3m-1,3}$.
\end{proof}

\begin{lemma}\label{lem:event_q_0} The event $q_0$ is inhibitable. \end{lemma}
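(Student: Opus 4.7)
The plan is to proceed exactly as in Lemmas~\ref{lem:event_v}--\ref{lem:event_m}: we specify, for each state $s$ where $\neg s \edge{q_0}$, a subset $S \subseteq S(U^{\sigma_5}_\varphi)\cup S(U^{\sigma_6}_\varphi)$ whose intersection with $S(U^\sigma_\varphi)$ is a support inhibiting $q_0$ at $s$ according to the default signature scheme fixed at the start of this subsection. Because $q_0$ occurs only in the freezer TSs $F'_0$ (Figure~\ref{fig:key_unions}.5) and $F'_2$ (Figure~\ref{fig:key_unions}.7), the blanket observation at the top of the subsection handles all gadget TSs of $U^\sigma_\varphi$ in which $q_0$ does not occur: the support $S(U^\sigma_\varphi)\setminus S(A)$ already inhibits $q_0$ at every state of such a TS $A$. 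Thus only the states $f'_{0,0},f'_{0,1},f'_{0,4},\dots,f'_{0,8}$ and $f'_{2,0},f'_{2,1},f'_{2,4},\dots,f'_{2,9}$ need individual attention.

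First I would inhibit $q_0$ at the $F'_0$-states. The idea for $\sigma_5$ is to pick a support that excludes the two source/sink states $f'_{0,2},f'_{0,3}$ (so $q_0$ automatically receives $\textsf{free}$) while including the target state, and compatibly choose the rest of $U^{\sigma_5}_\varphi$: include one endpoint of each \emph{k}-labelled arc in the generators and in $H'$, include $f'_{2,2},f'_{2,3}$ to keep $q_0$ consistent on $F'_2$, and include the appropriate halves of $D_j,G_j,B'_j$. For $\sigma_6$ one simply complements this choice in each TS, so that $f'_{0,2},f'_{0,3}$ and $f'_{2,2},f'_{2,3}$ now lie inside the support, forcing $\textsf{used}$. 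A single such pair of sets handles $f'_{0,0},f'_{0,1},f'_{0,4},\dots,f'_{0,8}$ simultaneously, since every one of these states differs from $\{f'_{0,2},f'_{0,3}\}$ in support value under either choice.

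Next I would handle the $F'_2$-states by a second, analogous pair of supports. Here the delicate point is that $F'_2$ contains the events $q_1,q_2,q_3$ and $z$, all of which must be assigned a signature that agrees with the support chosen on $F'_1,F'_0,H^\sigma,D^\sigma$, and $G^\sigma$. Following Lemma~\ref{lem:key_union}, I would take the support so that $q_2$ and $q_3$ receive \swap\ together (preserving $sup(f'_{2,1})=sup(f'_{2,7})$), $q_1$ receives \swap\ consistently with $F'_0$ and $H'$, and $z$ receives $\nop$ or $\swap$ consistently with the generator and duplicator halves chosen above. The same two sets, suitably extended in $F'_1,H^\sigma,D^\sigma,G^\sigma$ and $T^\sigma_\varphi$, inhibit $q_0$ at all the remaining $F'_2$-states except possibly $f'_{2,4},f'_{2,5},f'_{2,6}$.

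The main obstacle is precisely those three states: since the transitions $f'_{2,3}\fbedge{z}f'_{2,4}\fbedge{q_1}f'_{2,5}\fbedge{z}f'_{2,6}$ sit between the $q_0$- and $q_3$-arcs, any support that includes $\{f'_{2,2},f'_{2,3}\}$ and excludes $f'_{2,4}$ (say) must be compatible with $sig(z)\in\{\swap,\nop\}$ \emph{and} with the signature of $q_1$ forced elsewhere. I would resolve this by picking the ``middle block'' $\{f'_{2,3},f'_{2,4},f'_{2,5},f'_{2,6}\}$ for $\sigma_5$, complemented for $\sigma_6$, which makes $z$ border-crossing at both $z$-arcs (hence $\swap$) and keeps $q_2,q_3$ as \swap\ by also including $f'_{2,2}$ and $f'_{2,6}$ asymmetrically; the resulting support extends through the rest of $U^\sigma_\varphi$ using the halves of the generators and duplicators that agree on the $z$ and $q_1$ signatures. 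Once these three cases are settled, all ESSP atoms for $q_0$ are solved.
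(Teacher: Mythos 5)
Your overall framing is right—$q_0$ occurs only in $F'_0$ and $F'_2$, the blanket observation disposes of all other gadget TSs, the inhibiting mechanism is $sig(q_0)=\free$ (for $\sigma_5$) resp.\ $\used$ (for $\sigma_6$) with the target state on the appropriate side, and $\sigma_6$ is obtained by complementation. But two of your concrete constructions fail. First, the claim that a \emph{single} support handles $f'_{0,0},f'_{0,1},f'_{0,4},\dots,f'_{0,8}$ simultaneously is false: such a support would have to contain $f'_{0,1}$ and $f'_{0,4},f'_{0,5}$ while excluding $f'_{0,2},f'_{0,3}$, so the arc $f'_{0,1}\fbedge{m}f'_{0,2}$ forces $sig(m)=\swap$ while $f'_{0,4}\fbedge{m}f'_{0,5}$ forces $sig(m)(1)=1$ — a contradiction (the same clash occurs for $k$ on $f'_{0,3}\fbedge{k}f'_{0,4}$ versus $f'_{0,0}\fbedge{k}f'_{0,1}$). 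Your criterion "the target state differs from $\{f'_{0,2},f'_{0,3}\}$ in support value" only checks the inhibition condition for $q_0$ itself; it does not check that the support extends to a valid region, which is the entire difficulty here. The paper consequently needs \emph{three} regions: one covering $f'_{0,0},f'_{0,1}$ together with most $F'_2$-states, one covering $f'_{0,5},\dots,f'_{0,8}$, and a third for the stragglers $f'_{0,4}$ and $f'_{2,4}$.

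Second, the "middle block" $\{f'_{2,3},f'_{2,4},f'_{2,5},f'_{2,6}\}$ cannot work at all: $f'_{2,3}$ is an endpoint of the arc $f'_{2,2}\fbedge{q_0}f'_{2,3}$, so including $f'_{2,3}$ while excluding $f'_{2,2}$ forces $sig(q_0)=\swap$, which is a total function and therefore inhibits $q_0$ nowhere. Any region used in this lemma must keep \emph{both} endpoints of \emph{both} $q_0$-arcs on the same side of the support (outside for $\free$, inside for $\used$). The paper resolves the remaining states $f'_{0,4},f'_{2,4}$ with a support ($S_4$ in its notation) that excludes all four $q_0$-endpoints, makes $z$ and $k$ uniformly \swap, keeps $q_1$ on one side, and is then extended through $D^\sigma$, $G^\sigma$, $H^\sigma$ and $T^\sigma_\varphi$ so that $k$ is the only border-crossing event outside the freezer. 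You would need to replace both of your failing constructions with region families of this kind before the argument goes through.
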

\begin{proof}
The needed sets are:
\begin{enumerate}
\item 
$S_0=\{b'_{n,2}, b'_{n,3}\mid n\in \{0,\dots, m-1\}\}$
\item
$S_1=\{h'_{n,0}, h'_{n,1}\mid n\in \{0,\dots, 3m-1\}\}$
\item
$S_2=\{f'_{0,0}, f'_{0,1}, f'_{0,5}, f'_{1,2}, f'_{2,0}, f'_{2,1}, f'_{2,5},f'_{2,6}, f'_{2,7}, f'_{2,8}, f'_{2,9}\}$ 
\item
$S_3=\{f'_{0,0}, f'_{0,1}, f'_{0,5}, f'_{0,6}, f'_{0,7}, f'_{0,8}, f'_{1,2}, f'_{2,0}, f'_{2,1}\}$ 
\item
$S_4=\{f'_{0,0}, f'_{0,4}, f'_{0,5}, f'_{0,6}, f'_{1,0}, f'_{1,4}, f'_{2,0}, f'_{2,4}, f'_{2,5}, f'_{2,8}\}$ 
\item 
$S_5=\{d_{n,0}, d_{n,3}, d_{n,7}\mid n\in \{0,\dots, 18m-1\}\}$
\item 
$S_6=\{g_{n,0}, g_{n,3}, g_{n,7}\mid n\in \{0,\dots, 3m-1\}\}$
\end{enumerate}
For $\sigma_5$ ($\sigma_6$) the support $S_0\cup S_1\cup S_2$ ($S(U^{\sigma_6}_\varphi)\setminus (S_0\cup S_1\cup S_2)$) allows the inhibition of $q_0$ at $f'_{0,0}, f'_{0,1}, f'_{2,0}, f'_{2,1} , f'_{2,5},f'_{2,6}, f'_{2,7}, f'_{2,8}, f'_{2,9}$.
The same does  $S_0\cup S_1\cup S_3$ ($S(U^{\sigma_6}_\varphi)\setminus (S_0\cup S_1\cup S_3)$) for $f'_{0,5}, f'_{0,6}, f'_{0,7}, f'_{0,8}$.
The remaining states are $f'_{0,4}, f'_{2,4}$.
For $\sigma_5$ the set $S_4\cup S_5\cup S_6$ is a support of $U(F^{\sigma_5}_K,D^{\sigma_5}, G^{\sigma_5})$, firstly, allowing the inhibition of $q_0$ at $f'_{0,4}, f'_{2,4}$ and, secondly, assuring that $k$ is the only border crossing event in $U^{\sigma_5}_\varphi\setminus U(F^{\sigma_5}_K,D^{\sigma_5}, G^{\sigma_5})$.
Hence, $S_4\cup S_5\cup S_6$ can be enhanced to a fitting support of $U^{\sigma_5}_\varphi$.
Finally, again by set complementation, we obtain a corresponding support for  $U^{\sigma_6}_\varphi$.
\end{proof}

\begin{lemma}\label{lem:event_q_1} The event $q_1$ is inhibitable. \end{lemma}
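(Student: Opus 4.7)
The plan is to mirror the structure of Lemma~\ref{lem:event_q_0} and exploit the symmetry of the gadget $F'_0$ with respect to $q_0$ and $q_1$. Observe that $q_1$ occurs only in the freezer TS $F'_0$, specifically on the $\fbedge{q_1}$-arc between $f'_{0,5}$ and $f'_{0,6}$. By the preliminary observation at the beginning of Subsection~8.2, for every other gadget TS $A$ installed in $U^\sigma_\varphi$ we have $q_1 \notin E(A)$, hence $q_1$ is inhibited at every state of $S(A)$ by the trivial $\tau$-region supported on $S(U^\sigma_\varphi) \setminus S(A)$ with $sig(q_1) = \textsf{test}^\sigma$. Consequently, the task reduces to inhibiting $q_1$ at the remaining states $f'_{0,0}, f'_{0,1}, f'_{0,2}, f'_{0,3}, f'_{0,4}, f'_{0,7}, f'_{0,8}$ of $F'_0$.

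For these seven states I would reuse the sets $S_0, S_1, S_4, S_5, S_6$ from the proof of Lemma~\ref{lem:event_q_0} essentially unchanged, and redefine the freezer-local sets by shifting the role of $q_0$ to $q_1$ along $F'_0$. Concretely, I would set up two new supports, say $S'_2$ and $S'_3$, obtained from $S_2$ and $S_3$ by transposing the inclusion pattern of $F'_0$ so that the $\fbedge{q_1}$-arc becomes the unique border-crossing arc inside $F'_0$, while keeping the $F'_1$- and $F'_2$-fragments of the support identical to those of the $q_0$-case (so that $q_2, q_3, z$ remain non-\swap, in agreement with the interface conditions of Lemma~\ref{lem:key_union}.\ref{lem:key_union_2}). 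Complementing in $U^{\sigma_6}_\varphi$ again yields the corresponding supports for the $\sigma_6$-case. These two supports together already inhibit $q_1$ at $\{f'_{0,0},f'_{0,1},f'_{0,2},f'_{0,3}, f'_{0,7},f'_{0,8}\}$.

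The remaining two problematic states are $f'_{0,4}$ and, dually, $f'_{2,4}$, exactly as in the $q_0$-proof. For these, I would transport the support $S_4 \cup S_5 \cup S_6$ from Lemma~\ref{lem:event_q_0} after swapping the roles of the two $q_j$-arcs within $F'_0$: because $F'_0$ is symmetric with respect to the middle state $f'_{0,4}$, the same construction inhibits $q_1$ at $f'_{0,4}$ and $f'_{2,4}$, and again $k$ remains the only event that is border-crossing with respect to TSs outside $U(F^{\sigma_5}_K, D^{\sigma_5}, G^{\sigma_5})$. Therefore the constructed support extends to a valid $\tau$-region of all of $U^{\sigma_5}_\varphi$ (and symmetrically of $U^{\sigma_6}_\varphi$ by complementation).

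The main obstacle I expect is verifying that the modified freezer-local support $S'_2$ (resp.\ $S'_3$) is still compatible with the rest of $U^\sigma_\varphi$, in particular that no auxiliary event attached to $F'_0$ is forced into an interaction incompatible with $\tau$. Since $F'_0$ shares only $k, m, q_0, q_1$ with the rest of the construction, and $m$ together with $q_0$ can be consistently assigned \swap\ whenever $q_1$ is chosen as \swap\ on the $F'_0$-arc, the compatibility check is routine; the only slightly delicate point is ensuring the parity of included states along $f'_{0,0}, \dots, f'_{0,8}$ so that all forward-backward arcs receive a signature in $\{\nop, \swap, \textsf{test}^\sigma\}$. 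This can be done by direct inspection of Figure~\ref{fig:key_unions}.5 in complete analogy with the $q_0$-argument.
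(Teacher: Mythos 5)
There is a genuine gap, and it stems from a factual error at the very start: $q_1$ does \emph{not} occur only in $F'_0$. Inspecting Figure~\ref{fig:key_unions}.7, the gadget $F'_2$ contains the forward--backward arc $f'_{2,4}\fbedge{q_1}f'_{2,5}$, so $q_1 \in E(F'_2)$ as well. This single oversight breaks your argument in three places. First, the trivial-region observation ($sup = S(U^\sigma_\varphi)\setminus S(A)$ with $sig(q_1)=\test^\sigma$) does not apply to $A = F'_2$, so you still owe inhibiting regions for $q_1$ at the eight states $S(F'_2)\setminus\{f'_{2,4},f'_{2,5}\}$; the paper handles these explicitly by putting $S(F'_2)\setminus\{f'_{2,4},f'_{2,5}\}$ into the support of its first region. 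Second, your plan to keep the $F'_2$-fragments of the $q_0$-supports unchanged fails: the $q_0$-support $S_2$ of Lemma~\ref{lem:event_q_0} contains $f'_{2,5}$ but not $f'_{2,4}$, which makes the arc $f'_{2,4}\fbedge{q_1}f'_{2,5}$ border-crossing and forces $sig(q_1)=\swap$ --- incompatible with the $\test^\sigma$ signature you need for inhibition. The paper's $q_1$-supports $S_2$ and $S_3$ use different $F'_2$-fragments ($\{f'_{2,0},f'_{2,8}\}$ and $\{f'_{2,1},f'_{2,2},f'_{2,8}\}$) precisely to keep both endpoints of that arc on the same side.

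Third, your identification of the leftover states is a carry-over from the $q_0$-proof that does not transfer: $f'_{2,4}$ is a source of $q_1$, so no inhibition is required there at all, and $f'_{0,4}$ is already covered by the paper's first region (it lies in $S_1$). The states that actually remain after the first region are $f'_{0,0}, f'_{0,1}, f'_{0,7}, f'_{0,8}$, i.e., the outer states of $F'_0$ on the far side of the two $k$-arcs, and these require the two dedicated supports $S_2, S_3$. The high-level strategy --- exploit the $F'_0$-symmetry between $q_0$ and $q_1$, then complement for $\sigma_6$ --- is sound, but the proof cannot be completed without re-deriving the $F'_2$- and $F'_0$-fragments around the second $q_1$-occurrence rather than transposing the $q_0$-supports wholesale.
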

\begin{proof}
We need
\begin{enumerate}
\item 
$S_0=\{  h'_{n,0}, h'_{n,1}, g_{n,3} , d_{n',3}\mid n\in \{0,\dots, 3m-1\},  n'\in \{0,\dots, 18m-1\}   \}$
\item
$S_1=\{f'_{0,2}, f'_{0,3}, f'_{0,4}\}\cup (S(F'_2)\setminus \{f'_{2,4}, f'_{2,5}\})$
\item 
$S_2=\{f'_{0,1}, f'_{0,2}, f'_{0,3}, f'_{0,7}, f'_{0,8},  f'_{1,0}, f'_{1,4}, f'_{2,0}, f'_{2,8}\}$
\item 
$S_3=\{f'_{0,0}, f'_{0,3}, f'_{0,7},  f'_{1,0}, f'_{1,4}, f'_{2,1}, f'_{2,2}, f'_{2,8}\}$.
\end{enumerate}

The set $S_0\cup S_1$ is a support of $U^{\sigma_5}_\varphi$ that allows the inhibition of $q_1$ at all states in question besides of $f'_{0,0}, f'_{0,1}, f'_{0,7}, f'_{0,8}$.
The set $S_2$, respectively $S_3$, is a support of $F^{\sigma_5}_K$, firstly, allowing the inhibition of $q_1$ at $ f'_{0,1}, f'_{0,7}, f'_{0,8}$, respectively at $f'_{0,0}$, and, secondly, assuring that $k$ is the only border crossing event that occurs in $U^{\sigma_5}_\varphi \setminus F^{\sigma_5}_K$, too.
Hence, these sets can be enhanced to fitting supports of $U^{\sigma_5}_\varphi$.
Simply by set complementation, we obtain corresponding supports for $U^{\sigma_6}_\varphi$, too.
This proves the lemma.
\end{proof}

\begin{lemma}\label{lem:event_q_2_and_q_3} The events $q_2$ and $q_3$ are inhibitable. \end{lemma}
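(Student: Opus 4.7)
The proof will follow closely the schema already used for $q_0$ and $q_1$ in Lemmas~\ref{lem:event_q_0} and~\ref{lem:event_q_1}. First I would locate every TS in which $q_2$ or $q_3$ actually occurs: according to Figures~\ref{fig:key_unions}.6--7 and~\ref{fig:translators}.4, these are $F'_1$ and $F'_2$ inside the key union, together with $\mathcal{B}'_j$ inside the translator freezer for the switch $\sigma_5$. At every state of a TS that does not contain $q_2$ (resp.\ $q_3$) the inhibition is free by the general argument at the beginning of Section~\ref{sec:secondary_proofs}, so it is sufficient to treat the finitely many states of these three gadget families.

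Next, I would supply explicit supports for $U^{\sigma_5}_\varphi$ and then pass to $S(U^{\sigma_6}_\varphi)\setminus S$ to obtain the matching supports for $\sigma_6$, exactly as was done in Lemmas~\ref{lem:event_q_0}--\ref{lem:event_q_1}. The backbone of each support must include the sinks of $k$ in the head $H^{\sigma_5}$, a suitable column of duplicator states $d_{n,\cdot}$ and generator states $g_{n,\cdot}$, and a consistent row of each $\mathcal{B}'_j$, so that $k$ remains the only border-crossing event across all TSs that do not contain $q_2$ or $q_3$. Inside $F'_2$, natural boundaries are cut at $f'_{2,1}\fbedge{q_2}f'_{2,2}$ and $f'_{2,6}\fbedge{q_3}f'_{2,7}$: one support of the form $\{f'_{2,2},\dots,f'_{2,6}\}$ plus the backbone inhibits $q_2$ at $f'_{2,7}, f'_{2,8}, f'_{2,9}$ and $q_3$ at $f'_{2,0},\dots, f'_{2,2}$, while a complementary cut, placed so that only one of $q_2, q_3$ has incident states on the support border, handles the remaining inner states. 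Within $\mathcal{B}'_j$, including $b'_{j,2}, b'_{j,3}$ as in the key region already isolates the two $q$-transitions and the signature rule of Subsection~\ref{sec:secondary_proofs} will then assign $\test^\sigma$ to the event in question and \nop\ elsewhere on $\mathcal{B}'_j$.

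The main obstacle will be the long alternating chain $f'_{2,0}\fbedge{k}f'_{2,1}\fbedge{q_2}f'_{2,2}\fbedge{q_0}f'_{2,3}\fbedge{z}f'_{2,4}\fbedge{q_1}f'_{2,5}\fbedge{z}f'_{2,6}\fbedge{q_3}f'_{2,7}\fbedge{k}f'_{2,8}$ inside $F'_2$: any support boundary drawn across it has to remain consistent with the global signature rule, which depending on the position of the cut may force \swap\ or $\test^\sigma$ signatures on one of $z, q_0, q_1$ and thereby spoil the inhibition of $q_2$ (or $q_3$) at the target state. In particular, to inhibit $q_2$ at states lying far from its transition without making $q_3$ border-crossing in an incompatible direction, I expect to need two different supports per event, exactly as was necessary for $q_1$. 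I would construct them by first selecting the set of $F'_2$-states strictly between the two cuts and then attaching, for each of $\mathcal{B}'_j$ and each of $H^{\sigma_5}$, $D^{\sigma_5}$, $G^{\sigma_5}$, the unique choice of states which keeps all events other than $q_2$ or $q_3$ either fully inside or fully outside the support. Once the $F'_2$ part is settled, the inhibitions at the states of $F'_1$ and of the gadgets $\mathcal{B}'_j$ reduce to short case checks that mimic verbatim the final rows of the tables used in Lemmas~\ref{lem:event_q_0} and~\ref{lem:event_q_1}, and dualising each resulting support yields the corresponding statement for $\sigma_6$.
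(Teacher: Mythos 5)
There is a genuine gap in the mechanism you propose for the inhibition itself. For every type handled by $\sigma_5$ and $\sigma_6$ we have $\tau\subseteq\{\nop,\set,\res,\swap,\used,\free\}$ with $\inp,\out\notin\tau$, so the only interactions that are undefined on some input are \used\ and \free. Consequently an event can only be inhibited by giving it a \test-signature, which forces \emph{all} endpoints of \emph{all} of its transitions to lie on one and the same side of the support (outside for \free, inside for \used); the event is then inhibited exactly at the states on the other side. Your proposed cut $\{f'_{2,2},\dots,f'_{2,6}\}$ does the opposite: it places the boundary \emph{across} $f'_{2,1}\fbedge{q_2}f'_{2,2}$ and $f'_{2,6}\fbedge{q_3}f'_{2,7}$, which forces $sig(q_2),sig(q_3)\in\{\set,\res,\swap\}$ --- all total functions --- so this region inhibits neither event anywhere. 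Your later remark about making $q_3$ ``border-crossing in an incompatible direction'' shows the same $\inp$/$\out$-style reasoning, which is simply not available here. The paper's proof instead chooses supports such as $S_4=\{f'_{0,3},\dots,f'_{0,8},f'_{2,3},\dots,f'_{2,9}\}$ that exclude every $q_2$-incident state in $F'_1$, $F'_2$ and all $\mathcal{B}'_j$ simultaneously, assigns $sig(q_2)=\free$, and thereby inhibits $q_2$ precisely at the support states $f'_{2,3},\dots,f'_{2,6}$ (dually with \used\ after complementation for $\sigma_6$).

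The surrounding skeleton of your proposal is fine and matches the paper: you correctly locate the occurrences of $q_2,q_3$ in $F'_1$, $F'_2$ and the $\mathcal{B}'_j$ (the latter only for $\sigma_5$), you rightly dispose of all other gadgets via the generic argument at the start of Section~\ref{sec:secondary_proofs}, and passing to the complementary support for $\sigma_6$ is exactly what the paper does. But to repair the core step you must redesign each support so that the $q_2$-transitions (respectively $q_3$-transitions) never cross its boundary, and you must also keep the target states on the correct side relative to \test: for $\sigma_5$ the inhibited states are those \emph{inside} the support, not outside as your first support implicitly assumes. One further detail worth keeping from the paper: the support fragment for $\mathcal{B}_j$ (namely $b_{j,2},b_{j,3},b_{j,4}$) is included not for the ESSP but because these regions are reused later for state separation in Lemma~\ref{lem:ssp_sigma_5_sigma_6}.
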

\begin{proof}
The events $q_2,q_3$ are inhibitable at $b'_{n,0},b'_{n,5}, b'_{n,6}$ for $n\in \{0,\dots, m-1\}$ and $f'_{1,0},f'_{1,4}, f'_{1,5}$ and $f'_{2,0},f'_{2,8}, f'_{2,9}$.
Hence, initially, we focus on the inhibition of $q_2$ at the states $b'_{n,3}, b'_{n,4}$ for $n\in \{0,\dots, m-1\}$ and $f'_{1,3}$ and $f'_{2,3},f'_{2,4}, f'_{2,5},f'_{2,6}, f'_{2,7}$.
The following sets are needed:
\begin{enumerate}
\item 
$S_0=\{b'_{n,3}, b'_{n,4}, b'_{n,5}, b'_{n,6}, b_{n,2}, b_{n,3}, b_{n,4} \mid n\in \{0,\dots, m-1\}\}$,
\item 
$S_1=\{b'_{n,4}, b'_{n,5}, b'_{n,6} \mid n\in \{0,\dots, m-1\}\}$,
\item 
$S_2=\{t'_{n,n',13}, t'_{n,n',16}, t'_{n,n',19}\mid n\in \{0,\dots, m-1\}, n'\in \{0,1,2\} \}$
\item
$S_3=\{f'_{1,3},f'_{1,4}, f'_{1,5}, f'_{2,7},f'_{2,8}, f'_{2,9} \}$
\item
$S_4=\{f'_{0,3},f'_{0,4}, f'_{0,5}, f'_{0,6},f'_{0,7}, f'_{0,8} , f'_{2,3},f'_{2,4}, f'_{2,5}, f'_{2,6},f'_{2,7}, f'_{2,8},f'_{2,9}\}$
\end{enumerate}
For $n\in \{0,\dots, m-1\}$ the set $S_0\cup S_2$ is a support that allows the inhibition of $q_2$ at $b'_{n,3}, b'_{n,4}$. 
The set $S_1\cup S_3$, respectively $S_4$, is the same for the inhibition at $f'_{2,7}$, respectively $f'_{2,3},f'_{2,4}, f'_{2,5},f'_{2,6}$.
By the symmetry of the occurrences of $q_2,q_3$ in $U^{\sigma_5}_\varphi$, the event $q_3$ can be shown to be inhibitable in a perfectly similar way.
Finally, again by set complementation, we obtain that $q_2$ and $q_3$ are inhibitable at all states in question in $U^{\sigma_6}_\varphi$ where, especially, here the inhibition at the states of $F^{\sigma_5}_T$ is clearly not necessary.
Moreover, with respect to ESSP the states $b_{n,2}, b_{n,3}, b_{n,4}$ in $S_0$ could have been removed.
But they are necessary for proving the SSP of $U^{\sigma_6}_\varphi$ in Lemma~\ref{lem:ssp_sigma_5_sigma_6}.
\end{proof}

\begin{lemma}\label{lem:event_a} The events $a_0,\dots, a_{18m-1}$ are inhibitable. \end{lemma}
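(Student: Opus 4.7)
The plan is as follows. For each $j \in \{0,\dots,18m-1\}$, write $j = 18i + 6\alpha + \ell$ with $i \in \{0,\dots,m-1\}$, $\alpha \in \{0,1,2\}$ and $\ell \in \{0,\dots,5\}$. By inspection of Figure~\ref{fig:translators}.5 and Figure~\ref{fig:key_unions}.8, the event $a_j$ occurs exclusively in the translator TS $T^\sigma_{i,\alpha}$ (as the label $a_{\ell}$ in the figure) and in the duplicator TS $D_j$. By the general observation at the beginning of Section~\ref{sec:secondary_proofs}, it is enough to inhibit $a_j$ at the states of these two TSs where $a_j$ does not occur, since at every other TS $A$ of $U^\sigma_\varphi$ the region $(S(U^\sigma_\varphi)\setminus S(A),sig)$ with $sig(a_j)=\test^\sigma$ and $sig=\nop$ elsewhere already suffices.

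First I would handle $D_j$. According to Figure~\ref{fig:key_unions}.8, the event $a_j$ only appears on the edge $d_{j,5}\fbedge{a_j}d_{j,6}$. The strategy is to assemble a support that contains exactly one of $d_{j,5},d_{j,6}$ (say $d_{j,5}$) together with the full $\{d_{n,0},\dots,d_{n,8}\}$ for the other duplicators $n\not=j$, the sets $S^\sigma_9, S^\sigma_{10}$ already used in Lemma~\ref{lem:event_k} for the generators $G^\sigma$, the analogous treatment of $F^\sigma_K$ and $H^\sigma$, and the key-region--style treatment of $T^\sigma_\varphi$ as in Lemma~\ref{lem:event_k} (using $S^\sigma_0,\dots,S^\sigma_3$). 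Complementing this support when $\sigma=\sigma_6$ keeps every non-$a_j$ event either entirely inside, entirely outside or purely \swap-crossing, hence allowing the signature scheme $(*)$ to map $a_j$ to $\test^\sigma$ while keeping everything else consistent. The same support, applied with $d_{j,6}$ swapped in for $d_{j,5}$, inhibits $a_j$ at the other state.

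Second, I would handle the translator $T^\sigma_{i,\alpha}$. Recall from Figure~\ref{fig:translators}.5 that each $a_\ell$ occurs on exactly two parallel forward-backward transitions bracketing a single variable arc (e.g.\ $a_0$ on $t'_{i,\alpha,2}\fbedge{a_0} t'_{i,\alpha,3}$ and $t'_{i,\alpha,4}\fbedge{a_0} t'_{i,\alpha,5}$). For the six ESSP-atoms $(a_\ell,s)$ with $s$ an $a_\ell$-free state in $T^\sigma_{i,\alpha}$, the plan is to take a variant of the indicator region from Lemma~\ref{lem:indicator_regions} centred on the pair of $a_\ell$-arcs: include the two source states of the $a_\ell$-transitions and exclude their sinks (or vice versa), propagate the necessary \swap-signatures along the other $a_{\ell'}$, $X_{i,\alpha}$, $x_{i,\alpha}$ events so that the three path states $t'_{i,\alpha,2}, t'_{i,\alpha,5}, t'_{i,\alpha,8}, t'_{i,\alpha,11}$ remain synchronised with the interface events $v_{3i+\alpha}, w_{3i+\alpha}$, and use the sets $S^\sigma_4,\dots,S^\sigma_{10}$ from Lemma~\ref{lem:event_k} for the rest. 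Complementation handles $\sigma_6$.

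The main obstacle will be the bookkeeping for the translator case: because the two $a_\ell$-transitions flank a single variable arc, and because the other $a_{\ell'}$-events have to remain consistent across the zig-zag path of Figure~\ref{fig:translators}.5 without forcing \swap\ on any $x_{i,\alpha'}$ (which would be blocked by the freezer $F^\sigma_T$ via Lemma~\ref{lem:translators}), one has to carefully choose a support whose boundary crosses exactly the pair of $a_\ell$-arcs and possibly the $v$- or $w$-arc between them, rather than propagating further. Once this is arranged for one canonical $\ell$, the remaining five cases follow by symmetry of the translator construction, and the two direction-switches of the region between ``source-in / sink-out'' and ``source-out / sink-in'' cover both states of each $a_\ell$-transition pair. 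By the arbitrariness of $i$ and $\alpha$, this establishes the inhibitability of all $a_0,\dots,a_{18m-1}$.
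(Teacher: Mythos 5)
Your decomposition (handle the duplicator $D_j$ and the translator $T^\sigma_{i,\alpha}$ explicitly, dispatch every other gadget by the blanket region $S(U^\sigma_\varphi)\setminus S(A)$) matches the paper's, but the core of your construction cannot work. In the types managed by $\sigma_5$ and $\sigma_6$ the available interactions are drawn from $\{\nop,\set,\res,\swap,\used,\free\}$; of these only the test interactions \used\ and \free\ are partial, so they are the \emph{only} signatures that can inhibit anything. Hence to inhibit $a_j$ at even a single state you must give it a test signature, which forces \emph{every} $a_j$-labelled transition --- the pair $d_{j,5}\fbedge{a_j}d_{j,6}$ in $D_j$ \emph{and} both forward-backward pairs in $T^\sigma_{i,\alpha}$, simultaneously, since the signature is shared across the union --- to have all endpoints on the same side of the support. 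Your proposal does the opposite twice: a support containing ``exactly one of $d_{j,5},d_{j,6}$'' and a support that ``include[s] the two source states of the $a_\ell$-transitions and exclude[s] their sinks'' each make $a_j$ border-crossing, so $sig(a_j)$ is forced to \swap\ (or \set/\res), none of which is undefined anywhere, and the resulting region inhibits $a_j$ nowhere. This is why all of the paper's supports for this lemma ($S^{\sigma_5}_0,S^{\sigma_5}_1$ for $D_j$; $M_0,\dots,M_5$ and $N_0,\dots,N_5$ for the translator) keep both $a_j$-endpoints of every $a_j$-arc together.

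The same misconception inverts the genuinely hard part, which you defer to ``bookkeeping.'' The difficult atoms are the inhibition of $a_\ell$ at the junction states $t'_{i,\alpha_i,2},t'_{i,\alpha_i,5},t'_{i,\alpha_i,8},t'_{i,\alpha_i,11}$: there the support boundary is forced to cross variable and interface events ($X$, $x$, $v$, $w$), and since each variable occurs in three translators and each interface event also lives in $H'_j$, $G_j$ or $\mathcal{B}_j'/\mathcal{B}_j$, the support must be propagated consistently through the two \emph{other} translators sharing $X_{i,\beta_i}$ and through their attached key-union gadgets (this is exactly what the sets $N_0,\dots,N_5$, $S^\sigma_4$, $S^\sigma_5$ and the $Acc(\cdot)$-indexed duplicator states accomplish). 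Your instruction to choose ``a support whose boundary crosses exactly the pair of $a_\ell$-arcs \dots rather than propagating further'' is therefore both impossible (it kills the test signature of $a_\ell$) and, for the events that legitimately do become border-crossing, insufficient (they must be made \swap-consistent at every occurrence, far beyond $T^\sigma_{i,\alpha}$). As written, the proposal does not yield a single valid inhibiting region for any $a_j$.
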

\begin{proof} With

\begin{enumerate} 
\item$S^{\sigma_5}_0=\{d_{n,0}, d_{n,2}, d_{n,3}, d_{n,4}, d_{n,7}, d_{n,8}\mid n\in \{0,\dots, 18m-1\}\}$
\item$S^{\sigma_6}_0= \{d_{n,1}, d_{n,5}, d_{n,6} \mid n\in \{0,\dots, 18m-1\}\}$
\item
$S^{\sigma_5}_1=\{d_{n,1}, d_{n,3}, d_{n,4}, d_{n,7}, d_{n,8}\mid n\in \{0,\dots, 18m-1\}\}$
\item
$S^{\sigma_6}_1=\{d_{n,0}, d_{n,3},\dots, d_{n,6}\mid n\in \{0,\dots, 18m-1\}\}$
\item
$S^{\sigma_5}_2= S^{\sigma_6}_2=\{f'_{2,1},  \dots, f'_{2,7}, g_{n,1},\dots, g_{n,6} \mid n\in \{0,\dots, 3m-1\}  \}$
\end{enumerate}

the sets $S^\sigma_0\cup S^\sigma_2$ and $S^\sigma_1\cup S^\sigma_2$ are supports of $U(F'_2,D^\sigma,G^\sigma)$ that, altogether, allow the inhibition of $a_n$ in $D_n$ for $n\in \{0,\dots, 18m-1\}$ and $\sigma\in \{\sigma_5, \sigma_6\}$.
Moreover, for the corresponding regions the event $k$ is the only border crossing event that occurs in $U^\sigma_\varphi \setminus U(F'_2,D^\sigma,G^\sigma)$, too.
Hence, we can enhance $S^\sigma_0\cup S^\sigma_2$ and $S^\sigma_1\cup S^\sigma_2$ to fitting supports of $U^\sigma_\varphi$.
To complete the proof of the lemma, it remains to show the events $a_0,\dots, a_{18m-1}$ are inhibitable in $T^\sigma_\varphi$.
To do so, we prove for $i\in \{0,\dots, m-1\}$ and $\alpha_i\in \{0,1,2\}$, both arbitrary but fixed, that the events $a_{18i+6\alpha_i},\dots,a_{18i+6\alpha_i+5} $ are inhibitable in $T^\sigma_{i,\alpha_i}$ by regions of $U^\sigma_\varphi$.

Firstly, all $a_{18i+6\alpha_i},\dots,a_{18i+6\alpha_i+5}$ are inhibitable at $t'_{i,\alpha_i,13}, t'_{i,\alpha_i,16}$ and $t'_{i,\alpha_i,19}$.

Secondly, we show $a_{18i+6\alpha_i},\dots,a_{18i+6\alpha_i+5}$ to be inhibitable at all further states of $T^\sigma_{i,\alpha_1}$ in question, besides of $t'_{i,\alpha_i,2}, t'_{i,\alpha_i,5}, t'_{i,\alpha_i,8}$ and $t'_{i,\alpha_i,11}$.
We do so by presenting respective supports for the switch $\sigma_6$ and the simple set complementation yields corresponding regions for $\sigma_5$.
Finally, for the inhibition at $t'_{i,\alpha_i,2}, t'_{i,\alpha_i,5}, t'_{i,\alpha_i,8}$ and $t'_{i,\alpha_i,11}$ we present two further supports, both valuable for $\sigma_5$ and $\sigma_6$, and argue for the sufficiency of the implied regions.
To tackle the schedule, we need the following sets of states:
\begin{enumerate}
\item 
$M_0=\{t'_{i,\alpha_i,2}, t'_{i,\alpha_i,3},t'_{i,\alpha_i,4},  t'_{i,\alpha_i,5}, t'_{i,\alpha_i,8}, t'_{i,\alpha_i,11}\}$,
\item 
$M_1=\{t'_{i,\alpha_i,2}, t'_{i,\alpha_i,5}, t'_{i,\alpha_i,6},t'_{i,\alpha_i,7}  t'_{i,\alpha_i,8}, t'_{i,\alpha_i,11}\}$,
\item 
$M_2=\{t'_{i,\alpha_i,2}, t'_{i,\alpha_i,5}, t'_{i,\alpha_i,8},t'_{i,\alpha_i,9}  t'_{i,\alpha_i,10}, t'_{i,\alpha_i,11}\}$,
\item 
$M_3=\{t'_{i,\alpha_i,2}, t'_{i,\alpha_i,5}, t'_{i,\alpha_i,8}, t'_{i,\alpha_i,11}, t'_{i,\alpha_i,12}, t'_{i,\alpha_i,13} ,t'_{i,\alpha_i,14} \}$,
\item 
$M_4=\{t'_{i,\alpha_i,2}, t'_{i,\alpha_i,5}, t'_{i,\alpha_i,8}, t'_{i,\alpha_i,11}, t'_{i,\alpha_i,15}, t'_{i,\alpha_i,16},t'_{i,\alpha_i,17} \}$,
\item 
$M_5=\{t'_{i,\alpha_i,2}, t'_{i,\alpha_i,5}, t'_{i,\alpha_i,8}, t'_{i,\alpha_i,11}, t'_{i,\alpha_i,18}, t'_{i,\alpha_i,19},t'_{i,\alpha_i,20} \}$,
\item
$S_3=\{h'_{3i+\alpha_i,3}, h'_{3i+\alpha_i,4}, h'_{3i+\alpha_i,5}, g_{3i+\alpha_i,6}, g_{3i+\alpha_i,7}, g_{3i+\alpha_i,8}\}$.
\item 
If $M\subseteq S(T^{\sigma_6}_\varphi)$ then $Acc(M)=\{n\in \{0,\dots, 18m-1\} \mid \exists s\edge{a_n}s'\in T^{\sigma_6}_\varphi: s\in M, s'\not\in M \}$.
\end{enumerate}
For $n\in \{0,\dots,5\}$ the set $M_n\cup S_3\cup \{g_{n',6},g_{n',7},g_{n',8}\mid n'\in Acc(M_n)\}$ is a support that allows the inhibition of $a_{18i+6\alpha_i+n}$ at the remaining states of $T^{\sigma_6}_{i,\alpha_i}$, besides of the relevant states of $t'_{i,\alpha_i,2}, t'_{i,\alpha_i,5}, t'_{i,\alpha_i,8}$ and $t'_{i,\alpha_i,11}$.
By set complementation we obtain corresponding supports of $U^{\sigma_5}_\varphi$.
Hence, for $\sigma_5,\sigma_6$ it remains to show that the events $a_{18i+6\alpha_i},\dots,a_{18i+6\alpha_i+5}$ are inhibitable at the respective the relevant states of $t'_{i,\alpha_i,2}, t'_{i,\alpha_i,5}, t'_{i,\alpha_i,8}$ and $t'_{i,\alpha_i,11}$.
Let $j,\ell\in \{0,\dots, m-1\}\setminus \{i\}$ such that $X_{i,\beta_i}\in E(T^{\sigma_5}_j) \cap E(T^{\sigma_5}_\ell)$.
For $n\in \{j,\ell\}$ let $\alpha_n=0$ ($\alpha_n=1$, $\alpha_n=2$) if $X_{j,\beta_n}=X_{i,\beta_i}$ ($X_{j,\gamma_n}=X_{i,\beta_i}$, $X_{j,\alpha_n}=X_{i,\beta_i}$).
We need the following sets:
\begin{enumerate}
\item 
$N_0=\{t'_{n,\alpha_n,2},\dots,  t'_{n,\alpha_n,6}, t'_{n,\alpha_n,12},\dots,  t'_{n,\alpha_n,16}\mid n\in \{i,j,\ell\}\}$,
\item 
$N_1=\{t'_{n,\beta_n,2},t'_{n,\beta_n,3}, t'_{n,\beta_n,12},t'_{n,\beta_n,13}, \mid n\in \{i,j,\ell\}\}$,
\item 
$N_2=\{t'_{n,\gamma_n,2},\dots,  t'_{n,\gamma_n,9}, t'_{n,\gamma_n,12},\dots,  t'_{n,\gamma_n,19}\mid n\in \{i,j,\ell\}\}$,
\item 
$N_3=\{t'_{n,\alpha_n,7},\dots,  t'_{n,\alpha_n,11}, t'_{n,\alpha_n,15},t'_{n,\alpha_n,16},  t'_{n,\alpha_n,18},t'_{n,\alpha_n,19}, t'_{n,\alpha_n, 20} \mid n\in \{i,j,\ell\}\}$,
\item 
$N_4=\{t'_{n,\beta_n,4},\dots, t'_{n,\beta_n,11}, t'_{n,\beta_n,12},t'_{n,\beta_n,13}, t'_{n,\beta_n,15},\dots, t'_{n,\beta_n,20}, \mid n\in \{i,j,\ell\}\}$,
\item 
$N_5=\{t'_{n,\gamma_n,10},\dots,  t'_{n,\gamma_n,11}, t'_{n,\gamma_n,18}, t'_{n,\gamma_n,19} \mid n\in \{i,j,\ell\}\}$,
\item
$S^{\sigma_5}_4=\{h'_{n,3},h'_{n,4},h'_{n,5}  \mid n\in \{3n',3n'+1,3n'+2\}, n'\in \{i,j,\ell\}\}$.
\item 
$S^{\sigma_6}_4=\{g_{n,6},g_{n,7},g_{n,8}\mid n\in \{3n',3n'+1,3n'+2\}, n'\in \{i,j,\ell\}\}$.
\item
$S^{\sigma_5}_5=\{g_{n,6},g_{n,7},g_{n,8}\mid n\in \{3n',3n'+1,3n'+2\}, n'\in \{i,j,\ell\}\}$.
\item 
$S^{\sigma_6}_5= \{h'_{n,3},h'_{n,4},h'_{n,5}  \mid n\in \{3n',3n'+1,3n'+2\}, n'\in \{i,j,\ell\}\}$.
\item 
$sup^{\sigma}_0=N_0\cup N_1 \cup N_2\cup S^\sigma_4 \cup S(F^\sigma_T)$
\item 
$sup^{\sigma}_1=N_3 \cup N_4 \cup N_5 \cup S^\sigma_5 \cup S(F^\sigma_T)\cup  \{d_{n,6},d_{n,7}\mid n\in Acc(N_3\cup N_4 \cup N_5)\}$
\end{enumerate}
We can now argue as follows:
The region allowed by $sup^{\sigma_6}_0$ inhibits $a_{18i+6\alpha_i}, a_{18i+6\alpha_i+3}$ at $t'_{i,\alpha_i,8}, t'_{i,\alpha_i,11}$.
Moreover, the support $sup^{\sigma_6}_0$ ($sup^{\sigma_6}_1$) allows a region that inhibits $a_{18i+6\gamma_i+1}$, $a_{18i+6\gamma_i+4}$ ($a_{18i+6\beta_i+1}, a_{18i+6\beta_i+4}$) at $t'_{i,\gamma_i,11}$ ($t'_{i,\beta_i,2}$).
Finally, the region allowed by $sup^{\sigma_6}_1$ inhibits $a_{18i+6\alpha_i+2}, a_{18i+6\alpha_i+5}$ at $t'_{i,\alpha_i,2}, t'_{i,\alpha_i,5}$.
Hence, by the arbitrariness of $i$ and $\alpha_i$ and the symmetry of the construction, we have proven that the events $a_{18i+6\alpha_i},\dots,a_{18i+6\alpha_i+5}$ are inhibitable at the relevant states of $t'_{i,\alpha_i,2}, t'_{i,\alpha_i,5}, t'_{i,\alpha_i,8}$ and $t'_{i,\alpha_i,11}$ in $U^{\sigma_6}_\varphi$.
Similarly, we obtain this result for $U^{\sigma_5}_\varphi$.
\end{proof}

To show that the inhibition of the key event at the key state in $U^\varphi_{\sigma_5}$ implies the ESSP, it remains to show that the unique events are inhibitable, too.
As each of these events $u$ occurs exactly twice in $U^\varphi_{\sigma_5}$, that is, at a single backward and forward edge $s\fbedge{u}s'$, and, therefore, only in one single gadget TS, these events are inhibitable.
Hence, we state the next lemma without proof.
\begin{lemma}[Without proof] The unique events are inhibitable. \end{lemma}

Finally, it remains to prove that the inhibition of $k$ at $h_{0,5}$ implies the SSP for $U^\varphi_{\sigma}$:

\begin{lemma}\label{lem:ssp_sigma_5_sigma_6}  For $\sigma\in \{\sigma_5,\sigma_6\} $ the union $U^\sigma_\varphi$ has the SSP. \end{lemma}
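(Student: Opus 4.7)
The plan is to mimic the strategy of Lemma~\ref{lem:ssp_for_sigma_1_to_sigma_4}: exploit the fact that every region $(sup,sig)$ that inhibits an event $e$ at a state $s'$ automatically separates $s'$ from every state $s$ with $s\edge{e}$ (since $sup(s)$ is forced to lie in the domain of $sig(e)$ while $sup(s')$ is not). Thus the whole catalogue of supports produced in Lemmas~\ref{lem:event_k}--\ref{lem:event_a} already provides a large pool of $\tau$-regions usable for state separation, and the job is to verify that this pool, together with a few extra regions, covers every SSP atom of $U^\sigma_\varphi$.

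First I would reduce to separation atoms whose two states lie in the same gadget TS of $U^\sigma_\varphi$: any support from the ESSP part that includes an entire gadget $A$ while excluding another gadget $A'$ separates all states of $A$ from all states of $A'$, and such supports are abundant (the sets $S^{\sigma}_i$ built in the ESSP lemmas typically touch only a handful of gadgets, so their complements inside $U^\sigma_\varphi$ realize such block separations; for $\sigma_6$ one applies set complementation as was done throughout Section~\ref{sec:secondary_proofs}). After that, for each gadget type ($H'_j$, $F'_0, F'_1, F'_2$, $D_j$, $G_j$, $T^\sigma_{i,\alpha}$, and $\mathcal{B}_j$ or $\mathcal{B}'_j$) I would list the pairs of states that still need to be separated and produce a region for each. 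Many of these come for free from event uniqueness: inside $H'_j$, $D_j$, $G_j$, $F'_0$, $F'_1$, $F'_2$, $\mathcal{B}_j$, $\mathcal{B}'_j$ the events $m, v_j, p_j, y_j, a_j, q_0, q_1, q_2, q_3, x_j$ and the blanc $\_$-events occur on at most one or two arcs of the respective TS, so the already constructed inhibiting regions place the relevant endpoints on opposite sides of the support.

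The critical step is the translator $T^\sigma_{i,\alpha}$, which has many states $t'_{i,\alpha,0},\dots, t'_{i,\alpha,20}$ and whose intra-gadget separations are not all covered by the ESSP regions. Here I would proceed in two waves. First, the supports $M_0,\dots, M_5$ defined in Lemma~\ref{lem:event_a}, augmented on the key interface and on the duplicators as done there, separate each triple of states $\{t'_{i,\alpha,3k},t'_{i,\alpha,3k+1},t'_{i,\alpha,3k+2}\}$ from the rest of $T^\sigma_{i,\alpha}$ for $k\in\{1,\dots,6\}$; the supports $N_0,\dots, N_5$ then separate the two branches $P_{i,\alpha}$ and $\dot P_{i,\alpha}$ from each other via the variable events $X_{i,\cdot}$ and $x_{i,\cdot}$. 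Second, for states lying inside the same $a_j$-block, the cord pattern $\fbedge{a_j}$ makes a one- or two-state support of the form $\{t'_{i,\alpha,3k+1}\}$ or $\{t'_{i,\alpha,3k+1}, t'_{i,\alpha,3k+2}\}$ admissible (completing to a $\tau$-region by giving all other events \nop\ and letting $a_j$ be the only border-crossing event, which is possible because $\swap\in\tau$ and $a_j$ only occurs in $D_j$ and in $T^\sigma_{i,\alpha}$, where the corresponding states in $D_j$ can be included consistently).

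The main obstacle will be the last step: producing regions that separate two states which sit on the same $a_j$-cord in $T^\sigma_{i,\alpha}$, because these states agree on every event signature we have used so far and only the $a_j$-event can tell them apart. The fix is to propagate the choice of support across every other TS that shares $a_j$ (namely $D_j$) and to keep the signature of all other events at \nop\ or \swap\ compatibly, exactly as was done when inhibiting $a_j$ in Lemma~\ref{lem:event_a}. Once this is done for a representative pair, the arbitrariness of $i,\alpha,j$ together with the symmetry of the construction, plus set complementation for the switch from $\sigma_5$ to $\sigma_6$, closes all remaining SSP atoms, which proves the lemma.
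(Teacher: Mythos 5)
Your plan follows essentially the same route as the paper's proof: it reuses the supports built for the ESSP atoms (in particular those of Lemmas~\ref{lem:event_q_2_and_q_3} and~\ref{lem:event_a} for the translators, and the uniqueness of $m, q_0, \dots, q_3, v_j, p_j, y_j, a_j$ and the blanc events for the remaining gadgets), plus set complementation to pass from $\sigma_5$ to $\sigma_6$, which is exactly the table-of-references argument given in the paper. The only superfluous step is your inter-gadget ``block separation'' phase, since the SSP for a union is by definition restricted to pairs of states within the same component TS (the connector states are handled separately by Lemma~\ref{lem:union_validity}).
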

\begin{proof}
Firstly, the initial state of any TS $A$, the state with an incoming/outgoing unique event labeled transition, installed by $U^\sigma_\varphi$ is separable from all the other states of $A$.
Secondly, if $i \in \{0,\dots,m-1\},\alpha_i\in \{0,1,2\}$ then $t_{i,\alpha_i,0}$ is separable.
Finally, for the solutions of all the other state separation atoms, we present the following table, where the entry of the first column states which TSs is investigated and the second row lists the lemmata where the separating regions can be found or, namely for the TS $D_j$, presents a region itself.

\begin{tabular}{  p{1cm}     p{10cm}}
TS & Separating Regions \\ \hline
$T^{\sigma}_{i,\alpha_i}$
&
Lemma~\ref{lem:event_q_2_and_q_3}, Lemma~\ref{lem:event_a}\\ \hline
$\mathcal{B}_j$
&
Lemma~\ref{lem:event_q_0}, Lemma~\ref{lem:event_q_2_and_q_3}\\ \hline
$\mathcal{B}'_j$
&
Lemma~\ref{lem:event_q_0}, Lemma~\ref{lem:event_q_2_and_q_3}\\ \hline
$H'_j$
&
Lemma~\ref{lem:event_k}, Lemma~\ref{lem:event_m}, Lemma~\ref{lem:event_q_0} \\ \hline
$F'_0$
&
Lemma~\ref{lem:event_k}, Lemma~\ref{lem:event_m}, Lemma~\ref{lem:event_q_0} \\ \hline
$F'_1$
&
Lemma~\ref{lem:event_z}, Lemma~\ref{lem:event_m}, Lemma~\ref{lem:event_q_0} \\ \hline
$F'_2$
&
Lemma~\ref{lem:event_X}, Lemma~\ref{lem:event_q_1}, Lemma~\ref{lem:event_q_0} \\ \hline
$D_j$
&
Lemma~\ref{lem:event_v},  Lemma~\ref{lem:event_z}, Lemma~\ref{lem:event_q_0}, Lemma~\ref{lem:event_q_1}, Lemma~\ref{lem:event_a}\\ \hline
$G_j$
&
Lemma~\ref{lem:event_k},  Lemma~\ref{lem:event_z}, Lemma~\ref{lem:event_q_0}, Lemma~\ref{lem:event_a}, $S(U^{\sigma}_\varphi)\setminus\{g_{j,0}, g_{j,1}, g_{j,2}\}$
\end{tabular}

\end{proof}


\end{document}